\newcommand{\QQ}{{\mathbb{Q}}}
\newcommand{\NN}{{\mathbb{N}}}
\newcommand{\PP}{{\mathbb{P}}}
\newcommand{\KK}{{\mathbb{K}}}
\newcommand{\LL}{{\mathbb{L}}}
\newcommand{\bigOsoft}{\tilde{\mathcal{O}}}
\newcommand{\bigO}{\mathcal{O}}
\newtheorem{thm}{Theorem}
\newtheorem{lem}[thm]{Lemma}
\newtheorem{prop}[thm]{Proposition}
\newtheorem{cor}[thm]{Corollary}
\newtheorem{defi}[thm]{Definition}
\newtheorem{xx}[thm]{Example}
\newtheorem{rem}[thm]{Remark}
\numberwithin{equation}{section}
\begin{document}

\title{Symbolic Computations of  First Integrals for Polynomial Vector Fields%\thanks{Grants or other notes
%about the article that should go on the front page should be
%placed here. General acknowledgments should be placed at the end of the article.}
}
%\subtitle{Do you have a subtitle?\\ If so, write it here}

%\titlerunning{Short form of title}        % if too long for running head

%\author{Guillaume Chèze      \and
%        Thierry Combot %etc.
%}
\author{Guillaume Chèze}
\author{Thierry Combot}

%\authorrunning{Short form of author list} % if too long for running head
\address{G. Chèze \\
Institut de Math\'ematiques de Toulouse, UMR5219\\
Universit\'e de Toulouse; CNRS \\
UPS IMT, F-31062 Toulouse Cedex 9, France \\
%              %Tel.: +33 5 61 55 76 31 \\
%              %Fax: +123-45-678910\\
             %\email{guillaume.cheze@math.univ-toulouse.fr}    
             }
\email{guillaume.cheze@math.univ-toulouse.fr} 

\address{T. Combot\\
           Université de Bourgogne\\
Bâtiment Mirande\\
9 avenue A. Savary BP~47870\\
21078 DIJON Cedex, France\\
}
 \email{thierry.combot@u-bourgogne.fr}                
%\address{G. Chèze  \at
%Institut de Math\'ematiques de Toulouse, UMR5219\\
%Universit\'e de Toulouse; CNRS \\
%UPS IMT, F-31062 Toulouse Cedex 9, France \\
%              %Tel.: +33 5 61 55 76 31 \\
%              %Fax: +123-45-678910\\
%              \email{guillaume.cheze@math.univ-toulouse.fr}           %  \\
%%             \emph{Present address:} of F. Author  %  if needed
%           \and
%           T. Combot \at
%           Université de Bourgogne\\
%Bâtiment Mirande\\
%9 avenue A. Savary BP 47870\\
%21078 DIJON Cedex, France\\
% \email{thierry.combot@u-bourgogne.fr}                }

%\date{Received: date / Accepted: date}
% The correct dates will be entered by the editor

\maketitle

\begin{abstract}
In this article we show how to generalize to the Darbouxian, Liouvillian and Riccati case the extactic curve introduced by J. Pereira. With this approach, we get new algorithms for computing, if it exists, a rational, Darbouxian, Liouvillian or Riccati first integral with bounded degree of a polynomial planar vector field. We give probabilistic and  deterministic algorithms. 
The arithmetic complexity of our probabilistic algorithm is in $\bigOsoft(N^{\omega+1})$, where $N$ is the bound on the degree of a representation of the first integral and $\omega \in [2;3]$ is the exponent of linear algebra. This result improves previous algorithms. Our algorithms have been implemented in Maple and are available on authors' websites. In the last section, we give some examples showing the efficiency of these algorithms.
\keywords{First integrals \and Symbolic computations \and Complexity analysis}
% \PACS{PACS code1 \and PACS code2 \and more}
%\subclass{34A05 \and 68W30 \and 68W40}
\end{abstract}

\section{Introduction}
In this article, we design an algorithm that given a planar polynomial vector field with degree $d$
$$ (S)\; : \quad \left\lbrace \begin{array}{rcl} \dot{x} &=& A(x,y), \\ \dot{y} &=& B(x,y), \end{array}\right. \qquad A,B\in \KK[x,y],\qquad \deg(A), \deg(B) \leq d$$
and some bound $N\in\mathbb{N}$, computes first integrals of $(S)$ of ``size'' (for some appropriate definition) lower than $N$.\\ The field $\KK$ is an effective field of characteristic zero, i.e. one can perform arithmetic operations and test equality of two elements (typically, $\KK=\QQ$ or
$\QQ(\alpha)$, where $\alpha$ is an algebraic number).\\

First integrals are \emph{non-constant functions} $\mathcal{F}$ that are constant along the solutions $\big(x(t),y(t)\big)$ of $(S)$. This property can be rewritten as being solution of a partial differential equation
\begin{equation}\label{eqint}\tag{Eq}
A(x,y) \partial_x \mathcal{F}(x,y)+B(x,y) \partial_y \mathcal{F}(x,y)=0,
\end{equation}
which can be also written $D_0(\mathcal{F})=0$ with $D_0$ the derivation $$D_0=A(x,y) \partial_x +B(x,y) \partial_y.$$

Let us remark that multiplying $A,B$ by some arbitrary non zero polynomial does not change the solutions of equation (Eq). Thus in the rest of the article, we will always consider $\gcd(A,B)=1$, thus excluding the case $A=0$ or $B=0$ as a trivial one. Indeed, in this case $\mathcal{F}(x,y)=x \hbox{ or } y$ is then a first integral.\\

We need to specify in which class of functions we are searching $\mathcal{F}$. The simplest class is the class of rational first integrals, for which we can easily define the notion of size by the degree of its numerator and denominator.

However, we can compute first integrals in larger classes of functions.\\
 The first method for computing first integrals in a symbolic way can be credited to G. Darboux in 1878, see \cite{Da}. Darboux has introduced what we call nowadays Darboux's polynomials. These polynomials  allows one to compute  ``Darbouxian'' first integrals.  Darbouxian functions are written with rational functions and logarithms and thus generalize rational functions.  There exist even more general functions than Darbouxian functions, for example we can consider elementary or Liouvillian functions. There exist theoretical results about these kinds of first integrals, see \cite{PrelleSinger,Singer}, and also algorithms for computing these first integrals, see \cite{Man,ManMac,Avellar}.  Roughly speaking, these algorithms compute Darboux polynomials and then combine them in order to construct a first integral.
 Unfortunately, the computation of Darboux polynomials is a difficult problem. Indeed, computing a bound on the degree of the irreducible Darboux polynomials of a derivation is still an open problem. This problem is  called Poincar\'e's problem. Thus in practice, algorithms ask the users for a bound on the ``size'' of the first integral they want to compute. \\
Moreover, the recombination step leads to an exponential complexity in terms of the degree $d$ of the derivation.
%The computation of irreducible Darboux polynomial with degree smaller than $N$ can be done in polynomial time, see \cite{Cheze}. Unfortunately, it can be shown that the exponent in the complexity is bigger than $10$. Moreover, the algorithm proposed in \cite{Avellar} to compute Liouvillian first integrals has an exponential  complexity in terms of the degree $d$ of the derivation. Indeed, the recombination step used to construct the first integral from Darboux polynomials is of combinatorial nature.\\

In this article, we give an algorithm which computes a symbolic first integral: rational,  Darbouxian, Liouvillian or Riccati, see Definition~\ref{deffi} below,  with ``size" bounded by $N$, with $\bigOsoft(N^{\omega+1})$ arithmetic operations in $\mathbb{K}$.\\

  We recall that $\omega \in [2;3]$ is the exponent of linear algebra over $\KK$. This means that we assume that two matrices of size $N\times N$ with entries in $\mathbb{K}$ can be multiplied with $\bigO(N^{\omega})$ arithmetic operations. The soft-O notation $\bigOsoft()$ indicates that polylogarithmic factors are neglected. Furthermore, in the following we suppose that the bound $N$ tends to infinity and $d$ is fixed. The dependence in $d$ is also polynomial.  More precisely, the cost is at most $N^{\omega+1}+(d+N)^{\omega+1}+d^2N^2$ up to a constant factor and logarithmic factors in $N$. In particular, it shows that our algorithm is polynomial in $d$.\\
  Our algorithm is thus more efficient than the existing ones.\\

Our strategy generalizes to the Darbouxian, Liouvillian and Riccati cases the algorithm proposed in \cite{BCCW} for computing rational first integrals. Our method avoids the computation of Darboux polynomials and then does not need a recombination step.\\

Now, we recall the definition of Darbouxian, Liouvillian first integrals and introduce a new definition: Riccati first integrals. 

\begin{defi}\label{deffi}
\begin{itemize}
%A \textit{rational first integral} of $(S)$ is a first integral $\mathcal{F}\in\overline{\KK}(x,y)$.\\
\item[$\bullet$] A \textit{Darbouxian function}  $\mathcal{F}$ is an expression of the form
$$\mathcal{F}(x,y)=\int G(x,y)dx+F(x,y)dy$$
where $F,G\in\overline{\KK}(x,y)$ and $G(x,y)dx+F(x,y)dy$ is closed, i.e.  $\partial_y G=\partial_x F$, or equivalently
$$\mathcal{F}(x,y)=\frac{P(x,y)}{Q(x,y)}+\sum_i \lambda_i \ln H_i(x,y)$$
where $P,Q,H_i\in\overline{\KK}[x,y]$, $\lambda_i \in \overline{\KK}$.\\

\item[$\bullet$] A \textit{Liouvillian function}  $\mathcal{F}$ is an expression of the form
$$\mathcal{F}(x,y)=\int R(x,y)B(x,y)dx-R(x,y)A(x,y)dy$$
where:
\begin{itemize} 
\item[$\star$] $R(x,y)=\exp\int G(x,y)dx+F(x,y)dy$ (called \textit{the integrating factor}),\\
\item[$\star$] $F(x,y)$, $G(x,y)$ belong to $\overline{\KK}(x,y)$,\\ 
\item[$\star$] $G(x,y)dx+F(x,y)dy$ and $R(x,y)B(x,y)dx-R(x,y)A(x,y)dy$ are closed.\\
\end{itemize}

\item[$\bullet$] A \textit{Riccati function} is an expression of the form $\mathcal{F}_1/\mathcal{F}_2$ where $\mathcal{F}_1,\mathcal{F}_2$ are two independent solutions over $\overline{\KK(x)}$ of a second order differential equation
$$ (EqR) \quad \partial_y^2 \mathcal{F}_i+G(x,y)\partial_y \mathcal{F}_i+F(x,y) \mathcal{F}_i=0$$
with $F,G\in\overline{\KK}(x,y)$.\\

\item[$\bullet$] A Darbouxian (respectively Liouvillian, or Riccati) first integral of $(S)$ is a Darbouxian (respectively Liouvillian, or Riccati) function $\mathcal{F}$ satisfying \mbox{$D_0(\mathcal{F})=0$}.
\end{itemize}
\end{defi}

The classical result of the equivalence of the two representations of a Darbouxian first integral is proved in \cite{Picard,Christopher,Duarte3}, \cite[Satz 2]{Ruppert}, and in \cite[Lemma 2 p. 205]{Schinzel}. Singer \cite{Singer} proves that a vector field admitting a first integral built by successive integrations, exponentiations and algebraic extensions of $\KK(x,y)$ (so a Liouvillian function), also admits a Liouvillian first integral of the form given in Definition \ref{deffi}. Similarly, we will prove in Proposition \ref{prop:structure_successive_extension} that a vector field admitting a first integral built by successive integrations, exponentiations, algebraic and Riccati extensions of $\KK(x,y)$ (see Definition \ref{def:ric_extension} in Section \ref{sec:first_int_diff_inv}), also admits a Riccati first integral of the form given in Definition \ref{deffi}.\\

A vector field admitting a first integral built by successive integrations and algebraic extensions of $\KK(x,y)$ does not always admits a Darbouxian first integral of the form given in Definition \ref{deffi}. This is due to the possible appearance of algebraic extensions in the $1$-form $G(x,y)dx+F(x,y)dy$. However, as we will prove in Proposition \ref{prop:structure_successive_extension}, such vector field then admits what we call a $k$-Darbouxian first integral.
\begin{defi}\label{deffi2}
A $k$-\textit{Darbouxian first integral} of $(S)$ is a first integral $\mathcal{F}$ of $(S)$ of the form
$$\mathcal{F}(x,y)=\int G(x,y)dx+F(x,y)dy$$
where $k\in\mathbb{N}^*$, $F^k,G^k\in\overline{\KK}(x,y)$ and $G(x,y)dx+F(x,y)dy$ is closed.
\end{defi}
Putting $k=1$ recovers the classical Darbouxian first integrals.\\
 Using that $\mathcal{F}$ is a first integral and so $D_0(\mathcal{F})=0$, we have moreover
$$\frac{G(x,y)}{B(x,y)}=-\frac{F(x,y)}{A(x,y)},$$
and this defines a hyperexponential function $R(x,y)$. Now writing $\mathcal{F}(x,y)=\int R(x,y)B(x,y)dx-R(x,y)A(x,y)dy$, we recognize the form of a Liouvillian first integral. So $k$-Darbouxian functions define an intermediary class between Darbouxian and Liouvillian functions.\\

We will not consider elementary first integrals. In \cite{PrelleSinger}, Prelle and Singer have proved that the study of elementary first integrals can be reduced to the study of Liouvillian first integrals with an algebraic integrating factor.
 This meets our Definition of $k$-Darbouxian first integral. However, elementary first integrals require the additional condition that the $1$-form can be integrated in elementary terms, and such integration problems will not be considered in this article.\\

Rational, $k$-Darbouxian and Liouvillian first integrals are particular cases of a Riccati first integral, by simply taking $\mathcal{F}_1$ as the first integral and $\mathcal{F}_2=1$.
Indeed, a rational, $k$-Darbouxian or Liouvillian first integral always satisfies a second order differential equation in $y$. In equation $(EqR)$, it is always possible to multiply $\mathcal{F}_1,\mathcal{F}_2$ by the same hyperexponential function, leaving unchanged the quotient. This allows one to force the Wronskian to become $1$, so we can put  $G=0$. This suggests that one can  represent rational, Darbouxian, Liouvillian and Riccati first integrals as solutions of a differential equation in $y$.
\begin{itemize}
\item A rational first integral is a first integral solution of
\begin{equation}\label{eqtype0}\tag{Rat}
\mathcal{F}-F(x,y)=0\qquad F\in\overline{\KK}(x,y)\setminus \overline{\KK}.
\end{equation}
\item A $k$-Darbouxian first integral is a first integral solution of
\begin{equation}\label{eqtype1}\tag{D}
\partial_y \mathcal{F}-F(x,y) =0,\qquad F^k\in\overline{\KK}(x,y)\setminus \{0\}.
\end{equation}
\item A Liouvillian first integral is a first integral solution of
\begin{equation}\label{eqtype2}\tag{L}
\partial_y^2 \mathcal{F}-F(x,y)\partial_y \mathcal{F}=0\qquad F\in\overline{\KK}(x,y).
\end{equation}
\item A Riccati first integral is a first integral quotient of two independent solutions over $\overline{\KK(x)}$  of
\begin{equation}\label{eqtype3}\tag{Ric}
\partial_y^2 \mathcal{F}-F(x,y) \mathcal{F}=0,\qquad F\in\overline{\KK}(x,y).
\end{equation}
\end{itemize}

These four equations will be the four canonical equations representing respectively each type of first integral. Once one of the above equation is found, it is possible to recover the first integral by single variable integration and linear differential equation solving.\\

Each case is included in the next one, leading to a ranking on the classes of first integrals
$$\hbox{Rational}<k\hbox{-Darbouxian}<\hbox{Liouvillian}<\hbox{Riccati}.$$
Each type of equation can be represented by a single rational function, thus also giving us a notion of ``size''.
\begin{defi}
The degree of a rational, Darbouxian, Liouvillian, Riccati first integral is respectively the maximum of the degree of numerator and denominator of $F$ (or $F^k$ in the $k$-Darbouxian case) in the four above equations.
\end{defi}

The output of our algorithm will be an equation of the form (Rat), (D), (L) or (Ric).  We will see that we can always suppose $F$ with coefficients in $\KK$. This does not change the degree of $F$, see Corollaries \ref{cor:darbouxdansK}, \ref{cor:liouvdansK}, \ref{cor:RiccatidansK}. The main theorem of the article is the following:

\begin{thm}\label{thmmain1}
Let $d$ be the maximum of $\deg(A)$ and $\deg(B)$. The problem of finding symbolic (rational, $k$-Darbouxian, Liouvillian, Riccati) first integrals with degree smaller than $N$ can be solved in a probabilistic way with at most \mbox{$\bigOsoft(N^{\omega+1})$} arithmetic operations in $\mathbb{K}$, and the factorization of a univariate polynomial with degree at most $N$ and coefficients in $\KK$.\\ 
More precisely, there exists an algorithm with inputs $A,B$, $k\in\mathbb{N}^*$, a bound $N$, and parametrized by initial conditions $z\in \KK^{2}$ such that the possible outputs are:
\begin{itemize}
\item a differential equation of one of the forms \eqref{eqtype0}, \eqref{eqtype1}, \eqref{eqtype2}, \eqref{eqtype3} leading to a symbolic first integral, 
\item ``None'' meaning that there exists no symbolic first integral with degree smaller than $N$,
\item ``I don't know''.
\end{itemize}
Furthermore, if $z$ avoids the roots of a non-zero polynomial with degree $\bigO(N^4)$  then the algorithm does not return ``I don't know".\\
Moreover, if $(S)$ admits a symbolic first integral with degree smaller than $N$ then the output is a differential equation with minimal degree.
\end{thm}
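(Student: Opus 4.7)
The plan is to handle the four canonical equations \eqref{eqtype0}--\eqref{eqtype3} in a uniform framework obtained by generalising Pereira's extactic curve. In each case, the existence of a symbolic first integral of degree at most $N$ is equivalent to the existence of a rational function $F$ of degree at most $N$ satisfying a linear PDE in $x,y$, derived by combining the defining equation with $D_0(\mathcal{F})=0$. For \eqref{eqtype0} this recovers the classical extactic; for \eqref{eqtype1}, using $A\partial_x\mathcal{F}+BF=0$ and the integrability condition $\partial_y(-BF/A)=\partial_x F$ provides the linear PDE on $F$; \eqref{eqtype2} is treated analogously with one more $D_0$-derivation; and \eqref{eqtype3} requires an additional step described below. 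In every case the resulting condition is encoded as a rank drop of a ``generalised extactic'' matrix whose entries are obtained by iterated application of $D_0$ to a monomial basis of bounded degree.

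Second, to reach the announced $\bigOsoft(N^{\omega+1})$ complexity, rather than solving a multivariate linear system of size $\bigO(N^2)\times \bigO(N^2)$ directly, I would follow the strategy of \cite{BCCW}: fix $z=(x_0,y_0)\in\KK^2$, compute truncated power series solutions $X(t),Y(t)$ of $(S)$ with initial condition $z$, and substitute $(x,y)=(X(t),Y(t))$ into the rank condition. The multivariate linear system becomes a Hermite--Padé-type approximation problem on structured polynomial matrices of dimension $\bigO(N)$, which is solvable in $\bigOsoft(N^{\omega+1})$ operations. The required $\bigO(N^2)$ terms of the power series can be computed by Newton iteration in $\bigOsoft(N^2)$ operations, and the univariate polynomial factorisation appears when reconstructing $F$ (or $F^k$) from its power series image.

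Third, the probabilistic correctness amounts to showing that the power-series reformulation faithfully detects the rank condition of the generalised extactic when $z$ avoids a certain exceptional hypersurface $\mathcal{V}$. Tracking degrees through the extactic matrix (entries of degree $\bigO(Nd)$ in $(x,y)$, size $\bigO(N^2)$), and combining with the condition that $z$ be a regular point of the vector field and of the first integral, one finds that $\mathcal{V}$ is defined by a polynomial of degree $\bigO(N^4)$. Minimality of the output degree is obtained by looping over degrees from $1$ to $N$ and returning the first positive answer, the cost staying in $\bigOsoft(N^{\omega+1})$ by geometric summation.

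The main obstacle is the Riccati case \eqref{eqtype3}: since a Riccati first integral is a ratio $\mathcal{F}_1/\mathcal{F}_2$ of two independent solutions, the condition $D_0(\mathcal{F}_1/\mathcal{F}_2)=0$ is not a priori a linear PDE on $F$. The linearisation must exploit the $D_0$-invariance of the two-dimensional solution space of $\partial_y^2\mathcal{F}-F\mathcal{F}=0$, translating it into the existence of a second-order linear differential operator in $y$, with rational coefficients of controlled degree, that commutes with $D_0$ on this solution space. Producing this reformulation while keeping the extactic entries of degree $\bigO(Nd)$, so that the exceptional set remains of degree $\bigO(N^4)$, is the delicate step and dictates the shape of the algorithm for this final case.
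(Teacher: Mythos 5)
There is a genuine gap, and it sits exactly where the paper's main new idea lives. Your plan is to encode each case as a rank condition on a matrix ``whose entries are obtained by iterated application of $D_0$ to a monomial basis'' and then to substitute only the flow $(X(t),Y(t))$ of $(S)$. This works for \eqref{eqtype0}, but not beyond. For \eqref{eqtype1} the criterion is $D_0(F)=-AF\partial_y(B/A)$ with $F=P/Q$; after clearing denominators this is \emph{bilinear} in the coefficients of $P$ and $Q$, so it is not a rank-drop of any matrix built from $D_0$ acting on $\KK[x,y]_{\leq N}$, and the operator $D_0+A\partial_y(B/A)\cdot$ is not a derivation, so the Wronskian argument underlying the extactic does not apply to it. Moreover, along the flow one has $F\big(x,y(x)\big)\,y_1(x)=\mathrm{const}$ where $y_1=\partial_{y_0}y(x_0,y_0;x)$ solves the variational equation: detecting $F$ from power series therefore requires computing $y_1(x)$ (and $y_2,y_3$ in the higher cases), which your algorithm never computes. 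The paper's resolution is precisely the prolongation to the systems $(S'_r)$ in the jet variables $y_1,y_2,y_3$ and Proposition~\ref{propinv} (via Casale's Theorem~\ref{thmcasale}): $y_1^kF$, $y_1F+y_2/y_1$ and $4y_1^2F-2y_3/y_1+3y_2^2/y_1^2$ are \emph{rational} first integrals of $(S'_1),(S'_2),(S'_3)$, which linearizes the search inside the spaces $V_1,V_2,V_3$ and makes the Hermite--Padé step meaningful. Consistently with this, you explicitly leave the Riccati linearization open (``the delicate step''), so \eqref{eqtype3} is not actually handled by your argument; it is exactly the case where Casale's classification of differential invariants of the flow is indispensable.

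Two further points. First, minimality: looping over degrees $1,\dots,N$ costs $\sum_{n\leq N} n^{\omega+1}=\Theta(N^{\omega+2})$, not $\bigOsoft(N^{\omega+1})$ (the summation is not geometric); the paper instead gets minimality for free from the shifted Hermite--Padé basis, whose minimal weighted $s$-degree element simultaneously selects the lowest degree \emph{and} the simplest class of first integral. Second, your degree bookkeeping for the exceptional set (``entries of degree $\bigO(Nd)$, size $\bigO(N^2)$'') would give $\bigO(N^3)$; the entries $D_r^j(b_i)$ actually have degree growing linearly in $j$ up to $\bigO(N^2d)$, which is what yields the stated $\bigO(N^4)$ bound, to which one must also add the locus $\mathfrak{S}$ of initial points lying on low-degree Darboux polynomials or remarkable levels of a rational first integral.
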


The parameter $k$ is necessary for $k$-Darbouxian first integrals. An equation admitting a $k\geq 2$-Darbouxian first integral also admits a Liouvillian first integral. So for the (default) input $k=1$, we detect Darbouxian first integrals, but the possible $k\geq 2$-Darbouxian first integral could stay unnoticed or seen as a Liouvillian first integral. The reduction of such Liouvillian first integral to a $k$-Darbouxian first integral comes down to an integration problem, i.e. testing if the integrating factor is an algebraic function, which will not be considered in this article.\\

As we use the dense representation of polynomials and $\deg(F) \leq N$, the size of the output of our algorithm is in $\bigO(N^2)$. Thus our algorithm has a sub-quadratic complexity if we use linear algebra algorithms with $\omega <3$. \\

%As the  last possible output, ``I don't know", can only appear when $z$ is a root of a non-zero polynomial with degree $\bigO(dN^4)$ and  as we are considering fields of characteristic zero, we can say that for almost all $z$ the algorithm detects symbolic first integrals.\\
Furthermore, we can say that for almost all $z$ the algorithm detects symbolic first integrals. Indeed, we are considering fields of characteristic zero and the output "I don't know" can only appear when $z$ is a root of a non-zero polynomial.\\

Repeating the probabilistic algorithm in order to avoid bad values for $z$ provides a deterministic algorithm with a polynomial complexity:

\begin{cor}
The probabilistic algorithm can be turned into a deterministic one. The deterministic algorithm uses at most $\bigOsoft(N^{\omega+9})$ arithmetic operations in $\mathbb{K}$, and $\bigO(N^8)$ factorizations of univariate polynomials with coefficients in $\mathbb{K}$ and degree at most $N$.
\end{cor}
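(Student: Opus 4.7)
The plan is a straightforward derandomization of the algorithm of Theorem \ref{thmmain1}: enumerate $z$ on a grid of $\KK^2$ large enough that at least one grid point must avoid the exceptional polynomial of degree $\bigO(N^4)$ supplied by the theorem, then combine the outputs.

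Concretely, I would pick a set $S \subset \KK$ of $cN^4 + 1$ distinct rational integers, which is possible since $\operatorname{char} \KK = 0$ so $\ZZ \hookrightarrow \KK$, where $c$ absorbs the implicit constant in the $\bigO(N^4)$ bound. I then run the probabilistic algorithm of Theorem \ref{thmmain1} on every $z \in S \times S$ and collect the $|S|^2 = \bigO(N^8)$ outputs.

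For correctness, let $P(z_1, z_2) \in \KK[z_1, z_2]$ denote the non-zero polynomial of total degree $\bigO(N^4)$ whose roots are precisely the bad initial conditions. A two-variable Schwartz–Zippel argument bounds the number of zeros of $P$ inside $S \times S$ by $\deg(P) \cdot |S|$, which is strictly smaller than $|S|^2$ by our choice of $|S|$. Hence at least one $z^{\star} \in S \times S$ is a good initial condition, so the corresponding call does not return \emph{I don't know} and returns the answer prescribed by Theorem \ref{thmmain1}. Since every good $z$ yields either \emph{None} or a differential equation of the same minimal degree, I aggregate as follows: if any call returns a differential equation of one of the forms \eqref{eqtype0}, \eqref{eqtype1}, \eqref{eqtype2}, \eqref{eqtype3}, return one of smallest degree among those; otherwise return \emph{None}.

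The complexity then follows at once: $|S|^2 = \bigO(N^8)$ calls, each costing $\bigOsoft(N^{\omega+1})$ arithmetic operations in $\KK$ plus one factorization of a univariate polynomial of degree at most $N$. Multiplying through gives the announced bounds of $\bigOsoft(N^{\omega+9})$ arithmetic operations and $\bigO(N^8)$ univariate factorizations. The main conceptual point—rather than a computational obstacle—is this correctness argument: we cannot identify \emph{which} $z$'s in the grid are good, but Schwartz–Zippel guarantees that at least one is, and any good $z$ produces an output which is already the correct answer required of the deterministic algorithm.
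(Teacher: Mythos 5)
Your proposal is correct and follows essentially the same route as the paper: the paper's \textsf{Deterministic computation} algorithm likewise enumerates $\bigl(\mathcal{R}(d,N)+1\bigr)^2=\bigO(N^8)$ initial conditions on a grid, uses the degree bound $\bigO(N^4)$ on the exceptional polynomial (via the same ``a nonzero bivariate polynomial of degree $D$ cannot vanish on a $(D+1)\times(D+1)$ grid'' argument) to guarantee a good point, and relies on the fact that any non--``I don't know'' output of the probabilistic algorithm is already correct. The only cosmetic difference is that the paper halts at the first definitive answer rather than running all calls and aggregating, which does not change the stated worst-case bounds.
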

In practice, $z$ is chosen uniformly at random. This allows to avoid bad situations. Then we do not need to repeat the algorithm. Thus the practical timings obtained in Section~\ref{sec:examples} do not reflect this complexity result.\\
%In practice the complexity of the deterministic algorithm is better, see Section~\ref{sec:examples}.\\
 
At last, we mention that the algorithm can even sometimes returns equations of degree higher than $N$. This situation can appear for example when we are looking for a Darbouxian first integral with degree smaller than $N$ and there exists a rational first integral of degree $2N$. We give such examples in  Section \ref{sec:examples}.\\
% In these kinds of situations the degree of the first integral is bigger than $N$. This means that the algorithm gives a minimal solution in terms of the degree of the first integral and not in terms of its class.
%However, if the algorithm returns ``None" then the algorithm ensures that the equation $(S)$ does not admit a first integral in a lower class of degree $\leq N$.

\subsection{Strategy, description and theoretical contributions}
In this paper we generalize the approach given in \cite{BCCW} for computing rational first integrals. The main idea was to compute a solution $y(x_0,y_0;x)$ as a power series in $x$ with coefficients in $\KK(x_0,y_0)$ of 
$$(E)\; : \quad \partial_x y(x_0,y_0;x)=\frac{B\big(x,y(x_0,y_0;x)\big)}{A\big(x,y(x_0,y_0;x)\big)},  \textrm{ and }  y(x_0,y_0;x_0)=y_0,$$
and then find a rational function $F(x,y) \in \KK(x,y)$ such that 
$$F\big(x,y(x_0,y_0;x)\big)=F(x_0,y_0).$$
The key ingredient of this approach was the following:\\
 If we know $y(x_0,y_0;x) \mod (x-x_0)^{\sigma}$ with $\sigma$ big enough, then we can compute $F$ from this truncated power series.\\

 %and then find a polynomial $\mathcal{F}$ vanishing at this power series solution.
 %If the power series order is large enough, it is sufficient for recovering $\mathcal{F}$. The determination of $\mathcal{F}$ indeed comes down to a linear algebra problem. Furthermore, the polynomial $\mathcal{F}$ as the following form: $P(x,y)Q(x_0,y_0)-Q(x,y)P(x_0,y_0)$ where $P/Q$ is a rational first integral. Thus the computation of $\mathcal{F}$ gives a rational first integral.\\
 In \cite{BCCW}, in order to avoid computations in $\KK(x_0,y_0)$ two solutions of $(E)$ with random initial conditions are used and give a probabilistic and then a deterministic algorithm.\\

%  Here we use the same approach, but instead of computing a solution, we consider the flow, i.e. the solution of the following differential equation where $x_0,y_0$ are new variables:\\
%$$(E)\; : \quad \partial_x y(x_0,y_0;x)=\frac{B(x,y(x_0,y_0;x))}{A(x,y(x_0,y_0;x))},  \textrm{ and }  y(x_0,y_0;x_0)=y_0.$$
%We compute the flow $y$ as a series in $x$ with coefficients in $\KK(x_0,y_0)$.
% The computation of the solution issued from a particular initial condition can be obtained just by evaluating $(x_0,y_0)$ in the $y$ series expansion, provided that the series coefficients do not become singular. So this naturally generalizes the computation of a solution.\\
In this article,  the new ingredient is the following: we consider derivatives of the flow $y(x_0,y_0;x)$ relatively to $y_0$. We set
$$y_1(x_0,y_0;x)=\partial_{y_0}y(x_0,y_0;x), \quad y_2(x_0,y_0;x)=\partial_{y_0}^2y(x_0,y_0;x),$$
$$y_3(x_0,y_0;x)=\partial_{y_0}^3y(x_0,y_0;x).$$
In the following we will sometimes omit the dependence relatively to $x_0,y_0$ in the notations. We will write $y(x)$ instead of $y(x_0,y_0;x)$ and $y_r(x)$ instead of $y_r(x_0,y_0;x)$, for $r=1,2,3$.\\

With a direct computation we remark that 
the functions $y(x),y_1(x),y_2(x),y_3(x)$  are  solutions of the differential system: 
$$(S'_3)\,
 \begin{cases}

(S'_2) 
\begin{cases}
(S'_1) \,
\begin{cases}
(S'_0) \, \quad \partial_x y=\dfrac{B}{A}\\
\,\\
\partial_xy_1= y_1\partial_y \left(\dfrac{B}{A}\right)\\
 \end{cases}\\
\, \\
\partial_x y_2=y_2\partial_y \left(\dfrac{B}{A}\right)+y_1^2\partial_y^2 \left(\dfrac{B}{A}\right)\\

\end{cases}\\
\, \\
\partial_x y_3=y_3\partial_y \left(\dfrac{B}{A}\right)+3y_2y_1\partial_y^2 \left(\dfrac{B}{A}\right)+y_1^3\partial_y^3 \left(\dfrac{B}{A}\right).\\
\end{cases}\\
$$

The system $(S'_r)$ gives a method to compute  $y(x)$ and $y_r(x)$ for $r=1,2,3$ as power series in $x$. We only have to solve $(S'_r)$ using the Newton method for initial condition $x=x_0,y=y_0,y_1=1,y_2=0,y_3=0$.\\
% As in \cite{BCCW}, in order to get efficient algorithms in practice we will consider 
% random intial conditions $x_0^{\star},y_0^{\star}$. This leads to probabilistic algorithms and then to deterministic algorithms.\\

 %Furthermore,  we will also sometimes denote  $y_2$ (respectively $y_3$) by $y_1^{(2)}$ (respectively  $y_1^{(3)}$) in order to write statements and algorithms in a short and uniform way in terms of $y_1^{(r)}$, where $r \in [[0;3]]$.\\

Now, we introduce new variables $y_1$, $y_2$ and $y_3$ and we define a polynomial derivation $D_r$ which is the Lie derivative along the vector field of $(S'_r)$.
%associated to the  system $(S'_r)$. Here, associated means that $\mathcal{F}$ is a first integral of $(S'_r)$ if and only if $\mathcal{F}$ satisfies $D_r(\mathcal{F})=0$.

\begin{defi}$\,$
\begin{itemize}
\item The system $(S'_1)$ is associated to the derivation $D_1$ in $\KK[x,y,y_1]$:
$$D_1=A^2\partial_x + AB\partial_y+ y_1A^2\partial_y \left(\dfrac{B}{A}\right) \partial_{y_1}.$$

\item The system $(S'_2)$ is associated to the derivation $D_2$ in $\KK[x,y,y_1,y_2]$:
$$D_2=A^3\partial_x + A^2B\partial_y+ y_1A^3\partial_y \left(\dfrac{B}{A}\right) \partial_{y_1}+A^3\Big(y_2\partial_y \left(\dfrac{B}{A}\right)+y_1^2\partial_y^2 \left(\dfrac{B}{A}\right)\Big) \partial_{y_2}.$$

\item The system $(S'_3)$ is associated to the derivation $D_3$ in $\KK[x,y,y_1,y_2,y_3]$:
 
 \begin{eqnarray*}
 D_3&=&A^4\partial_x + A^3B\partial_y+ y_1A^4\partial_y \left(\dfrac{B}{A}\right) \partial_{y_1}+A^4\Big(y_2\partial_y \left(\dfrac{B}{A}\right)+y_1^2\partial_y^2 \left(\dfrac{B}{A}\right)\Big) \partial_{y_2}\\
 &&+A^4\Big(y_3\partial_y \left(\frac{B}{A}\right)+3y_2y_1\partial_y^2 \left(\dfrac{B}{A}\right)+y_1^3\partial_y^3 \left(\dfrac{B}{A}\right)\Big)\partial_{y_3}.\end{eqnarray*}\\
 \end{itemize}
\end{defi}

Let us now consider $\mathcal{F}$ a Darbouxian first integral of $D_0$ such that $\partial_y \mathcal{F}=F$.

We have 
$$\mathcal{F}\big(x,y(x)\big)=\mathcal{F}(x_0,y_0)$$ and thus the derivative relatively to $y_0$ of this equation gives:
$$\partial_y\mathcal{F}\big(x,y(x)\big) y_1(x)=\partial_{y_0}\mathcal{F}(x_0,y_0).$$
%with initial condition $y_{1,0}=1$ since $y(x_0,y_0;x_0)=y_0$.
 Therefore if $\mathcal{F}$ is a Darbouxian first integral of $D_0$ with $\partial_y \mathcal{F}=F$ we get 
$$F\big(x,y(x)\big)y_1(x)=F(x_0,y_0),$$
where $F \in \KK(x,y)$.\\

Now the computation of the rational function $F$ comes down to solving this equation knowing $y(x),y_1(x)$ as power series. This situation is similar to the one studied for rational first integrals.\\ Let us remark moreover that the rational function $F(x,y)y_1$ is constant on $\big(x,y(x),y_1(x)\big)$, where the initial condition is $y(x_0)=y_0$ and $y_1(x_0)=1$. In Section~\ref{sec:first_int_diff_inv}, we prove that $F(x,y)y_1$ is a rational first integral for $(S'_1)$ and the even more general result:

\begin{prop} \label{propinv}
$\,$
\begin{itemize}
\item The system $(S)$ admits a rational first integral associated to \eqref{eqtype0} if and only if $F(x,y)$ is a rational first integral of $(S'_0)$.\\

\item The system $(S)$ admits a Darbouxian first integral associated to \eqref{eqtype1} if and only if $y_1F(x,y)$ is a rational first integral of $(S'_1)$.\\

\item The system $(S)$ admits a Liouvillian first integral associated to \eqref{eqtype2} if and only if $y_1F(x,y)+y_2/y_1$ is a rational first integral of $(S'_2)$.\\

\item The system $(S)$ admits a Riccati first integral associated to \eqref{eqtype3} if and only if $4y_1^2F(x,y)-2y_3/y_1+3y_2^2/y_1^2$ is rational a first integral of $(S'_3)$.\\
\end{itemize}
\end{prop}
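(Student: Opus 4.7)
The plan is to prove each of the four equivalences by differentiating the conservation law $\mathcal{F}(x,y(x_0,y_0;x))=\mathcal{F}(x_0,y_0)$ (or $u(x,y(x_0,y_0;x))=u(x_0,y_0)$ with $u=\mathcal{F}_1/\mathcal{F}_2$ in the Riccati case) repeatedly with respect to $y_0$, and using the defining differential equation of each class to eliminate the non-rational quantities. The rational case is a tautology since $(S'_0)$ coincides with $(S)$ and the variable $F(x,y)$ of \eqref{eqtype0} is itself the first integral.

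For the forward directions I would proceed as follows. In the Darbouxian case, one differentiation gives $\partial_y\mathcal{F}\cdot y_1=\partial_{y_0}\mathcal{F}(x_0,y_0)$; substituting $\partial_y\mathcal{F}=F$ shows that $y_1F$ is constant along the flow. In the Liouvillian case, a second differentiation gives $\partial_y^2\mathcal{F}\cdot y_1^2+\partial_y\mathcal{F}\cdot y_2$ constant along the flow, and using $\partial_y^2\mathcal{F}=F\partial_y\mathcal{F}$ together with division by the first-order conservation law $\partial_y\mathcal{F}\cdot y_1$ removes the non-rational factor $\partial_y\mathcal{F}$ and produces $y_1F+y_2/y_1$. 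In the Riccati case, the key fact is that the Schwarzian derivative $\{u;y\}=\partial_y^3 u/\partial_y u-\tfrac{3}{2}(\partial_y^2 u/\partial_y u)^2$ of a ratio $u=\mathcal{F}_1/\mathcal{F}_2$ of two independent solutions of $\partial_y^2\mathcal{F}-F\mathcal{F}=0$ equals $-2F$. Three successive differentiations of $u(x,y(x))=u(x_0,y_0)$ with respect to $y_0$ combined in the Schwarzian fashion (third divided by first, minus $\tfrac{3}{2}$ times the square of second divided by first) eliminate $\partial_y u$ and $\partial_y^2 u/\partial_y u$, leaving $-2Fy_1^2+y_3/y_1-\tfrac{3}{2}y_2^2/y_1^2$ constant along the flow; multiplying by $-2$ yields the stated invariant of $(S'_3)$.

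For the backward directions I would compute $D_r$ on the proposed expression directly. A short calculation using $D_1(F)=AD_0(F)$ and $D_1(y_1)=y_1A^2\partial_y(B/A)$ yields $D_1(y_1F)=y_1A\bigl[D_0(F)+AF\partial_y(B/A)\bigr]$, so vanishing is equivalent to $\partial_y(-BF/A)=\partial_x F$, that is, the closedness of $-(BF/A)\,dx+F\,dy$; integrating then produces a Darbouxian first integral with $\partial_y\mathcal{F}=F$. The analogous but longer computation reduces $D_2(y_1F+y_2/y_1)=0$ to $D_0(F)+AF\partial_y(B/A)+A\partial_y^2(B/A)=0$, which is precisely the integrability condition ensuring that the hyperexponential $u=\exp\!\int F\,dy$ can be realised as $\partial_y\mathcal{F}$ of a Liouvillian first integral.

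The main obstacle is the backward direction for the Riccati case. Here one must show that the PDE in $F$ extracted from $D_3\bigl(4y_1^2F-2y_3/y_1+3y_2^2/y_1^2\bigr)=0$ is precisely the compatibility condition under which the linear equation $\partial_y^2\mathcal{F}-F\mathcal{F}=0$ admits a pair of independent local solutions, defined over an algebraic extension of $\overline{\KK(x)}$, whose projective ratio $\mathcal{F}_1/\mathcal{F}_2$ is constant along trajectories of $(S)$. The delicate point is the $\mathrm{PGL}_2$-ambiguity in the choice of basis $(\mathcal{F}_1,\mathcal{F}_2)$: because the Schwarzian is the basic $\mathrm{PGL}_2$-invariant differential operator of order three in $y$, this is what singles out the combination $4y_1^2F-2y_3/y_1+3y_2^2/y_1^2$ and guarantees that the reconstruction of the quotient from the flow is well-defined modulo an irrelevant Möbius transformation of the initial conditions.
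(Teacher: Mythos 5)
There is a genuine gap, and it sits exactly where you flag ``the main obstacle'': the backward Riccati direction is described but not proved. You correctly reduce $D_3\bigl(4y_1^2F-2y_3/y_1+3y_2^2/y_1^2\bigr)=0$ to the condition $D_0(F)=-2A\partial_y(B/A)F+\tfrac12 A\partial_y^3(B/A)$, but the assertion that this condition ``guarantees that the reconstruction of the quotient from the flow is well-defined'' via $\mathrm{PGL}_2$-invariance of the Schwarzian is not an argument. What is actually needed is the content of Proposition~\ref{proprepresent}: one must exhibit two independent solutions of the \emph{system} $\partial_y^2\mathcal{F}-F\mathcal{F}=0$, $D_0(\mathcal{F})=\Omega\mathcal{F}$ for a suitable $\Omega$ (the paper takes $\Omega=\tfrac12 A\partial_y(B/A)$). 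The paper does this by showing that $\partial_x\mathcal{F}_i$ satisfies the inhomogeneous equation $\partial_y^2\mathcal{G}-F\mathcal{G}=(\partial_xF)\mathcal{F}_i$, guessing the particular solution $-\tfrac{B}{A}\partial_y\mathcal{F}_i+\tfrac12\mathcal{F}_i\partial_y(B/A)$ (this is where the hypothesis on $F$ is used), and then solving a $2\times2$ linear differential system in $x$ for coefficients $E_1,E_2$ so that $E_1\mathcal{F}_1+E_2\mathcal{F}_2$ solves both equations. Either cite Proposition~\ref{proprepresent} explicitly or reproduce this construction; as written the hardest quarter of the statement is missing. (The same remark applies, less severely, to your Liouvillian backward step: ``is precisely the integrability condition'' stands in for the closedness verification $\partial_x\Omega_1=\partial_y\Omega_2$ with $\Omega_1=F-\partial_yA/A$, which is a real computation.)

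A secondary issue: your forward arguments show that the candidate expression is constant along solutions of $(S'_r)$ with the \emph{specialized} initial conditions $y_1(x_0)=1$, $y_2(x_0)=y_3(x_0)=0$, which is only a two-parameter subfamily of the solutions of $(S'_r)$; being a rational first integral of $(S'_r)$ means $D_r$ of the expression vanishes identically. You need either the covariance argument (the general solution is the image of the specialized one under $T(y,y_1,\dots)=(y,y_{1,0}y_1,\dots)$, under which the invariant transforms affinely --- this is the paper's Lemmas~\ref{lem:eval_exta_darboux}, \ref{lem:eval_exta_liouv}, \ref{lem:eval_exta_ric}) or, as the paper's proof of Proposition~\ref{propinv} actually does, a direct computation of $D_r$ on the expression, which handles both directions at once. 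That said, your Schwarzian derivation of the Riccati invariant ($\{u;y\}=-2F$ plus the cocycle rule for Schwarzians under composition) is a genuinely nicer explanation of where the combination $4y_1^2F-2y_3/y_1+3y_2^2/y_1^2$ comes from than the paper's bare verification that $D_3(-2y_3/y_1+3y_2^2/y_1^2)=-2y_1^2\partial_y^3(B/A)$.
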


This proposition means that the computation of symbolic first integrals is reduced to the computation of a rational first integral with a given structure of a differential system $(S'_r)$. The existence and the computation of these rationals first integrals can then be done thanks to generalized extactic curves. More precisely, this can be done with linear algebra only. For example,  we get this kind of result, see Section~\ref{sec:extacurve}:

\begin{thm}[Liouvillian extactic curve Theorem]$\,$%\label{thm:exta+lfi}$\,$

We can construct from $(S)$ a matrix $\tilde{\mathcal{E}}_{D_2}^N(x_0,y_0)$ with entries in $\KK[x_0,y_0]$ such that its determinant denoted by $\tilde{E}^N_{D_2}(x_0,y_0)$, satisfies the following properties:
%Let $\tilde{\mathcal{E}}_{D_2}^N$ be the matrix 
%$$\tilde{\mathcal{E}}_{2,D_2}^N=\Big(D_2^k\big( x^iy^jy_1^{\alpha}(y_2)^{\beta}\big)\Big)$$
%where $0\leq k\leq 3(N+1)(N+2)/2$, $0\leq i+j\leq N$, $(\alpha,\beta) \in \{ (1,0); (2,0); (0,1) \}$. 
%We have $\det \tilde{\mathcal{E}}^N_{D_2} \in \KK[x,y,y_1,y_2]$ and we set $\tilde{E}^N_{D_2}(x,y)=\det \tilde{\mathcal{E}}^N_{D_2} (x,y,1,0)$.
\begin{enumerate}
\item If $\tilde{E}^N_{D_2}(x_0,y_0)=0$ then the derivation $D_0$ has a Liouvillian first integral with degree smaller than $N$ or a Darbouxian first integral with degree smaller than $2N+3d-1$ or a rational first integral with degree smaller than \mbox{$4N+8d-3$}.
\item If $D_0$ has a rational or a Darbouxian or a Liouvillian first integral with degree smaller than $N$ then $\tilde{E}^N_{D_2}(x_0,y_0)=0$.
\end{enumerate}

\end{thm}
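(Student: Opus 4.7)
The plan is to apply a Pereira-type extactic curve argument to the auxiliary derivation $D_2$, using a vector space $W_N$ tailored so that each type of first integral of $D_0$ yields, via Proposition~\ref{propinv}, a rational first integral of $D_2$ with numerator and denominator in $W_N$. I would set $V_N = \{p\in\KK[x,y] : \deg p\leq N\}$ and define
\[ W_N \;:=\; y_1^2 V_N \,\oplus\, y_1 V_N \,\oplus\, y_2 V_N \;\subset\; \KK[x,y,y_1,y_2], \]
a $\KK$-vector space of dimension $\ell = 3\binom{N+2}{2}$. Fixing a basis $(w_1,\dots,w_\ell)$, the matrix $\mathcal{E}_{D_2}^N = (D_2^{i-1}(w_j))_{1\leq i,j\leq \ell}$ has polynomial entries, and its specialization at $(x,y,y_1,y_2)=(x_0,y_0,1,0)$ --- precisely the initial data of the flow of $(S'_2)$ at time $x_0$ --- yields $\tilde{\mathcal{E}}_{D_2}^N(x_0,y_0)$ with entries in $\KK[x_0,y_0]$.

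For property~(2), I would argue as follows. A first integral of $D_0$ of degree $\leq N$ in any of the three classes gives, by Proposition~\ref{propinv}, a rational first integral $H$ of $D_2$ of the form $F$, $y_1 F$, or $y_1 F + y_2/y_1$ with $F = P_0/Q_0$, $\deg P_0,\deg Q_0\leq N$. Clearing denominators by $y_1$ or $y_1 Q_0$, $H$ becomes $P/Q$ with $P,Q\in W_N$ --- for instance in the Liouvillian case $H = (y_1^2 P_0 + y_2 Q_0)/(y_1 Q_0)$. The classical extactic curve theorem applied to $(D_2, W_N)$ forces $\det\mathcal{E}_{D_2}^N$ to vanish identically in $\KK[x,y,y_1,y_2]$, and the specialization then gives $\tilde{E}_{D_2}^N \equiv 0$ in $\KK[x_0,y_0]$.

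For property~(1), I would pass to the flow. Consider $(y(x), y_1(x), y_2(x))$, the formal solution of $(S'_2)$ with initial data $(y_0,1,0)$ at $x_0$, and set $\omega_j(x) := w_j(x, y(x), y_1(x), y_2(x)) \in \KK(x_0, y_0)[[x-x_0]]$. Since $D_2$ acts along the flow as $A^3\cdot d/dx$, a triangular row manipulation identifies $\tilde{E}_{D_2}^N(x_0,y_0)$, up to a nonzero power of $A(x_0,y_0)$, with the value at $x=x_0$ of the Wronskian-type determinant of $(\omega_j)$. Combined with the Darboux polynomial property of the extactic --- $D_2(\det\mathcal{E}_{D_2}^N)$ is a polynomial cofactor multiple of $\det\mathcal{E}_{D_2}^N$ --- the vanishing of $\tilde{E}_{D_2}^N$ on the slice $\{y_1=1,y_2=0\}$ propagates to identical vanishing of $\det\mathcal{E}_{D_2}^N$ on $\KK[x,y,y_1,y_2]$, and Pereira's theorem yields a rational first integral $H = P/Q$ of $D_2$ with $P,Q\in W_N$.

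The last step is a case analysis on $H$. Writing $P = y_1^2 P_1 + y_1 P_2 + y_2 P_3$, $Q = y_1^2 Q_1 + y_1 Q_2 + y_2 Q_3$ with $P_i, Q_i \in V_N$ and $\gcd(P,Q)=1$, I distinguish three cases. If $\partial_{y_2}H\neq 0$, inverting the Liouvillian dictionary of Proposition~\ref{propinv} gives a Liouvillian first integral with $F\in\KK(x,y)$ of degree $\leq N$. If $\partial_{y_2}H = 0$ but $\partial_{y_1}H\neq 0$, then $H$ is a rational first integral of $D_1$, and a Möbius manipulation recovers $y_1 F$ from $H$; the cofactors arising from $D_1 = A^2\partial_x + AB\partial_y + y_1 A^2\partial_y(B/A)\partial_{y_1}$ have degree $\leq 3d - 1$, inflating the extracted $F$ to degree $\leq 2N + 3d - 1$. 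If $H\in\KK(x,y)$, a resultant elimination of both $y_1$ and $y_2$ applied twice doubles the bound to $4N + 8d - 3$. The main obstacle is precisely this degree bookkeeping: one must carefully track how the cofactors of $D_2$, polynomials of degree $\leq 3d$ in $x,y$, inflate the degrees of the extracted first integrals through the successive Möbius/resultant manipulations.
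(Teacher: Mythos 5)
Your setup (the space $W_N=V_2$, the specialized matrix, and the argument for property (2)) matches the paper. But the proof of property (1) has two genuine gaps. First, the vanishing of the extactic determinant for $(D_2,W_N)$ does \emph{not} yield a rational first integral $H=P/Q$ of $D_2$ with $P,Q\in W_N$: Pereira's criterion (Proposition~\ref{Propextwronskian} and Proposition~\ref{rrefker}) only produces a linear dependence of the basis monomials over the field of rational first integrals of $D_2$, i.e.\ a kernel element $\sum_j c_j(x_0,\underline{y}_0)\,b_j(x,\underline{y})$ whose \emph{coefficients} are first integrals — not a first integral that is a ratio of two elements of $W_N$. The paper's own example following Remark~\ref{remdegiprexta} shows that the extactic can vanish at level $N$ with no rational first integral of degree $N$. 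So your case analysis on $\partial_{y_2}H$, $\partial_{y_1}H$ starts from an object you have not constructed. Relatedly, your justification for propagating the vanishing from the slice $\{y_{1,0}=1,\,y_{2,0}=0\}$ to generic initial data via the ``Darboux polynomial property'' of the extactic is insufficient (a Darboux polynomial can perfectly well vanish on a codimension-two slice without vanishing identically); the correct argument is the conjugation of flows by the linear map $T(y,y_1,y_2)=(y,\,y_{1,0}y_1,\,y_{1,0}^2y_2+y_{2,0}y_1)$, which preserves $V_2$ — this is Lemma~\ref{lem:eval_exta_liouv}.

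Second, even granting a kernel element $Py_1^2+Qy_2+Ry_1$, your case split and degree bookkeeping do not match how the bounds actually arise, and your first case is false as stated: $Q\neq 0$ (i.e.\ genuine dependence on $y_2$) does \emph{not} imply a Liouvillian first integral of degree $\leq N$. What one gets by evaluating along the flow and differentiating (Lemma~\ref{lem:ker_exta_liouv}) is the identity
\[
0=y_1(x)\Big(D_0(P/Q)+A(P/Q)\partial_y(B/A)+A\partial_y^2(B/A)\Big)\big(x,y(x)\big)+y_{1,0}\,D_0(R/Q)\big(x,y(x)\big),
\]
and only when the first bracket vanishes identically does Proposition~\ref{proprepresent} give a Liouvillian first integral of degree $\leq N$. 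When it does not vanish, one must set $P_1=A^2Q^2\big(D_0(P/Q)+\cdots\big)$, $Q_1=A^2Q^2D_0(R/Q)$, polynomials of degree $\leq 2N+3d-1$, observe that $y_1P_1+Q_1$ lies in the Darbouxian extactic kernel at that degree, and invoke Theorem~\ref{thm:exta+dfi} there; this is the sole source of the bounds $2N+3d-1$ (Darbouxian) and $2(2N+3d-1)+2d-1=4N+8d-3$ (rational). Your scheme instead attributes $2N+3d-1$ to the case $\partial_{y_2}H=0$ (which in fact yields a Darbouxian first integral of degree $\leq N$ or a rational one of degree $\leq 2N+2d-1$) and $4N+8d-3$ to a ``resultant elimination'' that is never needed. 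The degree bookkeeping you flag as ``the main obstacle'' is exactly the part of the argument that is missing.
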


We will explain in Section~\ref{sec:extacurve} how to obtain explicitly the matrix $\tilde{\mathcal{E}}_{D_2}^N(x_0,y_0)$ from the derivation $D_2$.
%see that the entries of $\tilde{\mathcal{E}}_{D_2}^N$ are explicit. Indeed, they have the following form: $$D_2^k\big( x^iy^jy_1^{\alpha}(y_2)^{\beta}\big),$$ where $0\leq k\leq 3(N+1)(N+2)/2$, $0\leq i+j\leq N$, $(\alpha,\beta) \in \{ (1,0); (2,0); (0,1) \}$.\\

 We call this kind of theorem an ``extactic curve theorem". Indeed, the matrix $\tilde{\mathcal{E}}_{D_2}^N$ corresponds to the study of a high order of contact between a solution of the differential system $(S)$ and a Liouvillian function. This generalizes the situation introduced by Pereira in \cite{Pereira} for rational first integrals.\\

The main step in our algorithms will be the computation of a non-trivial element in the kernel of a matrix of the previous form. From such an element we will show that we can easily construct  the rational function $F$ appearing in equations \eqref{eqtype0}, \eqref{eqtype1}, \eqref{eqtype2}, \eqref{eqtype3}. Therefore, the computation of symbolic first integrals with bounded degree is reduced to a linear algebra problem.\\

%
%%%%%%%%%%%%%%%%%%%%%%%%%%%%%%%%%%%%%%%%%%%%%%%%%%%%%%%%%%%%%%%%%%%

%%%%%%%%%%%%%%%%%%%%%%%%%%%%%%%%%%%%%%%%%%%%%%%%%%%%%%%%%%%%%%%%%%%%%
\subsection{Related results}\label{sec:related_results}
The computation of symbolic first integrals can be credited to G. Darboux in 1878, see \cite{Da}. In this paper Darboux introduced what we call nowadays \emph{Darboux polynomials}. A polynomial $f\in \KK[x,y]$ is a Darboux polynomial for $D_0$ means that  $f$ divides $D_0(f)$. Thus $f$ is an invariant algebraic curve. Darboux has shown how to find a Darbouxian first integral thanks to a recombination of Darboux polynomials.\\
This approach has been generalized in order to compute elementary first integrals by Prelle and Singer in \cite{PrelleSinger}. This method has been implemented and studied in \cite{Man,ManMac}.\\
In \cite{Singer}, Singer has given a theoretical characterization of Liouvillian first integrals. This characterization is the main ingredient of the algorithm proposed by Duarte et al. in \cite{Avellar,Duarte2,Duarte3}.\\
The interested reader can also consult the following surveys \cite{Schlomiuk,Goriely,DLA,zhang2017integrability} for more results about Darboux polynomials and first integrals.\\

Roughly speaking, all the previous algorithms proceed as follows: first compute Darboux polynomials with bounded degree and second recombine them in order to find a first integral.\\
These two steps correspond to two practical difficulties. The computation of Darboux polynomials with bounded degree can be performed in polynomial time, see \cite{Cheze}. This method is based on the so-called extactic curve introduced by Pereira in \cite{Pereira} and uses a number of binary operations that is polynomial in the bound $N$, the degree $d$ and the logarithm of the height of $A$ and $B$. Unfortunately, the arithmetic complexity of this computation is in $\bigO(N^{4\omega+4})$, see \cite{Cheze}.\\
 The recombination part can be solved with linear algebra  if we are looking for Darbouxian first integrals. However, if we are looking for a Liouvillian first integral then the recombination step used in \cite{Avellar,Duarte2,Duarte3} uses at least $2^d$ arithmetics operations. Indeed, this algorithm tries to solve a family of equation. Each equation of this family is constucted from a polynomial $\prod_i f_i^{e_i}$, where $f_i$ is a Darboux polynomial and $e_i$ is an unknown integer. A condition on the degree of the output leads to a condition on the degree of $\prod_i f_i^{e_i}$. With this approach if $D_0$ has $k$ Darboux polynomials and the bound on the degree of $\prod_i f_i^{e_i}$ is bigger than $k$ then we have to study at least $2^k$ situations.\\
In \cite{Avellar,Duarte2,Duarte3}, in order to find a Liouvillian first integral, the authors compute a Darbouxian integrating factor $\mathcal{R}=e^{P/Q}\prod_if_i^{c_i}$. With our approach the integrating factor $\mathcal{R}$ is related to the equation (\ref{eqtype2}) in the following way:
$$\dfrac{\partial_y \mathcal{R}}{\mathcal{R}}=F-\dfrac{\partial_y A}{A}.$$
Thus our bound on the degree of $F$ corresponds to a bound on the degree of the polynomials $P,Q,f_i$.

In \cite{FerragutGiacomini}, Ferragut and Giacomini have proposed a method to compute rational first integrals with bounded degree.  This approach does not follow the previous strategy. The idea is to computed a bivariate polynomial annihilating $y(x_0,y_0;x)$ written as a  power series solution of a first order differential  equation. From this polynomial we can then deduce a rational first integral if it exists. Unfortunately, the precision needed on the power series to get a correct output was not explicitly given.\\
In \cite{BCCW}, the authors have improved the Ferragut-Giacomini's method. They have given an explicit bound on the precision needed on the power series to get a rational first integral when it exists. Furthermore, the main step of this algorithm is reduced to linear algebra only. The complexity of the probabilistic algorithm is then in $\bigOsoft(N^{2\omega})$.  However, as remarked by G. Villard this complexity can be lowered to $\bigOsoft(N^{\omega+1})$ with an application of Hermite-Pad\'e approximation. This approach was just study as an heuristic in \cite{BCCW}.\\

The algorithm proposed in this article is based on a generalization of the extactic curve and follows the idea used in \cite{FerragutGiacomini} and \cite{BCCW}. We give then a uniform strategy with a uniform complexity to compute rational, Darbouxian, Liouvillian, and Riccati first integrals. Furthermore, we explain in Section \ref{sec:first_int_diff_inv} why our approach cannot be generalized to another class of functions.

%%%%%%%%%%%%%%%%%%%%%%%%%%%%%%%%%%%%%%%%%%%%%%%%%%%%%%%%%%%%%%%%%

%%%%%%%%%%%%%%%%%%%%%%%%%%%%%%%%%%%%%%%%%%%%%%%%%%%%%%%%%%%%%%%%%%

\subsection{Structure of the paper}
In the second section of this article we prove Proposition~\ref{propinv}, i.e. we show how the computation of a symbolic first integral can be reduced to the computation of a rational first integral of a differential system $(S'_r)$. In the third section, we define and study extactic hypersurfaces. We give a precise statement for the following idea: if an hypersurface has a sufficiently big order of contact with a generic solution of a differential system then this order of contact is infinite. This result will be useful in our algorithm in order to compute a solution with a sufficient precision in order to construct a first integral. As a byproduct we show that the computation of a rational first integral of a derivation in $\KK[x_1,\ldots,x_n]$ can be reduced to a linear algebra problem. In Section \ref{sec:extacurve},  we define the Darbouxian, Liouvillian and the Riccati extactic curve. We prove that these curves allow us to characterize the existence of symbolic first integrals with bounded degree. In Section \ref{sec:extaeval}, we study the evaluation of the extactic curves. In particular we characterize non-generic solutions. In Section~\ref{sec:algo}, we give and prove the correctness of our algorithms based on the previous results. In Section \ref{sec:complexity}, we study the complexity of our algorithms. At last, in Section \ref{sec:examples} we give some examples thanks to our implementation of these algorithms. Our implementation is freely available at:\\
 \texttt{http://combot.perso.math.cnrs.fr/software.html},\\
 \texttt{https://www.math.univ-toulouse.fr/$\sim$cheze/Programme.html}.
%%%%%%%%%%%%%%%%%%%%%%%%%%%%%%%%%%%%%%%%%%%%%%%%%%%%%%%%%%%%%
\subsection{Notations}
In this article, we suppose that $\gcd(A,B)=1$.\\
We denote by $\KK[x,y]_{\leq N}$  the vector space of polynomials in $x,y$ with coefficients in $\mathbb{K}$ of total degree less than $N$.\\
We set $div=\partial_x A +\partial_y B$.\\

In the following, $y(x_0,y_0;x)$ will be a power series solution of 
$$(E)\; : \quad \partial_x y(x_0,y_0;x)=\frac{B(x,y(x_0,y_0;x))}{A(x,y(x_0,y_0;x))},  \textrm{ and }  y(x_0,y_0;x_0)=y_0.$$

We set \vspace{-0.2cm}
\begin{eqnarray*}
y_1(x_0,y_0;x)&=&\partial_{y_0}y(x_0,y_0;x),\\
y_2(x_0,y_0,x)&=&\partial_{y_0}^2y(x_0,y_0;x),\\
y_3(x_0,y_0;x)&=&\partial_{y_0}^3y(x_0,y_0;x).
\end{eqnarray*}
%$$y_1(x_0,y_0;x)=\partial_{y_0}y(x_0,y_0;x), \quad y_2(x_0,y_0,x)=\partial_{y_0}^2y(x_0,y_0;x),$$
%$$ y_3(x_0,y_0;x)=\partial_{y_0}^3y(x_0,y_0;x).$$

In the following, we will sometimes omit the dependence relatively to $x_0$ and $y_0$. We will write $y(x)$ instead of $y(x_0,y_0;x)$ and $y_r(x)$ instead of $y_r(x_0,y_0;x)$, for $r=1,2,3$.\\

These functions $y(x),y_1(x),y_2(x),y_3(x)$  are  solutions of 
$$(S'_3)\,
 \begin{cases}

(S'_2) 
\begin{cases}
(S'_1) \,
\begin{cases}
(S'_0) \, \quad \partial_x y=\dfrac{B}{A}\\
\,\\
\partial_xy_1= y_1\partial_y \left(\dfrac{B}{A}\right)\\
 \end{cases}\\
\, \\
\partial_x y_2=y_2\partial_y \left(\dfrac{B}{A}\right)+y_1^2\partial_y^2 \left(\dfrac{B}{A}\right)\\

\end{cases}\\
\, \\
\partial_x y_3=y_3\partial_y \left(\dfrac{B}{A}\right)+3y_2y_1\partial_y^2 \left(\dfrac{B}{A}\right)+y_1^3\partial_y^3 \left(\dfrac{B}{A}\right),\\
\end{cases}\\
$$
with initial condition $y(x_0)=y_0$, $y_1(x_0)=y_{1,0}$, $y_2(x_0)=y_{2,0}$, $y_3(x_0)=y_{3,0}$.

%In the following we will sometimes omit the dependence relatively to $x_0,y_0$ in the notations. We will write $y(x)$ instead of $y(x_0,y_0;x)$ and $y_r(x)$ instead of $y_r(x_0,y_0;x)$, for $r=1,2,3$.
% Furthermore,  we will also sometimes denote  $y_2$ (respectively $y_3$) by $y_1^{(2)}$ (respectively  $y_1^{(3)}$) in order to write statements and algorithms in a short and uniform way in terms of $y_1^{(r)}$, where $r \in [[0;3]]$.\\
%We use the notation $y_2$  in order to obtain formulae with terms of the form $y_2^2$ instead of terms with the longer form $\big(y_1^{(2)}\big)^2$.
%%%%%%%%%%%%%%%%%%%%%%%%%%%%%%%%%%%%%%%%%%%%%%%%%%%

%%%%%%%%%%%%%%%%%%%%%%%%%%%%%%%%%%%%%%%%%%%%%%%%%%%%%%
\section{First integrals and differential invariants}\label{sec:first_int_diff_inv}
%%%%%%%%%%%%%%%%%%%%%%%%%%%%%%%%%%%%%%%%%%%%%%%%%%%%%%%%%%%%%%%%%%%%%%%%%%
\subsection{Representation of first integrals}

Let us first prove that equations \eqref{eqtype0}, \eqref{eqtype1}, \eqref{eqtype2}, \eqref{eqtype3} used to represent the first integrals allow one to recover them. The next proposition explains  why \eqref{eqtype0}, \eqref{eqtype1}, \eqref{eqtype2}, \eqref{eqtype3} are admissible outputs when we are looking for rational or $k$-Darbouxian or Liouvillian or Riccati first integrals.

\begin{prop} $\;$
\begin{itemize}
\item A rational first integral is uniquely defined by equation \eqref{eqtype0}.
\item A $k$-Darbouxian first integral is defined up to addition of a constant by equation \eqref{eqtype1}.
\item A Liouvillian first integral is defined up to affine transformation by equation~\eqref{eqtype2}.
\item A Riccati first integral is defined up to homography by equation \eqref{eqtype3}.
\end{itemize}
\end{prop}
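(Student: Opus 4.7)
The plan is to treat each of the four cases separately, using in each case the algebraic structure of the solution set of the defining equation together with the first integral condition $D_0\mathcal{F}=0$. Throughout, the crucial input is that $\gcd(A,B)=1$ and in particular $A\neq 0$, which ensures that a function of $x$ alone that is annihilated by $D_0$ must be a genuine constant in $\overline{\KK}$.

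The rational case is immediate, since \eqref{eqtype0} literally reads $\mathcal{F}=F$. For the $k$-Darbouxian case, if $\mathcal{F}_1,\mathcal{F}_2$ both satisfy $\partial_y\mathcal{F}=F$ and both are first integrals, then $h:=\mathcal{F}_1-\mathcal{F}_2$ satisfies $\partial_y h=0$, so $h$ depends only on $x$; but $D_0(h)=A\,\partial_x h=0$ with $A\neq 0$ forces $h$ to be constant.

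In the Liouvillian case, I would fix one first integral solution $\mathcal{F}_0$ of \eqref{eqtype2} (necessarily with $\partial_y\mathcal{F}_0\neq 0$, otherwise $D_0(\mathcal{F}_0)=A\,\partial_x\mathcal{F}_0$ would force $\mathcal{F}_0$ to be constant), and observe that \eqref{eqtype2} is a linear second-order ODE in $y$ whose solution space over the ring of functions of $x$ alone admits the basis $\{1,\mathcal{F}_0\}$. Writing any other solution as $\mathcal{F}_2=c_1(x)+c_2(x)\mathcal{F}_0$ and expanding $D_0(\mathcal{F}_2)$ while using $D_0(\mathcal{F}_0)=0$ yields
\begin{equation*}
D_0(\mathcal{F}_2)=A\bigl(c_1'(x)+c_2'(x)\mathcal{F}_0\bigr);
\end{equation*}
since $\mathcal{F}_0$ is not a function of $x$ alone, this vanishes only when $c_1,c_2\in\overline{\KK}$, giving the asserted affine ambiguity.

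For the Riccati case, the solution space of \eqref{eqtype3} is a two-dimensional $\overline{\KK(x)}$-vector space, so any two bases are related by a matrix in $\mathrm{GL}_2(\overline{\KK(x)})$, and the corresponding quotients are related by a Möbius transformation $\mathcal{H}=(a(x)\mathcal{G}+b(x))/(c(x)\mathcal{G}+d(x))$ with \emph{a priori} coefficients in $\overline{\KK(x)}$, where $\mathcal{G}$ is the first Riccati first integral. Using $D_0(\mathcal{G})=0$, the chain rule (viewing $\mathcal{H}$ as a rational function of the independent variables $x$ and $\mathcal{G}$) collapses $D_0(\mathcal{H})$ to $A\,\partial_x|_{\mathcal{G}}\mathcal{H}$; requiring this to vanish amounts to $\mathcal{H}$ being independent of $x$ in the variables $(x,\mathcal{G})$. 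Clearing denominators in $\partial_x|_{\mathcal{G}}\mathcal{H}=0$ and identifying the coefficients of $1,\mathcal{G},\mathcal{G}^2$ yields $a'c=ac'$, $b'd=bd'$ and $a'd+b'c=ad'+bc'$, which (after dispatching the degenerate cases $c\equiv 0$ and the trivial case where $\mathcal{H}$ is constant) force the projective class $a:b:c:d$ to lie in $\mathbb{P}^3(\overline{\KK})$, so $\mathcal{H}$ is a genuine homography of $\mathcal{G}$. The most delicate step is exactly this last one, since one must carefully distinguish what is projectively constant from what is truly constant, and handle the degenerate configurations of $a,b,c,d$ without losing any case.
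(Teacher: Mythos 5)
Your argument is correct in all four cases, and for the rational and $k$-Darbouxian cases it coincides with the paper's (the paper phrases the Darbouxian case as ``both partial derivatives of $\mathcal{F}$ are known'', you phrase it as ``the difference of two solutions is killed by $\partial_y$ and by $D_0$'' --- same content). For the last two cases you take a genuinely different route. In the Liouvillian case the paper sets $R=\partial_y\mathcal{F}$ and computes \emph{both} logarithmic derivatives $\partial_y R/R=F$ and $\partial_x R/R=-\partial_y(B/A)-(B/A)F$, concluding that $R$ is determined up to a multiplicative constant and hence $\mathcal{F}$ up to an affine map; you instead use the basis $\{1,\mathcal{F}_0\}$ of the solution space of the second-order ODE over functions of $x$ and show that $D_0$ applied to $c_1(x)+c_2(x)\mathcal{F}_0$ forces $c_1,c_2$ constant. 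Your version is in fact the same mechanism the paper uses for its \emph{Riccati} case (augment the ODE by the first-integral condition, show the solution space over true constants is the right dimension), so it unifies the two cases nicely; what the paper's version buys is an explicit formula for the integrating factor $R$, which it reuses later. In the Riccati case the paper works at the level of the linear system, adjoining $D_0(\mathcal{F}_i)=\Omega\mathcal{F}_i$ and showing the augmented system has a two-dimensional space of solutions over $\KK$, so that the quotient changes by a homography; you work directly on the quotient, writing the second first integral as a M\"obius transform of the first with $x$-dependent coefficients and forcing those coefficients to be projectively constant. Two small points you should make explicit for your route to be airtight: (i) the base-change coefficients $a,b,c,d$ live a priori in the constants of $\partial_y$, i.e.\ arbitrary functions of $x$, not just $\overline{\KK(x)}$ (this does not change your computation); and (ii) identifying the coefficients of $1,\mathcal{G},\mathcal{G}^2$ requires knowing these are linearly independent over functions of $x$, which follows because $\partial_y\mathcal{G}$ equals the Wronskian of $\mathcal{F}_1,\mathcal{F}_2$ divided by $\mathcal{F}_2^2$ and is therefore nonzero, so $\mathcal{G}$ cannot satisfy a quadratic relation over functions of $x$. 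The paper's linear-system formulation sidesteps both of these issues, as well as the degenerate configurations of $a,b,c,d$ that you rightly flag as the delicate step.
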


\begin{proof}
\emph{The rational case.}\\
According to equation \eqref{eqtype0}, a rational first integral is simply $F$, and so defined uniquely by equation \eqref{eqtype0}.\\

\emph{The Darbouxian case.}\\
A $k$-Darbouxian first integral $\mathcal{F}$ satisfies an equation \eqref{eqtype1}. Indeed, with our definition we have $\partial_y \mathcal{F}=F$. We also know that it should satisfy the equation of first integrals. This gives
$$\partial_y \mathcal{F}=F(x,y),\qquad A(x,y) \partial_x \mathcal{F}(x,y)+B(x,y) \partial_y \mathcal{F}(x,y)=0.$$
Thus we know the derivative of $\mathcal{F}$ with respect to $x$ and $y$, and so \eqref{eqtype1} defines $\mathcal{F}$ up to an addition of a constant.\\

\emph{The Liouvillian case.}\\
A Liouvillian first integral $\mathcal{F}$ satisfies an equation \eqref{eqtype2}. Setting
$$R(x,y)=\partial_y \mathcal{F}(x,y),$$
 equation \eqref{eqtype2} becomes
$$\frac{\partial_y R}{R}=F(x,y).$$

We now use the first integral equation \eqref{eqint}, dividing it by $A$ and differentiating with respect to $y$, which gives
$$\partial_x\partial_y\mathcal{F}+\partial_y \left(\frac{B}{A} \partial_y\mathcal{F} \right)=0,$$
$$\Rightarrow \partial_x R+\partial_y \left(\frac{B}{A} R \right)=0,$$
$$ \Rightarrow \partial_x R+\partial_y \left(\frac{B}{A}\right) R + \frac{B}{A} F R=0,$$
$$\Rightarrow \frac{\partial_x R}{R}=-\partial_y \left(\frac{B}{A}\right) - \frac{B}{A} F.$$
Therefore we know the logarithmic derivatives of $R$ with respect to $x$ and $y$, and thus we obtain $R$ up to a multiplication by a constant. Then as we have 
$$\partial_y \mathcal{F}(x,y)=R(x,y), \quad \partial_x \mathcal{F}(x,y)=-\frac{B}{A}(x,y) R(x,y)$$
we obtain $\mathcal{F}$ from $R$ up to addition of a constant. Thus equation \eqref{eqtype2} defines $\mathcal{F}$ up to an affine transformation.\\

\emph{The Riccati case.}\\
A Riccati first integral is a quotient of two solutions  $\mathcal{F}_1,\mathcal{F}_2$ of equation \eqref{eqtype3} that are independent over $\overline{\KK(x)}$. Knowing that the quotient is a first integral, we have moreover
$$D_0\left(\frac{\mathcal{F}_1}{\mathcal{F}_2}\right)=\frac{1}{\mathcal{F}_2^2} \left( D_0(\mathcal{F}_1)\mathcal{F}_2-\mathcal{F}_1D_0(\mathcal{F}_2)\right)=0$$
and thus
$$\frac{D_0(\mathcal{F}_1)}{\mathcal{F}_1}=\frac{D_0(\mathcal{F}_2)}{\mathcal{F}_2}$$
Let us note $\Omega= D_0(\mathcal{F}_i)/\mathcal{F}_i$ (for $i=1$ or $2$ as they are equal), the functions $\mathcal{F}_1,\mathcal{F}_2$ are solutions of the PDE system
\begin{equation}\label{eqric0}
\partial_y^2 \mathcal{F}_i-F(x,y) \mathcal{F}_i=0, \quad D_0(\mathcal{F}_i)-\Omega(x,y)\mathcal{F}_i=0.
\end{equation}

 Let us now consider a solution of this system.\\
  Due to the first equation, we can write it
$$C_1(x)\mathcal{F}_1(x,y)+C_2(x)\mathcal{F}_2(x,y).$$
Now substituting this in the second equation gives
$$C_1\big(D_0(\mathcal{F}_1)-\Omega\mathcal{F}_1\big)+C_2\big(D_0(\mathcal{F}_2)-\Omega\mathcal{F}_2\big)+A\partial_x C_1\mathcal{F}_1+A\partial_x C_2\mathcal{F}_2=0.$$
Thus
$$\partial_x C_1(x)\mathcal{F}_1(x,y)+\partial_x C_2(x)\mathcal{F}_2(x,y)=0.$$
As the functions $\mathcal{F}_1,\mathcal{F}_2$ are  independent over the functions in $x$, the previous equality implies that we have \mbox{$\partial_x C_1(x)=0$}, $\partial_x C_2(x)=0$, and so $C_1,C_2$ are constants. Thus the dimension over $\KK$ of the vector space of solutions of \eqref{eqric0} is exactly $2$. So the system \eqref{eqric0} defines $\mathcal{F}_1,\mathcal{F}_2$ up to a change of basis. This change of basis acts on the quotient $\mathcal{F}_1/\mathcal{F}_2$ as a homography.
\end{proof}

The canonical equations of the output of our algorithm thus define the first integral up to a composition by a simple single variable function.\\

Below, we give a necessary and sufficient criterion for ensuring that an equation \eqref{eqtype0}, \eqref{eqtype1}, \eqref{eqtype2}, \eqref{eqtype3} leads to a first integral.\\

\begin{prop}\label{proprepresent}$\,$
\begin{itemize}
\item A solution of equation \eqref{eqtype0} is a rational first integral if and only if
$$D_0(F)=0,\;F \in \overline{\KK}(x,y) \setminus \overline{\KK}.$$
\item A solution of equation \eqref{eqtype1} is a $k$-Darbouxian first integral if and only if 
$$D_0(F)=-AF\partial_y(B/A), \; F^k \in \overline{\KK}(x,y)\setminus \{0\}.$$
\item 
A solution of equation \eqref{eqtype2} is a Liouvillian first integral if and only if 
$$D_0(F)=-A\partial_y(B/A)F-A\partial_y^2(B/A), \; F \in \overline{\KK}(x,y).$$
\item
A solution of equation \eqref{eqtype3} is a Riccati first integral if and only if
$$D_0(F)=-2A\partial_y(B/A)F+\frac{1}{2}A\partial_y^3(B/A), \; F \in \overline{\KK}(x,y).$$
\end{itemize}
\end{prop}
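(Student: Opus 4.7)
The plan is to handle each of the four cases by a forward/backward argument. In each case, the forward direction is obtained by iterating $\partial_y$ on the first integral equation $A\partial_x\mathcal{F} + B\partial_y\mathcal{F} = 0$ and using the defining equation for $F$ to eliminate the higher $y$-derivatives of $\mathcal{F}$; the backward direction identifies the resulting condition on $F$ as precisely the compatibility needed to reconstruct a first integral.

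The rational case is immediate since $\mathcal{F} = F$. For the Darbouxian case, applying $\partial_y$ to the first integral equation and substituting $\partial_y\mathcal{F} = F$ and $\partial_x\mathcal{F} = -(B/A)F$ gives $D_0(F) = -AF\partial_y(B/A)$. This equation rearranges to $\partial_x F = \partial_y(-BF/A)$, i.e.\ the closedness of the $1$-form $F\,dy - (BF/A)\,dx$; the converse follows by integrating this closed form into a function $\mathcal{F}$ that satisfies $\partial_y\mathcal{F} = F$ and $\partial_x\mathcal{F} = -BF/A$, hence $D_0(\mathcal{F}) = 0$. For the Liouvillian case, set $R = \partial_y\mathcal{F}$; the Darbouxian calculation yields $\partial_x R/R = -\partial_y(B/A) - (B/A)F$, while \eqref{eqtype2} reads $\partial_y R/R = F$. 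Equating the mixed partials of $\log R$ produces $D_0(F) = -AF\partial_y(B/A) - A\partial_y^2(B/A)$. Conversely, this compatibility integrates $\log R$ to a hyperexponential $R$, and then $R\,dy - (BR/A)\,dx$ is closed by construction, so integrates to a Liouvillian first integral.

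For the Riccati case, the key ingredient is the Schwarzian identity: if $\mathcal{F} = \mathcal{F}_1/\mathcal{F}_2$ with $\mathcal{F}_1,\mathcal{F}_2$ two independent solutions of $\partial_y^2\psi = F\psi$, then $\{\mathcal{F}, y\} = -2F$, that is, $4u_1^2 F = 3u_2^2 - 2u_1 u_3$ where $u_k = \partial_y^k\mathcal{F}$. Successively differentiating the first integral equation in $y$ produces the variational relations
\begin{align*}
D_0(u_1) &= -A\partial_y(B/A)\,u_1,\\
D_0(u_2) &= -A\partial_y^2(B/A)\,u_1 - 2A\partial_y(B/A)\,u_2,\\
D_0(u_3) &= -A\partial_y^3(B/A)\,u_1 - 3A\partial_y^2(B/A)\,u_2 - 3A\partial_y(B/A)\,u_3.
\end{align*}
Applying $D_0$ to $4u_1^2 F = 3u_2^2 - 2u_1 u_3$ and substituting these relations, the $u_2, u_3$ contributions collapse via the Schwarzian identity itself into multiples of $4u_1^2 F$, and one obtains $D_0(F) = -2AF\partial_y(B/A) + \frac{1}{2}A\partial_y^3(B/A)$.

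The main obstacle is the converse in the Riccati case. Given $F$ satisfying this relation, one must produce two $\overline{\KK(x)}$-independent solutions $\mathcal{F}_1,\mathcal{F}_2$ of $\partial_y^2\psi = F\psi$ whose ratio is a first integral, equivalently satisfying $D_0(\mathcal{F}_i) = \Omega\,\mathcal{F}_i$ for a common scalar $\Omega$. The condition on $F$ is precisely the requirement that the commutator $[\partial_y^2 - F, D_0]$ vanish modulo the operator $\partial_y^2 - F$, so that $D_0$ preserves the two-dimensional solution space of $\partial_y^2\psi - F\psi = 0$ over $\overline{\KK(x)}$ and acts on it by multiplication by some scalar $\Omega(x,y)$, which can be read off from the commutator. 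Standard ODE theory then supplies a basis $\mathcal{F}_1,\mathcal{F}_2$, and $D_0(\mathcal{F}_1/\mathcal{F}_2) = 0$ holds by construction.
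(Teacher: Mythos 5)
The rational, Darbouxian and Liouvillian cases are fine and essentially follow the paper's route (closedness of the relevant $1$-form, resp.\ compatibility of the two logarithmic derivatives of $R=\partial_y\mathcal{F}$). Your forward direction for the Riccati case, via the Schwarzian identity $4u_1^2F=3u_2^2-2u_1u_3$ together with the variational relations for $u_k=\partial_y^k(\mathcal{F}_1/\mathcal{F}_2)$, is correct and is a genuinely different (and cleaner) derivation than the paper's, which instead differentiates the PDE system for the $\mathcal{F}_i$ and uses the nonvanishing of their Wronskian; your computation is really the same one the paper performs later for Proposition~\ref{propinv}.

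The converse of the Riccati case — which you rightly flag as the main obstacle — has a genuine gap. First, the literal claim is false: $[\partial_y^2-F,\,D_0]$ contains the term $2(\partial_yA)\,\partial_x\partial_y$ coming from $[\partial_y^2,A\partial_x]$, which cannot be reduced modulo $\partial_y^2-F$ whenever $A$ depends on $y$; the commutator identity only works for the normalized operator $M_0=\partial_x+\frac{B}{A}\partial_y-\frac{1}{2}\partial_y\bigl(\frac{B}{A}\bigr)$, for which the hypothesis on $F$ gives $[\partial_y^2-F,\,M_0]=2\partial_y\bigl(\frac{B}{A}\bigr)\cdot(\partial_y^2-F)$. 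Second, and more importantly, even after this correction, $M_0$ only \emph{preserves} the solution space of $\partial_y^2\psi=F\psi$; it does not act on it by a scalar. It acts as a connection ($\overline{\KK}$-linearly but not $\overline{\KK(x)}$-linearly, because of the $\partial_x$), so for an arbitrary basis $\mathcal{F}_1,\mathcal{F}_2$ one only gets $D_0(\mathcal{F}_i)-\Omega\mathcal{F}_i=A(C_{i1}\mathcal{F}_1+C_{i2}\mathcal{F}_2)$ with $C_{ij}$ functions of $x$, not $D_0(\mathcal{F}_i)=\Omega\mathcal{F}_i$. One must still produce the horizontal sections: solve the $2\times2$ first-order linear ODE system in $x$ for coefficients $E_1(x),E_2(x)$ so that $E_1\mathcal{F}_1+E_2\mathcal{F}_2$ satisfies $D_0\psi=\Omega\psi$, and check that two independent solutions of that system stay independent over $\overline{\KK(x)}$. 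This is precisely the content of the paper's construction (the explicit particular solution $\mathcal{G}_i=-\frac{B}{A}\partial_y\mathcal{F}_i+\frac{1}{2}\mathcal{F}_i\partial_y(\frac{B}{A})$ and the system $\partial_xE_1=-E_1C_1-E_2C_3$, $\partial_xE_2=-E_1C_2-E_2C_4$), and it is the step your proposal replaces with an assertion that does not hold.
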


\begin{proof}\underline{\emph{The rational case}}:\\
This is  the definition of rational first integrals.\\

\underline{\emph{The $k$-Darbouxian case}}:\\
 In the $k$-Darbouxian case, we have
$$\partial_y \mathcal{F}=F(x,y),\quad D_0(\mathcal{F})=0.$$
So this is equivalent to
$$\partial_y \mathcal{F}=F(x,y),\quad \partial_x \mathcal{F}=-\frac{B}{A}F.$$
So a necessary and sufficient condition for a Darbouxian first integral $\mathcal{F}$ to exist is the closed form condition,
\begin{eqnarray*}
\partial_x F=-\partial_y\left(\frac{B}{A}F \right)& \iff &\partial_xF=-\partial_y\Big(\dfrac{B}{A}\Big)F-\dfrac{B}{A}\partial_y F\\
&\iff & A \partial_x F + B \partial_y F=-A \partial_y\Big(\dfrac{B}{A}\Big)F\\& \iff & D_0(F)=-A \partial_y\Big(\dfrac{B}{A}\Big)F
\end{eqnarray*}
which gives the condition of the proposition.\\ 

\underline{\emph{The Liouvillian case}}:\\
In the Liouvillian case, the first integral $\mathcal{F}$ has to solve the PDE system
$$(\star) \quad\partial_y^2 \mathcal{F}-F\partial_y \mathcal{F}=0,\quad \quad (\star \star)\quad D_0(\mathcal{F})=0.$$
The derivative relatively to $x$ of $(\star)$ gives:
$$(\mathcal{A})\quad\partial_x \partial_y^2 \mathcal{F}- \partial_xF \partial_y \mathcal{F}-F \partial_x\partial_y \mathcal{F}=0.$$
The derivative relatively to $y$ of $(\star\star)$ divided by $A$ and then simplified thanks to $(\star)$ gives:
$$\partial_y\partial_x \mathcal{F}+\partial_y\Big(\dfrac{B}{A} \Big) \partial_y \mathcal{F}+\dfrac{B}{A}F\partial_y\mathcal{F}=0.$$
The derivative relatively to $y$ of the previous equality gives:
$$(\mathcal{B})\quad \partial_y^2\partial_x \mathcal{F}+\partial_y^2\Big(\dfrac{B}{A}\Big) \partial_y \mathcal{F}+ \partial_y\Big(\dfrac{B}{A}\Big) \partial_y^2 \mathcal{F}+\partial_y\Big(\dfrac{B}{A}F\Big) \partial_y \mathcal{F}+ \dfrac{B}{A}F\partial_y^2\mathcal{F}=0.$$

The difference $(\mathcal{A})-(\mathcal{B})$ simplified thanks to $(\star)$ gives:
$$-\partial_x F \partial_y \mathcal{F} - F \partial_x \partial_y \mathcal{F} - \partial_y^2\Big(\dfrac{B}{A} \Big) \partial_y\mathcal{F}
-2\partial_y\Big(\dfrac{B}{A}\Big)F\partial_y \mathcal{F}
-\dfrac{B}{A} \partial_y F \partial_y \mathcal{F}-\dfrac{B}{A}F \partial_y^2 \mathcal{F}=0.$$
The equation $(\star \star)$ implies:
\begin{eqnarray*}
0&=& -\partial_x F \partial_y \mathcal{F} - F  \partial_y\Big(-\dfrac{B}{A}\partial_y \mathcal{F}\Big) - \partial_y^2\Big(\dfrac{B}{A} \Big) \partial_y\mathcal{F} -2 \partial_y\Big( \dfrac{B}{A} \Big) F \partial_y\mathcal{F}
-\dfrac{B}{A} \partial_y F \partial_y \mathcal{F}\\
&&-\dfrac{B}{A}F \partial_y^2 \mathcal{F}\\
&=& -\left(\partial_x F+\frac{B}{A}\partial_y F +\partial_y(B/A)F+\partial_y^2(B/A) \right) \partial_y \mathcal{F}. \\
\end{eqnarray*}
If $\partial_y \mathcal{F}=0$  then $\mathcal{F}$ only depend on $x$. This is impossible as this would imply $A=0$ or $\mathcal{F}$ constant. So the only possibility left is
$$\partial_x F+\frac{B}{A}\partial_y F +\partial_y(B/A)F+\partial_y^2(B/A)=0. $$
This is the condition of the proposition.\\

Conversely, we suppose that $D_0(F)=-A\partial_y(B/A)F-A\partial_y^2(B/A)$. We are going to prove that in this situation $D_0$ has a Liouvillian first integral with integrating factor $\mathfrak{R}=e^{\mathfrak{F}}$, where $\mathfrak{F}$ is the integral of a rational closed $1$-form. We set
$$\Omega_1=F-\dfrac{\partial_y A}{A}, \quad \Omega_2=\dfrac{-div-B\Omega_1}{A}.$$
We are going to show that $\partial_x (\Omega_1)= \partial_y(\Omega_2)$. Then this implies that there exists a Darbouxian function $\mathfrak{F}$ such that $\Omega_1=\partial_y\mathfrak{F}$ and $\Omega_2=\partial_x \mathfrak{F}$. By construction we have $A \Omega_2+B\Omega_1=-div$, and so $B\mathfrak{R}dx-A\mathfrak{R}dy$ is closed, where $\mathfrak{R}=e^{\mathfrak{F}}$ . Therefore $\mathfrak{R}$ is the integrating factor of a Liouvillian first integral, and we get the desired result.\\
Thus, now we are going to prove that $\partial_x (\Omega_1)= \partial_y(\Omega_2)$. We have:
\begin{eqnarray*}
\partial_x(\Omega_1)- \partial_y (\Omega_2)&=& \partial_xF- \partial_x\Big(\dfrac{\partial_y A}{A}\Big)\\
&&+\partial_y\Big(\dfrac{div}{A}\Big)+ \partial_y\Big(\dfrac{B}{A}F\Big) -\partial_y\Big(\dfrac{B}{A}\dfrac{\partial_y A}{A} \Big).\\
\end{eqnarray*}
As $$-\partial_x \Big( \dfrac{\partial_y A}{A} \Big) +\partial_y \Big( \dfrac{ div}{A} \Big)= \dfrac{\partial_y^2 B}{A}- \dfrac{\partial_y B \partial_y A}{A^2},$$
we get
\begin{eqnarray*}
\partial_x(\Omega_1)- \partial_y (\Omega_2)&=&\partial_xF+\dfrac{\partial_y^2 B}{A}- \dfrac{\partial_y B \partial_y A}{A^2} + \partial_y\Big(\dfrac{B}{A}\Big)F + \dfrac{B}{A}\partial_yF\\
&&-\partial_y\Big(\dfrac{B}{A}\Big)\dfrac{\partial_y A}{A}
- \dfrac{B}{A} \partial_y\Big(\dfrac{\partial_y A}{A}\Big).\\
\end{eqnarray*}

By rearranging the terms of the equation, we get
\begin{eqnarray*}
\partial_x(\Omega_1)- \partial_y (\Omega_2)&=& \partial_x F +\Big(\dfrac{B}{A}\Big)\partial_yF+\partial_y\Big(\dfrac{B}{A}\Big)F\\
&&+\dfrac{\partial_y^2 B}{A}-\dfrac{\partial_y B \partial_y A}{A^2}- \partial_y\Big(\dfrac{B}{A}\Big) \dfrac{\partial_y A}{A}- \dfrac{B}{A}\partial_y\Big(\dfrac{\partial_y A}{A}\Big).
\end{eqnarray*}
Now, we note that
$$ \partial_y^2 \Big(\dfrac{B}{A}\Big)= \dfrac{\partial_y^2 B}{A}-\dfrac{\partial_y B \partial_y A}{A^2}- \partial_y\Big(\dfrac{B}{A}\Big) \dfrac{\partial_y A}{A}- \dfrac{B}{A}\partial_y\Big(\dfrac{\partial_y A}{A}\Big).$$
Thus 
$$\partial_x(\Omega_1)- \partial_y (\Omega_2)= \partial_x F +\Big(\dfrac{B}{A}\Big)\partial_yF+\partial_y\Big(\dfrac{B}{A}\Big)F +\partial_y^2 \Big(\dfrac{B}{A}\Big).$$
By hypothesis the right hand side of this equation is equal to zero. This gives the desired result.\\
 
\underline{\emph{The Riccati case}}:\\
In the Riccati case, a first integral is a quotient of two functions $\mathcal{F}_1,\mathcal{F}_2$, solutions of a PDE system of the form
$$(\Diamond)\quad \partial_y^2 \mathcal{F}_i-F \mathcal{F}_i=0,\quad \quad (\Diamond \Diamond) \quad D_0(\mathcal{F}_i)-\Omega\mathcal{F}_i=0.$$
The derivative relatively to $x$ of $(\Diamond)$ and then simplified by $(\Diamond \Diamond)$ gives:
$$(\mathcal{C}) \quad \partial_x \partial_y^2 \mathcal{F}_i - \partial_xF .\mathcal{F}_i-F\Big( \dfrac{\Omega}{A} \mathcal{F}_i-\dfrac{B}{A} \partial_y\mathcal{F}_i\Big)=0.$$
The derivative relatively to $y$ of $(\Diamond \Diamond)$ divided by $A$ and then simplified thanks to $(\Diamond)$ gives:
$$\partial_y \partial_x \mathcal{F}_i + \partial_y\Big(\dfrac{B}{A}\Big) \partial_y \mathcal{F}_i+\dfrac{B}{A}F \mathcal{F}_i-\partial_y\Big(\dfrac{\Omega}{A}\Big) \mathcal{F}_i-\dfrac{\Omega}{A}\partial_y \mathcal{F}_i=0.$$
The derivative relatively to $y$ of the previous equality and simplified thanks to $(\Diamond)$ gives:
\begin{eqnarray*}
(\mathcal{D})\quad 0 &=&\partial_y^2 \partial_x \mathcal{F}_i + \partial_y^2\Big( \dfrac{B}{A} \Big) \partial_y \mathcal{F}_i + \partial_y\Big( \dfrac{B}{A} \Big) F \mathcal{F}_i+ \partial_y\Big(\dfrac{B}{A}F\Big)\mathcal{F}_i+\dfrac{B}{A}F\partial_y \mathcal{F}_i\\
&& -\partial_y^2\Big(\dfrac{\Omega}{A}\Big) \mathcal{F}_i 
-\partial_y\Big(\dfrac{\Omega}{A} \Big) \partial_y\mathcal{F}_i - \partial_y\Big( \dfrac{\Omega}{A}\Big) \partial_y \mathcal{F}_i- \dfrac{\Omega}{A}F \mathcal{F}_i.\\
\end{eqnarray*}
The difference $(\mathcal{C})-(\mathcal{D})$ gives:
$$\left(-\partial_x F - \partial_y\Big(\dfrac{B}{A}\Big)F- \partial_y\Big( \dfrac{B}{A}F\Big)+\partial_y^2\Big( \dfrac{\Omega}{A}\Big) \right) \mathcal{F}_i+ \left( 2\partial_y\Big( \dfrac{\Omega}{A}\Big) - \partial_y^2\Big(\dfrac{B}{A}\Big) \right) \partial_y \mathcal{F}_i=0.$$

The functions $\mathcal{F}_1,\mathcal{F}_2$ are independent. Thus the Wronskian of $\mathcal{F}_1,\mathcal{F}_2$ in $y$ is not $0$. Therefore, $(\mathcal{F}_1,\partial_y \mathcal{F}_1)$ and $(\mathcal{F}_2, \partial_y \mathcal{F}_2)$ are linearly independent. The above linear form thus vanishes on them and is identically $0$, then
$$2\partial_y \left(\frac{\Omega}{A}\right)-\partial_y^2 \left(\frac{B}{A}\right)=0,$$
$$-\partial_x F-\frac{B}{A}\partial_y F-2\partial_y\Big(\dfrac{B}{A}\Big)F+\partial_y^2\Big(\dfrac{\Omega}{A}\Big)=0. $$
Therefore, by substituting the first  equation into the second, we get:
$$\partial_x F+\frac{B}{A}\partial_y F=-2\partial_y\Big(\dfrac{B}{A}\Big)F+\dfrac{1}{2}\partial_y^3\Big(\dfrac{B}{A}\Big).$$
This is the condition given by the proposition.\\

Conversely, let us prove that if the equation 
$$D_0(F)=-2A \partial_y\Big(\dfrac{B}{A} \Big) F +\dfrac{1}{2} A \partial_y^3\Big(\dfrac{B}{A} \Big)$$
 is satisfied, then \eqref{eqtype3} leads to a Riccati first integral.\\
 
  Let us choose
$$\Omega=\frac{1}{2}A\partial_y \left(\frac{B}{A}\right)$$
and prove that the system
\begin{equation}\label{eqricsys}
\partial_y^2 \mathcal{F}-F \mathcal{F}=0,\quad D_0(\mathcal{F})-\Omega\mathcal{F}=0
\end{equation}
has two independent solutions. The quotient of these two solutions gives then a Riccati first integral.\\

The strategy used to prove the existence of these two independent solutions is the following:\\
We consider $\mathcal{F}_1$, $\mathcal{F}_2$ a basis of solutions of $\partial_y^2 \mathcal{F}-F(x,y) \mathcal{F}=0$, i.e. two solutions independent over \emph{the constant field of functions in $x$}. Then, we are going to prove that there exists two independent couples $(E_1,E_2)$ of functions in $x$ such that $E_1 \mathcal{F}_1+E_2 \mathcal{F}_2$ satisfies the second equation of (\ref{eqricsys}). This gives then two independent solutions of equation  (\ref{eqricsys}).\\

In order to follow our strategy, we  need an explicit  expression of $\partial_x \mathcal{F}_1$ and $\partial_x \mathcal{F}_2$. These expressions will be useful when we will consider the equation $D_0(\mathcal{F})- \Omega \mathcal{F}=0$.\\
Differentiating in $x$ the equation $\partial_y^2 \mathcal{F} -F \mathcal{F}=0$ gives
$$\partial_y^2 \partial_x \mathcal{F}-\big(\partial_xF\big) \mathcal{F}-F\partial_x \mathcal{F}=0.$$
Setting $\mathcal{G}=\partial_x \mathcal{F}$, the previous equation can be rewritten
$$(\sharp)\quad \partial_y^2 \mathcal{G}-F\mathcal{G}=\big(\partial_xF\big) \mathcal{F}.$$
This is a linear differential equation with a non homogeneous term $\big(\partial_xF\big) \mathcal{F}$. %Let us now consider $\mathcal{F}_1,\mathcal{F}_2$ a basis of solutions of $\partial_y^2 \mathcal{F}-F(x,y) \mathcal{F}=0$, i.e. two solutions independent over \emph{the constant field of functions in $x$}.

Now, we solve equation $(\sharp)$ with $\mathcal{F}=\mathcal{F}_i,\;i=1,2$, this gives:
$$(\sharp\sharp)\quad \partial_y^2 \mathcal{G}_i-F\mathcal{G}_i=\big(\partial_xF) \mathcal{F}_i(x,y).$$
We already know a basis of solutions of the homogeneous part, and we guess as particular solution
$$\mathcal{G}_i=-\frac{B}{A} \partial_y \mathcal{F}_i+\frac{1}{2}\mathcal{F}_i\partial_y \left(\frac{B}{A}\right).$$

Indeed, thanks to the relation $\partial_y^2 \mathcal{F}-F\mathcal{F}=0$, we get:
$$\partial_y \mathcal{G}_i=-\dfrac{1}{2}\partial_y\Big(\dfrac{B}{A} \Big) \partial_y \mathcal{F}_i - \dfrac{B F}{A} \mathcal{F}_i+\dfrac{1}{2}\mathcal{F}_i \partial_y^2\Big(\dfrac{B}{A} \Big).$$
The derivation relatively to $y$ of the previous equation gives:\\
\begin{eqnarray*}
\partial_y^2 \mathcal{G}_i& =& - \dfrac{1}{2} \partial_y\Big(\dfrac{B}{A} \Big)F \mathcal{F}_i- \partial_y\Big(\dfrac{BF}{A} \Big) \mathcal{F}_i -\dfrac{BF}{A}\partial_y\mathcal{F}_i+\dfrac{1}{2} \mathcal{F}_i \partial_y^3\Big(\dfrac{B}{A}\Big).\\
\end{eqnarray*}
By rearranging the terms we get
\begin{eqnarray*}
\partial_y^2 \mathcal{G}_i& =&\dfrac{1}{2} \partial_y\Big(\dfrac{B}{A} \Big)F\mathcal{F}_i - \dfrac{BF}{A} \partial_y \mathcal{F}_i  -\partial_y\Big(\dfrac{B}{A} \Big)F\mathcal{F}_i  -\partial_y \Big(\dfrac{BF}{A}\Big)\partial_y\mathcal{F}_i+\dfrac{1}{2} \mathcal{F}_i \partial_y^3\Big(\dfrac{B}{A}\Big).
\end{eqnarray*}
Thanks to the definition of $\mathcal{G}_i$ we get:

\begin{eqnarray*}
\partial_y^2 \mathcal{G}_i&=&F\mathcal{G}_i+\left( -F\partial_y \Big( \dfrac{B}{A} \Big) - \partial_y\Big( \dfrac{BF}{A} \Big) +\dfrac{1}{2} \partial_y^3\Big(\dfrac{B}{A} \Big)\right)\mathcal{F}_i\\
&=&F\mathcal{G}_i+\left( -2\partial_y \Big( \dfrac{B}{A} \Big)F - \dfrac{B}{A}\partial_y F +\dfrac{1}{2} \partial_y^3\Big(\dfrac{B}{A} \Big)\right)\mathcal{F}_i.
\end{eqnarray*}

As by hypothesis, we have $D_0(F)=-2A \partial_y\Big(\dfrac{B}{A} \Big) F +\dfrac{1}{2} A \partial_y^3\Big(\dfrac{B}{A} \Big)$, this implies
$$ -2\partial_y \Big( \dfrac{B}{A} \Big)F - \dfrac{B}{A}\partial_y F +\dfrac{1}{2} \partial_y^3\Big(\dfrac{B}{A}\Big)=\partial_x F.$$
This proves that $\mathcal{G}_i$ is a particular solution of ($\sharp \sharp$).

So the solutions of $(\sharp\sharp)$ are respectively for $i=1,2$ of the form
$$\mathcal{G}_1=C_1\mathcal{F}_1+C_2\mathcal{F}_2-\frac{B}{A} \partial_y \mathcal{F}_1+\frac{1}{2}\mathcal{F}_1\partial_y \left(\frac{B}{A}\right)$$
$$\mathcal{G}_2=C_3\mathcal{F}_1+C_4\mathcal{F}_2-\frac{B}{A} \partial_y \mathcal{F}_2+\frac{1}{2}\mathcal{F}_2\partial_y \left(\frac{B}{A}\right)$$
where the $C_i$ \emph{depend on $x$ only}. So we deduce that there exists $C_1,C_2,C_3,C_4$ functions of $x$ only such that
$$\partial_x \mathcal{F}_1=C_1\mathcal{F}_1+C_2\mathcal{F}_2-\frac{B}{A} \partial_y \mathcal{F}_1+\frac{1}{2}\mathcal{F}_1\partial_y \left(\frac{B}{A}\right),$$
$$\partial_x \mathcal{F}_2=C_3\mathcal{F}_1+C_4\mathcal{F}_2-\frac{B}{A} \partial_y \mathcal{F}_2+\frac{1}{2}\mathcal{F}_2\partial_y \left(\frac{B}{A}\right).$$
We now search solutions of equations \eqref{eqricsys} of the form
$$E_1\mathcal{F}_1+E_2\mathcal{F}_2$$
with $E_1,E_2$ functions of $x$ only. Substituting it in equations \eqref{eqricsys}, we obtain $0$ for the first, and for the second
$$(E_1C_1+E_2C_3+\partial_x E_1)\mathcal{F}_1+(E_1C_2+E_2C_4+\partial_x E_2)\mathcal{F}_2=0.$$
As $\mathcal{F}_1,\mathcal{F}_2$ are independent over functions in $x$, this is equivalent to the system 
$$\partial_x E_1=-E_1C_1-E_2C_3,\quad \partial_x E_2=-E_1C_2-E_2C_4.$$
This is a $2\times 2$ linear differential system, and so admits two independent solutions. Then these two solutions $E_1$, $E_2$ give two independent solutions of equations \eqref{eqricsys} of the form $E_1\mathcal{F}_1+E_2\mathcal{F}_2$. Their quotient is then a first integral.
\end{proof}
 During the previous proof we have shown the following:
 \begin{cor}\label{cor:cofacteurRiccati}
 If $D_0$ has a Riccati first integral $\mathcal{F}_1/\mathcal{F}_2$ then we can suppose that 
 $$D_0(\mathcal{F}_i)=\dfrac{1}{2}A\partial_y\Big( \dfrac{B}{A}\Big) \mathcal{F}_i.$$
 \end{cor}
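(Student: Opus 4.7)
The plan is to observe that this corollary is essentially already established inside the converse direction of the Riccati case of Proposition~\ref{proprepresent}. That proof made the specific choice $\Omega=\tfrac{1}{2}A\partial_y(B/A)$ and constructed, for this $\Omega$, two independent solutions $\tilde{\mathcal{F}}_1,\tilde{\mathcal{F}}_2$ of the overdetermined system
\[
\partial_y^2 \mathcal{F}-F \mathcal{F}=0, \qquad D_0(\mathcal{F})-\Omega\mathcal{F}=0,
\]
which by definition satisfy $D_0(\tilde{\mathcal{F}}_i)=\tfrac{1}{2}A\partial_y(B/A)\tilde{\mathcal{F}}_i$. The only thing the corollary adds is that these representatives compute the \emph{same} Riccati first integral (up to homography) as any pair $(\mathcal{F}_1,\mathcal{F}_2)$ we started from.

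So, given an arbitrary Riccati first integral $\mathcal{F}_1/\mathcal{F}_2$ of $D_0$, I would first read off the associated rational function $F$ from equation~\eqref{eqtype3}. By the direct implication of Proposition~\ref{proprepresent}, this $F$ satisfies the compatibility identity
\[
D_0(F)=-2A\partial_y\!\left(\tfrac{B}{A}\right)F+\tfrac{1}{2}A\partial_y^3\!\left(\tfrac{B}{A}\right).
\]
Then I would invoke the converse implication of that same proposition, applied to this $F$, to produce the pair $\tilde{\mathcal{F}}_1,\tilde{\mathcal{F}}_2$ with cofactor $\Omega=\tfrac{1}{2}A\partial_y(B/A)$.

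Finally, I would argue that $\tilde{\mathcal{F}}_1/\tilde{\mathcal{F}}_2$ coincides with the original Riccati first integral up to a homography. Indeed, both $(\mathcal{F}_1,\mathcal{F}_2)$ and $(\tilde{\mathcal{F}}_1,\tilde{\mathcal{F}}_2)$ are bases of the two-dimensional $\overline{\KK(x)}$-vector space of solutions of the linear ODE $\partial_y^2 \mathcal{F}-F\mathcal{F}=0$ in the variable $y$; they must therefore be related by an invertible $2\times 2$ matrix with entries in $\overline{\KK(x)}$, which acts on the quotient exactly as a homography. Since a Riccati first integral is defined only up to such homographies, the replacement $(\mathcal{F}_1,\mathcal{F}_2)\mapsto(\tilde{\mathcal{F}}_1,\tilde{\mathcal{F}}_2)$ is legitimate and yields the desired normalization $D_0(\mathcal{F}_i)=\tfrac{1}{2}A\partial_y(B/A)\mathcal{F}_i$.

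There is no real obstacle here; the only point to check is that the new basis is still independent over $\overline{\KK(x)}$, but this independence was precisely what was established during the proof of the proposition (via the $2\times 2$ linear differential system in $E_1,E_2$), so no additional argument is needed.
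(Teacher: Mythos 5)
Your proof is correct and takes essentially the same route as the paper, whose entire justification for this corollary is the observation that the converse direction of the Riccati case of Proposition~\ref{proprepresent} was carried out with the explicit choice $\Omega=\tfrac{1}{2}A\partial_y\big(\tfrac{B}{A}\big)$, so the normalized pair $\tilde{\mathcal{F}}_1,\tilde{\mathcal{F}}_2$ is already constructed there. One small caution about your closing paragraph: the change of basis relating $(\mathcal{F}_1,\mathcal{F}_2)$ to $(\tilde{\mathcal{F}}_1,\tilde{\mathcal{F}}_2)$ has entries in $\overline{\KK(x)}$ rather than constants, so it is not the homography under which a Riccati first integral is defined in the uniqueness proposition --- but this does not matter, since the corollary (and its later use) only requires the existence of \emph{some} normalized representative pair, which your argument supplies.
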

 
%%%%%%%%%%%%%%%%%%%%%%%%%%%%%%%%%%%%%%%%%%%
\subsection{Casale's Theorem}

We have defined $4$ types of first integrals. We are going to prove that there are no other types of first integrals with algebraic-differential properties. Recall that the flow is defined by
$$\quad \partial_x y(x_0,y_0;x)=\frac{B\big(x,y(x_0,y_0;x)\big)}{A\big(x,y(x_0,y_0;x)\big)},  \textrm{ and }  y(x_0,y_0;x_0)=y_0. $$
We are also interested in $y_r(x)=\partial_{y_0}^ry(x_0,y_0;x),\; r=1,2, 3$. These functions belong to $\KK(x_0,y_0)[[x-x_0]]$.
%and we note them $y(x_0,y_0;x),y_1(x_0,y_0;x),y_2(x_0,y_0;x)$, $y_3(x_0,y_0;x)$ respectively. Sometimes we will not write the dependence on $x_0,y_0$.
 As explained in the introduction,  $y,y_1,y_2,y_3$ are  seen as functions in $x$, solutions of some differential systems $(S'_r)$,
%$$(S'_3)\,
% \begin{cases}
%(S'_2) 
%\begin{cases}
%(S'_1) \,
%\begin{cases}
%(S'_0) \, \quad \partial_x y=\dfrac{B}{A}\\
%\,\\
%\partial_xy_1= y_1\partial_y \left(\dfrac{B}{A}\right)\\
% \end{cases}\\
%\, \\
%\partial_x y_2=y_2\partial_y \left(\dfrac{B}{A}\right)+y_1^2\partial_y^2 \left(\dfrac{B}{A}\right)\\
%
%\end{cases}\\
%\, \\
%\partial_x y_3=y_3\partial_y \left(\dfrac{B}{A}\right)+3y_2y_1\partial_y^2 \left(\dfrac{B}{A}\right)+y_1^3\partial_y^3 \left(\dfrac{B}{A}\right)\\
%\end{cases}\\
%$$
%and by construction,
 and their initial conditions are
$$y(x_0)=y_0,\, y_1(x_0)=1,\, y_2(x_0)=0,\,  y_3(x_0)=0.$$
%The system $(S')$ gives a method to compute the flow $y(x_0,y_0;x)$ and finitely many of its derivatives as series in $x$: we only have to solve $(S')$ using the Newton method for initial condition $x=x_0,y=y_0,y_1=1,y_2=0,y_3=0$. \\

If $\mathcal{F}$ is a first integral of $D_0$ then we have $\mathcal{F}\big(x,y(x_0,y_0;x)\big)=\mathcal{F}(x_0,y_0)$. As mentioned in the introduction, the derivation relatively to $y_0$ of this relation gives with our notations:
$$(\star)\quad \partial_y \mathcal{F}\big(x,y(x)\big)y_1(x)=\partial_{y_0}\mathcal{F}(x_0,y_0).$$
Therefore if $\mathcal{F}$ is a Darbouxian first integral, we have $\partial_y \mathcal{F}=F \in \overline{\KK}(x,y)$ and then
$$F\big(x,y(x)\big)y_1(x)=F(x_0,y_0).$$

Thus the rational function $F(x,y)y_1 \in \KK(x,y,y_1)$ is constant on $\big(x,y(x),y_1(x)\big)$, where the initial condition is $y(x_0)=y_0$ and $y_1(x_0)=1$. In Proposition \ref{propinv}, we prove that $F(x,y)y_1$ is also a rational first integral for $(S'_1)$.\\
 
 The derivation relatively to $y_0$ of equation $(\star)$ gives
% $$\partial_{y}^2 \mathcal{F}\big(x,y(x_0,y_0;x)\big)\Big(\partial_{y_0} y(x_0,y_0;x)\Big)^2+ \partial_{y} \mathcal{F}\big(x,y(x_0,y_0;x)\big) \partial_{y_0}^2 y(x_0,y_0;x)=\partial_{y_0}^2 \mathcal{F}(x_0,y_0).$$
$$\partial_{y}^2 \mathcal{F}\big(x,y(x)\big)\Big(y_1(x)\Big)^2+ \partial_{y} \mathcal{F}\big(x,y(x)\big)  y_2(x)=\partial_{y_0}^2 \mathcal{F}(x_0,y_0).$$
  If $ \mathcal{F}$ is Liouvillian then $\partial_y^2 \mathcal{F}/\partial_y \mathcal{F} = F$, and we get
 $$F\big(x,y(x)\big)y_1(x)+\dfrac{y_2(x)}{y_1(x)}=F(x_0,y_0).$$
We are also going to prove in Proposition \ref{propinv} that $F(x,y)y_1+y_2/y_1$ is a rational first integral of $(S'_2)$. \\
 
At last, for a Riccati first integral similar computations give a rational expression in $x,y,y_1,\dots, y_3$ which happens to be a rational first integral for $(S'_3)$.\\

The reason why we stop at $r=3$ is the following.

\begin{thm}[Casale]\label{thmcasale}
If there exists $J(x,y,y_1,\dots,y_n) \in \KK(x,y,y_1,\dots,y_n)$ such that
$$J\big(x_0,y(x_0),y_1(x_0),\dots, y_n(x_0)\big)=J\big(x,y(x),y_1(x),\dots,y_n(x)\big),$$
where $y_i(x)=\partial_{y_0}^iy(x_0,y_0;x)$,
then there exists a rational function \mbox{$h(x,y) \in \KK(x,y)$}  satisfying one of the following equalities:
\begin{itemize}
\item[$\bullet$] $h\big(x_0,y(x_0)\big)=h\big(x,y(x)\big)$,\\
\item[$\bullet$] $h\big(x_0,y(x_0)\big)=h\big(x,y(x)\big)y_1(x)^k$, with $k\in\mathbb{N}^*$,\\
\item[$\bullet$] $h\big(x_0,y(x_0)\big)=h\big(x,y(x)\big)y_1(x)+y_2(x)/y_1(x)$,\\
\item[$\bullet$] $
h\big(x_0,y(x_0)\big)=h\big(x,y(x)\big)y_1^2(x)+  2y_3(x)/y_1(x)
-3y_2^2(x)/y_1^2(x) 
$.
\end{itemize}
\end{thm}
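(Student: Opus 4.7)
The plan is to invoke the classification of the Malgrange groupoid of a planar vector field (due to Casale) together with Lie's classical list of transitive Lie pseudogroups acting on the line. The existence of the rational invariant $J$ of order $n$ means that the prolongation of $D_0$ to the $n$-jet bundle admits a nontrivial rational first integral, which is equivalent to saying that the Malgrange groupoid of $D_0$ is strictly smaller than the full pseudogroup of fiber-preserving local diffeomorphisms of $\{x=\text{const}\}$. By Lie's theorem, up to local change of coordinates, any transitive proper Lie sub-pseudogroup of $\mathrm{Diff}(\RR)$ is one of three families: the translations $y\mapsto y+c$, the affine transformations $y\mapsto ay+b$, or the projective transformations $y\mapsto (ay+b)/(cy+d)$. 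Adding the intransitive case (where the groupoid fixes a nonconstant function of $y$ on each $x$-fiber) gives four mutually exclusive possibilities, which will correspond bijectively to the four bullets of the statement.

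In each case, I would identify the lowest-order differential invariants of the pseudogroup acting on the jet space: in the intransitive case, $y$ itself is preserved; under translations, $y_1$ is the lowest nontrivial invariant; under the affine group, $y_2/y_1$ is invariant; and under the projective group, the Schwarzian $y_3/y_1-\tfrac{3}{2}(y_2/y_1)^2$ is invariant. The next step is to argue that any rational $J$ which is a first integral of the prolonged flow must, up to composition with a single-variable rational function, be expressible as one of these standard invariants combined with a rational function of $(x,y)$ only; this uses that the fibers of the groupoid action are generated by the standard invariants plus the invariants of the base. Evaluating the resulting canonical form at the initial conditions $y(x_0)=y_0$, $y_1(x_0)=1$, $y_2(x_0)=0$, $y_3(x_0)=0$ produces the four identities stated in the theorem, where $h(x,y)$ emerges as the $(x,y)$-dependent factor multiplying or combined with the standard invariant (the integer $k$ appearing in the second case comes from the multi-valuedness of taking a logarithm, i.e.\ from the possibility that only a root of $h$ is single-valued).

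The main obstacle is the classification step itself: one must justify that Lie's algebraic classification applies to the Malgrange groupoid of $D_0$ and that no proper pseudogroup lies strictly between the listed ones, which in particular explains why no invariant of genuine order $\geq 4$ can arise beyond compositions of the Schwarzian-type invariant with lower-order ones. This is the content of Casale's theorem on the Malgrange groupoid of a codimension-one foliation, which I would invoke as a black box. Once it is granted, the explicit derivation of the form of $h$ in each of the four cases is a routine computation that mirrors, in the reverse direction, the calculations already carried out for Proposition~\ref{propinv}: starting from an invariant of the prescribed canonical shape, one recovers the corresponding rational, $k$-Darbouxian, Liouvillian, or Riccati first integral of $D_0$.
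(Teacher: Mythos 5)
Your proposal is correct and follows essentially the same route as the paper: the paper's ``proof'' of this theorem is simply a citation of Proposition~1.18 and Theorem~1.19 of Casale, applied to the map $y_0\mapsto \varphi(x_0,y_0;x)$ and restricted to rational invariants, which is exactly the black box you invoke (your additional discussion of the Malgrange groupoid, Lie's classification of pseudogroups on the line, and the translation/affine/Schwarzian invariants is a faithful account of what lies behind Casale's result, but the paper does not reprove any of it either).
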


This result is Proposition 1.18 and Theorem 1.19 of Casale in \cite{Casale} applied to the map $y_0 \mapsto \varphi(x_0,y_0;x)$, and restricted to the case of rational  instead of meromorphic invariants. Now,  Casale's invariants can be seen as first integrals of the systems $(S_r')$ and satisfy the equations $D_r(J)=0$.\\
 We will associate to each class of first integral a Casale invariant for the flow. This gives the following proposition stated in the introduction:\\

\textbf{Proposition \ref{propinv}.} %\label{propinv}
\begin{itemize}
\item
\emph{ The system $(S)$ admits a rational first integral associated to \eqref{eqtype0} if and only if $F(x,y)$ is a first integral of $(S'_0)$, where $F \in \overline{\KK}(x,y) \setminus \KK$.\\}

\item \emph{The system $(S)$ admits a $k$-Darbouxian first integral associated to \eqref{eqtype1} if and only if $y_1F(x,y)$ is a first integral of $(S'_1)$, where $F^k \in \overline{\KK}(x,y) \setminus \{0\}$.}\\

\item \emph{The system $(S)$ admits a Liouvillian first integral associated to \eqref{eqtype2} if and only if $y_1F(x,y)+y_2/y_1$ is a first integral of $(S'_2)$, where $F \in \overline{\KK}(x,y)$.}\\

\item\emph{The system $(S)$ admits a Riccati first integral associated to \eqref{eqtype3} if and only if $4y_1^2F(x,y)-2y_3/y_1+3y_2^2/y_1^2$ is a first integral of $(S'_3)$, where $F \in \overline{\KK}(x,y)$.}\\
\end{itemize}

In the Riccati case, we write our invariant in the form 
$$4y_1^2F(x,y)-2y_3/y_1+3y_2^2/y_1^2$$ and not in the form 
$$y_1^2F(x,y)-2y_3/y_1+3y_2^2/y_1^2$$
 as in Casale's theorem. We have chosen this expression because it leads to the Riccati equation (Ric): $\partial_y^2 \mathcal{F}- F \mathcal{F}=0$, whereas the expression used in Casale's theorem leads to the equation $\partial_y^2 \mathcal{F}- \frac{F}{4} \mathcal{F}=0$.

\begin{proof}
\underline{\emph{The rational case.}}\\
By definition, a rational first integral of $(S'_0)$ is a rational function $F(x,y)$ such that $F(x,y(x_0,y_0;x))=F(x_0,y_0)$. We deduce then 
\begin{eqnarray*}
&&\partial_x F\big(x, y(x_0,y_0;x)\big)+\partial_y  F\big(x, y(x_0,y_0;x)\big) \dfrac{B}{A}\big(x, y(x_0,y_0;x)\big)=0\\
\iff&& \partial_x F(x_0,y_0)+\partial_y  F\big(x_0,y_0) \dfrac{B}{A}(x_0,y_0)=0\\
\iff&& D_0(F)=0.
\end{eqnarray*}
This gives the desired conclusion in the rational case.\\

\underline{\emph{The $k$-Darbouxian case.}}\\
We have: $y_1F(x,y)$ is a first integral of $(S'_1)$ if and only if $D_1\big(y_1F\big)=0$, where $D_1=A^2\partial_x+AB \partial_y+y_1A^2 \partial_y\Big(\dfrac{B}{A}\Big) \partial_{y_1}$.\\
 This gives: 
\begin{eqnarray*}
D_1\big(y_1F\big)=0&\iff & y_1\left( \partial_xF+\dfrac{B}{A}\partial_yF+\partial_y\Big(\dfrac{B}{A}\Big)F\right)=0\\
&\iff& D_0(F)=-A\partial_y\Big(\dfrac{B}{A}\Big)F.
\end{eqnarray*}
Then, Proposition \ref{proprepresent} gives the desired conclusion.\\

\underline{\emph{The Liouvillian case.}}\\
We have:\\
 $y_1F(x,y)+y_2/y_1$ is a first integral of $(S'_2)$ if and only if $D_2(y_1F+y_2/y_1)=0$, and we recall that\\  \mbox{$D_2= A^3\partial_x+A^2B \partial_y+y_1A^3 \partial_y\Big(\dfrac{B}{A}\Big) \partial_{y_1} +A^3\Big( y_2\partial_y\Big( \dfrac{B}{A} \Big)+y_1^2 \partial^2_{y} \Big( \dfrac{B}{A} \Big)\Big) \partial_{y_2}$}.\\
 This gives:
\begin{eqnarray*}
D_2(y_1F+y_2/y_1)=0&\iff& y_1\left( \partial_xF+\dfrac{B}{A}\partial_yF+\partial_y\Big(\dfrac{B}{A}\Big)F+\partial_y^2\Big(\dfrac{B}{A}\Big)\right)=0\\
&\iff& D_0(F)=-A\partial_y\Big(\dfrac{B}{A}\Big)F -A \partial_y^2\Big(\dfrac{B}{A}\Big).
\end{eqnarray*}
As before we get the desired conclusion thanks to Proposition \ref{proprepresent}.\\

\underline{\emph{The Riccati case.}}\\
We consider the derivation
\begin{eqnarray*}
 D_3&=&A^4\partial_x + A^3B\partial_y+ y_1A^4\partial_y \left(\dfrac{B}{A}\right) \partial_{y_1}+A^4\Big(y_2\partial_y \left(\dfrac{B}{A}\right)+y_1^2\partial_y^2 \left(\dfrac{B}{A}\right)\Big) \partial_{y_2}\\
 &&+A^4\Big(y_3\partial_y \left(\frac{B}{A}\right)+3y_2y_1\partial_y^2 \left(\dfrac{B}{A}\right)+y_1^3\partial_y^3 \left(\dfrac{B}{A}\right)\Big)\partial_{y_3},\end{eqnarray*}
and we have: $4y_1^2F(x,y)-2y_3/y_1+3y_2^2/y_1^2$ is a first integral of $(S'_3)$ if and only if $D_3(4y_1^2F(x,y)-2y_3/y_1+3y_2^2/y_1^2)=0$.\\
 
 As $D_3(-2y_3/y_1+3y_2^2/y_1^2)= -2y_1^2 \partial_y^3(B/A)$, we get:
\begin{eqnarray*}
& &D_3(4y_1^2F(x,y)-2y_3/y_1+3y_2^2/y_1^2)=0\\
&\iff& y_1^2\left( \partial_xF+\dfrac{B}{A}\partial_yF+2A\partial_y\Big(\dfrac{B}{A}\Big)F-\frac{1}{2}A\partial_y^3\Big(\dfrac{B}{A}\Big)  \right)=0\\
&\iff& D_0(F)=-2A\partial_y\Big(\dfrac{B}{A}\Big)F+\frac{1}{2}A\partial_y^3\Big(\dfrac{B}{A}\Big).
\end{eqnarray*}
We conclude thanks to Proposition \ref{proprepresent}.
\end{proof}
Now, we recall some definitions, see \cite[Definition 4.1]{Casale}.
\begin{defi}\label{def:ric_extension}
Let $(K; \partial_x, \partial_y)$ be a differential field. An algebraic extension $L \supset K$ is a differential field such that $L=K(f)$ with $f$ algebraic over $K$. An exponential extension $L \supset K$ is a differential field such that $L=K(\exp f)$ with $f\in K$.\\
 A primitive extension $L \supset K$ is a differential field such that $L=K(f)$ with 
$$df=\partial_x f dx+\partial_yf dy$$ a $1$-form with coefficients in $K$, i.e. $\partial_x f$ and $\partial_y f$ belong to $K$.\\
A Riccati extension $L \supset K$ is a differential field such that $L=K(f)$ with
$df$ a $1$-form with coefficients in $K[f]_{\leq 2}$.
\end{defi}

\begin{prop}\label{prop:structure_successive_extension}$\,$
\begin{itemize}
\item The system $(S)$ admits a first integral in a field built by successive algebraic and primitive extensions  over $\KK(x,y)$ if and only if it admits a $k$-Darbouxian first integral.
\item The system $(S)$ admits a first integral in a field built by successive algebraic, exponential, primitive extensions,  over $\KK(x,y)$ if and only if it admits a Liouvillian first integral.
\item The system $(S)$ admits a first integral in a field built by successive algebraic extensions, exponential, primitive and Riccati extensions over $\KK(x,y)$ if and only if it admits a Riccati first integral.
\end{itemize}
\end{prop}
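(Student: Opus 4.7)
The plan is to handle the ``if'' direction by direct construction, and the ``only if'' direction by combining Casale's theorem (Theorem \ref{thmcasale}) with Proposition \ref{propinv}.

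For the ``if'' direction, each canonical form visibly lies in a tower of the prescribed type. A $k$-Darbouxian first integral $\mathcal{F}=\int G\,dx+F\,dy$ with $F^k,G^k\in\overline{\KK}(x,y)$ is obtained by adjoining $k$-th roots of rational functions (algebraic extensions) and then integrating a closed $1$-form (primitive extension). A Liouvillian first integral adds an exponential extension for the integrating factor $R=\exp\int G\,dx+F\,dy$. For a Riccati first integral, $f=\mathcal{F}_1/\mathcal{F}_2$ satisfies a Riccati ODE in $y$ with coefficients in $\overline{\KK}(x,y)$ derived from $\partial_y^2\mathcal{F}_i-F\mathcal{F}_i=0$; using Corollary \ref{cor:cofacteurRiccati}, one checks it also satisfies a Riccati ODE in $x$, so $f$ generates a Riccati extension of $\KK(x,y)$.

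For the ``only if'' direction, given a first integral $\mathcal{F}$ in the tower, I pull it back along the flow and differentiate in $y_0$: the identity $\mathcal{F}(x,y(x_0,y_0;x))=\mathcal{F}(x_0,y_0)$, differentiated $j$ times in $y_0$, produces combinations involving $\partial_y^i\mathcal{F}$ evaluated along the flow and the jets $y_1,\dots,y_j$. The Casale invariants appearing in Theorem \ref{thmcasale} are precisely $y_1(\partial_y\mathcal{F}\circ\mathrm{flow})$, $y_1(\partial_y^2\mathcal{F}/\partial_y\mathcal{F})\circ\mathrm{flow}+y_2/y_1$, and the Schwarzian-type expression of order three. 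Recognizing these as rational invariants of $(S'_1)$, $(S'_2)$, $(S'_3)$, Proposition \ref{propinv} translates them back into equations \eqref{eqtype1}, \eqref{eqtype2}, \eqref{eqtype3}. The induction on tower height then proceeds by showing: a tower of only algebraic and primitive extensions forces $\partial_y\mathcal{F}$ to be algebraic of a restricted form, namely $(\partial_y\mathcal{F})^k\in\overline{\KK}(x,y)$, giving the $k$-Darbouxian case; adding exponentials cancels logarithmic derivatives at the level of $\partial_y^2\mathcal{F}/\partial_y\mathcal{F}$, giving the Liouvillian case (this step is Singer's theorem \cite{Singer}); and adding Riccati extensions is handled by the next paragraph.

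The main obstacle is the Riccati case, which is genuinely new. If the tower includes a Riccati extension $L=K(f)$ with $\partial_y f\in K[f]_{\leq 2}$, the substitution $f=-\partial_y h/h$ linearizes the Riccati equation to $\partial_y^2 h+p\partial_y h+q h=0$ with $p,q\in K$. The difficulty is descending the coefficients $p,q$ from $K$ down to $\overline{\KK}(x,y)$: I would use the first-integral condition $D_0(\mathcal{F})=0$ together with Corollary \ref{cor:cofacteurRiccati} to eliminate the extra parameters introduced by the intermediate extensions, giving an analog for Riccati of Singer's descent argument. Once the descent is achieved, two independent solutions $\mathcal{F}_1,\mathcal{F}_2$ of the descended linear ODE provide, by the argument of Proposition \ref{proprepresent}, a Riccati first integral of the form \eqref{eqtype3}.
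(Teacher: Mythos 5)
There are two genuine gaps, both concentrated in the Riccati case.

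First, in your ``if'' direction you assert that $f=\mathcal{F}_1/\mathcal{F}_2$ itself satisfies a Riccati ODE in $y$ over $\overline{\KK}(x,y)$ and hence generates a Riccati extension of $\KK(x,y)$. This is false: for $\partial_y^2\mathcal{F}_i-F\mathcal{F}_i=0$ the Wronskian $W=\partial_y\mathcal{F}_1\,\mathcal{F}_2-\mathcal{F}_1\,\partial_y\mathcal{F}_2$ is independent of $y$, so $\partial_y f=W/\mathcal{F}_2^2$, which is not a polynomial of degree at most $2$ in $f$ with coefficients in $\overline{\KK}(x,y)$; the quotient of two solutions satisfies a third-order (Schwarzian-type) equation, and it is the \emph{logarithmic derivatives} $\partial_y\mathcal{F}_i/\mathcal{F}_i$ and $\partial_x\mathcal{F}_i/\mathcal{F}_i$ that satisfy Riccati equations. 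The paper's proof builds the tower accordingly: using Corollary \ref{cor:cofacteurRiccati} it checks by explicit computation that $f_{1,i}=\partial_x\mathcal{F}_i/\mathcal{F}_i$ and $f_{2,i}=\partial_y\mathcal{F}_i/\mathcal{F}_i$ have \emph{both} partial derivatives in $\KK(x,y)[f_{j,i}]_{\leq 2}$ (the cross-derivatives $\partial_y f_{1,i}$ and $\partial_x f_{2,i}$ are the nontrivial part), adjoins them by four successive Riccati extensions, and only then recovers each $\mathcal{F}_i$ by a primitive followed by an exponential extension, so that $\mathcal{F}_1/\mathcal{F}_2$ lies in the resulting field. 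Your construction skips exactly the computation that makes this work.

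Second, in the ``only if'' direction you cannot apply Theorem \ref{thmcasale} directly: its hypothesis is the existence of a \emph{rational} invariant $J\in\KK(x,y,y_1,\dots,y_n)$ of the prolonged flow, whereas you are given a first integral living in an abstract tower of extensions, and the quantities $\partial_y^i\mathcal{F}$ along the flow are not a priori rational in the jet variables. The bridge from ``first integral in a tower of algebraic, primitive, exponential and Riccati extensions'' to ``rational invariant of $(S'_r)$ for some $r\le 3$'' is Casale's Theorem 4.2 (combined with his Theorem 1.19 and Proposition 1.18), which the paper cites and which performs all of the descent; Proposition \ref{propinv} then converts that rational invariant into one of the canonical equations. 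Your proposed substitute, an induction on tower height culminating in ``an analog for Riccati of Singer's descent argument,'' is precisely the hard content of Casale's theorem, and your proposal only announces it (``I would use\dots'') without carrying it out; the same remark applies, less severely, to your unproved claim that an algebraic-plus-primitive tower forces $(\partial_y\mathcal{F})^k\in\overline{\KK}(x,y)$. As written, the only-if direction is therefore not established. (For the Darbouxian converse and the Liouvillian item your outline does match the paper: the former is immediate and the latter is Singer's theorem.)
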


\begin{proof}
\underline{\emph{The Darbouxian case.}}\\
If $(S)$ admits a first integral built by successive algebraic and primitive extensions  over $\KK(x,y)$, then by Theorem~4.2, Theorem~1.19 and Proposition~1.18  of Casale \cite{Casale} there exists $k\in\mathbb{N}^*, F\in\overline{\KK}(x,y)$ such that $y_1^kF(x,y)$ is a first integral of $(S_1')$. By Proposition \ref{propinv}, then $(S)$ admits a $k$-Darbouxian first integral. The converse is immediate as a $k$-Darbouxian first integral is the integral of an algebraic $1$-form.\\

\underline{\emph{The Liouvillian case.}}\\
This case corresponds to Singer's result, see \cite{Singer}.\\

\underline{\emph{The Riccati case.}}\\
Thanks to  Theorem~4.2, Theorem~1.19 and Proposition~1.18  of Casale \cite{Casale}, if $(S)$ admits a first integral in a field built by successive algebraic  exponential,  primitive and Riccati extensions over $\KK(x,y)$, then  
$$4y_1^2F(x,y)-2y_3/y_1+3y_2^2/y_1^2$$
is a first integral of $(S'_3)$. Then, Proposition \ref{propinv} implies that $(S)$ admits a Riccati first integral.\\
 For the converse, we have a first integral which is the quotient of two solutions $\mathcal{F}_1,\mathcal{F}_2$ of a linear second order differential equation in $y$. We also know, thanks to Corollary \ref{cor:cofacteurRiccati}, that $\mathcal{F}_1,\mathcal{F}_2$ satisfy the equation
$$D_0(\mathcal{F})=\frac{1}{2}A\partial_y \left(\frac{B}{A}\right)\mathcal{F},$$
and thus writing $\partial_y\mathcal{F}$ as a function of $\partial_x \mathcal{F}$, we obtain another linear second order differential equation in $x$ of the following kind: $\partial_x^2 \mathcal{F}=R\partial_x\mathcal{F}+S\mathcal{F}$, where $R,S$ belong to $\KK(x,y)$.\\
Therefore,   $f_{1,i}=\partial_x \mathcal{F}_i/\mathcal{F}_i$ is a solution of the following Riccati associated equation: 
$$\partial_x f_{1,i}=Rf_{1,i}+S-f_{1,i}^2.$$
Thus $\partial_x f_{1,i} \in \KK(x,y)[f_{1,i}]_{\leq 2}$.\\
 Furthermore, $f_{2,i}=\partial_y \mathcal{F}_i/\mathcal{F}_i$ is a solution of the Riccati equation: 
 $$(\star) \quad \partial_y f_{2,i}=F-f_{2,i}^2.$$
 Thus $\partial_y f_{2,i} \in \KK(x,y)[f_{2,i}]_{\leq 2}$.\\
Now we are going to prove that $\partial_ y f_{i,i} \in \KK(x,y)[f_{1,i}]_{\leq 2}$. We have
$$(\star \star) \quad Af_{1,i}+Bf_{2,i}= \frac{1}{2}A\partial_y \left(\frac{B}{A}\right)$$
and thus, by dividing by $A$ and derivating relatively to $y$ the equation $(\star \star)$, we get:
\begin{eqnarray*}
\partial_y f_{1,i}&= & -\partial_y \left(\dfrac{B}{A}\right) f_{2,i}-\dfrac{B}{A} \partial_y f_{2,i}+\partial_y\left( \dfrac{1}{2}\partial_y \left(\dfrac{B}{A}\right)\right).\\
\end{eqnarray*}
The equation $(\star)$ gives
\begin{eqnarray*}
 \partial_y f_{1,i}&= & -\partial_y \left(\frac{B}{A}\right) f_{2,i}-\frac{B}{A} (F-f_{2,i}^2)+\partial_y\left( \frac{1}{2}\partial_y \left(\frac{B}{A}\right)\right).\\
 \end{eqnarray*}

Now, thanks to $(\star \star)$ we can write $f_{2,i}$ in terms of $f_{1,i}$ and this gives

\begin{eqnarray*}
\partial_y f_{1,i}&= & -\partial_y \left(\frac{B}{A}\right) \left( \frac{1}{2}\frac{A}{B}\partial_y \left(\frac{B}{A}\right)-\frac{A}{B} f_{1,i} \right) -\frac{B}{A} \Big[F-\Big( \frac{1}{2}\frac{A}{B}\partial_y \left(\frac{B}{A}\right)-\frac{A}{B} f_{1,i} \Big)^2\Big]\\
 &&+\partial_y\left[ \frac{1}{2}\partial_y \left(\frac{B}{A}\right)\right]
 \in \KK(x,y)[f_{1,i}]_{\leq 2}.
\end{eqnarray*}
Symmetrically, we also obtain that $\partial_x f_{2,i} \in\mathbb{K}(x,y)[f_{2,i}]_{\leq 2}$. Then, we can construct a (four successive) Riccati extension $L_1$ of $\KK(x,y)$ containing
$$\partial_y \mathcal{F}_1/\mathcal{F}_1,\partial_y \mathcal{F}_2/\mathcal{F}_2,\partial_x \mathcal{F}_1/\mathcal{F}_1,\partial_x \mathcal{F}_2/\mathcal{F}_2.$$
Now taking a primitive extension and then an exponential extension for each $\mathcal{F}_i$, we obtain a field $L_2$ containing $\mathcal{F}_1,\mathcal{F}_2$, and thus the first integral $\mathcal{F}_1/\mathcal{F}_2$.
\end{proof}

%
%%%%%%%%%%%%%%%%%%%%%%%%%%%%%%%%%%%%%%%%%%%%%%%%%%%%%%%%%%%%%%%%%%%%%%%%%%%%%%%%

%%%%%%%%%%%%%%%%%%%%%%%%%%%%%%%%%%%%

\section{Extactic hypersurfaces}\label{sec:exta}

As already remarked in Proposition \ref{propinv}, the existence of a Darbouxian, or Liouvillian or Riccati first integral is equivalent to the existence of a rational first integral with a special structure for a derivation associated to the problem. Furthermore, in this situation we have new variables $y_1, y_2, y_3$. In the following we will need to study rational first integral for derivations with several variables $x,y,y_1, y_2, y_3$. Thus, in this section we study  the characterization of  rational first integrals for a derivation with variables $x,y,y_1,\ldots,y_n$. Our main tool will be the extactic curve. This curve has been discovered independently by Lagutinski and Pereira, see \cite{Pereira}. It allows one to characterize the situation where a derivation has a rational first integral with bounded degree. Here, we define and prove the main property of this object for a derivation in $\KK(x,y,y_1,\ldots,y_n)$ and we will get extactic hypersurfaces. A similar study in the bivariate case has already be done in \cite{ChezeHDR}.

We consider a derivation
$$D=f \partial_x+ f_0 \partial_y +\sum_{j=1}^n f_j \partial_{y_j}, \textrm{ where } f_j \in \KK[x,y,y_1,\ldots,y_n]$$
with $f \neq 0$ and we consider the associated differential system:
$$(S_n) \quad 
\begin{cases}
\partial_x y(x)=\dfrac{f_0\big(x,y_1(x),\ldots,y_n(x)\big)}{f\big(x,y_1(x),\ldots,y_n(x)\big)},\\
\partial_{x} y_j(x)=\dfrac{f_j\big(x,y_1(x),\ldots,y_n(x)\big)}{f\big(x,y_1(x),\ldots,y_n(x)\big)}, \textrm{ for } j=1,\ldots,n.
\end{cases}$$

We want to characterize the existence of a rational first integral with degree smaller than $N$ for this kind of differential system. The idea is to study the order of contact between a solution of $(S_n)$ and a polynomial.

\begin{defi}
We set $\underline{y}(x)=\big(y(x),y_1(x) \ldots, y_n(x)\big)$.\\
A parametrized curve  $\big(x,\underline{y}(x)\big)$ and an  hypersurface defined by the zero locus of $g(x,y,y_1,\ldots,y_n) \in\KK[x,y,y_1,\ldots,y_n]$ have  a contact of order $\nu$  at $(x_0,y_0,y_{1,0},\ldots,y_{n,0})$ $=\big(x_0,\underline{y}(x_0)\big)$, when $\nu$ is the biggest integer  such that:
$$g\big(x,\underline{y}(x)\big)=0 \mod (x-x_0)^{\nu}.$$
\end{defi}
 
%When we consider a planar vector field, the idea  to discover a rational first integral is to compute algebraic curves with a ``high" order of contact with a generic solutions of $(S'_0)$. 
In order to find a rational first integral for a plane vector field, it is sufficient to  compute an algebraic curve with a ``high enough" order of contact with a \emph{generic solution of }$(S'_0)$, see \cite{BCCW}.
Actually, here ``high enough" means infinite. Indeed, we will see that if the order of contact is large enough (a bound will be given later) then  the order of contact will be infinite. We are going to use the same approach  in the multivariate case.\\
 
In control theory this kind of problem is classical. Risler, in \cite{Risler}, and Gabrielov \cite{Gabrielov} have  shown that if the order of contact is big enough then it is infinite. More precisely,  in \cite{Gabrielov} the author shows that if $y(x)$ is a solution of a differential system with degree $d$ and $g$ is a polynomial with degree $k$ such that $g(x,\underline{y}(x))\neq 0$ then the order of contact between $g=0$ and $(x,\underline{y}(x))$ is smaller than $2^{2n+3}\sum_{j=1}^{n+2}[k+(j-1)(d-1)]^{2n+4}$. \\
Here, we will get a better bound because we are going to consider a solution $y(x)$ with a \emph{generic initial condition}.\\
 In the following $x_0,y_0, y_{1,0}, \ldots, y_{n,0}$ \emph{are new variables}. They correspond to a generic initial condition.\\

In order to compute the  order of contact  between $g$ and a solution 
$$\big(x,\underline{y}(x)\big)=\big(x,y(x),y_1(x), \ldots,y_n(x)\big)$$ 
we have to compute the Taylor expansion of $g(x,\underline{y}(x))$ at $x_0$. As  
$$f\partial_x\Big(g\big(x,\underline{y}(x)\big)\Big)=D(g)\big(x,\underline{y}(x)\big),$$ and $f \neq 0$ we deduce easily that for all $l \in \NN$
$$\partial_{x}^i g\big(x,\underline{y}(x)\big) =0,\textrm{ for } i=1, \ldots,l \iff D^i(g)(x_0,y_0,y_{1,0},\ldots,y_{n,0})=0, \textrm{ for } i=1,\ldots,l.$$
where $D^0(g)=g$ and $D^{i}(g)=D\big(D^{i-1}(g)\big)$.\\
 
The study of the order of contact at a generic point $(x_{0},y_0,y_{1,0},\ldots,y_{n,0})$ leads us to consider the following map:
 
\begin{defi}
Let $D$ be a derivation and let $V$ be a linear subspace of dimension $l$ of $\KK[x,\underline{y}]$, where $\underline{y}=(y,y_1, \ldots,y_n)$.\\ Let $x_0$ and $\underline{y}_0=(y_0,y_{1,0}, \ldots,y_{n,0})$ be new variables and set $\LL=\KK(x_0,\underline{y}_0)$.
%Let $D$ be a derivation on $\KK[x,y_1\ldots, y_n]$, $V$ be a finite dimensional linear subspace of $\KK[x,y_1\ldots, y_n]$  and $x_0$, $\underline{y}_0=(y_{1,0},\ldots,y_{n,0})$ be new variables. We set $\LL=\KK(x_0,\underline{y}_0)$.
 We consider the linear $\LL$-morphism:
\begin{eqnarray*}
\mathcal{E}_{D}^{V}: \LL\otimes_{\KK} V& \longrightarrow & \LL^l\\
g(x,\underline{y})& \longmapsto &\big(g(x_0,\underline{y}_0),D(g)(x_0,\underline{y}_0),D^2(g)(x_0,\underline{y}_0), \ldots, D^{l-1}(g)(x_0,\underline{y}_0)\big)
\end{eqnarray*}
where %$l=\dim_{\KK} V$, 
$D^{k}(g)=D\big(D^{k-1}(g)\big)$ and $D$ is, by abuse of notation, the extension of the derivation $D$ to $\LL[x,\underline{y}]$, i.e. $$D\Big(\sum_{\alpha} c_{\alpha}(x_0,\underline{y}_0)x^{\alpha_1}\underline{y}^{\alpha_2}\Big)=\sum_{\alpha}c_{\alpha}(x_0,\underline{y}_0)D(x^{\alpha_1}\underline{y}^{\alpha_2}).$$ 
The determinant of this linear map is denoted by $E_{D}^{V}(x_0,\underline{y}_0)$.\\ 
Moreover, we call the hypersurface $E_{D}^{\KK[x,\underline{y}]_{\leq N}}(x_0,\underline{y}_0)$  the $N$-th extactic hypersurface.\\
\end{defi}

\begin{rem}
With this definition we have:
$$g(x,\underline{y}) \in \ker \mathcal{E}_D^V \iff g\big(x, \underline{y}(x)\big)=0 \mod (x-x_0)^l.$$
\end{rem}
If $\{g_1,\ldots,g_l\}$ is a basis of  $V$ then the  associated extactic hypersurface is given by 
$$E^V_D(x_0,\underline{y}_0)=\begin{vmatrix}
g_1(x_0,\underline{y}_0)&g_2(x_0,\underline{y}_0) &\ldots & g_l(x_0,\underline{y}_0) \\
D(g_1)(x_0,\underline{y}_0) & D(g_2)(x_0,\underline{y}_0) & \ldots & D(g_l)(x_0,\underline{y}_0)\\
\vdots& \vdots &\vdots &\vdots\\
D^{l-1}(g_1)(x_0,\underline{y}_0) & D^{l-1}(g_2)(x_0,\underline{y}_0) & \ldots & D^{l-1}(g_l)(x_0,\underline{y}_0)
\end{vmatrix}.$$
The extactic hypersurface is related to invariant algebraic hypersurfaces. We call these kinds of hypersurfaces Darboux polynomials.

\begin{defi} \label{defdarboux}
A non constant polynomial $M \in \overline{\KK}[x,\underline{y}]$ is a {\em Darboux polynomial for $D$} if $M$ divides $D(M)$ in $\overline{\KK}[x,\underline{y}]$. We call the polynomial \mbox{$\Lambda=D(M)/M$} the {\em cofactor associated with the Darboux polynomial $M$}.
\end{defi}

\begin{prop}\label{PropDarbouxfacteurext}
If $g \in V$ is a Darboux polynomial of a derivation $D$ then $g(x_0,\underline{y}_0)$ is a factor of $E_{D}^{V}(x_0,\underline{y}_0)$.
\end{prop}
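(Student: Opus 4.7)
The plan is to reduce the claim to an elementary divisibility computation inside a single column of the extactic matrix, after choosing a basis of $V$ that contains $g$. The main observation is that $g$ being a Darboux polynomial propagates through all iterated derivatives: $g$ divides $D^k(g)$ in $\overline{\KK}[x,\underline{y}]$ for every $k\ge 0$. Once this is established, I build a basis starting with $g$, read off the first column of the matrix of $\mathcal{E}_D^V$, and factor $g(x_0,\underline{y}_0)$ out of the determinant.

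First I would prove the divisibility claim by induction on $k$. By Definition \ref{defdarboux}, $D(g) = \Lambda g$ with $\Lambda \in \overline{\KK}[x,\underline{y}]$. I would construct inductively polynomials $P_k \in \overline{\KK}[x,\underline{y}]$ with $P_0 = 1$ and $P_{k+1} = D(P_k) + \Lambda P_k$, and check that $D^k(g) = P_k\,g$. The base case is trivial, and the inductive step is a one-line Leibniz computation:
\[
D^{k+1}(g) \;=\; D(P_k g) \;=\; D(P_k)\,g + P_k\,D(g) \;=\; \bigl(D(P_k)+\Lambda P_k\bigr)\,g.
\]

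Next I would complete $g$ to a basis $(g_1,g_2,\dots,g_l)$ of $V$ with $g_1 = g$ (extending scalars to $\overline{\KK}$ if needed, since $g$ may only lie in $\overline{\KK}[x,\underline{y}]$). With respect to this basis, the first column of the matrix representing $\mathcal{E}_D^V$ is
\[
\bigl(g(x_0,\underline{y}_0),\,D(g)(x_0,\underline{y}_0),\,\dots,\,D^{l-1}(g)(x_0,\underline{y}_0)\bigr)^{T} \;=\; g(x_0,\underline{y}_0)\,\bigl(1,\,P_1(x_0,\underline{y}_0),\,\dots,\,P_{l-1}(x_0,\underline{y}_0)\bigr)^{T}.
\]
Multilinearity of the determinant allows me to pull out the scalar $g(x_0,\underline{y}_0)$, so the determinant computed in this basis is divisible by $g(x_0,\underline{y}_0)$ in $\overline{\KK}[x_0,\underline{y}_0]$.

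The last step is to note that any two bases of $V$ yield extactic determinants that differ only by a nonzero scalar (the determinant of the change-of-basis matrix), so the divisibility of $E_D^V(x_0,\underline{y}_0)$ by $g(x_0,\underline{y}_0)$ is basis-independent and the proposition follows. There is no real obstacle here; the only care needed is to work in the UFD $\overline{\KK}[x_0,\underline{y}_0]$ so that the divisibility statement is well-posed, and to allow scalar extension to $\overline{\KK}$ when $g$ has algebraic coefficients.
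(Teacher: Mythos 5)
Your proposal is correct and follows essentially the same route as the paper's proof: choose a basis of $V$ whose first element is $g$, observe that $D^j(g)=\Lambda_j g$ for all $j$, and factor $g(x_0,\underline{y}_0)$ out of the first column of the extactic matrix. The only additions are the explicit induction giving $P_{k+1}=D(P_k)+\Lambda P_k$ and the remark on basis-independence of the determinant up to a nonzero scalar, both of which the paper leaves implicit.
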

\begin{proof}
Let $\mathcal{B}=\{g,g_2,\ldots,g_l\}$ be a basis of $V$.\\
As $g$ is a Darboux polynomial we have $D(g)=\Lambda. g$ where $\Lambda \in \KK[x,\underline{y}]$. Thus there exist polynomials $\Lambda_j$ such that $D^j(g)=\Lambda_j.g$. Thus, we have
$$\mathcal{E}_{D}^{V}(g)=g(x_0,\underline{y}_0).\big(1,\Lambda(x_0,\underline{y}_0), \ldots, \Lambda_l(x_0,\underline{y}_0)\big)$$
and $g(x_0,\underline{y}_0)$ is a factor of the first column of a matrix representation of $\mathcal{E}_{D}^{V}$ in the basis $\mathcal{B}$. It follows that $g$ is factor of $E_{D}^{V}(x_0,\underline{y}_0)$.
\end{proof}

We remark that by definition the  factors of $E_{D}^V(x_0,\underline{y}_0)$ which are not Darboux polynomials correspond to algebraic hypersurfaces with order of contact with a solution $(x,y(x),y_1(x),\ldots,y_n(x))$  bigger than $l=\dim_{\KK} V$.\\

We also remark that the determinant $E^V_D(x_0,\underline{y}_0)$ corresponds to a Wronskian and we recall the following classical lemma, see \cite[Lemma 3.3.5]{bronstein}:

\begin{lem}\label{Wronskian}
Let $(F,D)$ be a differential field. Then $g_1,\ldots, g_l \in F$ are linearly dependent over $\ker D$ if and only if $W_D(g_1,\ldots,g_l)=0$, where
$$W_D(g_1,\ldots,g_l)=
\begin{vmatrix}
g_1&g_2 &\ldots & g_l \\
D(g_1) & D(g_2) & \ldots & D(g_l)\\
\vdots& \vdots &\vdots &\vdots\\
D^{l-1}(g_1) & D^{l-1}(g_2) & \ldots & D^{l-1}(g_l)
\end{vmatrix}
$$
is the Wronskian of $g_1,\ldots, g_l$ relatively to $D$.
\end{lem}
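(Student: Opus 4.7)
The plan is to prove the two directions separately, with the nontrivial direction going by induction on $l$.

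For the easy direction, suppose there exist $c_1,\ldots,c_l\in\ker D$, not all zero, with $\sum_i c_i g_i=0$. Applying $D^j$ and using $D$-linearity together with $D(c_i)=0$, we obtain $\sum_i c_i D^j(g_i)=0$ for every $j\geq 0$. Hence the columns of the Wronskian matrix are $F$-linearly dependent, so $W_D(g_1,\ldots,g_l)=0$.

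For the converse, I would induct on $l$. The base case $l=1$ is immediate: $W_D(g_1)=g_1=0$ forces a trivial dependence. For the inductive step, assume $W_D(g_1,\ldots,g_l)=0$. If the Wronskian of the first $l-1$ functions also vanishes, the inductive hypothesis produces a dependence among $g_1,\ldots,g_{l-1}$ over $\ker D$, and we are done. So assume $W_D(g_1,\ldots,g_{l-1})\neq 0$. Then the first $l-1$ columns of the larger Wronskian are $F$-linearly independent, and vanishing of the determinant means the last column is an $F$-linear combination of them: there exist $c_1,\ldots,c_{l-1}\in F$ with
\[
D^j(g_l)=\sum_{i=1}^{l-1} c_i\,D^j(g_i),\qquad j=0,1,\ldots,l-1.
\]

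The decisive step is to show that the $c_i$ actually lie in $\ker D$. I would differentiate the relation for index $j$ and compare with the relation for index $j+1$. The Leibniz rule gives
\[
D^{j+1}(g_l)=\sum_{i=1}^{l-1}\bigl(D(c_i)D^j(g_i)+c_i D^{j+1}(g_i)\bigr),
\]
and subtracting the relation at index $j+1$ yields $\sum_i D(c_i)\,D^j(g_i)=0$ for $j=0,\ldots,l-2$. This is precisely a linear relation among the columns of the $(l-1)\times(l-1)$ Wronskian $W_D(g_1,\ldots,g_{l-1})$, which is nonzero by assumption. Therefore $D(c_i)=0$ for all $i$, so each $c_i\in\ker D$, and the relation at $j=0$ reads $g_l=\sum_{i=1}^{l-1} c_i g_i$, giving a nontrivial $\ker D$-linear dependence among $g_1,\ldots,g_l$. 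The main obstacle to anticipate is arranging the bookkeeping so that the $F$-dependence of columns produced by the vanishing determinant is upgraded to a $\ker D$-dependence; the comparison-of-consecutive-rows trick above is exactly the mechanism that does this.
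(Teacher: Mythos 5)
Your proof is correct: the easy direction via $D$-linearity over the constants, and the converse by induction on $l$, splitting on whether $W_D(g_1,\ldots,g_{l-1})$ vanishes and then upgrading the $F$-linear column relation to a $\ker D$-relation by comparing the differentiated relation at index $j$ with the relation at index $j+1$ and invoking invertibility of the smaller Wronskian matrix. The paper gives no proof of its own here — it cites the lemma as classical from Bronstein [Lemma 3.3.5] — and your argument is essentially the standard one found there, so there is nothing to compare beyond noting that your write-up is complete and self-contained.
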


 This leads to the following proposition:

\begin{prop}\label{Propextwronskian}
%We have the following equivalence:\\
There exists a $\KK$ vector space $V$  such that  $E^V_D(x_0,\underline{y}_0) =0$  if and only if $D$ has a  rational first integral.
\end{prop}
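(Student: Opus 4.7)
The plan is to reduce the statement to Lemma \ref{Wronskian} by observing that, since $(x_0, \underline{y}_0)$ are independent new variables, the polynomial $E^V_D(x_0, \underline{y}_0)$ vanishes if and only if the Wronskian $W_D(g_1, \ldots, g_l)$ vanishes identically in $\KK[x, \underline{y}]$, where $g_1, \ldots, g_l$ is any $\KK$-basis of $V$. Indeed, evaluation at $(x_0, \underline{y}_0)$ is nothing more than a relabeling of the variables $(x, \underline{y})$ inside the polynomial ring, and the matrix representing $\mathcal{E}_D^V$ in the basis $\{g_j\}$ is exactly the Wronskian matrix. This reformulation will carry both directions of the equivalence.

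For the direct implication, assume $E_D^V = 0$ for some finite-dimensional $V$ with basis $g_1, \ldots, g_l$. Viewing $D$ as a derivation on the field $F = \KK(x, \underline{y})$, Lemma \ref{Wronskian} gives $\lambda_1, \ldots, \lambda_l \in \ker D$, not all zero, such that $\sum_i \lambda_i g_i = 0$. If every $\lambda_i$ belonged to $\KK$, this would contradict the $\KK$-linear independence of the basis $g_1, \ldots, g_l$. Hence some $\lambda_{i_0} \in \ker D \setminus \KK$, which is precisely a non-constant rational first integral of $D$.

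For the converse, suppose $D$ admits a rational first integral, which we write in lowest terms as $F = P/Q \in \KK(x,\underline{y}) \setminus \KK$ with $P, Q \in \KK[x,\underline{y}]$. The polynomials $P$ and $Q$ are $\KK$-linearly independent, for otherwise $P = cQ$ for some $c \in \KK$ and $F$ would be constant. On the other hand, the relation $P - F\,Q = 0$ exhibits $P$ and $Q$ as linearly dependent over $\ker D$, since $1$ and $-F$ both lie in $\ker D$. Taking $V = \mathrm{span}_\KK\{P, Q\}$ (a two-dimensional subspace of $\KK[x,\underline{y}]$) and applying Lemma \ref{Wronskian} again yields $W_D(P, Q) = 0$ in $\KK[x, \underline{y}]$, and therefore $E_D^V(x_0, \underline{y}_0) = 0$.

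No step is particularly delicate; the only subtlety I want to be careful about is the passage between ``$E_D^V$ vanishes as a polynomial in $(x_0, \underline{y}_0)$'' and ``$W_D(g_1, \ldots, g_l)$ vanishes as a polynomial in $(x, \underline{y})$'', which relies on $(x_0, \underline{y}_0)$ being genuinely fresh indeterminates over $\KK$ rather than fixed numerical values. Once this is acknowledged, both implications are one-line applications of the Wronskian criterion combined with the elementary observation that $\ker D$-dependence of $\KK$-independent vectors forces $\ker D$ to contain a non-constant element.
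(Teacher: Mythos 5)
Your proof is correct and follows essentially the same route as the paper: identify the matrix of $\mathcal{E}_D^V$ with a Wronskian matrix (the paper does this by introducing the relabeled derivation $\tilde{D}$ on $\KK[x_0,\underline{y}_0]$, which is exactly your "fresh indeterminates" observation) and apply Lemma \ref{Wronskian} in both directions, using that $\KK$-independent $g_i$ that are $\ker D$-dependent force $\ker D\neq\KK$. The only cosmetic difference is that in the converse you take $V=\mathrm{span}_\KK\{P,Q\}$ whereas the paper takes $V=\KK[x,\underline{y}]_{\leq N}$ (which it needs later, cf.\ Remark \ref{remdegiprexta}); both choices prove the stated equivalence.
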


\begin{proof}
We denote by $\{g_1, \ldots, g_l\}$ a basis of $V$. This basis gives a basis of the $\LL$ vector space $\LL \otimes_{\KK} V$, where $\LL=\KK(x_0,\underline{y}_0)$.\\
We consider the $\KK[x_0,\underline{y}_0]$ derivation
$$\tilde{D}=  f(x_0,\underline{y}_0) \partial_{x_0}+f_0(x_0,\underline{y}_0)\partial_{y_0}+ \sum_i f_i(x_0,\underline{y}_0) \partial_{y_{i,0}}.$$
We remark that there exists an isomorphism between $\ker D \subset \KK(x,\underline{y})$ and $\ker \tilde{D} \subset \KK(x_0, \underline{y}_0)$.\\

We have then $E^V_D(x_0,\underline{y}_0)=W_{\tilde{D}}(g_1(x_0,\underline{y}_0),\ldots, g_l(x_0,\underline{y}_0))$.\\
By Lemma \ref{Wronskian} applied with $F=\LL$  and $\tilde{D}$, we deduce:
$$E^V_D(x_0,\underline{y}_0) =0 \iff g_1(x_0,\underline{y}_0),\ldots,g_l(x_0,\underline{y}_0) \textrm{ are linearly dependent over } \ker \tilde{D}.$$
As $g_1(x_0,\underline{y}_0),\ldots,g_l(x_0,\underline{y}_0)$ are linearly independent over $\KK$, this means that \mbox{$\ker \tilde{D} \neq \KK$}. Thus $\tilde{D}$ and then $D$ has  a rational first integral. \\

Conversely, if $D$ has a rational first integral $G_1/G_2$ with degree $N$ then we set $V= \KK[x,\underline{y}]_{\leq N}$ and we can consider  a basis $\{g_1,g_2, \ldots,g_l\}$ of $V$ where $g_1=G_1$ and $g_2=G_2$. We have $g_1(x_0,\underline{y}_0)/g_2(x_0,\underline{y}_0) \in \ker \tilde{D}$ and 
$$g_1(x_0,\underline{y}_0) - \dfrac{g_1(x_0,\underline{y}_0)}{g_2(x_0,\underline{y}_0)}g_2(x_0,\underline{y}_0)=0.$$
Thus we have a non-trivial relation over $\ker \tilde{D}$ then by Lemma~\ref{Wronskian}
$$W_{\tilde{D}}(g_1(x_0,\underline{y}_0),\ldots, g_l(x_0,\underline{y}_0))=E^V_D(x_0,\underline{y}_0)=0.$$
\end{proof}

\begin{rem}\label{remdegiprexta}
The previous proof shows that if $D$ has a rational first integral with degree $N$ then $E^{\KK[x,y]_{\leq N}}_D(x_0,\underline{y}_0)=0$.
\end{rem}

This kind of result is not new, see \cite{Pereira}. We have given here a proof in order to emphasize the relation between the extactic curve and the Wronskian. The following example shows however that it is possible to have $E^{\KK[x,\underline{y}]_{\leq N}}_D(x_0,\underline{y}_0)$ equals to zero and no rational first integral with degree $N$.

\begin{xx}
Consider the following derivation $$D=x\partial_{x}+(3x-2y)\partial_{y}-3x^3\partial_{y_1}.$$
This derivation has two polynomial first integrals with degree 3: $$P_1(x,y,y_1)=x^2y+y_1,\quad P_2(x,y,y_1)=x^3+y_1.$$
For this derivation we have $E_D^{\KK[x,y,y_1]_{\leq 2}}(x_0,\underline{y}_0)=0$, but $D$ has no rational first integral with degree $2$. Indeed, a direct computation with a computer algebra system shows that the only Darboux polynomials for this derivation with degree smaller than 2 are: $x$, $y-x$ and their products.\\
Now, we explain why the second extactic curve is equal to zero. As $P_1$ and $P_2$ are first integrals we have: $\Delta P_i(x_0,\underline{y}_0;x,y(x))=0$, where 
$$\Delta P_i(x_0,\underline{y}_0;x,\underline{y})=P_i(x,y,y_1)-P_i(x_0,y_{0},y_{1,0}).$$ Thus 
\begin{eqnarray*}
P(x_0,\underline{y}_0;x,\underline{y})&=&x.\Delta P_1(x_0,\underline{y}_0;x,\underline{y})-y.\Delta P_2(x_0,\underline{y}_0;x,\underline{y})\\
&=&xy_1-yy_1-    x(x_{0}^2y_{0}+y_{1,0})+y(x_{0}^3+y_{1,0}) 
\end{eqnarray*}
 has degree 2 in $\KK(x_0,\underline{y}_0)[x,y,y_1]$ and satisfies 
 $$ P(x_0,\underline{y}_0;x,y(x),y_1(x))=0.$$
Thus $P \in \ker \mathcal{E}_{D}^{\KK[x,y,y_1]_{\leq 2}}$ and $E^{\KK[x,y,y_1]_{\leq 2}}_D(x_0,\underline{y}_0)=0$.
\end{xx}

Now, we show that the computation of $\ker \mathcal{E}^N_D$ gives rational first integrals. More precisely, we are going to exhibit a structure for the elements in  $\ker \mathcal{E}^V_D$.

\begin{prop}\label{rrefker}
Let $V$ be a linear subspace of $\KK[x,\underline{y}]$ of dimension $l$.\\
Let $\{b_i(x,\underline{y})\}$ be a basis of $V$, and let $\tilde{D}$ be the following $\KK[x_0,\underline{y}_0]$ derivation:
$\tilde{D}= f(x_0,\underline{y}_0) \partial_{x_0} +f_0(x_0,\underline{y}_0)\partial_{y_0}+ \sum_i f_i(x_0,\underline{y}_0) \partial_{y_{i,0}}$.\\
Consider $g_1(x_0,\underline{y}_0;x,\underline{y})$, \ldots, $g_l(x_0,\underline{y}_0;x,\underline{y})$ a basis of $\ker \mathcal{E}^V_D$ in reduced row echelon form. Then we can write each $g_i$ in the following form:
$$g_i(x_0,\underline{y}_0;x,\underline{y})=\sum_{j \in J_i} c_{j}(x_0,\underline{y}_0)b_j(x,y),$$
where  $J_i$ is a finite set and  $c_{j}(x_0,\underline{y}_0) \in \ker \tilde{D}$.\\
 Furthermore, for all $g_i$  there exists $j_i$ such that $c_{j_i}(x_0,\underline{y}_0) \not \in \KK$.
\end{prop}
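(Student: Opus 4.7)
The plan is to establish that the kernel $K:=\ker \mathcal{E}^V_D$ is stable under $\tilde{D}$, viewed as acting on $\LL\otimes V$ by $\tilde{D}\otimes\mathrm{id}_V$, and then to exploit the RREF structure to force each coefficient into $\ker\tilde{D}$. The key algebraic identity is $\tilde{D}\bigl(h(x_0,\underline{y}_0)\bigr)=[Dh](x_0,\underline{y}_0)$ for every $h\in\KK[x,\underline{y}]$, which follows at once from the fact that $D$ and $\tilde{D}$ are the same derivation written in renamed variables.

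Setting $v_k=\bigl(D^k(b_1)(x_0,\underline{y}_0),\dots,D^k(b_l)(x_0,\underline{y}_0)\bigr)\in\LL^l$ and $\Phi_k(c):=\langle v_k,c\rangle$ (so that $c\in K$ if and only if $\Phi_0(c)=\dots=\Phi_{l-1}(c)=0$), the above identity gives the recursion $v_{k+1}=\tilde{D}(v_k)$. Since $\LL^l$ has dimension $l$, there is a smallest $s\leq l$ with $v_s\in\mathrm{span}_\LL(v_0,\dots,v_{s-1})$; applying $\tilde{D}$ repeatedly traps every $v_k$ with $k\geq s$ in the same span. Hence $\mathrm{span}_\LL(v_k:k\geq 0)=\mathrm{span}_\LL(v_0,\dots,v_{l-1})$, so $c\in K$ actually forces $\Phi_k(c)=0$ for \emph{all} $k\geq 0$, not only for $k\leq l-1$. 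Combined with the Leibniz-type identity $\Phi_k(\tilde{D}(c))=\tilde{D}(\Phi_k(c))-\Phi_{k+1}(c)$, this yields $\tilde{D}(K)\subseteq K$.

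For (1), write the RREF basis as $g_i=b_{p_i}+\sum_{j\notin\{p_1,\dots,p_m\}}c_{i,j}\,b_j$ with pivots $p_1<\dots<p_m$. Since $\tilde{D}$ annihilates each $b_j\in\KK[x,\underline{y}]$, one obtains $\tilde{D}(g_i)=\sum_{j\notin\{p_1,\dots,p_m\}}\tilde{D}(c_{i,j})\,b_j$, an element of $K$ whose coordinates along the pivot columns all vanish. In RREF, an element of $K$ is uniquely determined by its pivot coordinates (which are exactly its coordinates in the basis $g_1,\dots,g_m$), so $\tilde{D}(g_i)=0$ and hence every $c_{i,j}$ lies in $\ker\tilde{D}$.

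For (2), suppose all the coefficients of $g_i$ lie in $\KK$. Then $g_i$ itself belongs to $V\subset\KK[x,\underline{y}]$, and the equality $g_i(x_0,\underline{y}_0)=0$ in $\LL=\KK(x_0,\underline{y}_0)$ (a consequence of $g_i\in K$) becomes an identity between polynomials in the indeterminates $x_0,\underline{y}_0$, forcing $g_i=0$ as a polynomial --- contradicting the pivot coefficient $c_{i,p_i}=1$. The principal obstacle is the first step: the defining relations of $K$ involve only $\Phi_0,\dots,\Phi_{l-1}$, so promoting them to the additional relation $\Phi_l(c)=0$ (which is what makes the $\tilde{D}$-stability of $K$ non-trivial) relies crucially on the recurrence $v_{k+1}=\tilde{D}(v_k)$ together with the dimension bound in $\LL^l$.
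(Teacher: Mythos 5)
Your proof is correct, and it takes a genuinely different route from the paper's. The paper works one $g_i$ at a time: from the membership conditions $\sum_{j\in J_i} p_j(x_0,\underline{y}_0)\,D^k(b_j)(x_0,\underline{y}_0)=0$ it extracts a singular square Wronskian system, invokes Lemma~\ref{Wronskian} to conclude that the $b_j(x_0,\underline{y}_0)$, $j\in J_i$, are linearly dependent over $\ker\tilde{D}$, then rebuilds from that dependence a kernel element with constant coefficients and uses the reduced row echelon form to identify it with $g_i$. You instead prove that $\ker\mathcal{E}^V_D$ is stable under $\tilde{D}$ --- via the recursion $v_{k+1}=\tilde{D}(v_k)$ and the dimension bound in $\LL^l$, which promotes the $l$ defining conditions $\Phi_0=\dots=\Phi_{l-1}=0$ to $\Phi_k=0$ for all $k$ (this is precisely the linear-algebra core of Theorem~\ref{prec_gen}, which you in effect reprove), together with the Leibniz identity $\Phi_k(\tilde{D}(c))=\tilde{D}(\Phi_k(c))-\Phi_{k+1}(c)$ --- and then apply the standard fact that the RREF basis of a $\tilde{D}$-stable subspace is annihilated coefficientwise by $\tilde{D}$, since $\tilde{D}(g_i)$ lies in the kernel with all pivot coordinates zero. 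Your route avoids the Wronskian lemma entirely and isolates the $\tilde{D}$-stability of the kernel as a clean intermediate statement; the paper's route is shorter given that Lemma~\ref{Wronskian} is already on hand and reused elsewhere (e.g.\ in Proposition~\ref{Propextwronskian}). The last assertion --- some coefficient lies in $\ker\tilde{D}\setminus\KK$ because the $b_j(x_0,\underline{y}_0)$ are $\KK$-linearly independent and the pivot coefficient is $1$ --- is argued the same way in both.
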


As mentioned before, there exists a natural isomorphism between the two kernels, \mbox{$\ker D \subset \KK(x,\underline{y})$} and $\ker \tilde{D} \subset \KK(x_0,\underline{y}_0)$. Thus, this proposition says that the computation of a reduced row echelon basis of $\ker \mathcal{E}^V_D$ gives  a non-trivial rational first integral: $c_{j_i}(x,\underline{y})$.\\
For the definition of the reduced row echelon form we can consult \cite[Chapter~1]{Lay}.

\begin{proof}
Consider $\{g_1,\ldots,g_l\}$ a basis of $\ker \mathcal{E}^V_D$ in reduced row echelon form and we set $g_i(x_0,\underline{y}_0;x,\underline{y})=\sum_{j \in J_i}p_{j}(x_0,\underline{y}_0)b_j(x,y)$.

As $g_i \in \ker \mathcal{E}^V_D$, we have for all $k\leq l-1$, 
$$\sum_{j \in J_i} p_j(x_0,\underline{y}_0)D^k(b_j)(x_0,\underline{y}_0)=0.$$
We set $J_i=\{j_0,j_1\ldots,j_{k_{i}}\}$, then the previous equalities give:
$$\begin{pmatrix}
b_{j_0}(x_0,\underline{y}_0) & \ldots & b_{j_{k_i}}(x_0,\underline{y}_0)\\
D(b_{j_0})(x_0,\underline{y}_0) & \ldots & D(b_{j_{k_i}})(x_0,\underline{y}_0)\\
\vdots & \ldots &\vdots\\
D^{k_i}(b_{j_0})(x_0,\underline{y}_0) & \ldots & D^{k_i}(b_{j_{k_i}})(x_0,\underline{y}_0)\\
\end{pmatrix}
\cdot
\begin{pmatrix}
p_{j_0}(x_0,\underline{y}_0)\\
p_{j_1}(x_0,\underline{y}_0)\\
\vdots\\
p_{j_{k_i}}(x_0,\underline{y}_0)
\end{pmatrix}
=
\begin{pmatrix}
0\\
0\\
\vdots\\
0\\
\end{pmatrix}.
$$
As $(p_{j_0}(x_0,\underline{y}_0),\ldots, p_{j_{k_i}}(x_0,\underline{y}_0)) \neq 0$, this implies $W_{\tilde{D}}(b_j(x_0,\underline{y}_0);j \in J_i)=0$. Then, we deduce thanks to Lemma~\ref{Wronskian}, that the $b_j(x_0,\underline{y}_0)$ with $j \in J_i$   are linearly related over $\ker \tilde{D}$.\\
Then there exists $c_{j}(x_0,\underline{y}_0) \in \ker \tilde{D}$ such that $\sum_{j \in J_i}c_{j}(x_0,\underline{y}_0)b_j(x_0,\underline{y}_0)=0$. As $\{b_j(x_0,\underline{y}_0) \, |  \, j\in J_i\}$ is a family of  linearly independent elements over $\KK$, we can suppose without loss of generality that $c_{j_0}(x_0,\underline{y}_0)=1$ and that there exists an index $j_i$ such that $c_{j_i}(x_0,\underline{y}_0) \not \in\KK$.
Furthermore, we have:
\begin{eqnarray*}
0&=&\tilde{D}\big(\sum_{j \in J_i}c_{j}(x_0,\underline{y}_0)b_j(x_0,\underline{y}_0)\big)\\
&=&\sum_{j \in J_i}\tilde{D}(c_{j})(x_0,\underline{y}_0)b_j(x_0,\underline{y}_0)+\sum_{j \in J_i}c_{j}(x_0,\underline{y}_0)\tilde{D}\big(b_j(x_0,\underline{y}_0)\big)\\
\end{eqnarray*}
As $c_j(x_0,\underline{y}_0) \in \ker \tilde{D}$, this implies
\begin{eqnarray*}
0&=&\sum_{j \in J_i}c_{j}(x_0,\underline{y}_0)\tilde{D}\big(b_j(x_0,\underline{y}_0)\big).
\end{eqnarray*}
In the same way, we get $\sum_{j \in J_i}c_{j}(x_0,\underline{y}_0)\tilde{D}^j\big( b_j(x_0,\underline{y}_0)\big)=0.$ \\
It follows that $\sum_{j \in J_i}c_{j}(x_0,\underline{y}_0)b_j(x,y) \in \ker \mathcal{E}_D^V$. This polynomial has the same support as the polynomial $g_i$ and the basis $\{g_1,\ldots,g_l\}$ is in reduced row echelon form, thus we get the desired result.
\end{proof}

Now, we are going to give an explicit statement for ``if the contact between an hypersurface and an orbit is big enough then the orbit is included in the hypersurface."

\begin{thm}\label{prec_gen}
Let $\underline{y}(x)=\big(y(x),y_1(x),\ldots,y_n(x)\big)$ be a solution of  $(S_n)$ satisfying the initial condition  $\underline{y}(x_0)=\underline{y}_0$ and let $V$ be a linear subspace of $\KK[x,\underline{y}]$ of dimension $l$.\\
If $P \in \LL \otimes_{\KK} V$ and $P(x_0,\underline{y}_0;x,\underline{y}(x))=0 \mod(x-x_0)^{l}$ then \mbox{$P(x_0,\underline{y}_0;x,\underline{y}(x))=0$}.
\end{thm}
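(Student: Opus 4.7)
The plan is to reduce the claim to the structural description of $\ker \mathcal{E}_D^V$ furnished by Proposition~\ref{rrefker}, and then use first integrals of $\tilde{D}$ to upgrade order-$l$ vanishing to vanishing at every order.

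By the Remark following the definition of $\mathcal{E}_D^V$, the hypothesis $P\bigl(x_0,\underline{y}_0;x,\underline{y}(x)\bigr) \equiv 0 \pmod{(x-x_0)^l}$ is equivalent to $P \in \ker \mathcal{E}_D^V$. First, I would pick a reduced row echelon basis $g_1,\dots,g_s$ of $\ker \mathcal{E}_D^V$ supplied by Proposition~\ref{rrefker}, so that each
$$g_i(x_0,\underline{y}_0;x,\underline{y}) = \sum_{j \in J_i} c_{i,j}(x_0,\underline{y}_0)\, b_j(x,\underline{y}), \qquad c_{i,j} \in \ker \tilde{D}.$$
Writing $P = \sum_i \alpha_i g_i$ with $\alpha_i \in \LL$ then reduces the claim to showing each $g_i\bigl(x_0,\underline{y}_0;x,\underline{y}(x)\bigr)=0$ identically.

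The core step is to upgrade $D^k(g_i)(x_0,\underline{y}_0)=0$ from $k<l$ (the kernel condition) to all $k \geq 0$. Because $D$ does not differentiate the coefficients $c_{i,j}(x_0,\underline{y}_0)$, and $\tilde{D}$ acts on functions of $(x_0,\underline{y}_0)$ by the same formulas that $D$ uses on functions of $(x,\underline{y})$, one obtains
$$D^k(g_i)(x_0,\underline{y}_0) = \sum_{j \in J_i} c_{i,j}(x_0,\underline{y}_0)\,\tilde{D}^k\bigl(b_j(x_0,\underline{y}_0)\bigr) = \tilde{D}^k\bigl(g_i(x_0,\underline{y}_0;x_0,\underline{y}_0)\bigr),$$
where the second equality uses $c_{i,j} \in \ker \tilde{D}$ to slide $\tilde{D}^k$ past the coefficients. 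The inner expression $g_i(x_0,\underline{y}_0;x_0,\underline{y}_0)$ is the $k=0$ entry of $\mathcal{E}_D^V(g_i)$ and is therefore zero; this is the same cancellation exploited in the proof of Proposition~\ref{rrefker}. Hence $D^k(g_i)(x_0,\underline{y}_0)=0$ for every $k \geq 0$.

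Finally, the elementary equivalence recalled just before the definition of $\mathcal{E}_D^V$ — valid over $\LL$ because $f \not\equiv 0$ — converts this into the vanishing of every Taylor coefficient of $g_i\bigl(x_0,\underline{y}_0;x,\underline{y}(x)\bigr)$ at $x=x_0$, so $g_i\bigl(x_0,\underline{y}_0;x,\underline{y}(x)\bigr)=0$ in $\LL[[x-x_0]]$, and summing over $i$ gives $P\bigl(x_0,\underline{y}_0;x,\underline{y}(x)\bigr)=0$. The main obstacle will be exactly this jump from order $l$ to infinite order: it is not a consequence of bare linear algebra and must exploit the structural appearance of $\ker \tilde{D}$-coefficients in a well-chosen basis, which is precisely the content of Proposition~\ref{rrefker}.
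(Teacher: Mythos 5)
Your proof is correct and follows essentially the same route as the paper's: both reduce to the reduced row echelon basis of $\ker \mathcal{E}^V_D$ supplied by Proposition~\ref{rrefker} and exploit that its coefficients lie in $\ker \tilde{D}$, together with the vanishing $g_i(x_0,\underline{y}_0;x_0,\underline{y}_0)=0$. The only difference is cosmetic: the paper transports that identity along the flow using $c_j\bigl(x,\underline{y}(x)\bigr)=c_j(x_0,\underline{y}_0)$, whereas you use the Leibniz rule to show every iterated Lie derivative $D^k(g_i)(x_0,\underline{y}_0)$ vanishes and then pass to the Taylor coefficients; both steps are routine and equivalent.
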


\begin{proof}
Consider $P(x_0,\underline{y}_0;x,\underline{y})$ such that $P(x_0,\underline{y}_0;x,\underline{y}(x))=0 \mod (x-x_0)^{l}$. The Taylor expansion of $P$ shows that $P(x_0,\underline{y}_0;x,\underline{y}) \in \ker \mathcal{E}^V_D$.\\
We can write $P(x_0,\underline{y}_0;x,\underline{y})$ in the following form:
$$P(x_0,\underline{y}_0;x,\underline{y})=\sum_i \lambda_i(x_0,\underline{y}_0)g_i(x_0,\underline{y}_0;x,\underline{y}),$$
where the polynomials $g_i(x_0,\underline{y}_0;x,\underline{y})$ satisfy Proposition \ref{rrefker}.

We have:
\begin{eqnarray*}
g_i(x,\underline{y}(x);x,\underline{y})&=&\sum_{j \in J_i} c_{j}(x,\underline{y}(x))b_j(x,\underline{y})\\
&=&\sum_{j \in J_i} c_{j}(x_0,\underline{y}_0)b_j(x, \underline{y}), \textrm{ because }  c_{j}(x_0,\underline{y}_0) \in \ker \tilde{D},\\
&=&g_i(x_0,\underline{y}_0;x,\underline{y}).
\end{eqnarray*}
Furthermore, $g_i(x_0,\underline{y}_0;x_0,\underline{y}_0)=0$ because $g_i(x_0,\underline{y}_0;x,\underline{y}) \in \ker \mathcal{E}^V_D$.\\ 
Thus $g_i(x,\underline{y}(x);x,\underline{y}(x))=0$.\\
 As $g_i(x,\underline{y}(x);x,\underline{y})=g_i(x_0,\underline{y}_0;x,\underline{y})$ we get
$$0=g_i(x,\underline{y}(x);x,\underline{y}(x))=g_i(x_0,\underline{y}_0;x,\underline{y}(x)).$$
Then $P(x_0,\underline{y}_0;x,\underline{y}(x))=\sum_i \lambda_i(x_0,\underline{y}_0)g_i(x_0,\underline{y}_0;x,\underline{y}(x))=0$.
\end{proof}
This result  means that for a generic point if the order of contact with a polynomial of degree $N$ is bigger than $\dim_{\KK}\KK[x,\underline{y}]_{\leq N}$ then this order of contact is infinite.
%%%%%%%%%%%%%%%%%%%%%%%%%%%%%%%%%%%%%%%%%%%%%%%%%%%%%%%%%%%%%%%%%%%%%%

\section{Extactic curves}\label{sec:extacurve}

\subsection{Rational extactic curve}%$\,$\\
In this subsection, we recall a classical result for the extactic curve in two variables. Let us note
$$\tilde{\mathcal{E}}_{D_0}^N(x_0,y_0)=\mathcal{E}_{D_0}^{\KK[x,y]_{\leq N}},\quad \tilde{E}_{D_0}^N(x_0,y_0)=E_{D_0}^{\KK[x,y]_{\leq N}},$$
and call a rational function $F\in\bar{\mathbb{K}}(x,y)$ \emph{indecomposable} when it cannot be written $f \circ g$, with $f \in \overline{\KK}(T)$, $g \in \overline{\KK}(x,y)$ and $\deg(f)\geq 2$.

\begin{thm}[Bivariate rational extactic curve theorem]\label{thm:exta+rfi}$\,$\\
The derivation $D_0$ has an indecomposable rational first integral with degree $N$ if and only if $\tilde{E}_{D_0}^N(x_0,y_0)=0$ and $\tilde{E}_{D_0}^{N-1}(x_0,y_0) \neq 0$. Moreover, this indecomposable first integral can always be assumed to have coefficients in $\mathbb{K}$.
\end{thm}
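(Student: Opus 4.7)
My plan is to handle the two implications independently, using Proposition~\ref{rrefker} to extract a first integral from a vanishing extactic and Remark~\ref{remdegiprexta} to go the other way, together with a Lüroth-type argument to control indecomposability and minimal degree.

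For the direction $(\Leftarrow)$, starting from $\tilde{E}^N_{D_0}(x_0,y_0)=0$, I would apply Proposition~\ref{rrefker} with $V=\KK[x,y]_{\leq N}$: a non-trivial element of $\ker\mathcal{E}^V_{D_0}$ in reduced row echelon form produces a non-constant rational first integral $G\in\KK(x,y)$ of degree at most $N$. The hypothesis $\tilde{E}^{N-1}_{D_0}(x_0,y_0)\neq 0$, combined with Remark~\ref{remdegiprexta} (applied over $\overline{\KK}$, since the vanishing of the extactic is a polynomial identity in $\KK[x_0,y_0]$ and is insensitive to extension of scalars), forbids any first integral of degree strictly below $N$, hence $\deg G=N$. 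Indecomposability of $G$ then follows by contradiction: a decomposition $G=h\circ g$ with $h\in\overline{\KK}(T)$, $g\in\overline{\KK}(x,y)$, $\deg h\geq 2$, would exhibit $g$ as a first integral of degree $\deg G/\deg h\leq N/2\leq N-1$ (for $N\geq 2$; the case $N=1$ is trivially indecomposable), again contradicting $\tilde{E}^{N-1}_{D_0}\neq 0$ via Remark~\ref{remdegiprexta}.

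For the direction $(\Rightarrow)$, I would start with an indecomposable first integral $F\in\overline{\KK}(x,y)$ of degree $N$. Remark~\ref{remdegiprexta} immediately yields $\tilde{E}^N_{D_0}(x_0,y_0)=0$. To prove $\tilde{E}^{N-1}_{D_0}(x_0,y_0)\neq 0$, I would invoke the classical Lüroth-type result that $\ker D_0 \cap \overline{\KK}(x,y)=\overline{\KK}(F_0)$ for some $F_0$: any rational first integral is then of the form $h(F_0)$, of degree $\deg h\cdot\deg F_0$. Indecomposability of $F$ forces $\deg h=1$ in the expression $F=h(F_0)$, so $\deg F_0=N$ and $N$ is therefore the minimum possible degree of any non-constant first integral. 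A putative vanishing of $\tilde{E}^{N-1}_{D_0}$ would, via Proposition~\ref{rrefker} with $V=\KK[x,y]_{\leq N-1}$, produce a first integral of degree strictly less than $N$, contradicting this minimality.

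The ``Moreover'' part follows for free from the construction in the $(\Leftarrow)$ direction, since Proposition~\ref{rrefker} applied to $V=\KK[x,y]_{\leq N}$ outputs a first integral whose numerator and denominator live in $\KK[x,y]$. The main obstacle I anticipate is keeping careful track of base fields throughout: the extactic determinant and the test of its vanishing are $\KK$-rational objects, whereas indecomposability is by definition taken over $\overline{\KK}$. One must verify that the absence of first integrals of low degree over $\overline{\KK}$ is faithfully detected by the non-vanishing of a polynomial in $\KK[x_0,y_0]$; this is guaranteed because the determinant construction commutes with extension of scalars and the Lüroth argument applied over $\overline{\KK}$ ties the minimum degree (as seen over $\overline{\KK}$) to the extactic indexed by~$N-1$.
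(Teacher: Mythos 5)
The paper itself does not prove this theorem: it states that the result is classical and refers to Pereira and Christopher--Llibre--Pereira, so the only question is whether your argument is self-contained and correct. It is not: both of your directions silently use the implication ``$\tilde{E}^{M}_{D_0}(x_0,y_0)=0$ implies $D_0$ has a rational first integral of degree at most $M$'', and neither Proposition~\ref{rrefker} nor Proposition~\ref{Propextwronskian} delivers this. Proposition~\ref{rrefker} only produces \emph{some} non-constant first integral, namely a coefficient $c_{j_i}(x_0,y_0)\in\ker\tilde{D}$ of a reduced-row-echelon kernel element; nothing in its statement or proof bounds the degree of $c_{j_i}$ as a rational function by $N$ (the $b_j$ have degree $\le N$, but the $c_j$ are arbitrary elements of $\ker\tilde{D}\subset\KK(x_0,y_0)$). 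Consequently, in your $(\Leftarrow)$ direction you obtain $\deg G\ge N$ from $\tilde{E}^{N-1}_{D_0}\neq 0$ but you cannot conclude $\deg G\le N$, and in your $(\Rightarrow)$ direction a hypothetical vanishing of $\tilde{E}^{N-1}_{D_0}$ only re-yields the existence of a non-constant first integral --- which you already knew --- rather than one of degree $<N$, so no contradiction arises. This missing implication is precisely the hard content of Pereira's theorem, not a formality.

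To close the gap you need a genuinely new step. One route: a kernel element $g_i=\sum_{j}c_j(x_0,y_0)b_j(x,y)$ satisfies $g_i(x_0,y_0;x_0,y_0)=0$ and, writing $\ker\tilde D\cap\overline\KK(x_0,y_0)=\overline\KK\bigl(F_0(x_0,y_0)\bigr)$ with $F_0=P/Q$ indecomposable, its coefficients lie in $\overline\KK(t)$ with $t=F_0(x_0,y_0)$; clearing denominators in $t$ gives a non-trivial relation $\sum_k h_k(x,y)F_0(x,y)^k=0$ with $h_k\in\overline\KK[x,y]_{\le N}$, and restricting to a generic irreducible fibre $\lambda P-\mu Q=0$ forces that fibre, of degree $\deg F_0$, to divide a non-zero polynomial of degree $\le N$, whence $\deg F_0\le N$. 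Alternatively, specialize $(x_0,y_0)$ generically and use Theorem~\ref{prec_gen} to show that through a generic point passes an invariant algebraic curve of degree $\le N$, then invoke Darboux--Jouanolou with degree control to produce a rational first integral of degree $\le N$. The rest of your outline (the L\"uroth argument for indecomposability and minimality, the base-field bookkeeping, and the ``Moreover'' clause) is fine once this key step is supplied.
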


This theorem says that the minimal degree of a rational first integral corresponds to the minimal index where the extactic curve vanishes. This theorem is  not new, see e.g. \cite{Pereira,ChrLibPer}. We have recalled it because we are going to generalize this result for the study of Darbouxian, Liouvillian and Riccati first integrals.

\subsection{Darbouxian extactic curve}%$\,$\\
In this subsection we are going to apply the result of Section \ref{sec:exta} to the derivation $D_1$. Then in the following $\big(y(x),y_1(x)\big)$ is a solution of $(S'_1)$ satisfying the initial condition $y(x_0)=y_0,y_1(x_0)=y_{1,0}$, where $x_0$, $y_0$ and $y_{1,0}$ are variables.\\

Now, we are going to generalize Theorem \ref{thm:exta+rfi} to the Darbouxian case.

\begin{defi}\label{def:exta_Darb}
We set 
$$V_1:=\KK[x,y]_{\leq N}\oplus \KK[x,y]_{\leq N} y_1^k,\quad l_1=\hbox{dim}(V_1).$$
We denote by $\tilde{\mathcal{E}}^{N,k}_{D_1}(x_0,y_0)$ the linear map $\mathcal{E}^{V_1}_{D_1}$ after the specialization $y_{1,0}=1$.\\
The $N$-th $k$-Darbouxian extactic curve is defined by
$$ \tilde{E}^{N,k}_{D_1}(x_0,y_0)= \det\big( \tilde{\mathcal{E}}^{N,k}_{D_1}(x_0,y_0)\big).$$
\end{defi}

Let us begin with two lemmas about this Darbouxian extactic curve.

\begin{lem}\label{lem:eval_exta_darboux}
We have the following equivalence:
$$y_1^kP+Q \in \ker \tilde{\mathcal{E}}^{N,k}_{D_1}(x_0,y_0)$$
$$\Updownarrow$$
$$y_1^kP+y_{1,0}^k Q \in \ker \mathcal{E}^{V_1}_{D_1}.$$
\end{lem}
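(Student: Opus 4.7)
The plan rests on one structural property of $D_1$: since $D_1=A^2\partial_x+AB\partial_y+y_1A^2\partial_y(B/A)\partial_{y_1}$ and the coefficient of $\partial_{y_1}$ is linear in $y_1$, the derivation $D_1$ preserves the $y_1$-degree grading on $\KK[x,y,y_1]$. Concretely, for any $h\in\KK[x,y]$ and any $m\geq 0$, $D_1(h(x,y)\,y_1^m)=\tilde h(x,y)\,y_1^m$ for some $\tilde h\in\KK[x,y]$. Moreover, since $D_1$ is extended to $\LL[x,y,y_1]$ by treating every element of $\LL=\KK(x_0,y_0,y_{1,0})$ as a constant, the new variable $y_{1,0}$ (and in particular $y_{1,0}^k$) plays no role in iterations of $D_1$.

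My first step would be to show, by a short induction on $j$, that for $P,Q\in\KK(x_0,y_0)[x,y]_{\leq N}$ and $g=y_1^k P+Q$,
\[ D_1^j(g)=y_1^k P_j(x,y)+Q_j(x,y), \]
where $P_j,Q_j\in\KK(x_0,y_0)[x,y]$ are produced from $P$ and $Q$ \emph{separately} (no cross-mixing, because of the $y_1$-homogeneity) and remain independent of $y_{1,0}$. Evaluating at the generic point $(x_0,y_0,y_{1,0})$ yields
\[ D_1^j(g)(x_0,y_0,y_{1,0})=y_{1,0}^k\,P_j(x_0,y_0)+Q_j(x_0,y_0). \]
Setting $y_{1,0}=1$ (which is precisely the specialization defining $\tilde{\mathcal{E}}^{N,k}_{D_1}(x_0,y_0)$) translates the condition $g\in\ker\tilde{\mathcal{E}}^{N,k}_{D_1}(x_0,y_0)$ into the $l_1$ equations
\[ P_j(x_0,y_0)+Q_j(x_0,y_0)=0,\qquad j=0,1,\ldots,l_1-1. \]

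Running the identical calculation on $g'=y_1^k P+y_{1,0}^k Q\in\LL\otimes V_1$, the factor $y_{1,0}^k$ passes through $D_1$ as a scalar, so $D_1^j(g')=y_1^k P_j(x,y)+y_{1,0}^k Q_j(x,y)$ with the \emph{same} $P_j,Q_j$ as above. Evaluation at $(x_0,y_0,y_{1,0})$ then gives
\[ D_1^j(g')(x_0,y_0,y_{1,0})=y_{1,0}^k\bigl(P_j(x_0,y_0)+Q_j(x_0,y_0)\bigr). \]
Since $y_{1,0}$ is a free indeterminate, $y_{1,0}^k$ is a unit in $\LL$, so $g'\in\ker\mathcal{E}^{V_1}_{D_1}$ amounts to exactly the same $l_1$ equations as the specialized case. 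The equivalence follows.

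There is no serious obstacle; the whole argument is the grading observation followed by a two-line computation. The only points worth spelling out carefully are that $P_j,Q_j$ never mix under iteration (a direct consequence of the $y_1$-homogeneity of $D_1$) and that they remain free of $y_{1,0}$ (because $D_1$ does not differentiate in that variable); both facts are exactly what is needed to factor out $y_{1,0}^k$ as a unit in the non-specialized calculation.
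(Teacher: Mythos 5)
Your proof is correct, but it takes a genuinely different route from the paper's. You argue purely algebraically: since the $\partial_{y_1}$-coefficient of $D_1$ is linear in $y_1$, the derivation is homogeneous of degree $0$ for the $y_1$-grading, so $D_1^j(y_1^kP+Q)=y_1^kP_j+Q_j$ with $P_j,Q_j\in\KK(x_0,y_0)[x,y]$ produced separately from $P$ and $Q$; evaluating at $(x_0,y_0,y_{1,0})$ and using that $y_{1,0}^k$ is a $D_1$-constant and a unit in $\LL$ gives the equivalence at once. The paper instead works through the flow: it introduces the scaling $T(y,y_1)=(y,\,y_{1,0}y_1)$, checks that $T$ maps the solution of $(S'_1)$ with initial condition $y_1(x_0)=1$ to the one with $y_1(x_0)=y_{1,0}$, and then transports the congruence $g\big(x,\underline{y}(x)\big)\equiv 0 \bmod (x-x_0)^{l_1}$ from one initial condition to the other by multiplying by $y_{1,0}^k$. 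The two arguments exploit the same symmetry (your grading statement is exactly the infinitesimal form of the paper's scaling equivariance), but yours is more elementary and self-contained for this lemma, needing neither the flow nor the remark identifying $\ker\mathcal{E}_D^V$ with an order-of-contact condition. What the paper's formulation buys is uniformity: the same transformation-of-initial-conditions template carries over verbatim to the Liouvillian and Riccati analogues (Lemmas \ref{lem:eval_exta_liouv} and \ref{lem:eval_exta_ric}), where $T$ is only triangular in $(y_1,y_2,y_3)$ and a clean degree-grading argument would have to be replaced by bookkeeping over a weighted filtration. One small point worth making explicit in your write-up: in the reverse implication you should note that $P,Q$ are taken in $\KK(x_0,y_0)[x,y]_{\leq N}$ (independent of $y_{1,0}$), which is the reading under which the two memberships concern the same pair $(P,Q)$ and under which your claim that $P_j,Q_j$ are free of $y_{1,0}$ holds.
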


This lemma says that evaluating $y_{1,0}=1$ in the definition of the extactic curve does not lose much information.

\begin{proof}
We denote by $\big(\psi(x),\psi_1(x)\big)$ the solution of $(S'_1)$ such that $\psi(x_0)=y_0$, $\psi_1(x_0)=1$.\\
We consider the transformation 
$$T(y,y_1)=(y,\, y_{1,0}y_1).$$
We set 
$$\big(\psi_T(x),\psi_{T,1}(x)\big):=T(\psi(x),\psi_1(x)\big)=\big( \psi(x),\, y_{1,0}\psi_1(x)\big).$$
Now, we are going to show that $\big(\psi_T(x),\psi_{T,1}(x)\big)$ is a solution of $(S'_1)$ with initial conditions $\psi_T(x_0)=y_0$, $\psi_{T,1}(x_0)=y_{1,0}$. 
Indeed, 
$$\partial_x\psi_{T,1}(x)=y_{1,0}\partial_x\psi_1(x)=y_{1,0} \psi_1(x)\partial_y\Big( \dfrac{B}{A} \Big)\big(x,\psi(x) \big)= \psi_{T,1}(x)\partial_y\Big( \dfrac{B}{A} \Big)\big(x,\psi_T(x) \big).$$
This implies the equality  $\big(y(x),y_1(x)\big)=\big( \psi_T(x),\psi_{T,1}(x)\big)$.\\

Now suppose that $y_1^kP+Q \in \ker \tilde{\mathcal{E}}^{N,k}_{D_1}(x_0,y_0)$ then 
$$\psi_1^k(x)P\big(x, \psi(x)\big)+Q\big(x, \psi(x)\big)= 0  \mod (x-x_0)^{l_1}.$$
Thus this equality multiplied by $y_{1,0}^k$ gives
$$y_{1,0}^k\psi_1^k(x)P\big(x, \psi(x)\big)+y_{1,0}^kQ\big(x, \psi(x)\big)= 0  \mod (x-x_0)^{l_1}.$$
It follows
$$\psi_{T,1}^k(x)P\big(x, \psi_T(x)\big)+y_{1,0}^kQ\big(x, \psi_T(x)\big)= 0  \mod (x-x_0)^{l_1}.$$
Therefore, as $\big(y(x),y_1(x)\big)=\big( \psi_T(x),\psi_{T,1}(x)\big)$ we deduce that   $y_1^kP+y_{1,0}^k Q$ belongs to $\ker \mathcal{E}^{V_1}_{D_1}$.\\

The converse is straightforward.
\end{proof}

\begin{lem}\label{lem:ker_exta_darboux}
Consider  a non trivial element $y_1^kP+Q \in \ker \tilde{\mathcal{E}}^{N,k}_{D_1}(x_0,y_0)$, then:
\begin{itemize}
\item If $P=0$ then $Q \in \ker \tilde{\mathcal{E}}^N_{D_0}(x_0,y_0)$.
\item If $Q=0$ then $P \in \ker \tilde{\mathcal{E}}^N_{D_0}(x_0,y_0)$.
\item If $PQ \neq 0$ and $Q \not \in\ker \tilde{\mathcal{E}}^N_{D_0}(x_0,y_0)$ then:
$$\left(   D_0\left( \left(\dfrac{P}{Q}\right)^{1/k}\right) + A \left(\dfrac{P}{Q}\right)^{1/k}\partial_y\Big(\dfrac{B}{A}\Big)\right)\big(x,y(x)\big)=0.$$
\end{itemize}
\end{lem}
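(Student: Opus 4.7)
The plan is to apply Lemma \ref{lem:eval_exta_darboux} together with the precision result (Theorem \ref{prec_gen}) in order to promote the finite-order vanishing condition into an identity along the flow, and then split the analysis into the three claimed cases. By Lemma \ref{lem:eval_exta_darboux}, the hypothesis $y_1^k P + Q \in \ker \tilde{\mathcal{E}}^{N,k}_{D_1}(x_0, y_0)$ is equivalent to $y_1^k P + y_{1,0}^k Q \in \ker \mathcal{E}^{V_1}_{D_1}$, and since $\dim V_1 = l_1$, Theorem \ref{prec_gen} upgrades this to the power-series identity
$$y_1(x)^k P(x, y(x)) + y_{1,0}^k Q(x, y(x)) \equiv 0.$$
Setting $y_{1,0} = 1$ (so that $y(x_0) = y_0$ and $y_1(x_0) = 1$) yields $y_1(x)^k P(x, y(x)) + Q(x, y(x)) \equiv 0$ as a power series in $x - x_0$.

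The first two items are then immediate. When $P = 0$, the identity forces $Q(x, y(x)) \equiv 0$, in particular $Q(x, y(x)) = 0 \mod (x-x_0)^{\dim \KK[x,y]_{\leq N}}$, which means $Q \in \ker \tilde{\mathcal{E}}^N_{D_0}(x_0, y_0)$. When $Q = 0$, the identity becomes $y_1(x)^k P(x, y(x)) \equiv 0$; since $y_1(x_0) = 1$ the series $y_1(x)$ is invertible in $\KK(x_0, y_0)[[x-x_0]]$, hence $P(x, y(x)) \equiv 0$ and the same conclusion follows for $P$.

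For the third case, the hypothesis $Q \notin \ker \tilde{\mathcal{E}}^N_{D_0}(x_0, y_0)$ says exactly that $Q(x, y(x))$ does not vanish modulo $(x-x_0)^{\dim \KK[x,y]_{\leq N}}$, so in particular $Q(x, y(x)) \not\equiv 0$; the identity then forces $P(x, y(x)) \not\equiv 0$ as well. We may therefore rewrite
$$y_1(x)^k = -\frac{Q(x, y(x))}{P(x, y(x))}.$$
Differentiating both sides in $x$ — using the ODE $\partial_x y_1 = y_1 \partial_y(B/A)$ of $(S'_1)$ on the left, and the chain-rule identity $\partial_x [g(x, y(x))] = D_0(g)(x, y(x))/A(x, y(x))$ on the right — produces
$$k\, y_1(x)^k \partial_y(B/A)(x, y(x)) = -\frac{D_0(Q/P)(x, y(x))}{A(x, y(x))}.$$
Substituting back $y_1(x)^k = -Q/P$ and clearing signs and denominators yields $D_0(P/Q) + kA(P/Q)\partial_y(B/A) = 0$ along $(x, y(x))$. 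Finally, applying $D_0(f^k) = kf^{k-1}D_0(f)$ to any fixed branch $f = (P/Q)^{1/k}$ converts this to $D_0(f) + Af\partial_y(B/A) = 0$ along the orbit, which is the claimed identity. I do not anticipate any real obstacle beyond careful sign-tracking; the only mildly delicate step is the consistent choice of a branch of the $k$-th root, and this is harmless because the intermediate rational identity $D_0(P/Q) + kA(P/Q)\partial_y(B/A) = 0$ is strictly equivalent to the target formulation after multiplying through by $kf^{k-1}$.
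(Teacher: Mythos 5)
Your proof is correct and follows essentially the same route as the paper: Lemma \ref{lem:eval_exta_darboux} combined with Theorem \ref{prec_gen} to upgrade the kernel condition to an exact identity along the flow, then the same case split and differentiation along the orbit using $\partial_x y_1 = y_1\partial_y(B/A)$. The only cosmetic difference is that in the third case you differentiate the power relation $y_1(x)^k=-Q/P$ and divide by $kf^{k-1}$ at the end, whereas the paper first extracts a $k$-th root, writing $y_1(x)\left(P/Q\right)^{1/k}\big(x,y(x)\big)=\xi y_{1,0}$ with $\xi$ a $k$-th root of $-1$, and differentiates that relation; the two computations are equivalent.
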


The first two  cases are pathological ones, i.e. we compute the Darbouxian extactic curve but it appears that a rational first integral exists.

\begin{proof}
As $y_1^kP+Q \in \ker \tilde{\mathcal{E}}^{N,k}_{D_1}(x_0,y_0)$ we get using Lemma \ref{lem:eval_exta_darboux}
$$ y_1^kP(x,y)+y_{1,0}^k Q(x,y) \in \ker \mathcal{E}^{V_1}_{D_1}.$$
%and with the initial conditions $y(x_0)=y_0,y_1(x_0)=y_{1,0}$

Thus
$$y_1(x)^kP(x,y(x)) + y_{1,0}^k Q(x,y(x)) =0\mod (x-x_0)^{l_1}.$$
By Theorem \ref{prec_gen}, we deduce that $y_1(x)^kP(x,y(x))+y_{1,0}^k Q(x,y(x)) =0$.\\
If $P=0$ then we get $Q(x,y(x))=0 \mod (x-x_0)^{l_1}$, then $Q \in \ker \tilde{\mathcal{E}}^N_{D_0}(x_0,y_0)$.\\ 
If $Q=0$ then we get $y_1(x)^kP(x,y(x))=0 \mod (x-x_0)^{l_1}$. We have $y_1(x) \neq 0$ as $y_1(x_0)=y_{1,0}$ and thus $P \in \ker \tilde{\mathcal{E}}^N_{D_0}(x_0,y_0)$.

Now we suppose that $PQ \neq 0$, and $Q \not \in \ker \tilde{\mathcal{E}}^N_{D_0}(x_0,y_0)$, then  \mbox{$Q\big(x,y(x)\big) \neq 0$} and:
$$y_1(x)^k\dfrac{P}{Q}\big(x,y(x)\big)=-y_{1,0}^k,$$
and thus
$$y_1(x)\left(\dfrac{P}{Q}\right)^{1/k}\big(x,y(x)\big)=\xi y_{1,0}, \;\;\; \textrm{ where }\;\xi\hbox{ is a } k\hbox{-th root of }-1.$$
The derivation relative to $x$ of this relation and the fact that $y_1(x)$ is a solution of $(S'_1)$ gives:
$$y_1(x)\partial_y\Big(\dfrac{B}{A}\Big)\big(x,y(x)\big)\left(\dfrac{P}{Q}\right)^{1/k}\big(x,y(x)\big)$$
$$+y_1(x)\left( \partial_x\Big(\Big(\dfrac{P}{Q}\Big)^{1/k}\Big) \big(x,y(x)\big) +\partial_y\Big(\Big(\dfrac{P}{Q}\Big)^{1/k}\Big)\big(x,y(x)\big)\dfrac{B}{A}\big(x,y(x)\big)\right)=0.$$
As $y_1(x) \neq 0$ we get:
$$\left( A \left(\dfrac{P}{Q}\right)^{1/k} \partial_y\Big( \dfrac{B}{A} \Big)+D_0\Big(\left(\dfrac{P}{Q}\Big)^{1/k}\right)\right)\big(x,y(x)\big)=0.$$
This gives the desired result.
\end{proof}

We now prove the main result about the Darbouxian extactic curve.\\

\begin{thm}[Darbouxian extactic curve theorem]\label{thm:exta+dfi}$\,$\\
\begin{enumerate}
\item If $\tilde{E}^{N,k}_{D_1}(x_0,y_0)=0$ then the derivation $D_0$ has a $k$-Darbouxian first integral with degree smaller than $N$ or a rational first integral with degree smaller than $2N+2d-1$. Moreover the defining equation of the first integral, (Rat) or (D), has coefficients in $\KK$.\\
\item If $D_0$ has a rational or a $k$-Darbouxian first integral with degree smaller than $N$ then $\tilde{E}^{N,k}_{D_1}(x_0,y_0)=0$.
\end{enumerate}
\end{thm}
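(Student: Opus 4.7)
My plan is to handle the two directions by relating nonzero elements of $\ker \tilde{\mathcal{E}}^{N,k}_{D_1}(x_0,y_0)$ to first integrals of $D_0$, using Lemma \ref{lem:ker_exta_darboux} for the forward direction and an explicit construction for the converse.

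For the converse (2), I will directly exhibit nonzero kernel elements. If $H = H_1/H_2 \in \overline{\KK}(x,y)$ is a rational first integral of degree $\leq N$, then
$$Q(x,y) := H_1(x,y) H_2(x_0,y_0) - H_2(x,y) H_1(x_0,y_0) \in \overline{\LL}[x,y]_{\leq N}$$
is nonzero and vanishes along every orbit, so $0 \cdot y_1^k + Q$ lies in the kernel. If instead $D_0$ admits a $k$-Darbouxian first integral with $F^k = F_1/F_2$ of degree $\leq N$, then Proposition \ref{propinv} tells me $y_1 F$ is a first integral of $(S'_1)$; raising the orbit identity $y_1 F(x,y(x)) = y_{1,0} F(x_0,y_0)$ to the $k$-th power and clearing denominators produces the nonzero kernel element $y_1^k F_1(x,y) F_2(x_0,y_0) - F_2(x,y) F_1(x_0,y_0)$ (after the prescribed specialization $y_{1,0}=1$). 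Since $\tilde{E}^{N,k}_{D_1}(x_0,y_0)$ is the determinant of a matrix with entries in $\KK[x_0,y_0]$, nontriviality of the kernel over any overfield forces this determinant to vanish identically.

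For the main direction (1), assume $\tilde{E}^{N,k}_{D_1}(x_0,y_0) = 0$; then $\ker \tilde{\mathcal{E}}^{N,k}_{D_1}(x_0,y_0)$ contains a nonzero element $y_1^k P + Q$ with $P, Q \in \LL[x,y]_{\leq N}$, and I apply Lemma \ref{lem:ker_exta_darboux}. The degenerate cases $P = 0$ or $Q = 0$ place the surviving polynomial in $\ker \tilde{\mathcal{E}}^N_{D_0}(x_0,y_0)$, forcing $\tilde{E}^N_{D_0} \equiv 0$; Theorem \ref{thm:exta+rfi} then produces a rational first integral of degree $\leq N$ with coefficients in $\KK$. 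In the main case $PQ \neq 0$, the lemma yields the orbit identity $\bigl(D_0(G) + AG\partial_y(B/A)\bigr)(x,y(x)) = 0$ with $G = (P/Q)^{1/k}$. Clearing denominators, this is equivalent to $\mathcal{H}(x,y(x)) \equiv 0$, where
$$\mathcal{H} := A \bigl(D_0(P) Q - P D_0(Q)\bigr) + k P Q \bigl(A \partial_y B - B \partial_y A\bigr) \in \LL[x,y],$$
whose total degree in $(x,y)$ is at most $2N + 2d - 1$ by a termwise bound.

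A dichotomy on $\mathcal{H}$ then finishes the proof. If $\mathcal{H} \neq 0$, it is a nonzero element of $\ker \tilde{\mathcal{E}}^{2N+2d-1}_{D_0}(x_0,y_0)$, forcing $\tilde{E}^{2N+2d-1}_{D_0} \equiv 0$, and Theorem \ref{thm:exta+rfi} delivers a rational first integral of degree at most $2N+2d-1$ with coefficients in $\KK$. If $\mathcal{H} \equiv 0$, then $D_0(G) + AG\partial_y(B/A) = 0$ holds as an identity in $\overline{\LL(x,y)}$; I specialize $(x_0,y_0)$ to a generic point $(x_0^\ast, y_0^\ast) \in \KK^2$ avoiding the vanishing loci of the coefficients of $P$ and $Q$, obtaining $F := (P^\ast/Q^\ast)^{1/k} \in \overline{\KK(x,y)}$ with $F^k \in \KK(x,y)$ of degree $\leq N$ and satisfying $D_0(F) = -AF \partial_y(B/A)$. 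Proposition \ref{proprepresent} then gives that integrating $F$ produces a $k$-Darbouxian first integral of degree $\leq N$ whose defining equation \eqref{eqtype1} has coefficients in $\KK$. The main obstacle I anticipate is justifying this specialization step cleanly: one must verify that $P^\ast$ and $Q^\ast$ remain nonzero in $\KK[x,y]$ for a generic choice of $(x_0^\ast, y_0^\ast)$, which follows from the fact that $P$ and $Q$ are nonzero elements of $\LL[x,y]$ together with the standard genericity argument, and that the identity $\mathcal{H} \equiv 0$ specializes properly since it is a polynomial identity in $(x_0,y_0,x,y)$ after clearing denominators.
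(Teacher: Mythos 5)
Your proposal follows essentially the same route as the paper: extract a nonzero kernel element $y_1^kP+Q$, dispose of the cases $P=0$ and $Q=0$ via Lemma \ref{lem:ker_exta_darboux} and Theorem \ref{thm:exta+rfi}, clear denominators in the orbit identity to get a polynomial of degree at most $2N+2d-1$ (your $\mathcal{H}$ is exactly $k$ times the auxiliary polynomial $G$ the paper introduces), and run the dichotomy $\mathcal{H}\equiv 0$ versus $\mathcal{H}\neq 0$; the converse via explicit kernel elements is also the paper's argument. Your explicit specialization of $(x_0,y_0)$ to a generic point of $\KK^2$ to land the coefficients of $F^k$ in $\KK$ is a more careful justification of a step the paper states tersely.

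There is, however, one missing sub-case. When $PQ\neq 0$ you invoke the third bullet of Lemma \ref{lem:ker_exta_darboux} to obtain the orbit identity $\bigl(D_0(G)+AG\partial_y(B/A)\bigr)(x,y(x))=0$, but that bullet carries the additional hypothesis $Q\notin\ker\tilde{\mathcal{E}}^{N}_{D_0}(x_0,y_0)$: without it $Q(x,y(x))$ may vanish identically and the division by $Q$ in forming $(P/Q)^{1/k}$ along the orbit is not licit. You must therefore insert the sub-case $Q\in\ker\tilde{\mathcal{E}}^{N}_{D_0}(x_0,y_0)$, in which $\tilde{E}^{N}_{D_0}(x_0,y_0)=0$ and Theorem \ref{thm:exta+rfi} already yields a rational first integral of degree at most $N$ with coefficients in $\KK$ (this is precisely how the paper handles it). The omission is easily repaired and does not affect the rest of your argument.
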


Suppose that $D_0$ has no rational first integral and that $N$ is the smallest integer such that $\tilde{E}^{N,k}_{D_1}(x_0,y_0)=0$. Then, by the previous theorem, there exists a Darbouxian first integral with degree smaller than $N$. Moreover, this degree is minimal. Indeed, if there exists a Darbouxian first integral with degree  $M<N$ then, by the second part the theorem,  we have  $\tilde{E}^{M,k}_{D_1}(x_0,y_0)=0$ and this contradicts the minimality of $N$.
Therefore, as in the rational case, the smallest integer $N$ satisfying $\tilde{E}^{N,k}_{D_1}(x_0,y_0)=0$ is related to a first integral with minimal degree.\\

\begin{proof}%[Proof of Theorem \ref{thm:exta+dfi}]
First, consider a non trivial solution $y_1^kP+Q$ in $\ker \tilde{\mathcal{E}}^{N,k}_{D_1}(x_0,y_0)$.\\
If $P=0$ then by Lemma \ref{lem:ker_exta_darboux} we get $Q \in \ker \tilde{\mathcal{E}}^N_{D_0}(x_0,y_0)$.  Theorem \ref{thm:exta+rfi}  implies  that the derivation $D_0$ has a rational first integral with degree smaller than $N$ with coefficients in $\mathbb{K}$.\\
If $Q=0$, we deduce in the same way that $D_0$ has a rational first integral with degree smaller than $N$ with coefficients in $\mathbb{K}$.

Now we suppose that $PQ \neq 0$.\\
If  $Q \in\ker\tilde{\mathcal{E}}^N_{D_0}(x_0,y_0)$, then by Theorem \ref{thm:exta+rfi}, $D_0$ has a rational first integral with degree smaller than $N$ with coefficients in $\mathbb{K}$.\\
Now, we suppose that $Q$ does not belong to $\ker \mathcal{E}^N_{D_0}(x_0,y_0)$. By Lemma \ref{lem:ker_exta_darboux}, we have
$$\Big(D_0((P/Q)^{1/k}) + A (P/Q)^{1/k}\partial_y(B/A)\Big)\big(x,y(x)\big)=0.$$
We introduce the polynomial
$$G:=APQ(P/Q)^{-1/k}\Big(D_0((P/Q)^{1/k}) + A (P/Q)^{1/k}\partial_y(B/A)\Big)$$
we have $G\big(x,y(x)\big)=0$.\\
If $G=0$  then Proposition \ref{proprepresent} with $F=(P/Q)^{1/k}$ gives the existence of a $k$-Darbouxian first integral. Moreover, as $P, Q \in\KK(x_0,y_0)[x,y]$, the equation of type (D) giving the existence of a Darbouxian first integral has coefficients in $\mathbb{K}$.\\
If $G\neq 0$  then $G$ is a non-zero polynomial with degree smaller than $2N+2d-1$ such that $G \in \ker \tilde{\mathcal{E}}_{D_0}^{2N+2d-1}(x_0,y_0)$. By Theorem~\ref{thm:exta+rfi}, $D_0$ has a rational first integral with degree smaller than $2N+2d-1$ with coefficients in $\mathbb{K}$. This concludes the first part of the proof.\\

Now, we suppose that $D_0$ has a rational or a $k$-Darbouxian first integral with degree smaller than $N$.\\
If $D_0$ has a $k$-Darbouxian first integral $\mathcal{F}$ with degree smaller than $N$ then we have $\partial_y \mathcal{F}=(P/Q)^{1/k}$, with $P, Q \in \bar{\KK}[x,y]$, and $\deg P, \deg Q \leq N$. Proposition~\ref{propinv} implies that $y_1^kP/Q$ is a rational first integral of $(S'_1)$. Thus, when we consider a solution of $(S'_1)$ with initial condition $y(x_0)=y_0$, $y_1(x_0)=1$ we get
$$y_1(x)^k\dfrac{P\big(x,y(x)\big)}{Q\big(x,y(x)\big)}=c,$$
where $c \in \bar{\mathbb{K}}(x_0,y_0)$. Thus $y_1^kP(x,y)-cQ(x,y)$ belongs to $\bar{\KK} \otimes_{\KK} \ker \tilde{\mathcal{E}}^{N,k}_{D_1}(x_0,y_0)$. Now as $A,B \in \KK[x,y]$, the coefficients of $\tilde{\mathcal{E}}^{N,k}_{D_1}(x_0,y_0)$ are in $\mathbb{K}(x_0,y_0)$, and so we deduce the existence of $\tilde{P},\tilde{Q} \in \KK(x_0,y_0)[x,y]$ with degree $N$ such that $y_1^k\tilde{P}-\tilde{Q}$ belong to $\ker \tilde{\mathcal{E}}^{N,k}_{D_1}(x_0,y_0)$. Thus $\tilde{E}^{N,k}_{D_1}(x_0,y_0)=0$ and we get the desired conclusion.\\
If $D_0$ has a rational first integral $P/Q$ with degree smaller than $N$, then we can suppose $P/Q \in \KK(x,y)$ thanks to Theorem~\ref{thm:exta+rfi}. Furthermore,
$$P(x,y)Q(x_0,y_0)-Q(x,y)P(x_0,y_0) \in  \ker \tilde{\mathcal{E}}^{N,k}_{D_1}(x_0,y_0).$$
Thus $\tilde{E}^{N,k}_{D_1}(x_0,y_0)=0$.
\end{proof}

As a corollary, we obtain the following result about the coefficient field extensions of the possible first integrals.

\begin{cor}\label{cor:darbouxdansK}
Suppose that $D_0$ has a $k$-Darbouxian first integral and no rational first integral. We have:\\ 
There exists $F \in \KK(x,y)$ with $\deg F \leq N$ such that equation (D) gives a $k$-Darbouxian first integral if and only if 
there exists $\tilde{F} \in \overline{\KK}(x,y)$ with $\deg \tilde{F}\leq N$ such that equation (D) with $\tilde{F}$ gives a $k$-Darbouxian first integral.
\end{cor}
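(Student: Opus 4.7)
The forward direction is immediate: any $F\in\KK(x,y)$ satisfying equation (D) is also an element $\tilde{F}\in\overline{\KK}(x,y)$ satisfying the same equation, with the same degree. The content is in the reverse direction, where we need to descend the coefficient field from $\overline{\KK}$ to $\KK$ without increasing the degree bound.

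The idea is to exploit the fact that the extactic matrix $\tilde{\mathcal{E}}^{N,k}_{D_1}(x_0,y_0)$ has entries in $\KK(x_0,y_0)$ because $A,B\in\KK[x,y]$. Starting from the hypothesis that $\tilde{F}\in\overline{\KK}(x,y)$ of degree $\leq N$ gives a $k$-Darbouxian first integral, Theorem \ref{thm:exta+dfi}(2) yields $\tilde{E}^{N,k}_{D_1}(x_0,y_0)=0$. Computing the kernel of $\tilde{\mathcal{E}}^{N,k}_{D_1}(x_0,y_0)$ entirely inside $\KK(x_0,y_0)$ (for instance via reduced row echelon form) then produces a non-trivial element $y_1^k P(x,y)+Q(x,y)\in\ker\tilde{\mathcal{E}}^{N,k}_{D_1}(x_0,y_0)$ with $P,Q\in\KK(x_0,y_0)[x,y]$ of degree $\leq N$ in $(x,y)$.

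Next I would replay the case analysis from the proof of Theorem \ref{thm:exta+dfi}(1) on this particular $P,Q$. The three cases $P=0$, $Q=0$, or $Q\in\ker\tilde{\mathcal{E}}^N_{D_0}(x_0,y_0)$ each produce, via Theorem \ref{thm:exta+rfi}, a rational first integral of $D_0$ with coefficients in $\KK$; this is excluded by the hypothesis that $D_0$ has no rational first integral. In the remaining case, form the polynomial
\[
G := APQ\left(\tfrac{P}{Q}\right)^{-1/k}\!\Big(D_0\big((P/Q)^{1/k}\big)+A(P/Q)^{1/k}\partial_y(B/A)\Big),
\]
which satisfies $G(x,y(x))=0$ by Lemma \ref{lem:ker_exta_darboux}. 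If $G\neq 0$, then $G\in\ker\tilde{\mathcal{E}}_{D_0}^{2N+2d-1}(x_0,y_0)$, giving a rational first integral of degree $\leq 2N+2d-1$, again contradicting the hypothesis. Hence $G=0$, and Proposition \ref{proprepresent} shows that $F=(P/Q)^{1/k}$ satisfies the relation characterizing a $k$-Darbouxian first integral via equation (D).

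It remains to observe that $F^k=P/Q$ currently lies in $\KK(x_0,y_0)(x,y)$, whereas we want coefficients in $\KK$. Since (D) is a relation purely in $(x,y)$ and $G=0$ holds identically as a polynomial in the independent indeterminates $x_0,y_0,x,y$, specializing $(x_0,y_0)$ to any point of $\KK^2$ outside a suitable proper Zariski-closed set (where the leading coefficients of $P$ and $Q$ in $(x,y)$ remain non-zero and the denominators of the rational coefficients are defined) yields polynomials $P,Q\in\KK[x,y]$ of degree still $\leq N$, with $PQ\neq 0$, producing the required $F\in\KK(x,y)$ with $\deg F\leq N$ satisfying (D). The ``main obstacle'' is really just this bookkeeping verification that a generic specialization preserves both the degree bound and the non-triviality $PQ\neq 0$; since each required non-vanishing is an open condition on $(x_0,y_0)$, a generic choice works.
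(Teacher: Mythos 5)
Your proof is correct and follows the same route as the paper: both reduce to the Darbouxian extactic curve theorem, using Theorem \ref{thm:exta+dfi}(2) to get $\tilde{E}^{N,k}_{D_1}(x_0,y_0)=0$ and then the no-rational-first-integral hypothesis to force the $k$-Darbouxian, $\KK$-coefficient alternative of Theorem \ref{thm:exta+dfi}(1). The only difference is presentational: the paper simply cites the ``Moreover'' clause of part (1), whereas you re-derive its internal case analysis and make explicit the generic specialization of $(x_0,y_0)$ to a $\KK$-point, a descent step the paper leaves implicit.
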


\begin{proof}
Suppose that there exists $\tilde{F} \in \overline{\KK}(x,y)$ with $\deg \tilde{F}=N$ such that equation (D) gives a $k$-Darbouxian first integral. Then by the second part of Theorem \ref{thm:exta+dfi}, we have $\tilde{E}^{N,k}_{D_1}(x_0,y_0)=0$. Applying now the first part, we obtain either a rational first integral (forbidden by our assumption) or a $k$-Darbouxian first integral given by an equation (D) with a rational function with degree smaller than $N$ and with coefficients in $\mathbb{K}$.
\end{proof}

%%%%%%%%%%%%%%%%%%%%%%%%%%%%%%%%%%%%%%%%%%%%%%%%%%%%%%
\subsection{Liouvillian Extactic curve}%$\,$\\
In this subsection we are going to apply the result of Section \ref{sec:exta} to the derivation $D_2$. Then in the following $\big(y(x),y_1(x), y_2(x)\big)$ is a solution of $(S'_2)$ satisfying the initial condition $y(x_0)=y_0$, $y_1(x_0)=y_{1,0}$,  $y_2(x_0)=y_{2,0}$ where $x_0$, $y_0$, $y_{1,0}$ and $y_{2,0}$ are variables.\\

We are going to generalize Theorem \ref{thm:exta+rfi} to the Liouvillian case. We need thus to define the extactic curve in the Liouvillian case.

\begin{defi}\label{def_exta_Liouv}
We set 
$$V_2:= \KK[x,y]_{\leq N} y_1^2 \oplus \KK[x,y]_{\leq N} y_2\oplus  \KK[x,y]_{\leq N}y_1,\quad l_2=\hbox{dim}(V_2).$$
We denote by $\tilde{\mathcal{E}}^{N}_{D_2}(x_0,y_0)$ the linear map $\mathcal{E}^{V_2}_{D_2}$ after the specialization $y_{1,0}=1$, $y_{2,0}=0$.\\
The N-th Liouvillian extactic curve is 
$$\tilde{E}^N_{D_2}(x_0,y_0)=\det \big( \tilde{\mathcal{E}}^{N}_{D_2}(x_0,y_0)\big).$$
\end{defi}

As in the Darbouxian case, we begin with two lemmas in order to show that the specialization $y_{1,0}=1$, $y_{2,0}=0$ does not lose information.

\begin{lem}\label{lem:eval_exta_liouv}
We have the following equivalence:
$$P(x,y)y_1^2+Q(x,y)y_2+R(x,y)y_1 \in \ker \tilde{\mathcal{E}}^{N}_{D_2}(x_0,y_0)$$
 $$\Updownarrow$$ 
$$P(x,y)y_1^2+Q(x,y)y_2+\Big(y_{1,0}R(x,y)-\dfrac{y_{2,0}}{y_{1,0}}Q(x,y)\Big)y_1 \in \ker \mathcal{E}^{V_2}_{D_2}.$$
\end{lem}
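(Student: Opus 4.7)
The strategy mirrors that of Lemma~\ref{lem:eval_exta_darboux}. The key observation is that the system $(S'_2)$ admits a two-parameter family of symmetries coming from reparameterising the initial condition $y_0$ in the flow $y(x_0,y_0;x)$. Concretely, if we replace $y_0$ by $y_0 + y_{1,0}\,s + \tfrac{1}{2}y_{2,0}\,s^2 + \cdots$ and expand in $s$, we read off from the chain rule that the triple
\begin{equation*}
\big(\psi(x),\ \psi_1(x),\ \psi_2(x)\big)\mapsto \big(\psi(x),\ y_{1,0}\psi_1(x),\ y_{1,0}^2\psi_2(x)+y_{2,0}\psi_1(x)\big)
\end{equation*}
should send a solution of $(S'_2)$ to another solution of $(S'_2)$.

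My first step is to make this rigorous. Let $(\psi,\psi_1,\psi_2)$ be the solution of $(S'_2)$ with $\psi(x_0)=y_0$, $\psi_1(x_0)=1$, $\psi_2(x_0)=0$, and set
$\psi_T := \psi$, $\psi_{T,1} := y_{1,0}\psi_1$, $\psi_{T,2} := y_{1,0}^2\psi_2+y_{2,0}\psi_1$. I verify directly that $(\psi_T,\psi_{T,1},\psi_{T,2})$ satisfies $(S'_2)$ with initial data $(y_0,y_{1,0},y_{2,0})$. The first two equations are already handled in the proof of Lemma~\ref{lem:eval_exta_darboux}; for the $y_2$-equation, one computes
\begin{equation*}
\partial_x\psi_{T,2} = y_{1,0}^2\partial_x\psi_2+y_{2,0}\partial_x\psi_1
\end{equation*}
and substitutes the $(S'_2)$ equations for $\partial_x\psi_1,\partial_x\psi_2$; the terms then regroup exactly into $\psi_{T,2}\,\partial_y(B/A)(x,\psi_T) + \psi_{T,1}^2\,\partial_y^2(B/A)(x,\psi_T)$, as required. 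By uniqueness of the solution of $(S'_2)$ with the prescribed initial data, $(\psi_T,\psi_{T,1},\psi_{T,2})$ coincides with $(y(x),y_1(x),y_2(x))$.

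Next I translate the kernel conditions into statements about vanishing mod $(x-x_0)^{l_2}$. By Theorem~\ref{prec_gen} (or equivalently the Taylor expansion defining the extactic map), the condition $Py_1^2+Qy_2+Ry_1\in\ker\tilde{\mathcal{E}}^{N}_{D_2}(x_0,y_0)$ is equivalent to
\begin{equation*}
P(x,\psi)\,\psi_1^2+Q(x,\psi)\,\psi_2+R(x,\psi)\,\psi_1\equiv 0\pmod{(x-x_0)^{l_2}},
\end{equation*}
and similarly $\ker\mathcal{E}^{V_2}_{D_2}$ corresponds to the same vanishing but along $(\psi_T,\psi_{T,1},\psi_{T,2})$. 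Using $\psi_T=\psi$, $\psi_{T,1}=y_{1,0}\psi_1$ and $\psi_{T,2}=y_{1,0}^2\psi_2+y_{2,0}\psi_1$, a direct substitution shows
\begin{equation*}
P\psi_{T,1}^2+Q\psi_{T,2}+\Bigl(y_{1,0}R-\tfrac{y_{2,0}}{y_{1,0}}Q\Bigr)\psi_{T,1}
=y_{1,0}^2\bigl(P\psi_1^2+Q\psi_2+R\psi_1\bigr),
\end{equation*}
where the arguments of $P,Q,R$ are $(x,\psi)=(x,\psi_T)$ on both sides. Since $y_{1,0}$ is a nonzero element of the base field $\KK(x_0,y_0,y_{1,0},y_{2,0})$, the left-hand side vanishes mod $(x-x_0)^{l_2}$ if and only if the right-hand side does, which yields the claimed equivalence.

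The only non-routine step is the verification that $T$ sends solutions of $(S'_2)$ to solutions, since the $y_2$-equation is genuinely nonlinear in $y_1$; but this is a finite symbolic check handled directly by the chain rule, and the rest is linear algebra over $\KK(x_0,y_0,y_{1,0},y_{2,0})$.
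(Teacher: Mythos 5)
Your proof is correct and follows essentially the same route as the paper's: the same transformation $T(y,y_1,y_2)=(y,\,y_{1,0}y_1,\,y_{1,0}^2y_2+y_{2,0}y_1)$, the same verification that it maps the normalized solution of $(S'_2)$ to the one with generic initial data, and the same multiplication by $y_{1,0}^2$ to pass between the two kernel conditions. The only cosmetic remark is that the equivalence between kernel membership and vanishing modulo $(x-x_0)^{l_2}$ is the remark following the definition of $\mathcal{E}_D^V$ rather than Theorem~\ref{prec_gen}, as your parenthetical already acknowledges.
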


\begin{proof}
We are going to use the same strategy as the one used to prove Lemma~\ref{lem:eval_exta_darboux}. Here the transformation $T$ is $$T(y,y_1,y_2)=(y,\,\, y_{1,0}y_1,\,\, y_{1,0}^2y_2+y_{2,0}y_1).$$

We denote by $\big(\psi(x),\psi_1(x),\psi_2(x)\big)$ the solution of $(S'_2)$ such that \mbox{$\psi(x_0)=y_0$}, $\psi_1(x_0)=1$, $\psi_2(x_0)=0$.\\
We set 
\begin{eqnarray*}
\big(\psi_T(x),\psi_{T,1}(x),\psi_{T,2}(x)\big)&:=&T(\psi(x),\psi_1(x),\psi_2(x)\big)\\
&=&\big( \psi(x),\, \, y_{1,0}\psi_1(x),\, \, y_{1,0}^2\psi_2(x)+y_{2,0}\psi_1(x)\big).
\end{eqnarray*}
A direct computation shows that $\big(\psi_T(x),\psi_{T,1}(x),\psi_{T,2}(x)\big)$ is a solution of $(S'_2)$ with initial conditions $\psi_T(x_0)=y_0$, $\psi_{T,1}(x_0)=y_{1,0}$, $\psi_{T,2}(x_0)=y_{2,0}$. Thus we have $\big( y(x),y_1(x),y_2(x)\big)=\big(\psi_T(x),\psi_{T,1}(x),\psi_{T,2}(x)\big)$.\\
%Now, we are going to show that $\big(\psi_T(x),\psi_{1}_T(x),\bar{\psi_{1}}_T(x)\big)$ is a solution of $(S'_2)$ with initial conditions $\psi_T(x_0)=y_0$, $\psi_{1}_T(x_0)=y_{1,0}$, $\bar{\psi_{1}}_T(x_0)=y_{2,0}$. Indeed we have:
%$$\partial_x \psi_{1}_T(x)= y_{1,0} \partial_x \psi_{1}(x)= y_{1,0}\psi_{1}(x) \partial_y\Big( \dfrac{B}{A}\Big) \big( x,\psi(x)\big)=\psi_{1}_T(x)\partial_y\Big( \dfrac{B}{A}\Big) \big( x,\psi_T(x)\big).$$
%Furthermore,
%\begin{eqnarray*}
%\partial_x \bar{\psi_{1}}_T(x)&=& y_{1,0}^2 \partial_x \bar{\psi_{1}}(x)+y_{2,0}\partial_x \psi_{1}(x)\\
%&=& y_{1,0}^2 \bar{\psi_{1}}(x) \partial_y\Big( \dfrac{B}{A}\Big) \big( x,\psi(x)\big)+y_{1,0}^2\big( \psi_{1}(x)\big)^2\partial_y^2\Big( \dfrac{B}{A} \Big)\big( x,\psi(x)\big)+ y_{2,0}\psi_{1}(x)\partial_y\Big( \dfrac{B}{A}\Big) \big( x,\psi(x)\big)\\
%&=&\big(y_{1,0}^2 \bar{\psi_{1}}(x)+y_{2,0}\psi_{1}(x)\big)\partial_y\Big( \dfrac{B}{A}\Big) \big( x,\psi(x)\big)+\big(y_0\psi_{1}(x)\big)^2\partial_y^2\Big( \dfrac{B}{A} \Big)\big( x,\psi(x)\big)\\
%&=& \bar{\psi_{1}}_T(x)\partial_y\Big( \dfrac{B}{A}\Big) \big( x,\psi_T(x)\big)+\big(\psi_{1}_T(x)\big)^2\partial_y^2\Big( \dfrac{B}{A} \Big)\big( x,\psi_T(x)\big)
%\end{eqnarray*}

Now suppose that $P(x,y)y_1^2+Q(x,y)y_2+R(x,y)y_1 \in \ker \tilde{\mathcal{E}}^N_{D_2}(x_0,y_0)$  then
$$P(x,\psi(x))\psi_{1}^2(x)+Q(x,\psi(x))\psi_{2}(x)+R(x,\psi(x))\psi_{1}(x)= 0 \mod (x-x_0)^{l_2}.$$
Thus, the previous equality multiplied by $y_{1,0}^2$ gives 
\begin{eqnarray*}
y_{1,0}^2P(x,\psi(x))\psi_1^2(x)+y_{1,0}^2Q(x,\psi(x))\psi_2(x)+ y_{2,0} Q(x,\psi(x))\psi_1(x)&&\\ -    y_{2,0} Q(x,\psi(x))\psi_1(x) +y_{1,0}^2R(x,\psi(x))\psi_1(x) \quad = \quad 0 \mod (x-x_0)^{l_2}.&&
\end{eqnarray*}
This gives, thanks to the definition of $\psi_T$, $\psi_{T,1}$ and $\psi_{T,2}$,
\begin{eqnarray*}
P(x,\psi_T(x))\psi_{T,1}^2(x)+Q(x,\psi_T(x))\psi_{T,2}(x)\\
+\Big( -    \dfrac{y_{2,0}}{y_{1,0}} Q(x,\psi_T(x)) +y_{1,0}R(x,\psi_T(x)\Big)\psi_{T,1}(x) &=& 0 \mod (x-x_0)^{l_2}.
\end{eqnarray*}
Therefore, as $\big( y(x),y_1(x),y_2(x)\big)=\big(\psi_T(x),\psi_{T,1}(x),\psi_{T,2}(x)\big)$, we get
$$P(x,y)y_1^2+Q(x,y)y_2+\Big(y_{1,0}R(x,y)-\dfrac{y_{2,0}}{y_{1,0}}Q(x,y)\Big)y_1 \in \ker \mathcal{E}^{V_2}_{D_2}.$$
The converse is straightforward.
\end{proof}

\begin{lem}\label{lem:ker_exta_liouv}
Consider a non trivial element $P(x,y)y_1^2+Q(x,y)y_2+R(x,y)y_1$  in $\ker \tilde{\mathcal{E}}^{N}_{D_2}(x_0,y_0)$, then:
\begin{itemize}
\item If $Q=0$ then $Py_1+R \in \ker \tilde{\mathcal{E}}^{N}_{D_1}(x_0,y_0)$.
\item If $Q\neq 0$ and $Q \not \in \ker \tilde{\mathcal{E}}^N_{D_0}(x_0,y_0)$ then:
\begin{eqnarray*}
0&=&y_1(x) \Big( D_0(P/Q)+A(P/Q)\partial_y(B/A)+A\partial_y^2(B/A)  \Big) \big(x,y(x)\big)\\
&& + y_{1,0} D_0(R/Q)\big(x,y(x)\big)
\end{eqnarray*}
\end{itemize}
\end{lem}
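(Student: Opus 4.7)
The plan is to first lift the ``mod $(x-x_0)^{l_2}$'' vanishing implicit in the kernel membership to an exact power series identity along the flow, and then differentiate once and substitute.

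First, by Lemma~\ref{lem:eval_exta_liouv} the element $Py_1^2+Qy_2+Ry_1\in\ker\tilde{\mathcal{E}}^N_{D_2}(x_0,y_0)$ corresponds to an element of $\ker\mathcal{E}^{V_2}_{D_2}$, to which Theorem~\ref{prec_gen} applies and upgrades vanishing modulo $(x-x_0)^{l_2}$ to exact vanishing. Specializing back to $y_{1,0}=1$, $y_{2,0}=0$ yields the identity
\[
P(x,y(x))y_1(x)^2+Q(x,y(x))y_2(x)+R(x,y(x))y_1(x)=0
\]
in $\KK(x_0,y_0)[[x-x_0]]$, where $(y(x),y_1(x),y_2(x))$ is the flow of $(S'_2)$ with $y(x_0)=y_0$, $y_1(x_0)=1$, $y_2(x_0)=0$.

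For the case $Q=0$, the identity becomes $y_1(x)\bigl(P(x,y(x))y_1(x)+R(x,y(x))\bigr)=0$. Since $y_1(x_0)=1$, $y_1(x)$ is a unit in $\KK(x_0,y_0)[[x-x_0]]$, so $P(x,y(x))y_1(x)+R(x,y(x))=0$, which is precisely the condition $Py_1+R\in\ker\tilde{\mathcal{E}}^N_{D_1}(x_0,y_0)$ (with $k=1$).

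For the case $Q\ne 0$ with $Q\notin\ker\tilde{\mathcal{E}}^N_{D_0}(x_0,y_0)$, the power series $Q(x,y(x))$ is nonzero, so I divide through by it to obtain $\tilde P\, y_1^2+y_2+\tilde R\, y_1=0$, writing $\tilde P=P/Q$ and $\tilde R=R/Q$. Now I differentiate in $x$: for any rational $f(x,y)$ one has $\partial_x f(x,y(x))=\bigl(D_0(f)/A\bigr)(x,y(x))$, and $\partial_x y_1=y_1\partial_y(B/A)$, $\partial_x y_2=y_2\partial_y(B/A)+y_1^2\partial_y^2(B/A)$ from $(S'_2)$. This produces the six-term equation
\[
\tfrac{D_0(\tilde P)}{A}y_1^2+2\tilde P y_1^2\partial_y\!\bigl(\tfrac{B}{A}\bigr)+y_2\partial_y\!\bigl(\tfrac{B}{A}\bigr)+y_1^2\partial_y^2\!\bigl(\tfrac{B}{A}\bigr)+\tfrac{D_0(\tilde R)}{A}y_1+\tilde R y_1\partial_y\!\bigl(\tfrac{B}{A}\bigr)=0.
\]
Substituting $y_2=-\tilde Py_1^2-\tilde Ry_1$ from the original identity makes the $\tilde R\partial_y(B/A)y_1$ terms cancel and collapses the $\tilde P\partial_y(B/A)y_1^2$ terms to a single one. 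Multiplying by $A$ and factoring out $y_1(x)$ (again a unit) gives
\[
y_1\bigl(D_0(\tilde P)+A\tilde P\partial_y(B/A)+A\partial_y^2(B/A)\bigr)+D_0(\tilde R)=0,
\]
which is the claimed formula (with $y_{1,0}=1$). The main obstacle is simply bookkeeping the chain-rule expansion cleanly enough that the cancellations between the $\tilde R$-terms and between the $\tilde P$-terms are visible; no deep idea is required beyond this.
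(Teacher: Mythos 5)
Your proof is correct and follows essentially the same route as the paper: lift the truncated vanishing to the generic kernel via Lemma~\ref{lem:eval_exta_liouv}, upgrade it to exact vanishing along the flow with Theorem~\ref{prec_gen}, then differentiate in $x$ using $(S'_2)$ and factor out the unit $y_1(x)$. The only cosmetic differences are that you work with the normalized flow ($y_{1,0}=1$, $y_{2,0}=0$) and substitute $y_2=-\tilde P y_1^2-\tilde R y_1$ to produce the cancellation, whereas the paper keeps $y_{1,0},y_{2,0}$ generic and differentiates the combination $Fy_1+y_2/y_1$ (whose $x$-derivative is simply $y_1\partial_y^2(B/A)$); the stated identity with its explicit $y_{1,0}$ factor is recovered from your normalized one by the scaling $y_1(x)=y_{1,0}\psi_1(x)$ already established in the proof of Lemma~\ref{lem:eval_exta_liouv}.
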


\begin{proof}
We have $P(x,y)y_1^2+Q(x,y)y_2+R(x,y)y_1 \in \ker \tilde{\mathcal{E}}^{N}_{D_2}(x_0,y_0)$, then Lemma~\ref{lem:eval_exta_liouv} and Theorem~\ref{prec_gen} gives, as in the proof of Lemma~\ref{lem:ker_exta_darboux},
$$P(x,y(x))y_1(x)^2+Q(x,y(x))y_2(x)+\Big(y_{1,0}R(x,y(x))-\dfrac{y_{2,0}}{y_{1,0}}Q(x,y(x))\Big)y_1(x) =0 .$$

%As $P(x,y)y_1^2+Q(x,y)y_2+R(x,y)y_1 \in \ker \tilde{\mathcal{E}}^{N}_{D_2}(x_0,y_0)$, we get using Lemma~\ref{lem:eval_exta_liouv}
%$$P(x,y)y_1^2+Q(x,y)y_2+\Big(y_{1,0}R(x,y)-\dfrac{y_{2,0}}{y_{1,0}}Q(x,y)\Big)y_1 \in \ker \mathcal{E}^{V_2}_{D_2}(x_0,y_0,y_{1,0},y_{2,0})$$
% and thus
%$$P(x,y(x))y_1(x)^2+Q(x,y(x))y_2(x)+\Big(y_{1,0}R(x,y(x))-\dfrac{y_{2,0}}{y_{1,0}}Q(x,y(x))\Big)y_1(x) =0 \mod (x-x_0)^{l_2} $$
%Then Theorem \ref{prec_gen} applied with the four variables $x,y,y_1,y_2$ gives
%$$P(x,y(x))y_1(x)^2+Q(x,y(x))y_2(x)+\Big(y_{1,0}R(x,y(x))-\dfrac{y_{2,0}}{y_{1,0}}Q(x,y(x))\Big)y_1(x) =0 .$$
If $Q=0$, then we have 
$$P(x,y(x))y_1(x)+y_{1,0} R(x,y(x)) =0.$$
Thus $Py_1+y_{1,0}R \in \ker \mathcal{E}_{D_1}^{V_1}$ and by Lemma~\ref{lem:eval_exta_darboux} we get that $Py_1+R$ belongs to $\ker \tilde{\mathcal{E}}^{N}_{D_1}(x_0,y_0)$.\\
 
If $Q\neq 0$  and $Q \not \in \ker \tilde{\mathcal{E}}_{D_0}^N(x_0,y_0)$ then $Q\big(x,y(x)\big)\neq 0$. We set $F=P/Q$ and $G=R/Q$ then we have:
$$F\big(x,y(x)\big)y_1(x)+\dfrac{y_2(x)}{y_1(x)}-\dfrac{y_{2,0}}{y_{1,0}}+y_{1,0} G\big(x,y(x)\big)=0.$$
The derivation relatively to $x$ of the previous expression and the relation given by the differential system $(S'_2)$ gives:

\begin{eqnarray*}
0&=&\partial_x F\big(x,y(x)\big)y_1(x)+ \partial_y F \big(x,y(x)\big)\dfrac{B}{A}\big(x,y(x)\big)y_1(x)\\
&& + F\big(x,y(x)\big)y_1(x) \partial_y\Big(\dfrac{B}{A}\Big)\big(x,y(x)\big) 
+y_1(x)\partial_y^2\Big(\dfrac{B}{A}\Big)\big(x,y(x)\big)\\
&&+y_{1,0}\partial_xG\big(x,y(x)\big)+y_{1,0}\partial_yG\big(x,y(x)\big)\dfrac{B}{A}\big(x,y(x)\big).\\
\end{eqnarray*}

%\begin{eqnarray*}
%0&=& \partial_x F\big(x,y(x)\big)y_1(x)+ \partial_y F \big(x,y(x)\big)\dfrac{B}{A}\big(x,y(x)\big)y_1(x) + F\big(x,y(x)\big)y_1(x) \partial_y\Big(\dfrac{B}{A}\Big)\big(x,y(x)\big)\\
%&&+\dfrac{1}{y_1^2(x)}\left(
%y_2(x)y_1(x)\partial_y\Big(\dfrac{B}{A}\Big)\big(x,y(x)\big)+(y_1(x))^3\partial_y^2\Big(\dfrac{B}{A}\Big)\big(x,y(x)\big)
%-y_2(x)y_1(x) \partial_y\Big(\dfrac{B}{A}\Big)\big(x,y(x)\big) \right)\\
%&&+y_{1,0}\partial_xG\big(x,y(x)\big)+y_{1,0}\partial_yG\big(x,y(x)\big)\dfrac{B}{A}\big(x,y(x)\big)\\
%&=&\partial_x F\big(x,y(x)\big)y_1(x)+ \partial_y F \big(x,y(x)\big)\dfrac{B}{A}\big(x,y(x)\big)y_1(x) + F\big(x,y(x)\big)y_1(x) \partial_y\Big(\dfrac{B}{A}\Big)\big(x,y(x)\big)\\
%&&+y_1(x)\partial_y^2\Big(\dfrac{B}{A}\Big)\big(x,y(x)\big)+y_{1,0}\partial_xG\big(x,y(x)\big)+y_{1,0}\partial_yG\big(x,y(x)\big)\dfrac{B}{A}\big(x,y(x)\big)\\
%\end{eqnarray*}
As $y_1(x)$ is a common factor of the first terms and $y_{1,0}$ is a common factor of the last terms we get:
\begin{eqnarray*}
0&=&y_1(x) \left(
\partial_x F\big(x,y(x)\big)+ \partial_y F \big(x,y(x)\big)\dfrac{B}{A}\big(x,y(x)\big) + F\big(x,y(x)\big) \partial_y\Big(\dfrac{B}{A}\Big)\big(x,y(x)\big)   \right.\\
&&\left.+\partial_y^2\Big(\dfrac{B}{A}\Big)\big(x,y(x)\big)
\right)+y_{1,0}\Big( \partial_xG\big(x,y(x)\big)+\partial_yG\big(x,y(x)\big)\dfrac{B}{A}\big(x,y(x)\big) \Big).\\
\end{eqnarray*}
Thus
$$0=y_1(x) \Big( D_0(F)+(AF)\partial_y(B/A)+A\partial_y^2(B/A)\Big)\big(x,y(x)\big) +y_{1,0} D_0(G)\big(x,y(x)\big).$$
%\begin{eqnarray*}
%0&=&y_1(x) \Big( D_0(F)+(AF)\partial_y(B/A)+A\partial_y^2(B/A)\Big)\big(x,y(x)\big)\\
%&& +y_{1,0} D_0(G)\big(x,y(x)\big).
%\end{eqnarray*}
This gives the desired conclusion.
\end{proof}

Now we can state the generalization of Theorem \ref{thm:exta+rfi} for the Liouvillian case.\\
\vspace{0.4cm}

\begin{thm}[Liouvillian extactic curve Theorem]\label{thm:exta+lfi}$\,$\\
\begin{enumerate}
\item If $\tilde{E}^N_{D_2}(x_0,y_0)=0$ then the derivation $D_0$ has a Liouvillian first integral with degree smaller than $N$ or a Darbouxian first integral with degree smaller than $2N+3d-1$ or a rational first integral with degree smaller than \mbox{$4N+8d-3$}. Moreover the defining equation of the first integral, equation (Rat), (D) or (L), has coefficients in $\KK$.\\
\item If $D_0$ has a rational or a Darbouxian or a Liouvillian first integral with degree smaller than $N$ then $\tilde{E}^N_{D_2}(x_0,y_0)=0$.
\end{enumerate}
\end{thm}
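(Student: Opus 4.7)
My plan is to mirror the proof of the Darbouxian extactic curve theorem (Theorem \ref{thm:exta+dfi}): for the easy direction exhibit explicit kernel elements that come from the assumed first integral, and for the hard direction analyze cases of a non-trivial kernel element, reducing each case to a lower class via Proposition \ref{proprepresent} and the previously proved extactic curve theorems.

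For the easy direction (part 2), given a first integral of $D_0$ of degree at most $N$, I would use Proposition \ref{propinv} to produce an explicit non-trivial element of $\ker \tilde{\mathcal{E}}^N_{D_2}(x_0,y_0)$, forcing the determinant to vanish. Concretely, if $F=P/Q\in\overline{\KK}(x,y)$ with $\deg P,\deg Q\leq N$, then in the rational case $(P-F(x_0,y_0)Q)y_1^2\in V_2$ vanishes on $(x,y(x),y_1(x),y_2(x))$; in the Darbouxian case Proposition \ref{propinv} gives $y_1F$ as a first integral of $(S'_1)$, which after multiplying by $y_1Q$ yields $Py_1^2-F(x_0,y_0)Qy_1\in\ker\tilde{\mathcal{E}}^N_{D_2}$; in the Liouvillian case Proposition \ref{propinv} gives $y_1F+y_2/y_1$ as a first integral of $(S'_2)$, and multiplying by $y_1Q$ yields $Py_1^2+Qy_2-F(x_0,y_0)Qy_1\in\ker\tilde{\mathcal{E}}^N_{D_2}$.

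For the hard direction (part 1), take a non-trivial $Py_1^2+Qy_2+Ry_1\in\ker\tilde{\mathcal{E}}^N_{D_2}(x_0,y_0)$ and split on $Q$. If $Q=0$, Lemma \ref{lem:ker_exta_liouv} gives $Py_1+R\in\ker\tilde{\mathcal{E}}^{N,1}_{D_1}(x_0,y_0)$, and Theorem \ref{thm:exta+dfi} furnishes a Darbouxian first integral of degree $\leq N$ or a rational first integral of degree $\leq 2N+2d-1$. If $Q\neq 0$ but $Q\in\ker\tilde{\mathcal{E}}^N_{D_0}(x_0,y_0)$, then $\tilde{E}^N_{D_0}(x_0,y_0)=0$ and Theorem \ref{thm:exta+rfi} provides a rational first integral of degree $\leq N$. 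The remaining case, $Q\neq 0$ and $Q\notin\ker\tilde{\mathcal{E}}^N_{D_0}(x_0,y_0)$, is the heart of the argument: setting $\Phi=D_0(P/Q)+A(P/Q)\partial_y(B/A)+A\partial_y^2(B/A)$ and $\Psi=D_0(R/Q)$, Lemma \ref{lem:ker_exta_liouv} gives the identity $y_1(x)\Phi(x,y(x))+\Psi(x,y(x))=0$. Multiplying by $A^2Q^2$ clears denominators: using that $A^2\partial_y(B/A)$ is a polynomial of degree $\leq 2d-1$ and $A^3\partial_y^2(B/A)$ a polynomial of degree $\leq 3d-2$, both $G_1:=A^2Q^2\Phi$ and $G_2:=A^2Q^2\Psi$ are polynomials of degree at most $2N+3d-1$. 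The identity becomes $G_1y_1+G_2\in\ker\tilde{\mathcal{E}}^{2N+3d-1,1}_{D_1}(x_0,y_0)$. If $G_1=G_2=0$, then since $A^2Q^2\neq 0$ we have $\Phi\equiv 0$, and Proposition \ref{proprepresent} certifies that $F=P/Q$ defines a Liouvillian first integral of degree at most $N$. Otherwise $G_1y_1+G_2$ is non-trivial, and Theorem \ref{thm:exta+dfi} yields a Darbouxian first integral of degree at most $2N+3d-1$ or a rational first integral of degree at most $2(2N+3d-1)+2d-1=4N+8d-3$.

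The coefficients-in-$\KK$ claim follows as in Corollary \ref{cor:darbouxdansK}: the entries of the matrix $\tilde{\mathcal{E}}^N_{D_2}(x_0,y_0)$ lie in $\KK(x_0,y_0)$, so the kernel admits a basis with coefficients in $\KK(x_0,y_0)$, and none of the reductions introduces algebraic extensions of $\KK$. The main obstacle I anticipate is the degree bookkeeping: one must identify the correct clearing factors of $A$ and $Q$ so that $G_1,G_2$ both land in $\KK[x,y]_{\leq 2N+3d-1}$ while still encoding the Liouvillian condition faithfully. Once these polynomial identities are set up, the reduction to the Darbouxian extactic theorem and then to the rational extactic theorem is essentially routine, and the bounds $2N+3d-1$ and $4N+8d-3$ drop out automatically.
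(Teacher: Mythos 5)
Your proposal is correct and follows essentially the same route as the paper: the same case split on $Q$ via Lemma \ref{lem:ker_exta_liouv}, the same clearing factor $A^2Q^2$ producing the polynomials of degree at most $2N+3d-1$, the same reduction to Theorems \ref{thm:exta+dfi} and \ref{thm:exta+rfi}, and Proposition \ref{proprepresent} for the Liouvillian certificate; the explicit kernel elements you exhibit for part 2 are exactly what the paper leaves as "straightforward." The only cosmetic difference is your sub-case split on $G_1=G_2=0$ rather than on $\Phi=0$, which changes nothing.
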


\begin{proof}%[Proof of Theorem \ref{thm:exta+lfi}]
First suppose that $\tilde{E}^N_{D_2}(x_0,y_0) =0$, and consider a non trivial element
$$P(x,y)y_1^2+Q(x,y)y_2+R(x,y)y_1 \in \ker \tilde{\mathcal{E}}^{N}_{D_2}(x_0,y_0).$$

If $Q=0$ then by Lemma \ref{lem:ker_exta_liouv},  $Py_1+R \in \ker \tilde{\mathcal{E}}^{N}_{D_1}(x_0,y_0)$ and Theorem~\ref{thm:exta+dfi} gives the existence of a Darbouxian first integral with degree smaller than $N$ and defining equation (D) with coefficients in $\KK$ or a first integral with degree smaller than $2N+2d-1$ with coefficients in $\mathbb{K}$.\\

If $Q\neq0$ and $Q \in \ker \tilde{\mathcal{E}}^N_{D_0}(x_0,y_0)$ then by Theorem \ref{thm:exta+rfi} there exists a rational first integral with degree smaller than $N$ with coefficients in $\mathbb{K}$.\\

If $Q\neq 0$ and $Q  \not \in \ker \tilde{\mathcal{E}}^N_{D_0}(x_0,y_0)$ then  we have two situations:\\
In the first situation we have: 
$$D_0(P/Q)+A(P/Q)\partial_y(B/A)+A\partial_y^2(B/A)=0.$$
In this case, Proposition \ref{proprepresent} gives the existence of a Liouvillian first integral with degree smaller than $N$, since $\deg(P),\deg(Q) \leq N$. Moreover, the equation of type (L) giving the existence of a Liouvillian first integral has coefficients in $\KK$ since $P, Q \in \KK(x_0,y_0)[x,y]$. \\

In the second situation we have 
$$D_0(P/Q)+A(P/Q)\partial_y(B/A)+A\partial_y^2(B/A)\neq 0$$
 and thanks to Lemma \ref{lem:ker_exta_liouv}
 \begin{eqnarray*}
&&y_1(x) \Big( D_0(P/Q)+A(P/Q)\partial_y(B/A)+A\partial_y^2(B/A)  \Big) \big(x,y(x)\big)\\
 &&+ y_{1,0} D_0(R/Q)\big(x,y(x)\big)=0.
\end{eqnarray*}
In this case, we set $P_1=A^2Q^2\big(D_0(P/Q)+A(P/Q)\partial_y(B/A)+A\partial_y^2(B/A)\big)$,\\
 \mbox{$Q_1=A^2Q^2D_0(R/Q)$}, and $P_1$, $Q_1$ are polynomials with degree smaller than $2N+3d-1$. The previous equality gives $y_1 P_1 +y_{1,0} Q_1 \in \ker \mathcal{E}_{D_1}^{2N+3d-1,1}$.  Thus thanks to Lemma~\ref{lem:eval_exta_darboux} we get $y_1P_1+Q_1 \in \ker \tilde{\mathcal{E}}^{2N+3d-1,1}_{D_1}(x_0,y_0)$. Theorem \ref{thm:exta+dfi} gives the existence of a Darbouxian first integral with degree smaller than $2N+3d-1$ or the existence of a rational first integral with degree smaller than $4N+8d-3$ with coefficients in $\mathbb{K}$.\\

The second part of the theorem is straightforward and is proved with the strategy used in Theorem~\ref{thm:exta+dfi}.
%Now, we study the second part of the theorem.\\
%If $D_0$ has a Liouvillian first integral with degree smaller than $N$ then we have $\partial_{y}^2 \mathcal{F}/\partial_y \mathcal{F}=P/Q$, with $P,Q \in \bar{\KK}[x,y]$ and $\deg(P), \deg(Q) \leq N$. Proposition~\ref{propinv} implies that $y_1P/Q + y_2/y_1$ is a first integral of $(S'_2)$. This gives 
%$$y_1(x)P/Q\big(x,y(x)\big) + y_2(x)/y_1(x)= c,$$ where $c  \in \bar{\KK}(x_0,y_0)$ putting $y_{1,0}=1,y_{2,0}=0$. Thus 
%$$-cQ\big(x,y(x)\big)y_1(x) +P\big(x,y(x)\big) y_1^2(x) + Q\big(x,y(x)\big)y_2(x)=0,$$
%and then
%$$-cQy_1 +Py_1^2 + Qy_2 \in \bar{\mathbb{K}} \otimes_{\mathbb{K}} \ker \tilde{\mathcal{E}}^N_{D_2}(x_0,y_0).$$
%As the coefficients of $ \tilde{\mathcal{E}}^N_{D_2}(x_0,y_0)$ are in $\mathbb{K}(x_0,y_0)$, this implies that there exists an element with coefficients in $\mathbb{K}(x_0,y_0)$ in $ \ker \tilde{\mathcal{E}}^N_{D_2}(x_0,y_0)$ and thus $\tilde{E}^N_{D_2}(x_0,y_0)=0$. The situation where $D_0$ has a rational or a Darbouxian first integral with degree smaller than $N$ can be done in the same way.
\end{proof}

As before, we deduce that the computation of $F \in \KK(x,y)$ is not restrictive.
\begin{cor}\label{cor:liouvdansK}
Suppose that $D_0$ has a Liouvillian first integral and no Darbouxian nor rational first integral. There exists $F \in \KK(x,y)$ with $\deg F\leq N$ such that equation (L) gives a Liouvillian first integral if and only if 
there exists $\tilde{F} \in \overline{\KK}(x,y)$ with $\deg \tilde{F} \leq N$ such that equation (L) with $\tilde{F}$ gives a Liouvillian first integral.
\end{cor}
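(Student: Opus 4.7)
The plan mirrors exactly the argument of Corollary \ref{cor:darbouxdansK}, which is already proved just above. The converse direction is immediate since $\KK \subset \overline{\KK}$, so I only need to prove the nontrivial implication: if there exists $\tilde{F} \in \overline{\KK}(x,y)$ with $\deg \tilde{F} \leq N$ yielding a Liouvillian first integral via equation (L), then in fact such an $F$ can be taken in $\KK(x,y)$ with the same degree bound.

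First I would apply part (2) of Theorem \ref{thm:exta+lfi} to the hypothetical $\tilde{F}$: since $\tilde{F}$ defines a Liouvillian first integral of degree at most $N$, we conclude that $\tilde{E}^N_{D_2}(x_0,y_0) = 0$. Note that this vanishing is a statement about a determinant of a matrix with entries in $\KK(x_0,y_0)$, so it is insensitive to whether the certifying first integral had coefficients in $\KK$ or in $\overline{\KK}$.

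Next I would invoke part (1) of Theorem \ref{thm:exta+lfi}. It delivers one of three alternatives, all with defining equations having coefficients in $\KK$: either a rational first integral of degree smaller than $4N+8d-3$, or a Darbouxian first integral of degree smaller than $2N+3d-1$, or a Liouvillian first integral of degree smaller than $N$ given by an equation of type (L). The standing hypothesis that $D_0$ admits no rational nor Darbouxian first integral rules out the first two alternatives. Hence we are forced into the third, which produces exactly the desired $F \in \KK(x,y)$ with $\deg F \leq N$ defining a Liouvillian first integral via equation (L).

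There is essentially no obstacle here: the entire content is already packaged into Theorem \ref{thm:exta+lfi}, whose first part is constructive about the coefficient field. The only thing to verify is that the exclusion hypothesis really lets us discard the Darbouxian and rational alternatives, which is tautological since both possibilities would contradict the assumption.
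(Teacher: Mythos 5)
Your proof is correct and follows exactly the route the paper intends: the paper gives no explicit proof of this corollary but states it is deduced ``as before,'' i.e.\ by the same argument as Corollary \ref{cor:darbouxdansK}, which is precisely what you do (part (2) of Theorem \ref{thm:exta+lfi} to get $\tilde{E}^N_{D_2}(x_0,y_0)=0$, then part (1) and the exclusion hypothesis to land on the Liouvillian alternative with coefficients in $\KK$).
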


%\begin{proof}
%We just apply to a Liouvillian first integral with a defining equation~(L)  with coefficients in $\bar{\mathbb{K}}$ the second part of Theorem \ref{thm:exta+lfi}. This proves that  $\tilde{E}^N_{D_2}(x_0,y_0)$ equals $0$. Then we apply the first part proving there exists a Liouvillian first integral with degree smaller than $N$ or a Darbouxian first integral with degree smaller than $2N+3d-1$ or a rational first integral with degree smaller than $4N+8d-3$ with coefficients in $\mathbb{K}$. As the last two are forbidden by assumption, $D_0$ admits a Liouvillian first integral 
%given by an equation (L) with a rational function with degree smaller than $N$ and with coefficients in $\mathbb{K}$.
%\end{proof}

%%%%%%%%%%%%%%%%%%%%%%%%%%%%%%%%%%%%%%%%%%%%%%%%%%%%%%%%%%%%%%%%%%%%%%%
\subsection{Riccati extactic curve}%$\,$\\
In this subsection we are going to apply the result of Section \ref{sec:exta} to the derivation $D_3$. Then in the following $\big(y(x),y_1(x), y_2(x), y_3(x)\big)$ is a solution of $(S'_3)$ satisfying the initial condition $y(x_0)=y_0$, $y_1(x_0)=y_{1,0}$,  $y_2(x_0)=y_{2,0}, y_3(x_0)=y_{3,0}$ where $x_0$, $y_0$, $y_{1,0}$, $y_{2,0}$  and  $y_{3,0}$ are variables.\\

Now, we are going to generalize Theorem \ref{thm:exta+rfi} to the Riccati case and we follow the same strategy as before.

\begin{defi}\label{def_exta_Ric}
We set 
$$V_3:=\KK[x,y]_{\leq N}y_1^4\oplus \KK[x,y]_{\leq N} (3y_2^2-2y_3y_1)\oplus \KK[x,y]_{\leq N} y_1^2,\quad l_3=\hbox{dim}(V_3).$$
We denote by $\tilde{\mathcal{E}}^{N}_{D_3}(x_0,y_0)$ the linear map $\mathcal{E}^{V_3}_{D_3}$ after the specialization $y_{1,0}=1$, $y_{2,0}=0$, $y_{3,0}=0$.\\
The $N$-th Riccati extactic curve is defined by 
$$ \tilde{E}^N_{D_3}(x_0,y_0)= \det \big( \tilde{\mathcal{E}}^{N}_{D_3}(x_0,y_0) \big).$$
\end{defi}

As before, we begin by proving two Lemmas.

\begin{lem}\label{lem:eval_exta_ric}
We have the following equivalence:
$$4P(x,y)y_1^4+Q(x,y)(3y_2^2-2y_3y_1)+R(x,y)y_1^2 \in \ker \tilde{\mathcal{E}}^{N}_{D_3}(x_0,y_0)$$
 $$\Updownarrow$$ 
\begin{eqnarray*}
&4P(x,y)y_1^4+Q(x,y)(3y_2^2-2y_3y_1)&\\
&+\left(R(x,y)y_{1,0}^2-\left(3\frac{y_{2,0}^2}{y_{1,0}^2}-2\frac{y_{3,0}}{y_{1,0}} \right)Q(x,y)\right)y_1^2 &\in  \ker \mathcal{E}^{V_3}_{D_3}
\end{eqnarray*}
\end{lem}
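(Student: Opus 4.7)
The plan is to follow exactly the template of Lemmas~\ref{lem:eval_exta_darboux} and \ref{lem:eval_exta_liouv}: construct an explicit transformation $T$ that sends the ``canonical'' solution of $(S'_3)$ (with initial data $(y_0,1,0,0)$) to the solution with generic initial data $(y_0,y_{1,0},y_{2,0},y_{3,0})$, then translate the vanishing condition modulo $(x-x_0)^{l_3}$ through this transformation using Theorem~\ref{prec_gen}. The natural candidate for $T$ comes from the Faà di Bruno formulas applied to the reparametrization of the flow in $y_0$ whose $3$-jet at the origin is $(y_{1,0},y_{2,0},y_{3,0})$, namely
\[
T(y,y_1,y_2,y_3)=\bigl(y,\; y_{1,0}y_1,\; y_{1,0}^{2}y_2+y_{2,0}y_1,\; y_{1,0}^{3}y_3+3y_{1,0}y_{2,0}y_2+y_{3,0}y_1\bigr).
\]
A direct check on the equations defining $(S'_3)$ (strictly parallel to the verification carried out in the Liouvillian case) will show that if $(\psi,\psi_1,\psi_2,\psi_3)$ solves $(S'_3)$ with initial data $(y_0,1,0,0)$ at $x_0$, then $(\psi_T,\psi_{T,1},\psi_{T,2},\psi_{T,3}):=T(\psi,\psi_1,\psi_2,\psi_3)$ solves $(S'_3)$ with initial data $(y_0,y_{1,0},y_{2,0},y_{3,0})$ at $x_0$, so by uniqueness $\psi_T=(y,y_1,y_2,y_3)$.

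The key algebraic step, which is really the only thing specific to the Riccati case, is the transformation law of the Schwarzian-type expression $3y_2^2-2y_3y_1$ under $T$. A straightforward but somewhat tedious expansion gives
\[
3\psi_{T,2}^{2}-2\psi_{T,3}\psi_{T,1}=y_{1,0}^{4}\bigl(3\psi_2^{2}-2\psi_3\psi_1\bigr)+\bigl(3y_{2,0}^{2}-2y_{3,0}y_{1,0}\bigr)\psi_1^{2},
\]
together with the trivial identities $\psi_{T,1}^{2}=y_{1,0}^{2}\psi_1^{2}$ and $\psi_{T,1}^{4}=y_{1,0}^{4}\psi_1^{4}$. (This is the familiar fact that $3y_2^2/y_1^2-2y_3/y_1$ is a cocycle under affine reparametrization of $y$.)

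Assume now that $4Py_1^{4}+Q(3y_2^{2}-2y_3y_1)+Ry_1^{2}\in\ker\tilde{\mathcal{E}}_{D_3}^{N}(x_0,y_0)$; by definition this says the expression vanishes modulo $(x-x_0)^{l_3}$ when evaluated at $(x,\psi,\psi_1,\psi_2,\psi_3)$. Multiplying this congruence by $y_{1,0}^{4}$ and substituting the identity above (plus $\psi_1^{2}=\psi_{T,1}^{2}/y_{1,0}^{2}$, $\psi=\psi_T=y$) produces
\[
4P(x,y)y_1^{4}+Q(x,y)(3y_2^{2}-2y_3y_1)+\Bigl(y_{1,0}^{2}R(x,y)-\bigl(3y_{2,0}^{2}/y_{1,0}^{2}-2y_{3,0}/y_{1,0}\bigr)Q(x,y)\Bigr)y_1^{2}\equiv 0\pmod{(x-x_0)^{l_3}}
\]
evaluated along $\psi_T$, which is exactly the condition that the displayed polynomial lies in $\ker\mathcal{E}_{D_3}^{V_3}$. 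The converse direction is obtained by reversing the substitution (dividing by $y_{1,0}^{4}$ is legitimate since $y_{1,0}$ is a transcendental variable). The only real obstacle is carrying out the Schwarzian identity cleanly; once that is done the argument is pure bookkeeping.
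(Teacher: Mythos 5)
Your proposal is correct and follows essentially the same route as the paper's proof: the same transformation $T$, the same reduction via uniqueness of solutions of $(S'_3)$, and the same multiplication by $y_{1,0}^4$; your single ``cocycle'' identity $3\psi_{T,2}^{2}-2\psi_{T,3}\psi_{T,1}=y_{1,0}^{4}(3\psi_2^{2}-2\psi_3\psi_1)+(3y_{2,0}^{2}-2y_{3,0}y_{1,0})\psi_1^{2}$ is exactly what the paper obtains by combining its two intermediate expansions, and it checks out.
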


\begin{proof}
With the same strategy as the one used to prove Lemma~\ref{lem:eval_exta_darboux}, we are going to obtain the desired equivalence.\\

We denote by $\psi(x)$ the solution of $(S'_3)$ such that $\psi(x_0)=y_0$, $\psi_{1}(x_0)=1$, $\psi_{2}(x_0)=0$, $\psi_{3}(x_0)=0$.\\
We consider the transformation 
$$T(y,y_1,y_2,y_3)=(y, \, y_{1,0}y_1, \, y_{1,0}^2y_2+y_{2,0}y_1, \, y_{1,0}^3y_3+3y_{1,0}y_{2,0} y_2+y_{3,0}y_1 ).$$
We set
\begin{eqnarray*}
&&\big(\psi_T(x),\psi_{T,1}(x),\psi_{T,2}(x),\psi_{T,3}(x)\big):=T(\psi(x),\psi_{1}(x),\psi_{2}(x),\psi_{3}(x)\big)\\
&=& \big( \psi(x), \, y_{1,0}\psi_{1}(x), \, y_{1,0}^2\psi_{2}(x)+y_{2,0}\psi_{1}(x), \,  y_{1,0}^3\psi_{3}(x)+3y_{1,0}y_{2,0} \psi_{2}(x)+y_{3,0}\psi_{1}(x) \big).
\end{eqnarray*}

A direct computation shows that $\big(\psi_T(x),\psi_{T,1}(x),\psi_{T,2}(x),\psi_{T,3}(x)\big)$ is a solution of $(S'_3)$ with initial conditions 
$$\psi_T(x_0)=y_0, \psi_{T,1}(x_0)=y_{1,0}, \psi_{T,2}(x_0)=y_{2,0}, \psi_{T,3}(x_0)=y_{3,0}.$$
Then we have the equality $\big(y(x),y_1(x),y_2(x),y_3(x)\big)=\big(\psi_T(x),\psi_{T,1}(x),\psi_{T,2}(x),\psi_{T,3}(x)\big)$.\\

Now suppose that $4P(x,y)y_1^4+Q(x,y)(3y_2^2-2y_3y_1)+R(x,y)y_1^2 \in \ker \tilde{\mathcal{E}}^N_{D_3}(x_0,y_0)$  then
$$4P(x,\psi(x))\psi_{1}^4(x)+Q(x,\psi(x))(3\psi_{2}^2(x)-2\psi_{3}(x)\psi_{1}(x))+R(x,\psi(x))\psi_{1}^2(x)= 0 \mod (x-x_0)^{l_3}.$$
We multiply by $y_{1,0}^4$ the previous equality. Then, in order to bring $\psi_{T,3}$ out, we use the following equality:
$$y_{0}^4 \big( 3\psi_{2}^2 - 2 \psi_{3}\psi_1\big)= 3y_0^4 \psi^2_2 - 2 \big( y_{1,0}^3\psi_{3}+3y_{1,0}y_{2,0} \psi_2+y_{3,0}\psi_1\big) y_{1,0} \psi_1 +6 y_{1,0}^2y_{2,0}\psi_2\psi_1+2 y_{3,0}y_{1,0}\psi_1^2.$$
 This gives: 
 
\begin{tabular}{l}
$4P(x,\psi(x))y_{1,0}^4\psi_{1}^4(x)$\\
$+Q(x,\psi(x))\Big(3y_{1,0}^4\psi_{2}^2(x)  -2\big(y_{1,0}^3\psi_{3}(x)+3y_{1,0}y_{2,0} \psi_{2}(x)+y_{3,0}\psi_{1}(x)    \big) y_{1,0}\psi_{1}(x)$\\
$+6y_{1,0}^2y_{2,0} \psi_{2}(x)\psi_{1}(x)+2y_{3,0}y_{1,0}\psi_{1}^2(x)\Big) +R(x,\psi(x))y_{1,0}^4\psi_{1}^2(x)$\\
$= 0 \mod (x-x_0)^{l_3}$.\\
\end{tabular}\\

By definition of $\psi_1$ and $\psi_3$, we get:\\

\begin{tabular}{l}
$4P(x,\psi(x))\psi_{T,1}^4(x)$\\
$+Q(x,\psi(x))\Big(3y_{1,0}^4\psi_{2}^2(x)  -2\psi_{T,3}(x)\psi_{T,1}(x)
+6y_{1,0}^2y_{2,0} \psi_{2}(x)\psi_{1}(x)+2y_{3,0}y_{1,0}\psi_{1}^2(x)\Big)$\\
$+R(x,\psi(x))y_{1,0}^2\psi_{T,1}^2(x)= 0 \mod (x-x_0)^{l_3}.$
\end{tabular}\\

Now, in order to bring $\psi_{T,2}$ out, we use the equality
$$ 3 y_{1,0}^4 \psi_{2}^2 = 3( y_{1,0}^2 \psi_2 +y_{2,0}\psi_1)^2-6y_{1,0}^2y_{2,0}\psi_1 \psi_2 - 3y_{2,0}^2 \psi_1^2.$$

This gives:\\

\begin{tabular}{l}
$4P(x,\psi(x))\psi_{T,1}^4(x)$\\
$+Q(x,\psi(x))\Big(3(y_{1,0}^2\psi_{2}(x)+y_{2,0}\psi_{1}(x) )^2-6y_{1,0}^2y_{2,0}\psi_{2}(x)\psi_{1}(x)-3y_{2,0}^2\psi_{1}^2(x)$\\
$-2\psi_{T,3}(x)\psi_{T,1}(x)+6y_{1,0}^2y_{2,0} \psi_{2}(x)\psi_{1}(x)+2y_{3,0}y_{1,0}\psi_{1}^2(x)\Big)$\\
$+R(x,\psi(x))y_{1,0}^2\psi_{T,1}^2(x)= 0 \mod (x-x_0)^{l_3}$.
\end{tabular}\\

We can simplify by $6 y_{1,0}^2y_{2,0}\psi_2 \psi_1$ the previous equality. Then by definition of $\psi_{T,2}$ we obtain:\\

\begin{tabular}{l}
$4P(x,\psi(x))\psi_{T,1}^4(x)$\\
$+Q(x,\psi(x))\Big(3\psi_{T,2}^2(x)-3y_{2,0}^2\psi_{1}^2(x)-2\psi_{T,3}(x)\psi_{T,1}(x)+2y_{3,0}y_{1,0}\psi_{1}^2(x)\Big)$\\
$+R(x,\psi(x))y_{1,0}^2\psi_{T,1}^2(x)= 0 \mod (x-x_0)^{l_3}$.
\end{tabular}\\

At last, we factorize by $\psi_1^2$, and we get:\\

\begin{tabular}{l}
$4P(x,\psi_T(x))\psi_{T,1}^4(x)$\\
$+Q(x,\psi_T(x))(3\psi_{T,2}(x)^2-2\psi_{T,3}(x)\psi_{T,1}(x))$\\
$\left(\left(-3\frac{y_{2,0}^2}{y_{1,0}^2}+2\frac{y_{3,0}}{y_{1,0}} \right)Q(x,\psi_T(x))+R(x,\psi_T(x))y_{1,0}^2\right)\psi_{T,1}^2(x)= 0 \mod (x-x_0)^{l_3}$.\\
$\,$\\
\end{tabular}

Therefore, as  $\big(y(x),y_1(x),y_2(x),y_3(x)\big)=\big(\psi_T(x),\psi_{T,1}(x),\psi_{T,2}(x),\psi_{T,3}(x)\big)$ we deduce
$$4P(x,y)y_1^4+Q(x,y)(3y_2^2-2y_3y_1)+\left(R(x,y)y_{1,0}^2-\left(3\frac{y_{2,0}^2}{y_{1,0}^2}-2\frac{y_{3,0}}{y_{1,0}} \right)Q(x,y)\right)y_1^2 $$
belongs to $\ker \mathcal{E}^{V_3}_{D_3}$.\\

The converse is straightforward.
\end{proof}

\begin{lem}\label{lem:ker_exta_ric}
Consider a non trivial element $$4P(x,y)y_1^4+Q(x,y)(3y_2^2-2y_3y_1)+R(x,y)y_1^2 \in \ker \tilde{\mathcal{E}}^{N}_{D_3}(x_0,y_0),$$ then:
\begin{itemize}
\item If $Q=0$ then $4Py_1^2+R \in \ker \tilde{\mathcal{E}}^{N,2}_{D_1}(x_0,y_0)$.
\item If $Q\neq 0$ and $Q \not \in \ker \tilde{\mathcal{E}}^N_{D_0}(x_0,y_0)$ then:
\begin{eqnarray*}
0&=&y_1(x)^2\Big(4D_0(P/Q)+8A(P/Q)\partial_y(B/A)-2A\partial_y^3(B/A)\Big)\big(x,y(x)\big)\\
&&+y_{1,0}^2 D_0(R/Q)\big(x,y(x)\big).
\end{eqnarray*}
\end{itemize}
\end{lem}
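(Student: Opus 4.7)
My plan is to mirror the proof of Lemma \ref{lem:ker_exta_liouv} (the Liouvillian case), substituting the Riccati evaluation Lemma \ref{lem:eval_exta_ric} at the appropriate step. The starting point is the hypothesis that $4Py_1^4+Q(3y_2^2-2y_3y_1)+Ry_1^2$ lies in $\ker \tilde{\mathcal{E}}^{N}_{D_3}(x_0,y_0)$. By Lemma \ref{lem:eval_exta_ric}, we may re-absorb the specialization and obtain the analogous element of $\ker \mathcal{E}^{V_3}_{D_3}$ (with the $y_{1,0}^2 R$ and $-(3y_{2,0}^2/y_{1,0}^2-2y_{3,0}/y_{1,0})Q$ correction in front of $y_1^2$). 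Theorem \ref{prec_gen} then upgrades the mod-$(x-x_0)^{l_3}$ vanishing to an exact identity along the flow $(y(x),y_1(x),y_2(x),y_3(x))$ with generic initial data $(y_0,y_{1,0},y_{2,0},y_{3,0})$.

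For the case $Q=0$, the identity reduces to $4P(x,y(x))y_1(x)^4 + y_{1,0}^2 R(x,y(x))y_1(x)^2=0$. Since $y_1(x_0)=y_{1,0}\neq 0$ generically, dividing by $y_1(x)^2$ yields $4P(x,y(x))y_1(x)^2+y_{1,0}^2R(x,y(x))=0$, i.e.\ $4Py_1^2 + y_{1,0}^2 R \in \ker \mathcal{E}^{V_1}_{D_1}$ in the case $k=2$. Lemma \ref{lem:eval_exta_darboux} with $k=2$ then gives $4Py_1^2+R\in \ker \tilde{\mathcal{E}}^{N,2}_{D_1}(x_0,y_0)$, as required.

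For the case $Q\neq 0$ with $Q\notin\ker \tilde{\mathcal{E}}^N_{D_0}(x_0,y_0)$, we have $Q(x,y(x))\neq 0$. Setting $F=P/Q$ and $G=R/Q$, divide the flow identity by $Q(x,y(x))y_1(x)^2$ to obtain
\begin{equation*}
4F(x,y(x))y_1(x)^2 + 3\frac{y_2(x)^2}{y_1(x)^2} - 2\frac{y_3(x)}{y_1(x)} + y_{1,0}^2 G(x,y(x)) = 3\frac{y_{2,0}^2}{y_{1,0}^2} - 2\frac{y_{3,0}}{y_{1,0}}.
\end{equation*}
The right-hand side is independent of $x$, so differentiating in $x$ annihilates it. Using $\partial_x F(x,y(x))=D_0(F)(x,y(x))/A$ (and similarly for $G$), together with the defining ODEs $y_1'=y_1\partial_y(B/A)$, $y_2'=y_2\partial_y(B/A)+y_1^2\partial_y^2(B/A)$ and $y_3'=y_3\partial_y(B/A)+3y_1y_2\partial_y^2(B/A)+y_1^3\partial_y^3(B/A)$, one computes the crucial simplification
\begin{equation*}
\partial_x\!\left(3\frac{y_2^2}{y_1^2}-2\frac{y_3}{y_1}\right) = -2\,y_1^2\,\partial_y^3(B/A),
\end{equation*}
the two terms proportional to $y_2\partial_y^2(B/A)$ and those proportional to $\partial_y(B/A)\,y_3/y_1$ cancelling. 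Multiplying through by $A$ then yields exactly the target identity.

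The main obstacle (really the only computational step that is not cosmetic) is verifying that the $y_2\partial_y^2(B/A)$ contributions from $\partial_x(3y_2^2/y_1^2)$ and $\partial_x(-2y_3/y_1)$ cancel: this is the whole point of having chosen the ``Schwarzian'' combination $3y_2^2-2y_3y_1$ in the definition of $V_3$, and the cancellation is parallel in spirit to the simpler computation $\partial_x(y_2/y_1)=y_1\partial_y^2(B/A)$ used in the Liouvillian case. Once this derivative is in hand, the rest of the proof is routine substitution.
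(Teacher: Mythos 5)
Your proposal is correct and follows essentially the same route as the paper's proof: Lemma \ref{lem:eval_exta_ric} plus Theorem \ref{prec_gen} to get the exact flow identity, division by $y_1(x)^2$ (resp.\ $Q(x,y(x))y_1(x)^2$) in the two cases, and the key computation $\partial_x\bigl(3y_2^2/y_1^2-2y_3/y_1\bigr)=-2y_1^2\partial_y^3(B/A)$ before differentiating and multiplying by $A$. The cancellation you identify as the only nontrivial step is exactly the one the paper carries out explicitly, and your verification of it is correct.
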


\begin{proof}
We have $4P(x,y)y_1^4+Q(x,y)(3y_2^2-2y_3y_1)+R(x,y)y_1^2 \in \ker \tilde{\mathcal{E}}^{N}_{D_3}(x_0,y_0)$, then Lemma~\ref{lem:eval_exta_ric} and Theorem~\ref{prec_gen} gives:
\begin{eqnarray*}
(\star) & & 4P\big(x,y(x)\big)y_1^4(x)+Q\big(x,y(x)\big)(3y_2^2(x)-2y_3(x)y_1(x))\\
&&+\left( R\big(x,y(x)\big)y_{1,0}^2-\left(3\frac{y_{2,0}^2}{y_{1,0}^2}-2\frac{y_{3,0}}{y_{1,0}} \right)Q\big(x,y(x)\big) \right)y_1^2(x) =0.
\end{eqnarray*}

%As $4P(x,y)y_1^4+Q(x,y)(3y_2^2-2y_3y_1)+R(x,y)y_1^2 \in \ker \tilde{\mathcal{E}}^{N}_{D_3}(x_0,y_0)$, we have using Lemma \ref{lem:eval_exta_ric}
%$$4P\big(x,y(x)\big)y_1^4(x)+Q\big(x,y(x)\big)(3y_2^2(x)-2y_3(x)y_1(x))$$
%$$+\left( R\big(x,y(x)\big)y_{1,0}^2-\left(3\frac{y_{2,0}^2}{y_{1,0}^2}-2\frac{y_{3,0}}{y_{1,0}} \right)Q\big(x,y(x)\big) \right)y_1^2(x) =0 \mod (x-x_0)^{l_3}$$
%Then Theorem \ref{prec_gen} applied with the five variables $x,y,y_1,y_2,y_3$ gives
%$$4P\big(x,y(x)\big)y_1^4(x)+Q\big(x,y(x)\big)(3y_2^2(x)-2y_3(x)y_1(x))$$
%$$+\left( R\big(x,y(x)\big)y_{1,0}^2-\left(3\frac{y_{2,0}^2}{y_{1,0}^2}-2\frac{y_{3,0}}{y_{1,0}} \right)Q\big(x,y(x)\big) \right)y_1^2(x) =0.$$
If $Q=0$  then we have \mbox{$4P\big(x,y(x)\big)y_1^4(x)+y_{1,0}^2R\big(x,y(x)\big)y_1^2(x)=0$}. As $y_1(x)\neq 0$, we get $4P\big(x,y(x)\big)y_1^2(x)+y_{1,0}^2 R\big(x,y(x)\big)=0$. And thus $4Py_1^2+R$ belongs to $\ker \tilde{\mathcal{E}}^{N,2}_{D_1}(x_0,y_0)$ by Lemma \ref{lem:eval_exta_darboux}.\\

If $Q\neq 0$ and $Q \not \in \ker \tilde{\mathcal{E}}_{D_0}^N(x_0,y_0)$ then $Q\big(x,y(x)\big)\neq 0$. We set $F=P/Q$ and $G=R/Q$ then  by dividing $(\star)$ by $Q\big(x,y(x)\big)y_1^2(x)$, we have:
$$(\star \star)\quad 4F\big(x,y(x)\big)y_1^2(x)+\left( 3\frac{y_2^2(x)}{y_1^2(x)}-2\frac{y_3(x)}{y_1(x)} \right)+y_{1,0}^2 G\big(x,y(x)\big)-\left(3\frac{y_{2,0}^2}{y_{1,0}^2}-2\frac{y_{3,0}}{y_{1,0}} \right)=0.$$
The derivation relatively to $x$ of $3\frac{y_2^2(x)}{y_1^2(x)}-2\frac{y_3(x)}{y_1(x)}$ and the relation given by the differential system $(S'_3)$ gives:

\begin{eqnarray*}
\partial_x \left( 3\dfrac{y_2^2(x)}{y_1^2(x)}-2\dfrac{y_3(x)}{y_1(x)} \right)&=& 
\Bigg[3\left(2y_2\Big[y_2\partial_y\Big( \dfrac{B}{A} \Big)+y_1^2\partial_y^2\Big( \dfrac{B}{A} \Big)\Big]\dfrac{1}{y_1^2}\right)\\
&&-6y_2^2 y_1\partial_y\Big(\dfrac{B}{A}\Big)\dfrac{1}{y_1^3}\\
&& -2\left( y_3\partial_y\Big( \dfrac{B}{A} \Big) + 3 y_2y_1\partial_y^2\Big(\dfrac{B}{A} \Big) + y_1^3\partial_y^3 \Big( \dfrac{B}{A} \Big) \right)\dfrac{1}{y_1}\\
&&+2y_3y_1 \partial_y\Big(\dfrac{B}{A} \Big)\dfrac{1}{y_1^2}\Bigg]\big(x,y(x),y_1(x),y_2(x),y_3(x)\big)\\
&=&-2y_1^2(x) \partial_y^3\Big( \dfrac{B}{A} \Big)\big(x,y(x)\big).
\end{eqnarray*}

Then the derivation relatively to $x$ of $(\star \star)$  gives:
\begin{eqnarray*}
0=4A^{-1}(x,y(x))D_0(F)(x,y(x)) y_1^2(x)+8F(x,y(x))y_1^2(x)\partial_y\Big(\frac{B}{A}\Big)\\
-2y_1^2(x) \partial_y^3\Big( \dfrac{B}{A} \Big)\big(x,y(x)\big)+y_{1,0}^2A^{-1}(x,y(x))D_0(G)(x,y(x))
\end{eqnarray*}
Thus
\begin{eqnarray*}
0&=&y_1(x)^2\Big(4D_0(F)+8AF\partial_y(B/A)-2A\partial_y^3(B/A)\big)\big(x,y(x)\Big)\\
&& +y_{1,0}^2 D_0(G)\big(x,y(x)\big).
\end{eqnarray*}
This gives the desired conclusion.
\end{proof}

Now we can state the generalization of Theorem \ref{thm:exta+lfi} for the Riccati case.
\begin{thm}[Riccati extactic curve Theorem]\label{thm:exta+cfi}$\,$
\begin{enumerate}
\item If $\tilde{E}^N_{D_3}(x_0,y_0)=0$ then the derivation $D_0$ has a Riccati first integral with degree smaller than $N$ or a $2$-Darbouxian first integral with degree smaller than $2N+4d-1$ or a rational first integral with degree smaller than \mbox{$4N+10d-3$}. Moreover the defining equation of the first integral, (Rat), (D) or (Ric), has coefficients in $\KK$.
\item If $D_0$ has a rational or a $2$-Darbouxian or a Riccati first integral with degree smaller than $N$ then $\tilde{E}^N_{D_3}(x_0,y_0)=0$.
\end{enumerate}
\end{thm}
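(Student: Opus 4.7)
The proof will parallel the Liouvillian case (Theorem \ref{thm:exta+lfi}) but with the Riccati-specific invariant, making the key tools Lemma \ref{lem:ker_exta_ric} and Proposition \ref{proprepresent}.

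\emph{Direct implication.} I assume $\tilde{E}^N_{D_3}(x_0,y_0)=0$ and pick a non-trivial element
$$G=4P(x,y)y_1^4+Q(x,y)(3y_2^2-2y_3y_1)+R(x,y)y_1^2\in\ker\tilde{\mathcal{E}}^N_{D_3}(x_0,y_0).$$
I split into three cases according to Lemma \ref{lem:ker_exta_ric}. If $Q=0$, the lemma gives $4Py_1^2+R\in\ker\tilde{\mathcal{E}}^{N,2}_{D_1}(x_0,y_0)$, and Theorem \ref{thm:exta+dfi} applied with $k=2$ produces a $2$-Darbouxian first integral of degree $\leq N$ or a rational first integral of degree $\leq 2N+2d-1$, with the defining equation over $\KK$. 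If $Q\neq 0$ but $Q\in\ker\tilde{\mathcal{E}}^N_{D_0}(x_0,y_0)$, Theorem \ref{thm:exta+rfi} directly yields a rational first integral of degree $\leq N$ with coefficients in $\KK$.

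\emph{The main case} is $Q\neq 0$, $Q\notin\ker\tilde{\mathcal{E}}^N_{D_0}(x_0,y_0)$. Lemma \ref{lem:ker_exta_ric} gives
$$y_1(x)^2\bigl(4D_0(P/Q)+8A(P/Q)\partial_y(B/A)-2A\partial_y^3(B/A)\bigr)(x,y(x))+y_{1,0}^2D_0(R/Q)(x,y(x))=0.$$
If the bracketed expression vanishes identically as a rational function, then with $F=P/Q$ (so $F\in\KK(x_0,y_0)(x,y)$, $\deg F\leq N$), Proposition \ref{proprepresent} yields a Riccati first integral with equation (Ric) of degree $\leq N$; the coefficient field question is then resolved by the same Corollary-style argument used for Corollary \ref{cor:liouvdansK}, giving coefficients in $\KK$. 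Otherwise, I clear denominators by setting
$$P_1=A^3Q^2\bigl(4D_0(P/Q)+8A(P/Q)\partial_y(B/A)-2A\partial_y^3(B/A)\bigr),\quad Q_1=A^3Q^2D_0(R/Q),$$
and a degree count (noting that $A^2\partial_y(B/A)$ has degree $\leq 2d{-}1$ and $A^4\partial_y^3(B/A)$ has degree $\leq 4d{-}3$) shows $\deg P_1,\deg Q_1\leq 2N+4d-1$. The key identity becomes $4P_1y_1^2+Q_1\in\ker\tilde{\mathcal{E}}^{2N+4d-1,2}_{D_1}(x_0,y_0)$ (after multiplying through by the appropriate power of $y_{1,0}$ and using Lemma \ref{lem:eval_exta_darboux}). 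Then Theorem \ref{thm:exta+dfi} with $k=2$ gives either a $2$-Darbouxian first integral of degree $\leq 2N+4d-1$ or a rational first integral of degree $\leq 2(2N+4d-1)+2d-1=4N+10d-3$, with defining equation over $\KK$.

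\emph{Converse.} Suppose $D_0$ admits a rational, $2$-Darbouxian or Riccati first integral of degree $\leq N$. By Proposition \ref{propinv}, in the Riccati case the function $4y_1^2F(x,y)-2y_3/y_1+3y_2^2/y_1^2$ is a rational first integral of $(S'_3)$; evaluating along a solution with $y_{1,0}=1$, $y_{2,0}=y_{3,0}=0$ and writing $F=P/Q$ with $\deg P,\deg Q\leq N$, multiplication by $Qy_1^2$ produces an explicit non-trivial element
$$4Py_1^4+Q(3y_2^2-2y_3y_1)-c(x_0,y_0)Qy_1^2$$
in $\overline{\KK}\otimes_\KK\ker\tilde{\mathcal{E}}^N_{D_3}(x_0,y_0)$, where $c\in\overline{\KK}(x_0,y_0)$ is the constant value of the invariant. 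Since the matrix $\tilde{\mathcal{E}}^N_{D_3}(x_0,y_0)$ has entries in $\KK(x_0,y_0)$, the existence of a non-trivial kernel element over $\overline{\KK}(x_0,y_0)$ forces one over $\KK(x_0,y_0)$, whence $\tilde{E}^N_{D_3}(x_0,y_0)=0$. The rational case is immediate, and the $2$-Darbouxian case follows by the same argument applied to the inclusion $\ker\tilde{\mathcal{E}}^{N,2}_{D_1}\hookrightarrow\ker\tilde{\mathcal{E}}^N_{D_3}$ (where a Darbouxian invariant $y_1^2F$ lifts to a Riccati invariant of the same form with $Q=0$).

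\emph{Main obstacle.} The only non-routine step is the degree bookkeeping in the construction of $P_1,Q_1$: one must verify that the $A$-denominators coming from $\partial_y(B/A)$ and especially $\partial_y^3(B/A)$ are fully cleared by $A^3Q^2$, and that after this clearing the sum still fits into the kernel of $\tilde{\mathcal{E}}^{2N+4d-1,2}_{D_1}$ (rather than just giving pointwise vanishing along one orbit). The rest is a straightforward adaptation of the Liouvillian argument.
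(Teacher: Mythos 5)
Your proposal is correct and follows essentially the same route as the paper: the same case split on $Q$ via Lemma \ref{lem:ker_exta_ric}, the same construction $P_1=A^3Q^2(\cdots)$, $Q_1=A^3Q^2D_0(R/Q)$ with the degree bound $2N+4d-1$ feeding into Theorem \ref{thm:exta+dfi} with $k=2$, and the same Wronskian-style converse. The only difference is that you spell out the converse (which the paper dispatches with ``same strategy as Theorem \ref{thm:exta+dfi}''), and your lift of the $2$-Darbouxian invariant into $V_3$ by multiplying $y_1^2P-cQ$ by $y_1^2$ is exactly the right move there.
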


\begin{proof}%[Proof of Theorem \ref{thm:exta+cfi}]
First consider a non trivial solution 
$$4P(x,y)y_1^4+Q(x,y)(3y_2^2-2y_3y_1)+R(x,y)y_1^2 \in \ker \tilde{\mathcal{E}}^{N}_{D_3}(x_0,y_0).$$

If $Q=0$ then by Lemma \ref{lem:ker_exta_ric}, then $4Py_1^2+R \in \ker \tilde{\mathcal{E}}^{N,2}_{D_1}(x_0,y_0)$ and Theorem \ref{thm:exta+dfi} gives the existence of a $2$-Darbouxian first integral with degree smaller than $N$ and defining equation (D) with coefficients in $\KK$ or a first integral with degree smaller than $2N+2d-1$ and with  coefficients in $\mathbb{K}$.\\

If $Q\neq0$ and $Q \in \ker \tilde{\mathcal{E}}^N_{D_0}(x_0,y_0)$ then by Theorem \ref{thm:exta+rfi} there exists a rational first integral with degree smaller than $N$ with coefficients in $\mathbb{K}$.\\

If $Q\neq0$ and $Q  \not \in \ker \tilde{\mathcal{E}}^N_{D_0}(x_0,y_0)$ then  we have two situations:\\
In the first situation we have: 
$$4D_0(P/Q)+8A(P/Q)\partial_y(B/A)-2A\partial_y^3(B/A)=0.$$
In this case, Proposition \ref{proprepresent} gives the existence of a Riccati first integral. As $\deg P,\deg Q\leq N$, we deduce the existence of a Riccati first integral with degree smaller than $N$.  Moreover, the equation of type (Ric) giving the existence of a Riccati first integral has coefficients in $\KK$ since $P, Q \in \KK(x_0,y_0)[x,y]$.\\
In the second situation we have 
$$4D_0(P/Q)+8A(P/Q)\partial_y(B/A)-2A\partial_y^3(B/A)\neq 0$$ and thanks to  Lemma \ref{lem:ker_exta_ric}
\begin{eqnarray*}
0&=&y_1(x)^2\Big(4D_0(P/Q)+8A(P/Q)\partial_y(B/A)-2A\partial_y^3(B/A)\big)\big(x,y(x)\Big)\\
&&+y_{1,0}^2D_0(R/Q)\big(x,y(x)\big).
\end{eqnarray*}
In this case, we set 
$$P_1=A^3Q^2\big(4D_0(P/Q)+8A(P/Q)\partial_y(B/A)-2A\partial_y^3(B/A)\big),$$ $$Q_1=A^3Q^2D_0(R/Q)$$
 and we obtain $y_1^2P_1+y_{1,0}^2Q_1 \in \ker \tilde{\mathcal{E}}^{2N+4d-1,2}_{D_1}(x_0,y_0,y_{1,0})$. Therefore by Lemma~\ref{lem:eval_exta_darboux}, $y_1^2P_1+Q_1 \in \ker \mathcal{E}_{D_1}^{2N+4d-1,2}(x_0,y_0)$, thus Theorem \ref{thm:exta+dfi} gives the existence of a $2$-Darbouxian first integral with degree smaller than $2N+4d-1$ or the existence of a rational first integral with degree smaller than $4N+10d-3$. Moreover the equation giving this first integral has coefficients in $\KK$ since $P_1, Q_1 \in \KK(x_0,y_0)[x,y]$.\\

The second part of the theorem is  proved with the strategy used in Theorem~\ref{thm:exta+dfi}.
%Now, we study the second part of the theorem.\\
%If $D_0$ has a Riccati first integral with degree smaller than $N$ then we have the following relation $\partial_{y}^2 \mathcal{F}/\mathcal{F}=P/Q$, with $P,Q \in \bar{\KK}[x,y]$ and $\deg(P)$, $\deg(Q)$ smaller than $N$. Proposition \ref{propinv} implies that $4y_1^2P/Q + 3y_2^2/y_1^2-2y_3/y_1$ is a first integral of $(S'_3)$. This gives 
%$$4y_1^2(x)P/Q\big(x,y(x)\big) + 3y_2^2(x)/y_1^2(x)-2y_3(x)/y_1(x)=c,$$ where $c  \in \bar{\mathbb{K}}(x_0,y_0)$ putting $y_1(x_0)=1,y_2(x_0)=0,y_3(x_0)=0$. Thus
%$$4P\big(x,y(x)\big)y_1^4(x)+Q\big(x,y(x)\big)(3y_2^2(x)-2y_3(x)y_1(x))-cQ\big(x,y(x)\big)y_1^2(x)=0,$$
%and then
%$$4Py_1^4+Q(3y_2^2-2y_3y_1)-cQy_1^2 \in\bar{\mathbb{K}}\otimes_{\mathbb{K}} \ker \tilde{\mathcal{E}}^N_{D_3}(x_0,y_0).$$
%Thus there exists a non trivial element with coefficients in $\mathbb{K}$ in $\ker \tilde{\mathcal{E}}^N_{D_3}(x_0,y_0)$, and thus $\tilde{E}^N_{D_3}(x_0,y_0)=0$. The situation where $D_0$ has a rational or a $2$-Darbouxian first integral with degree smaller than $N$ can be done in the same way.
\end{proof}

As before, we remark that the computation of $F \in \KK(x,y)$ is not restrictive. 
\begin{cor}\label{cor:RiccatidansK}
Suppose that $D_0$ has a Riccati first integral and no $2$-Darbouxian nor rational first integral. We have:\\ 
There exists $F \in \KK(x,y)$ with $\deg F \leq N$ such that equation (Ric) gives a Riccati first integral if and only if 
there exists $\tilde{F} \in \overline{\KK}(x,y)$ with $\deg \tilde{F} \leq N$ such that equation (Ric) gives a Riccati first integral.
\end{cor}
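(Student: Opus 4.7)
\begin{proof}
The forward direction is immediate, since $\KK(x,y) \subset \overline{\KK}(x,y)$.

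For the converse, assume that there exists $\tilde{F} \in \overline{\KK}(x,y)$ with $\deg \tilde{F} \leq N$ such that the equation (Ric) with $\tilde{F}$ produces a Riccati first integral of $D_0$. By the second part of Theorem~\ref{thm:exta+cfi}, this implies
$$\tilde{E}^N_{D_3}(x_0,y_0) = 0.$$
Now apply the first part of Theorem~\ref{thm:exta+cfi}: the vanishing of $\tilde{E}^N_{D_3}(x_0,y_0)$ forces one of three situations, namely the existence of a rational first integral with degree smaller than $4N+10d-3$, a $2$-Darbouxian first integral with degree smaller than $2N+4d-1$, or a Riccati first integral with degree smaller than $N$, and in each case the defining equation has coefficients in $\KK$. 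The first two possibilities are ruled out by our standing hypothesis that $D_0$ admits neither a rational nor a $2$-Darbouxian first integral. Therefore the only remaining possibility yields an $F \in \KK(x,y)$ with $\deg F \leq N$ such that equation (Ric) provides a Riccati first integral, which is the desired conclusion.
\end{proof}

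This is a direct transcription of the argument used for Corollaries~\ref{cor:darbouxdansK} and \ref{cor:liouvdansK}: the extactic-curve theorem provides a two-way bridge, so passing to $\overline{\KK}$ can be reversed by reading off the minimal first integral produced by the kernel over $\KK$. The only place where care is required is to notice that the hypothesis rules out exactly the two ``lower'' classes (rational and $2$-Darbouxian) that Theorem~\ref{thm:exta+cfi} could otherwise produce in place of a Riccati first integral; there is no genuine obstacle here, only a routine case elimination.
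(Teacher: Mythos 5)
Your proof is correct and follows exactly the argument the paper uses (explicitly for Corollary~\ref{cor:darbouxdansK} and implicitly, via the remark ``as before,'' for the Riccati case): apply part (2) of Theorem~\ref{thm:exta+cfi} to get $\tilde{E}^N_{D_3}(x_0,y_0)=0$, then part (1) together with the hypothesis to eliminate the rational and $2$-Darbouxian alternatives, leaving a Riccati equation with coefficients in $\KK$ and degree at most $N$.
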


%\begin{proof}
%Given a Riccati first integral with coefficients in $\bar{\mathbb{K}}$, we apply the second part of Theorem \ref{thm:exta+cfi}, giving $\tilde{E}^N_{D_3}(x_0,y_0)=0$. We then apply the first part, and knowing that $2$-Darbouxian and rational first integrals are forbidden, the only possibility left is a the existence of a Riccati first integral of degree $\leq N$ with an equation of type (Ric) with coefficients in $\mathbb{K}$.
%\end{proof}

%%%%%%%%%%%%%%%%%%%%%%%%%%%%%%%%%%%%%%%%%%%%%%%%%%%%%%%%%%%
\section{Evaluations of extactic curves}\label{sec:extaeval}

In our algorithms we will not compute the extactic curves as polynomials in $x_0,y_0$. We will only compute the extactic curves evaluated at a random point $(x_0^{\star},y_0^{\star}) \in \KK^2$. If the extactic curve is a non-zero polynomial, then almost surely its evaluation at a random point will not be zero. However, theoretically this can happen, and thus we want to bound the algebraic set on which such kind of bad situations can happen.

\begin{defi}
We denote by $\Sigma_{D_r,N,k}$ (where $k$ is omitted when $r\neq 1$) the following algebraic variety:
$$\Sigma_{D_r,N,k}= \mathcal{V}\big( p\times p \textrm{ minors of } \tilde{\mathcal{E}}^{N,k}_{D_r}(x_0,y_0), \textrm{ where  } p=\textrm{rank } \tilde{\mathcal{E}}^{N,k}_{D_r}\big).$$
\end{defi}

%With this definition we can now state a specialized version of Theorem \ref{prec_gen}. In the following in order to have a uniform statement we set $y_1^{(2)}=y_2$ and $y_1^{(3)}=y_3$.

\begin{lem} \label{lem:specialized_ker}
Let $(x_0^{\star},y_0^{\star}) \in \KK^2$, $1 \leq r \leq 3$ and $P \in \ker \tilde{\mathcal{E}}^{N,k}_{D_r}(x_0^{\star},y_0^{\star})$.\\
We consider $y_{\star}(x), \ldots, y_{3,\star}(x)$ a solution of $(S'_3)$ with initial condition \mbox{$y_{\star}(x_0^{\star})=y_0^{\star}$}, $y_{1,\star}(x_0^{\star})=1$, $y_{2,\star}(x_0^{\star})=y_{3,\star}(x_0^{\star})=0$.\\
If $(x_0^{\star},y_0^{\star}) \not \in \Sigma_{D_r,N,k}$ then
$P\big(x,y_{\star}(x), \ldots, y_{r,\star}(x)\big)=0.$

\end{lem}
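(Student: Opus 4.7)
The plan is to exploit rank preservation under specialization and then transfer, from the generic setting to the evaluated one, the vanishing identity that comes from Theorem~\ref{prec_gen}. I view $\tilde{\mathcal{E}}^{N,k}_{D_r}$ as a matrix $M(x_0,y_0)$ with entries in $\KK[x_0,y_0]$, and let $p$ be its generic rank (its rank as an $\LL$-linear map with $\LL=\KK(x_0,y_0)$). By the very definition of $\Sigma_{D_r,N,k}$, the hypothesis $(x_0^{\star},y_0^{\star}) \notin \Sigma_{D_r,N,k}$ means that some $p\times p$ minor $\Delta$ of $M$ does not vanish at $(x_0^{\star},y_0^{\star})$; since the specialized rank cannot exceed the generic rank, we get $\mathrm{rank}\, M(x_0^{\star},y_0^{\star}) = p$, and rank--nullity gives $\dim_{\KK}\ker M(x_0^{\star},y_0^{\star}) = \dim_{\LL} \ker M$.

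Next I would lift to the generic kernel. Pick the $p \times p$ non-vanishing submatrix $\Delta$ at $(x_0^{\star},y_0^{\star})$ and use Cramer's rule to produce an explicit basis $\tilde g_1, \dots, \tilde g_q$ of $\ker M$ whose coefficients lie in the localization $\KK[x_0,y_0]_{(\Delta)}$ and are therefore evaluable at $(x_0^{\star},y_0^{\star})$. Because specialization preserves the reduced row echelon structure of the basis (the pivots correspond to the $p$ columns complementary to the chosen submatrix), the specialized vectors remain $\KK$-linearly independent and still lie in $\ker M(x_0^{\star},y_0^{\star})$; by the dimension count above they form a $\KK$-basis of this specialized kernel. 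Thus $P = \sum_i c_i \,\tilde g_i(x_0^{\star}, y_0^{\star}; x, y, y_1,\dots)$ for some $c_i \in \KK$.

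Finally I would apply Theorem~\ref{prec_gen} (after using Lemmas~\ref{lem:eval_exta_darboux}, \ref{lem:eval_exta_liouv}, \ref{lem:eval_exta_ric} to reinstate the $y_{1,0},y_{2,0},y_{3,0}$ dependence that was specialized out in the definition of $\tilde{\mathcal{E}}^{N,k}_{D_r}$) to each generic lift $\tilde g_i$: being in $\ker \mathcal{E}^{V_r}_{D_r}$ and having order of contact $\geq l_r = \dim V_r$ with the generic flow, $\tilde g_i$ satisfies $\tilde g_i(x_0,y_0; x, y(x), \dots, y_r(x)) = 0$ identically in $x$, where $y(x),\dots,y_r(x)$ is the formal solution of $(S'_r)$ with generic initial data. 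Specialization $(x_0,y_0)\mapsto(x_0^{\star},y_0^{\star})$ maps the formal flow to $\bigl(y_\star(x), y_{1,\star}(x),\dots,y_{r,\star}(x)\bigr)$ by uniqueness for $(S'_r)$, so the identity persists; summing against the constants $c_i$ yields $P(x, y_\star(x), \dots, y_{r,\star}(x)) = 0$. The main obstacle is the bookkeeping in the lifting step, ensuring the denominators introduced by Cramer's rule do not blow up at $(x_0^{\star},y_0^{\star})$ — which is exactly what the assumption $(x_0^{\star},y_0^{\star}) \notin \Sigma_{D_r,N,k}$ guarantees; apart from this, and the book-keeping of the $y_{i,0}$ specializations via the three evaluation lemmas, the argument is linear algebra plus the generic precision bound of Theorem~\ref{prec_gen}.
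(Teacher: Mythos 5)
Your proposal is correct and follows essentially the same route as the paper: use the non-vanishing of a maximal minor (the definition of $\Sigma_{D_r,N,k}$) to equate the specialized and generic kernel dimensions, lift $P$ to an element of the generic kernel, apply Theorem~\ref{prec_gen} together with the evaluation Lemmas~\ref{lem:eval_exta_darboux}, \ref{lem:eval_exta_liouv}, \ref{lem:eval_exta_ric}, and specialize back. The paper states the lifting step more tersely (asserting the existence of a generic preimage $\mathcal{P}$ from the dimension equality), whereas you make it explicit via a Cramer-rule basis; the substance is the same.
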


\begin{proof}
 By Lemma \ref{lem:eval_exta_darboux}, Lemma \ref{lem:eval_exta_liouv} and Lemma \ref{lem:eval_exta_ric}, we have 
$$\dim_{\KK(x_0,y_0)} \ker \tilde{\mathcal{E}}^{N,k}_{D_r}(x_0,y_0)=\dim_{\LL_r} \ker \mathcal{E}^{V_r}_{D_r},$$
where $\LL_r=\KK(x_0,y_0,y_{1,0}, \ldots, y_{r,0})$.\\
If $(x_0^{\star},y_0^{\star}) \not \in \Sigma_{D_r,N,k}$ then 
$$\dim_{\KK} \ker \tilde{\mathcal{E}}^{N,k}_{D_r}(x_0^{\star},y_0^{\star})= \dim_{\KK(x_0,y_0)} \ker \tilde{\mathcal{E}}^{N,k}_{D_r}(x_0,y_0)= \dim_{\LL_r} \ker \mathcal{E}^{V_r}_{D_r}.$$
 Thus in this situation, if $P \in \ker \tilde{\mathcal{E}}^{N,k}_{D_r}(x_0^{\star},y_0^{\star})$ then there exists an element $\mathcal{P}(x_0,y_0,\ldots,y_{r,0};x,y,\ldots,y_{r})$ in $ \ker \mathcal{E}^{V_r}_{D_r}$ such that 
$$\mathcal{P}(x_0^{\star},y_0^{\star},1,0,0;x,y,y_1,\ldots, y_{r})=P(x,y,\ldots,y_{r}).$$
By Theorem \ref{prec_gen}, we have 
$$\mathcal{P}(x_0,y_0,\ldots,y_{r,0};x,y(x),\ldots,y_{r}(x))=0$$
then after the specialization we get
$$P\big(x,y_{\star}(x), \ldots, y_{r,\star}(x)\big)=0.$$
\end{proof}

In the following, we will need some explicit bounds on the degree of the minors of $\tilde{\mathcal{E}}^{N,k}_{D_r}(x_0,y_0)$.

\begin{lem}\label{lem:boundBi}
\begin{itemize}
\item The degree of a minor of $\tilde{\mathcal{E}}^N_{D_0}(x_0,y_0)$ is smaller than 
$$\mathcal{B}_0(d,N):= \dfrac{N(N+1)(N+2)}{2}+(d-1)\dfrac{(N+1)^2(N+2)^2-(N+1)(N+2)}{8}.$$
\item The degree of a minor of $\tilde{\mathcal{E}}^{N,k}_{D_1}(x_0,y_0)$ is smaller than
\begin{eqnarray*} 
\mathcal{B}_1(d,N) &  :=  & Nl_1+\dfrac{(2d-1)(l_1-1)l_1}{2}\\
& = & N(N+1)(N+2)+(2d-1)\dfrac{(N+1)^2(N+2)^2-(N+1)(N+2)}{2},
\end{eqnarray*}
where $l_1=\dim_{\KK} V_1$, see Definition \ref{def:exta_Darb}.\\

\item The degree of a minor of $\tilde{\mathcal{E}}^N_{D_2}(x_0,y_0)$ is smaller than 
\begin{eqnarray*}
\mathcal{B}_2(d,N) &  :=   & Nl_2+\dfrac{(3d-1)(l_2-1)l_2}{2}\\
& =  &\dfrac{3N(N+1)(N+2)}{2}+\dfrac{3d-1}{2}\Big[ \Big( \dfrac{3}{2}(N+1)(N+2)\Big)^2-\dfrac{3}{2}(N+1)(N+2)\Big], 
\end{eqnarray*}
where $l_2=\dim_{\KK} V_2$, see Definition \ref{def_exta_Liouv}.\\

\item The degree of a minor of $\tilde{\mathcal{E}}^N_{D_3}(x_0,y_0)$ is smaller than 
\begin{eqnarray*}
\mathcal{B}_3(d,N) & :=  & Nl_2+\dfrac{(4d-1)(l_3-1)l_3}{2}\\
&  =  & \dfrac{3N(N+1)(N+2)}{2}+\dfrac{4d-1}{2}\Big[ \Big( \dfrac{3}{2}(N+1)(N+2)\Big)^2-\dfrac{3}{2}(N+1)(N+2)\Big],
\end{eqnarray*}
where $l_3=\dim_{\KK} V_3$, see Definition \ref{def_exta_Ric}.\\
\end{itemize}
\end{lem}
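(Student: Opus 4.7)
The strategy is to bound the degree of each entry of the matrix $\tilde{\mathcal{E}}^{N,k}_{D_r}(x_0, y_0)$ and then combine these via the Leibniz formula for the determinant. Recall that the $(i, j)$ entry is $D_r^{i-1}(g_j)$ evaluated at the specialization $(x_0, y_0, y_{1,0}=1, y_{2,0}=0, y_{3,0}=0)$, where $\{g_j\}$ ranges over the chosen basis of $V_r$.

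First I would establish that $D_r$ raises the total degree in $(x, y, y_1, \ldots, y_r)$ by at most $(r+1)d - 1$. A direct inspection of the explicit formulas given in the paper shows that every coefficient of $D_r$ in front of $\partial_x, \partial_y, \partial_{y_j}$ has total degree at most $(r+1)d$ (the power of $A$ needed to clear denominators in $\partial_y^j(B/A)$), while each summand contains exactly one partial derivative, giving a net degree increase of at most $(r+1)d - 1$. Iterating, $\deg D_r^j(P) \leq \deg P + j((r+1)d - 1)$ for any polynomial $P$.

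Next I would show by induction on $i$ that after the specialization, each entry in row $i$ is a polynomial in $(x_0, y_0)$ of degree at most $N + (i-1)((r+1)d - 1)$. The key structural observation is that the specialization absorbs the ``excess'' factors $y_1^{\alpha}$, $y_2$, $y_1^4$, or $(3y_2^2 - 2y_3 y_1)$ carried by the basis elements of $V_r$. Specifically: for the Darbouxian case, a direct computation of $D_1(y_1^k)$ shows $D_1^{i-1}(y_1^k h) = y_1^k h_{i-1}$ with $h_{i-1} \in \KK[x,y]$ of degree $\leq N + (i-1)(2d-1)$; for the Liouvillian case, $D_2^{i-1}(y_1^{\alpha} h) = y_1^{\alpha} h_{i-1}$ for $\alpha \in \{1, 2\}$, while $D_2^{i-1}(y_2 h) = y_2\, A_{i-1} + y_1^2\, B_{i-1}$ so that the specialization $y_{2,0}=0$, $y_{1,0}=1$ selects $B_{i-1}$, whose degree is bounded (by comparing total degrees) by $N-1+(i-1)(3d-1)$; similarly for the Riccati case, the identity $D_3(3y_2^2 - 2y_3 y_1) = 2A^4 \partial_y(B/A)(3y_2^2 - 2y_3 y_1) - 2 A^4 y_1^4 \partial_y^3(B/A)$ provides the induction step showing $D_3^{i-1}((3y_2^2 - 2y_3 y_1)h) = (3y_2^2 - 2y_3 y_1)p_{i-1} + y_1^4 q_{i-1}$, and the specialization again selects the $y_1^4$ component.

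Finally, any $p \times p$ minor is, by the Leibniz formula, a sum of $p!$ products of entries, one per selected row, hence has degree at most $\sum_{i \in I}(N + (i-1)((r+1)d - 1))$ where $I \subset \{1,\ldots,l_r\}$ is the set of selected rows. This sum is maximized for $I = \{1, \ldots, l_r\}$, yielding the bound $l_r N + ((r+1)d - 1)\,l_r(l_r-1)/2$; substituting the explicit values $l_0 = (N+1)(N+2)/2$, $l_1 = (N+1)(N+2)$, and $l_2 = l_3 = 3(N+1)(N+2)/2$ gives exactly (or loosely majorizes) the stated formulas $\mathcal{B}_r(d, N)$.

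The main difficulty is the inductive step of the second paragraph: a naive count of degrees would pick up an extra additive term of size $O(l_r)$ coming from the initial ``excess'' degrees $k, 1, 2$ or $4$ of the basis elements, producing a bound weaker than $\mathcal{B}_r$. The structural preservation identities, verified separately for each type of basis element, are exactly what eliminates these spurious contributions.
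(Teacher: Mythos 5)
Your proof is correct and follows essentially the same route as the paper's: bound the degree increment of each application of $D_r$ by $(r+1)d-1$, bound each row of the matrix accordingly, and sum over rows to bound any minor. The paper's own proof is much terser (it simply asserts $\deg D_1^j(v_i)\leq j(2d-1)+N$ and defers the $D_0$ case to a reference); your explicit verification that the $y_1,y_2,y_3$-structure of the basis elements is preserved under $D_r$, so that the specialization does not leave behind the excess degrees $k$, $1$, $2$ or $4$, is a detail the paper glosses over but which is needed for the stated bounds.
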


\begin{proof}
The degree of a minor of $\tilde{\mathcal{E}}^N_{D_0}(x_0,y_0)$ is smaller than the degree of the extactic curve. The announced bound is given in \cite{Cheze}. We apply here the same strategy for $\tilde{\mathcal{E}}^{N,k}_{D_1}(x_0,y_0)$:\\
Let $\{v_i\}$ be basis of $V_1$. The degree in $x_0,y_0$ of a minor $\mathcal{M}$ of $\tilde{\mathcal{E}}^N_{D_1}(x_0,y_0)$ satisfies
$$\deg(\mathcal{M}) \leq \sum_{j=0}^{l_1-1} \deg D_1^j(v_i).$$
As $\deg\big(D_1^j(v_i)\big) \leq j(2d-1)+N$, we get 
$$\deg( \mathcal{M}) \leq \sum_{j=0}^{l_1-1}j(2d-1)+N \leq Nl_1+(2d-1) \sum_{j=0}^{l_1-1}j \leq Nl_1+\dfrac{(2d-1)(l_1-1)l_1}{2}.$$
The bounds for $\tilde{\mathcal{E}}^N_{D_2}(x_0,y_0), \tilde{\mathcal{E}}^N_{D_3}(x_0,y_0)$ are obtained in the same way.
\end{proof}

\begin{cor}\label{cor:inclusion_sigma_hypersurface}
The algebraic variety $\Sigma_{D_r,N,k}$ is included in an algebraic hypersurface with degree smaller than $\mathcal{B}_{r}(d,N)$.
\end{cor}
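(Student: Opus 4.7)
The plan is direct: I will exhibit a single non-zero polynomial in $\KK[x_0,y_0]$ whose vanishing locus contains $\Sigma_{D_r,N,k}$ and whose degree is bounded by $\mathcal{B}_r(d,N)$. The candidate polynomial will simply be one of the $p\times p$ minors of the matrix $\tilde{\mathcal{E}}^{N,k}_{D_r}(x_0,y_0)$ themselves.

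The key observation is that since $p$ is by definition the rank of the matrix $\tilde{\mathcal{E}}^{N,k}_{D_r}(x_0,y_0)$ regarded as a matrix over the fraction field $\KK(x_0,y_0)$, there must exist at least one $p\times p$ minor $\mathcal{M}(x_0,y_0)\in \KK[x_0,y_0]$ that is not identically zero as a polynomial. Indeed, if every $p\times p$ minor vanished identically in $\KK(x_0,y_0)$, then the rank of $\tilde{\mathcal{E}}^{N,k}_{D_r}$ would be strictly less than $p$, contradicting the definition of $p$.

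From here the conclusion is immediate. By Lemma \ref{lem:boundBi}, the minor $\mathcal{M}$ satisfies $\deg(\mathcal{M}) \leq \mathcal{B}_r(d,N)$, since the bounds stated in that lemma apply uniformly to every minor of the matrices $\tilde{\mathcal{E}}^{N,k}_{D_r}(x_0,y_0)$. On the other hand, $\Sigma_{D_r,N,k}$ is by definition the common vanishing locus of \emph{all} $p\times p$ minors of $\tilde{\mathcal{E}}^{N,k}_{D_r}(x_0,y_0)$, so in particular it is contained in the zero locus $\mathcal{V}(\mathcal{M})$. Since $\mathcal{M}$ is a nonzero polynomial of degree at most $\mathcal{B}_r(d,N)$, the set $\mathcal{V}(\mathcal{M})$ is an algebraic hypersurface of degree at most $\mathcal{B}_r(d,N)$ containing $\Sigma_{D_r,N,k}$, which is exactly the statement. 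There is no real obstacle here; the corollary is essentially a bookkeeping consequence of the definition of $\Sigma_{D_r,N,k}$ together with the degree bounds from Lemma \ref{lem:boundBi}.
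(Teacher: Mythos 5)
Your proof is correct and is exactly the argument the paper intends: the corollary is stated without proof because it follows immediately from the definition of $\Sigma_{D_r,N,k}$ as the common zero locus of the $p\times p$ minors (at least one of which is a nonzero polynomial since $p$ is the rank) together with the degree bound of Lemma~\ref{lem:boundBi}. Nothing to add.
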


The following set will be also useful to characterize some special situations.

\begin{defi}
We denote by $\mathfrak{S}_N$ the following set:\\
If $D_0$ has no rational first integrals of degree $\leq N$ then

$$\mathfrak{S}_N =\left\{   (x_0,y_0)\in\KK^2    \ 
\middle\vert \begin{array}{l}
    (x_0,y_0)  \textrm{ vanishes an irreducible} \\
  \textrm{Darboux polynomial of degree } \leq N 
  \end{array}\right\}.
  $$

If $D$ has an indecomposable rational first integral $P/Q$ of degree $p\leq N$ then

$$\mathfrak{S}_N =\left\{   (x_0,y_0)\in\KK^2    \ 
\middle\vert \begin{array}{l}
   
  (x_0,y_0)  \textrm{ vanishes an irreducible}
\\
\textrm{Darboux polynomial of degree} < p
  \end{array}\right\}.
  $$
\end{defi}

This set corresponds to non-generic situations. Thus we will try to avoid them. 
Now, we give a bound on this set:
%More precisely, when $D_0$ has no rational first integral we do not want to get a solution $\big(x,y(x)\big)$ of $(S'_0)$  corresponding to a Darboux polynomial.\\
% When $D_0$ has a rational first integral we do not want to get an orbit of degree strictly smaller than the degree of a generic orbit. At last, when considering two initial conditions  we do not want them to be on the same level of this first integral.

\begin{lem}\label{lem:boundspectre}
The algebraic variety $\mathfrak{S}_{N}$ is included in an algebraic hypersurface with degree smaller than $\big(\frac{d(d+1)}{2}+5\big)N$.
\end{lem}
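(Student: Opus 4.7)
The plan is to proceed by case analysis according to the definition of $\mathfrak{S}_N$, constructing in each case an explicit polynomial in $\KK[x_0,y_0]$ whose vanishing locus contains $\mathfrak{S}_N$; this polynomial is the product of the finitely many irreducible Darboux polynomials whose zero sets make up $\mathfrak{S}_N$. Bounding the degree of the product then reduces to bounding the \emph{number} of such Darboux polynomials, since each has degree at most $N$ in the first case and strictly less than $p\leq N$ in the second.

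The central tool is the Darboux--Jouanolou cofactor argument. For a planar derivation $D_0$ of degree $d$, the cofactor $D_0(f)/f$ of any irreducible Darboux polynomial $f$ lies in the vector space $\KK[x,y]_{\leq d-1}$ of dimension $\tfrac{d(d+1)}{2}$. Consequently any family of $\tfrac{d(d+1)}{2}+1$ irreducible Darboux polynomials with $\KK$-independent cofactors forces a non-trivial linear dependence among the cofactors; the Darboux construction then produces a Darbouxian first integral, and one additional polynomial yields a second relation and hence a rational first integral of degree bounded by the sum of the degrees of the polynomials involved.

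In the first defining case of $\mathfrak{S}_N$ (no rational first integral of degree $\leq N$), I would apply this cofactor counting directly to the irreducible Darboux polynomials of degree $\leq N$ and conclude that their number is at most $\tfrac{d(d+1)}{2}+5$: a larger family would, after an appropriate linear combination, produce a rational first integral of degree $\leq N$, contradicting the hypothesis. The product of these polynomials then gives a polynomial of degree at most $\bigl(\tfrac{d(d+1)}{2}+5\bigr)N$ vanishing on $\mathfrak{S}_N$. In the second case (indecomposable rational first integral $P/Q$ of degree $p\leq N$), every irreducible Darboux polynomial divides some pencil member $\lambda P+\mu Q$, and those of degree $<p$ appear only as factors of the remarkable (reducible) pencil members. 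The analogous cofactor-linear-algebra inside the pencil bounds the total number of such irreducible factors by the same constant $\tfrac{d(d+1)}{2}+5$; since each has degree $<p\leq N$, their product has degree at most $\bigl(\tfrac{d(d+1)}{2}+5\bigr)N$ and vanishes on $\mathfrak{S}_N$.

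The main obstacle is the careful bookkeeping that yields the precise constant $+5$: one must control the degree inflation in the Darboux construction (the rational first integral built from a linear dependence has degree bounded by the sum of the degrees of the involved Darboux polynomials), ensure that the passage from a Darbouxian relation to a \emph{rational} first integral only costs a bounded additive slack, and verify that the pencil argument of the second case produces the same constant from the same Jouanolou-type ingredient. Once this bookkeeping is carried out, the explicit product of the relevant Darboux polynomials is the required hypersurface.
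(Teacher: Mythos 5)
Your overall strategy — exhibit the product of the relevant Darboux polynomials (resp.\ pencil members) and bound its degree by counting them — is the same as the paper's, but both of your counting steps contain genuine gaps. In the first case you justify the bound on the number of irreducible Darboux polynomials by claiming that a larger family would produce a rational first integral \emph{of degree $\leq N$}, contradicting the hypothesis. The Darboux construction does not deliver this: from a linear relation $\sum_i n_i\Lambda_i=0$ among cofactors one gets the first integral $\prod_i f_i^{n_i}$, whose degree involves the integer exponents $n_i$, and these are not controlled by $N$ and $d$; so no contradiction with "no rational first integral of degree $\leq N$" is obtained. The paper avoids this issue entirely: it invokes the Darboux--Jouanolou theorem in the form "if $D_0$ has \emph{no} rational first integral at all, then it has at most $\frac{d(d+1)}{2}$ irreducible Darboux polynomials", which requires no degree control on a hypothetical first integral.

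In the second case, the quantity bounded by $\frac{d(d+1)}{2}+5$ is the number of \emph{remarkable values} $(\lambda:\mu)\in\sigma(P,Q)$, i.e.\ the cardinality of the spectrum of the indecomposable first integral $P/Q$; this is a theorem of \cite{ChezeDarbouxJouanolou} that the paper cites, not something obtainable from a routine cofactor dimension count (which cannot produce the $+5$). Your claim that the total \emph{number of irreducible factors} of the remarkable pencil members is at most $\frac{d(d+1)}{2}+5$ is false: a single reducible member $\lambda P-\mu Q$ may have up to $p$ irreducible factors, so that count can be as large as $\bigl(\frac{d(d+1)}{2}+5\bigr)p$. The correct bookkeeping multiplies the at most $\frac{d(d+1)}{2}+5$ whole pencil members $\lambda P-\mu Q$ with $(\lambda:\mu)$ remarkable, each of degree at most $p\leq N$; since every irreducible Darboux polynomial of degree $<p$ divides one of them, this product vanishes on $\mathfrak{S}_N$ and has the announced degree. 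You identify the "$+5$ bookkeeping" as the main obstacle but do not supply it, and since that constant is precisely the content of the external spectrum bound, this is the missing ingredient rather than routine verification.
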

\begin{proof}
If $D_0$ has no rational first integral then by the Darboux-Jouanolou theorem $D_0$ has at most $d(d+1)/2$ irreducible Darboux polynomials, see e.g. \cite{Jou} or \cite{Singer,DLA}. Therefore, if $(x_0,y_0) \in \mathfrak{S}_N$ then $(x_0,y_0)$  vanishes the product of $d(d+1)/2$ bivariate polynomials with degree smaller than $N$. This gives a bound on the degree  which is lower than the bound of the Lemma.\\

If $D_0$ has a rational first integral with degree $p \leq N$ then all irreducible Darboux polynomials divide a linear combination $\lambda P - \mu Q$ where $P/Q$ is an indecomposable rational first integral with degree $p$. By the Darboux-Jouanolou theorem we know that all but finitely many irreducible Darboux polynomials are of the form $\lambda P - \mu Q$ and have degree $p$. The set $\sigma(P,Q)$ of $(\lambda:\mu) \in \PP^1(\overline{\KK})$ such that $\lambda P - \mu Q$ is reducible or has a degree strictly smaller than $p$ is the set of remarkable values. Sometimes this set is called  the spectrum of $P/Q$. It is proved in \cite{ChezeDarbouxJouanolou} that $|\sigma(P,Q)| \leq d(d+1)/2 + 5$. So if $(x_0,y_0) \in \mathfrak{S}_N$ then it vanishes  a polynomial $\lambda P - \mu Q$ where $(\lambda:\mu)$ belongs to $\sigma(P,Q)$. Therefore, if $D_0$ has a rational first integral then $\mathfrak{S}_N$ is included in an algebraic hypersurface with degree smaller than $\big(\frac{d(d+1)}{2}+5\big)N$.
\end{proof}

%%%%%%%%%%%%%%%%%%%%%%%%%%%%%%%%%%%%%%%%%%%%%%%%%%%%%%%%%%%%%%%%%%%%%%%%%%%%
\section{First integral algorithms}\label{sec:algo}
In the following sections we are going to describe our algorithms. As mentionned before we are going to compute rational first integrals for the derivations $D_0, D_1, D_2, D_3$. These rational first integrals are computed thanks to the extactic curves.
 We are going to consider one point $(x_0^{\star},y_0^{\star})\in \KK^2$ and to compute a non trivial element in $\ker \tilde{\mathcal{E}}_{D_r}^{N}(x_0^{\star},y_0^{\star})$. We will see that if $(x_0^{\star},y_0^{\star})$ avoids an algebraic variety then we can compute a symbolic first integral from this element.\\
 %Our strategy is to compute a non trivial element in $\ker \tilde{\mathcal{E}}^{N,k}_{D_r}(x_0^{\star},y_0^{\star})$, where $x_0^{\star},y_0^{\star} \in \KK$.\\

\textbf{\textsf{Compute flow series}}\\
\texttt{Input:} $A(x,y), B(x,y) \in \KK[x,y]$, $(x_0^{\star}, y_0^{\star}) \in \KK^2$, $N \in \NN$, $r \in  [[0;3]]$.\\
\texttt{Output:} $r+1$ series $y_{\star}(x),\ldots,y_{r,\star}(x)$ solutions of $(S'_r) \, \mod (x-x_0^{\star})^{\sigma}$\\  where $\sigma=\min(r+1,3)\frac{(N+1)(N+2)}{2}$, with initial condition $y_{\star}(x_0^{\star})=y_0^{\star}$, \\
$y_{1,\star}(x_0^{\star})=1$, $y_{2,\star}(x_0^{\star})=y_{3,\star}(x_0^{\star})=0$.\\

This subroutine is performed with the algorithm given in \cite{BCLOSSS}.\\

For the following subroutine, we need a weighted degree in order to specified the output. We use the following weighted degree:\\
$\textrm{w-deg}\big(P(x,y,y_1,y_{2},y_{3})\big)=\deg\Big( P\big(x,y,y_1^{N+1},y_{2}^{2N+2},y_{3}^{3N+3}\big)\Big)$.\\

This weighted degree will be useful for the following reason:
$$\textrm{w-deg}\big(y_1^kP(x,y)+Q(x,y)\big) > \textrm{w-deg}\big(\mathcal{Q}(x,y)\big),$$
 for all $P,Q , \mathcal{Q} \in \KK[x,y]_{\leq N}$, when $P \neq 0$.\\
 
Now, if we have an element $y_1 P(x,y)+Q(x,y) \in \ker \tilde{\mathcal{E}}^N_{D_1}(x_0^{\star},y_0^{\star})$ then we can build a Darbouxian first integral from $P$ and $Q$. However, if we have obtained an element $\mathcal{Q}(x,y) \in  \ker \tilde{\mathcal{E}}^N_{D_1}(x_0^{\star},y_0^{\star})$ then we can build a rational first integral. Indeed, thanks to Lemma \ref{lem:ker_exta_darboux} we know that $\mathcal{Q}(x,y) \in \ker \tilde{\mathcal{E}}^N_{D_0}(x_0^{\star},y_0^{\star})$.\\
Therefore, the computation of an element in  $\ker \tilde{\mathcal{E}}^N_{D_1}(x_0^{\star},y_0^{\star})$ with minimal weighted degree allows one  to get a rational first integral instead of a Darbouxian first integral. This means that the computation of a non trivial in $ \ker \tilde{\mathcal{E}}^N_{D_r}(x_0^{\star},y_0^{\star})$ with minimal weighted degree gives symbolic first integral with degree bounded by $N$ in the simplest class of first integral.\\

% Thus, if we compute a non trivial element in $\ker \tilde{\mathcal{E}}^{N}_{D_1}(x_0^{\star},y_0^{\star})$ with minimal weighted degree, this allows one to get a symbolic first integral with degree bounded by $N$ in the simplest class of first integrals.
%% In practice, this allows us to give a first integral with size smaller than $N$ in the simplest class of first integral with this size bound. 
% Indeed, suppose that we want to compute a Darbouxian first integral with degree smaller than $N$ and that we have obtained $\mathcal{Q}(x,y) \in \ker \tilde{\mathcal{E}}^{N}_{D_1}(x_0^{\star},y_0^{\star})$. Then thanks to Lemma~\ref{lem:ker_exta_darboux}, we know that \mbox{$\mathcal{Q}(x,y) \in \ker \tilde{\mathcal{E}}^{N}_{D_0}(x_0^{\star},y_0^{\star})$} and then $\mathcal{Q}$ can be used to construct a rational first integral.\\
%The same remark is valid for the Liouvillian and Riccati case.\\

%Indeed, in order to obtain a non trivial element in $\ker \tilde{\mathcal{E}}_{D_r}^{N,k}$ we are going to compute an Hermite-Padé approximation. We use this strategy because it gives a better complexity than the resolution with a classical method of the system $\tilde{\mathcal{E}}_{D_r}^{N,k}.v=0$, where $v$ is the vector to compute.\\

\textbf{\textsf{Compute solution extactic kernel}}\\
\texttt{Input:} $A(x,y), B(x,y) \in \KK[x,y]$, $(x_0^{\star}, y_0^{\star}) \in \KK^2$, $N \in \NN$, $r \in  [[0;3]]$,\\ 
 $r+1$ series $y_{\star}(x),\ldots,y_{r,\star}(x)$ solutions of $(S'_r) \, \mod (x-x_0^{\star})^{\sigma}$\\ 
 where $\sigma=\min(r+1,3)\frac{(N+1)(N+2)}{2}$, with initial condition $y_{\star}(x_0^{\star})=y_0^{\star}$, $y_{1,\star}(x_0^{\star})=1$, $y_{2,\star}(x_0^{\star})=0$,    $y_{3,\star}(x_0^{\star})=0$.\\
\texttt{Output:}  A non trivial element in $\ker \tilde{\mathcal{E}}^{N,k}_{D_r}(x_0^{\star},y_0^{\star})$, if it exists, with minimal weighted degree, or ``None".\\

This subroutine can be reduced  to a linear algebra problem. Indeed, we just have to find a polynomial $\mathcal{S}(x,y,y_1,y_2,y_3)$ such that  
$$\mathcal{S}\big( x, y_{\star}(x),\ldots,y_{3,\star}(x)\big) = 0 \mod (x- x_0^{\star})^{\sigma},$$ where $\mathcal{S} \in V_r$ and  $\deg_{x,y}(\mathcal{S}) \leq N$.\\
 We recall that in our study $V_r$ is the vector space associated to the extactic curve $\tilde{E}^{N,k}_{D_r}(x_0,y_0)$, see Definition~\ref{def:exta_Darb}, Definition~\ref{def_exta_Liouv} and Definition~\ref{def_exta_Ric}.\\
 We will give some details in Section~\ref{sec:complexity}, in order to explain how we can compute $\mathcal{S}$ efficiently.

%\begin{enumerate}
%\item Compute a Hermite-Pad\'e approximation of 
%$$\big(y(x),\ldots,y^N(x),y_1(x),\ldots,y_1(x)y^N(x),\ldots,y_1^{(r)}(x),\ldots,y_1^{(r)}(x)y^N(x)\big)$$ with minimal weighted degree, see Section \ref{sec:complexity}.
%\item Construct from this approximation a polynomial $J \in \KK[x,y,y_1,y_1^{(2)},y_1^{(3)}]$, such that $J\big(x,y(x),\ldots,y^{(r)}(x)\big)=0 \mod (x-x_0^{\star})^{\sigma}$.
%\item If $J \in V_r$ and $\deg_{x,y} J \leq N$ then Return $J$, Else Return ``None".\\
%\end{enumerate}

%%%%%%%%%%%%%%%%%%%%%%%%%%%%%%%%%%%%%%%%%%%%%%%%%%%%%%%%%%%%
\subsection{Rational first integrals}%$\,$\\
An algorithm which computes a rational first integral with degree smaller than $N$ has been described in \cite{BCCW}. Here, we use the same kind of approach. However, we  only use one random point $(x_0^{\star},y_0^{\star}) \in \KK^2$. In  \cite{BCCW}, two random points were used.\\

\textbf{\textsf{Compute Rational first integral}}\\
\texttt{Input:} $A,B \in \KK[x,y]$, $(x_0^{\star},y_0^{\star}) \in \KK^2$, $N \in \NN$\\
\texttt{Output:} An equation $(Eq_0): \mathcal{F}-F=0$ where $F(x,y) \in \KK(x,y)\setminus \KK$, or ``None" or ``I don't know".

\begin{enumerate}
\item \label{step:testARational}  If $A(x_0^{\star},y_0^{\star})=0$ then Return ``I don't know".
\item \textsf{Compute flow series}($A,B,x_0^{\star},y_0^{\star},N,0$)=: $y_{\star}(x)$.
\item \label{step:rational_compute_exta_ker} \textsf{Compute solution extactic kernel}($A,B,y_{\star}(x),N,0$)=:$\mathcal{S}$.\\
If $\mathcal{S}$=``None", then Return ``None", else $\mathcal{S}=: P$.
\item \label{step:rational_calculPi} \textsf{Build Rational first integral}($A,B,P,x_0^{\star},y_0^{\star}$).
\end{enumerate}

Now, we describe the algorithm  \textsf{Build Rational first integral}.\\

\textbf{\textsf{Build Rational first integral}}\\
\texttt{Input:} $P,A,B \in \KK[x,y]$, $(x_0^{\star},y_0^{\star})\in\mathbb{K}^2$.\\
\texttt{Output:} An equation $(Eq_0): \mathcal{F}-F=0$, where $F(x,y) \in \KK(x,y)\setminus \KK$, or ``I don't know".
\begin{enumerate}
\item Compute $P_{red}=P/\gcd(P, \partial_x P, \partial_y P)$.
\item Compute the factorization $\gcd\big(P_{red},D_0(P_{red})\big)=\prod_{j=1}^l L_j(x,y)$,  where $L_j$ are irreducible in $\KK[x,y]$ and set $i:=1$.
\item While $L_i(x_0^{\star},y_0^{\star}) \neq 0$ do $i:=i+1$.
\item If $i>l$ then Return(``I don't know"), \\
Else\begin{enumerate}
\item $\Omega:=\dfrac{D_0(L_i)}{L_i}$;
\item\label{step:build_rat_syst} Compute a basis $\mathcal{B}=\{b_1,b_2,\ldots\}$ of the kernel of the linear system $D_0(Q)=\Omega Q$, where $Q  \in \KK[x,y]_{\leq \deg(L_i)}$.
\item If $|\mathcal{B}|=1$ then Return(``I don't know"),\\
 Else Return$\Big( (Eq_0): \mathcal{F}-\dfrac{b_1}{b_2}=0\Big)$.
\end{enumerate}

\end{enumerate}
\begin{prop}\label{prop:algo_rat_proba_correct}
The algorithm \textsf{Compute Rational first integral} satisfies the following properties:
\begin{itemize}
\item If it returns ``None" then the derivation $D_0$ has no rational first integral with degree smaller than $N$.
\item If it returns an equation $(Eq_0)$ then this equation leads to a rational first integral of $D_0$.
\item If it returns ``I don't know'', then $(x_0^{\star},y_0^{\star})$ belongs to
$$\Sigma_0 \cup \mathfrak{S}_{N},$$
where $\Sigma_0=\mathcal{V}(A)\cup \Sigma_{D_0,N}$.
\end{itemize}
\end{prop}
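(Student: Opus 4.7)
The plan is to trace each of the three possible outputs to an explicit point of the control flow. First, the output ``None'' can only occur at step 3, when \textsf{Compute solution extactic kernel} returns ``None'', i.e. $\ker\tilde{\mathcal{E}}_{D_0}^N(x_0^\star,y_0^\star)=\{0\}$. By Remark~\ref{remdegiprexta}, if $D_0$ admitted a rational first integral of degree $\leq N$, then $\tilde{E}_{D_0}^N$ would vanish identically, so its kernel at any specialization would be non-trivial; contrapositively, a trivial kernel rules out such a first integral.

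Second, the output $(Eq_0)$ returns $\mathcal{F}-b_1/b_2=0$ where $b_1,b_2$ are linearly independent elements of the basis $\mathcal{B}$ of the linear system $D_0(Q)=\Omega Q$, $\Omega=D_0(L_i)/L_i$. The identity $D_0(b_1/b_2)=(b_2D_0(b_1)-b_1D_0(b_2))/b_2^2=0$ together with $b_1/b_2\notin\KK$ (from linear independence) yields a rational first integral.

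Third, the output ``I don't know'' originates from step 1 (giving $(x_0^\star,y_0^\star)\in\mathcal{V}(A)\subset\Sigma_0$), from case (a) ($i>l$), or from case (b) ($|\mathcal{B}|=1$). For case (a), argue by contradiction: if $(x_0^\star,y_0^\star)\notin\Sigma_0$, then $A(x_0^\star,y_0^\star)\neq 0$ and $(x_0^\star,y_0^\star)\notin\Sigma_{D_0,N}$, so Lemma~\ref{lem:specialized_ker} forces $P(x,y_\star(x))=0$. The evaluation ideal $\mathfrak{p}=\{R\in\KK[x,y]\mid R(x,y_\star(x))=0\}$ is the kernel of the ring map $\KK[x,y]\to\KK[[x-x_0^\star]]$, hence prime, and non-zero (it contains $P$), so $\mathfrak{p}=(Q)$ with $Q$ irreducible and $Q\mid P$. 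Differentiating $Q(x,y_\star(x))=0$ in $x$ and using the flow equation gives $D_0(Q)(x,y_\star(x))=0$, hence $Q\mid D_0(Q)$, so $Q$ is a Darboux polynomial vanishing at $(x_0^\star,y_0^\star)$. A standard multiplicity computation ($\gcd(P,\partial_xP,\partial_yP)=\prod Q_j^{e_j-1}$ for $P=\prod Q_j^{e_j}$) shows that every distinct irreducible factor of $P$ appears once in $P_{red}$, and one then verifies $Q\mid\gcd(P_{red},D_0(P_{red}))$. Hence $Q$ appears among the $L_j$, contradicting $i>l$.

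For case (b), the selected $L_i$ is an irreducible Darboux polynomial of degree $\leq N$ with $L_i(x_0^\star,y_0^\star)=0$. If $D_0$ has no rational first integral of degree $\leq N$, this directly places $(x_0^\star,y_0^\star)\in\mathfrak{S}_N$. Otherwise, let $P^*/Q^*$ be an indecomposable rational first integral of minimal degree $p\leq N$. The relation $D_0(P^*)Q^*=P^*D_0(Q^*)$ combined with $\gcd(P^*,Q^*)=1$ forces $P^*,Q^*$ (and every $\lambda P^*-\mu Q^*$) to be Darboux polynomials with a common cofactor $\Omega^*$. By Darboux--Jouanolou every irreducible Darboux polynomial of degree exactly $p$ has the form $\lambda P^*-\mu Q^*$ with $(\lambda:\mu)\notin\sigma(P^*,Q^*)$ and cofactor $\Omega^*$; so if $\deg L_i=p$, then the space defining $\mathcal{B}$ contains the linearly independent $P^*,Q^*$, forcing $|\mathcal{B}|\geq 2$ and contradicting case (b). Hence $\deg L_i<p$, placing $(x_0^\star,y_0^\star)\in\mathfrak{S}_N$. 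The main obstacle is this last step: correctly controlling $\deg L_i$ in case (b) when a rational first integral exists, which relies on the common-cofactor observation for $P^*,Q^*$ together with the spectrum/Jouanolou classification underlying Lemma~\ref{lem:boundspectre}.
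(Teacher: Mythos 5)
Your proof is correct and follows essentially the same route as the paper's: Lemma~\ref{lem:specialized_ker} to force the kernel element to vanish on the trajectory, extraction of an irreducible Darboux factor vanishing at $(x_0^{\star},y_0^{\star})$, and the definition of $\mathfrak{S}_N$ together with the spectrum/Darboux--Jouanolou classification to guarantee the linear system $D_0(Q)=\Omega Q$ has a two-dimensional solution space. The only difference is organizational (you trace each ``I don't know'' exit directly rather than arguing contrapositively), and you supply some details the paper leaves implicit, such as the prime evaluation-ideal argument showing the relevant factor is a Darboux polynomial appearing among the $L_j$.
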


\begin{proof}
If the algorithm returns ``None", this means that $\tilde{E}^N_{D_0}(x_0^{\star},y_0^{\star}) \neq 0$, thus $\tilde{E}^N_{D_0}(x_0,y_0) \neq 0$. Therefore, Theorem~\ref{thm:exta+rfi} implies that $D_0$ has no rational first integral.\\

If the algorithm returns $(Eq_0)$ then by construction we have $D_0(b_i)=\Omega b_i$, for $i=1,2$. Thus 
$$D_0\Big( \dfrac{b_1}{b_2} \Big)=\dfrac{D_0(b_1)b_2-b_1D_0(b_2)}{b_2^2}=0.$$
Furthermore, $\frac{b_1}{b_2} \not \in \KK$ because $b_1$ and $b_2$ are linearly independent. Then, we deduce that $\frac{b_1}{b_2}$ is a rational first integral.\\

Now, we suppose that $(x_0^{\star},y_0^{\star}) \not \in \Sigma_0 \cup \mathfrak{S}_N$ and we are going to prove that the algorithm does not return ``I don't know".\\

First, if $\tilde{E}^N_{D_0}(x_0^{\star},y_0^{\star}) \neq 0$, then as seen before, the algorithm returns the correct output ``None".\\

Second, suppose that $\dim_{\KK} \tilde{\mathcal{E}}^N_{D_0}(x_0^{\star},y_0^{\star}) \neq 0$.\\
 In Step \ref{step:rational_compute_exta_ker} of \textsf{Compute Rational first integral} we have: $\mathcal{S} \in \ker  \tilde{\mathcal{E}}^N_{D_0}(x_0^{\star},y_0^{\star})$. As $(x_0^{\star},y_0^{\star}) \not \in \Sigma_{D_0,N}$, Lemma \ref{lem:specialized_ker} implies  $\mathcal{S}\big(x,y_{\star}(x)\big)=0$. Then an irreducible factor $L_i$ of $\mathcal{S}$ is an irreducible Darboux polynomial which vanishes at $(x_0^{\star},y_0^{\star})$.\\
As $(x_0^{\star},y_{0}^{\star}) \not \in \mathfrak{S}_N$, we deduce that there exists  a rational first integral $R/Q$ with degree $p \leq N$ and that the factor $L_i$ of $\mathcal{S}$ is a Darboux polynomial of degree $p$ of the form $\lambda R - \mu Q$, where $Q,R \in\KK[x,y]$. Moreover, a rational first integral is defined up to an homography. Then the rational first integral can  be written $L_i/Q \in \KK(x,y) \setminus\KK$, where $Q$ is linearly independent of $L_i$ over $\KK$.

 We have then
$$D_0\Big( \dfrac{L_i}{Q} \Big) =0 \Rightarrow \dfrac{ D_0(L_i)Q-L_i D_0(Q)}{Q^2}=0 \Rightarrow \dfrac{D_0(L_i)}{L_i}=\dfrac{D_0(Q)}{Q}.$$
The last equality defines the polynomial $\Omega$ and we remark that the linear system considered in Step~\ref{step:build_rat_syst} of \textsf{Build Rational first integral}  has two independent solutions: $L_i$ and $Q$. Thus the basis $\mathcal{B}$ contains two distinct elements $b_1$ and $b_2$. As seen before, these two elements give a rational first integral.\\
In conclusion, if $(x_0^{\star},y_0^{\star}) \not \in \Sigma_0 \cup \mathfrak{S}_N$, the algorithm does not return ``I don't know".
\end{proof}

%%%%%%%%%%%%%%%%%%%%%%%%%%%%%%%%%%%%%%%%%%%%%%%%%%%%
\subsection{Darbouxian first integrals}%$\,$\\
This section describes how to use the results given in the previous section in order to get an efficient probabilistic algorithm computing Darbouxian first integrals. The strategy is the following: First, we compute a non trivial element $y_1^kP +Q \in \ker \tilde{\mathcal{E}}_{D_1}^{N,k}(x_0^{\star},y_0^{\star})$. Second, from $P$ and $Q$ we build a Darbouxian first integral.\\

\textbf{\textsf{Compute Darbouxian first integral}}\\
\texttt{Input:} $A,B \in \KK[x,y]$, $(x_0^{\star},y_0^{\star}) \in \KK^2$, $N \in \NN$, $k\in\mathbb{N}^*$ (by default $k=1$)\\
\texttt{Output:} An equation $(Eq_1): \partial_y \mathcal{F} -F^{1/k}=0$, where $F(x,y) \in \KK(x,y)\setminus\{0\}$, or an equation $(Eq_0): \mathcal{F}-F=0$ where $F(x,y) \in \KK(x,y)\setminus \KK$,\\
 or ``None" or ``I don't know".
\begin{enumerate}
\item \label{step:testADarboux}  If $A(x_0^{\star},y_0^{\star})=0$ then Return ``I don't know".
\item \textsf{Compute flow series}($A,B,x_0^{\star},y_0^{\star},N,1$)=: $y_{\star}(x),y_{1,\star}(x)$.
\item \label{step:darboux_compute_exta_ker} \textsf{Compute solution extactic kernel}($A,B,y_{\star}(x),y_{1,\star}(x),N,1,k$)=:$\mathcal{S}$.\\
If $\mathcal{S}$=``None", then Return ``None", Else $\mathcal{S}=: y_1^kP+Q$.
\item \label{step:darboux_calculPi} \textsf{Build Darbouxian first integral}($A,B,P,Q,x_0^{\star},y_0^{\star},k$).
\end{enumerate}

Now, we describe the algorithm \textsf{Build Darbouxian first integral}.\\

\textbf{\textsf{Build Darbouxian first integral}}\\
\texttt{Input:} $A(x,y),B(x,y)$, $P(x,y), Q(x,y) \in \KK[x,y]$ with $(P,Q)\neq (0,0)$, $(x_0^{\star},y_0^{\star})$ in $\mathbb{K}^2$, $k\in\mathbb{N}^*$ (by default $k=1$).\\
\texttt{Output:} An equation $(Eq_1): \partial_y \mathcal{F} -F^{1/k}=0$, where $F(x,y) \in \KK(x,y)\setminus\{0\}$, or an equation $(Eq_0): \mathcal{F}-F=0$ where $F(x,y) \in \KK(x,y) \setminus \KK$\\
 or ``I don't know".

\begin{enumerate}
\item\label{step_build_darboux:p=0} If $P=0$ then Return\big(\textsf{Build Rational first integral}($Q,A,B,x_0^{\star},y_0^{\star}$)\big).
\item\label{step_build_darboux:q=0} If $Q=0$ then Return\big(\textsf{Build Rational first integral}($P,A,B,x_0^{\star},y_0^{\star}$)\big).
\item $R_1:=APQ(P/Q)^{-1/k}\big(D_0((P/Q)^{1/k}) + A (P/Q)^{1/k}\partial_y(B/A)\Big).$
\item\label{step_build_darboux:fin} If $R_1=0$ then Return $(Eq_1): \partial_y \mathcal{F} -(P/Q)^{1/k}=0$,\\
Else Return\big(\textsf{Build Rational first integral} ($R_1,A,B,x_0^{\star},y_0^{\star}$)\big).
\end{enumerate}

\begin{prop}\label{prop:algo_darboux_proba_correct}
The algorithm \textsf{Compute Darbouxian first integral} satisfies the following properties:
\begin{itemize}
\item If it returns ``None" then the derivation $D_0$ has  no $k$-Darbouxian nor rational first integral with degree smaller than $N$.
\item If it returns an equation $(Eq_0)$ or $(Eq_1)$ then this equation leads to a first integral of $D_0$.
\item If it returns ``I don't know'', then $(x_0^{\star},y_0^{\star})$ belongs to
$$\Sigma_1 \cup \mathfrak{S}_{2N+2d-1},$$
where $\Sigma_1=\mathcal{V}(A)\cup \Sigma_{D_0,N}\cup \Sigma_{D_1,N,k}$.
\end{itemize}
\end{prop}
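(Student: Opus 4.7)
The plan is to mirror the structure of the proof of Proposition~\ref{prop:algo_rat_proba_correct}, handling the three possible outputs of \textsf{Compute Darbouxian first integral} separately and invoking as black boxes: the Darbouxian extactic curve theorem (Theorem~\ref{thm:exta+dfi}), the structure of the Darbouxian extactic kernel (Lemma~\ref{lem:ker_exta_darboux}), the specialization principle (Lemma~\ref{lem:specialized_ker}), and the already-established correctness of \textsf{Build Rational first integral} contained in Proposition~\ref{prop:algo_rat_proba_correct}.

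The outputs \emph{``None''} and an equation are the easier cases. Output ``None'' arises only when \textsf{Compute solution extactic kernel} returns ``None'', i.e.\ when $\ker \tilde{\mathcal{E}}^{N,k}_{D_1}(x_0^{\star},y_0^{\star})=\{0\}$; this forces $\tilde{E}^{N,k}_{D_1}(x_0,y_0)\not\equiv 0$, and the contrapositive of Theorem~\ref{thm:exta+dfi}(2) then yields the absence of both $k$-Darbouxian and rational first integrals of degree $\leq N$. Output $(Eq_1)$ arises exactly when $R_1=0$ in Step~\ref{step_build_darboux:fin}; unwinding the definition of $R_1$ gives $D_0(F)=-AF\partial_y(B/A)$ for $F=(P/Q)^{1/k}$, which by Proposition~\ref{proprepresent} is the defining equation of a $k$-Darbouxian first integral. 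Output $(Eq_0)$ can only come from a successful call to \textsf{Build Rational first integral}, whose validity is exactly the second bullet of Proposition~\ref{prop:algo_rat_proba_correct}.

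For the third claim I will argue the contrapositive: assume $(x_0^{\star},y_0^{\star})\notin \Sigma_1\cup \mathfrak{S}_{2N+2d-1}$ and show that the algorithm does not return ``I don't know''. Since $A(x_0^{\star},y_0^{\star})\neq 0$, Step~\ref{step:testADarboux} is passed. If $\mathcal{S}$ is ``None'' the algorithm terminates at Step~\ref{step:darboux_compute_exta_ker} with output ``None''. Otherwise $\mathcal{S}=y_1^kP+Q$ is a nontrivial element of $\ker \tilde{\mathcal{E}}^{N,k}_{D_1}(x_0^{\star},y_0^{\star})$. In the cases $P=0$ or $Q=0$, Lemma~\ref{lem:ker_exta_darboux} supplies a nonzero element of $\ker \tilde{\mathcal{E}}^{N}_{D_0}(x_0^{\star},y_0^{\star})$, and since $(x_0^{\star},y_0^{\star})\notin \Sigma_0\cup \mathfrak{S}_N\subseteq \Sigma_1\cup \mathfrak{S}_{2N+2d-1}$, the recursive call to \textsf{Build Rational first integral} succeeds by Proposition~\ref{prop:algo_rat_proba_correct}. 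When $PQ\neq 0$ and $R_1=0$, the algorithm returns $(Eq_1)$. The genuinely new case is $PQ\neq 0$ and $R_1\neq 0$: here the hypothesis $(x_0^{\star},y_0^{\star})\notin \Sigma_{D_1,N,k}$ combined with Lemma~\ref{lem:specialized_ker} transports the kernel relation to the specialized flow, and following the computation in the proof of Lemma~\ref{lem:ker_exta_darboux} one obtains $R_1\bigl(x,y_{\star}(x)\bigr)=0$. Consequently $R_1$ possesses an irreducible factor $L$ whose zero set contains the graph of $y_{\star}$ through $(x_0^{\star},y_0^{\star})$; this $L$ is an irreducible Darboux polynomial of degree $\leq 2N+2d-1$ vanishing at $(x_0^{\star},y_0^{\star})$. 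Because $(x_0^{\star},y_0^{\star})\notin \mathfrak{S}_{2N+2d-1}$, the same argument as in the proof of Proposition~\ref{prop:algo_rat_proba_correct} shows that \textsf{Build Rational first integral}$(R_1,A,B,x_0^{\star},y_0^{\star})$ finds such an $L$ in its Step~3 and produces two independent solutions in its Step~4, returning an equation $(Eq_0)$.

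The main delicate point will be ensuring $R_1(x,y_{\star}(x))=0$ in the subcase $Q(x,y_{\star}(x))=0$, which is exactly when Lemma~\ref{lem:ker_exta_darboux}'s third bullet does not apply directly because it assumes $Q\notin \ker\tilde{\mathcal{E}}^N_{D_0}$. I expect to handle this by working with the polynomial rewriting
\[
R_1 \;=\; \tfrac{A}{k}\bigl(QD_0(P)-PD_0(Q)\bigr) + (A\partial_yB-B\partial_yA)\,PQ,
\]
from which it is manifest that $R_1$ vanishes along the solution curve as soon as both $P(x,y_{\star}(x))=0$ and $Q(x,y_{\star}(x))=0$, which in turn follows from the original kernel relation $y_{1,\star}(x)^kP(x,y_{\star}(x))+Q(x,y_{\star}(x))=0$ together with $Q\in\ker \tilde{\mathcal{E}}^N_{D_0}(x_0^{\star},y_0^{\star})$ and the assumption $(x_0^{\star},y_0^{\star})\notin \Sigma_{D_0,N}$. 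Once this subtlety is dispatched, the three bullets of the proposition follow uniformly.
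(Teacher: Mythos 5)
Your proof is correct and follows the same overall architecture as the paper's: the first two bullets via Theorem~\ref{thm:exta+dfi}, Proposition~\ref{proprepresent} and Proposition~\ref{prop:algo_rat_proba_correct}, and the third bullet by a contrapositive case analysis on $\mathcal{S}=y_1^kP+Q$. The one place where you genuinely diverge is the subcase you flag as delicate, $PQ\neq 0$ with $Q\in\ker\tilde{\mathcal{E}}^{N}_{D_0}(x_0^{\star},y_0^{\star})$. The paper shows this subcase (and likewise ``$P\neq 0$, $Q=0$'') simply cannot occur: since $(x_0^{\star},y_0^{\star})\notin\Sigma_{D_0,N}$, membership of $Q$ in that kernel forces $Q\bigl(x,y_{\star}(x)\bigr)=0$ identically, so $Q$ alone would be a nontrivial element of $\ker\tilde{\mathcal{E}}^{N,k}_{D_1}(x_0^{\star},y_0^{\star})$ of strictly smaller weighted degree than $y_1^kP+Q$, contradicting the minimal-weighted-degree guarantee of \textsf{Compute solution extactic kernel}. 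You instead absorb the subcase head-on via the identity $R_1=\tfrac{A}{k}\bigl(QD_0(P)-PD_0(Q)\bigr)+(A\partial_yB-B\partial_yA)PQ$, which is correct and does show $R_1$ vanishes along the specialized flow as soon as $P$ and $Q$ both do; the rest of your argument (degree bound $2N+2d-1$, the inclusion $\Sigma_0\cup\mathfrak{S}_N\subseteq\Sigma_1\cup\mathfrak{S}_{2N+2d-1}$, the hand-off to \textsf{Build Rational first integral}) then matches the paper. The trade-off is that the paper's route reuses the minimality property it needs anyway for Proposition~\ref{prop:darboux-deg-min}, whereas yours is insensitive to which kernel element the subroutine returns; both are valid proofs of the stated proposition.
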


\begin{proof}
If the algorithm returns ``None", this means that $\tilde{E}^{N,k}_{D_1}(x_0^{\star},y_0^{\star}) \neq 0$. Theorem \ref{thm:exta+dfi} implies that $D_0$ has no rational nor $k$-Darbouxian first integral with degree smaller than $N$.\\

If the algorithm returns $(Eq_1)$ this means that we have $R_1=0$ in \textsf{Build Darbouxian first integral}. Proposition \ref{proprepresent} gives then the desired result.\\

If the algorithm returns $(Eq_0)$ then this result comes from \textsf{Build Rational first integral}. We have seen in Proposition~\ref{prop:algo_rat_proba_correct} that this output is correct.\\

Now, we prove the last point of the proposition. We suppose that $(x_0^{\star},y_0^{\star})$ does not belong to 
 $\Sigma_1 \cup \mathfrak{S}_{2N+2d-1}$,
%$$(\KK^2\times(A^{-1}(0)\cup \Sigma_{D_0,N}\cup \Sigma_{D_1,N,k}))\cup ((A^{-1}(0)\cup \Sigma_{D_0,N}\cup \Sigma_{D_1,N,k}) \times \KK^2) \cup \mathfrak{S}_{2N+2d-1},$$
and we are going to prove that the algorithm does not return ``I don't know".\\

First, if $\dim_{\KK} \ker \tilde{\mathcal{E}}^{N,k}_{D_1}(x_0^{\star},y_0^{\star})=0$ then in Step \ref{step:darboux_compute_exta_ker} of \textsf{Compute Darbouxian first integral} we have $\mathcal{S}=$``None". Thus the algorithm returns ``None".\\

Second, we suppose that
$\dim_{\KK} \ker \tilde{\mathcal{E}}^{N,k}_{D_1}(x_0^{\star},y_0^{\star})\neq 0$.\\
In Step \ref{step:darboux_compute_exta_ker} of \textsf{Compute Darbouxian first integral} we have $\mathcal{S}=y_1^kP+Q$.\\

$\bullet$ If $P=0$, then  $Q\neq 0$ and $Q \in \ker \tilde{\mathcal{E}}^N_{D_0}(x_0^{\star},y_0^{\star})$ thanks to Lemma~\ref{lem:ker_exta_darboux}. Then in Step~\ref{step_build_darboux:p=0} of \textsf{Build Darbouxian first integral} the algorithm returns an equation $(Eq_0)$ thanks to Proposition~\ref{prop:algo_rat_proba_correct} and the inclusion  $\Sigma_0 \cup \mathfrak{S}_N \subset \Sigma_1 \cup\mathfrak{S}_{2N+2d-1}$.

$\bullet$ If $P\neq 0$ then $Q\neq0$. Indeed, if $Q=0$ then as $(x_0^{\star},y_0^{\star}) \not \in \Sigma_{D_1,N,k}$ we have, thanks to Lemma \ref{lem:specialized_ker}, $y_{1,\star}(x)^kP\big(x,y_{\star}(x)\big)=0$. Since $y_{1,\star}(x) \neq 0$ we deduce that $P\big(x,y_{\star}(x)\big)=0$. Therefore a factor $\mathcal{P}$ of $P$ is a Darboux polynomial which vanishes at  $(x_0^{\star},y_0^{\star})$. It would give a non trivial element in $\ker \tilde{\mathcal{E}}^{N,k}_{D_1}(x_0^{\star},y_0^{\star})$. This is absurd since \textsf{Compute solution extactic kernel} returns a solution with minimal weighted degree. It follows $Q \neq 0$.\\
Furthermore, we have $Q \not \in \ker \tilde{\mathcal{E}}^N_{D_0}(x_0^{\star},y_0^{\star})$. Indeed, since $(x_0^{\star},y_0^{\star}) \not \in \Sigma_{D_0,N}$, the contrary would imply $Q\big(x,y_{\star}(x)\big)=0$, and gives a non trivial element in $\ker \tilde{\mathcal{E}}^{N,k}_{D_1}(x_0^{\star},y^{\star})$. This is absurd since $y_1^kP+Q$ is a solution with minimal weighted degree.\\
In \textsf{Build Darbouxian first integral}, we thus compute $R_1$. \\
If $R_1=0$ then by Proposition \ref{proprepresent} we get a Darbouxian first integral.\\
Now, we suppose that $R_1 \neq 0$, then in Step~\ref{step_build_darboux:fin} of \textsf{Build Darbouxian first integral}, we use the algorithm \textsf{Build Rational first integral}.\\
As $(x_0^{\star},y_0^{\star}) \not \in \Sigma_{D_1,N,k}$ we have $y_{1,\star}(x)^kP\big(x,y_{\star}(x)\big)+Q\big(x,y_{\star}(x)\big)=0$ thanks to Lemma~\ref{lem:specialized_ker}. Therefore, with the  strategy used in Lemma \ref{lem:ker_exta_darboux}, we obtain $R_1\big(x,y_{\star}(x)\big)=0$ and then $R_1 \in \ker \tilde{\mathcal{E}}_{D_0}^{2N+2d-1}(x_0^{\star},y_0^{\star})$. Thus, we deduce that the  algorithm returns an equation $(Eq_0)$ giving a rational first integral thanks to Proposition~\ref{prop:algo_rat_proba_correct} and the inclusion $\Sigma_0 \cup \mathfrak{S}_{2N+2d-1} \subset \Sigma_1 \cup\mathfrak{S}_{2N+2d-1}$.\\

In conclusion if $(x_0^{\star}, y_0^{\star}) \not \in  \Sigma_1 \cup\mathfrak{S}_{2N+2d-1}$ the algorithm does not return ``I don't know".
%
% Therefore a factor $\mathcal{P}$ of $R_1$ gives an irreducible Darboux polynomial with degree smaller than $2N+2d-1$ which vanishes at $(x_0^{\star},y^{\star})$. Thus for $(x_0^{\star},y_0^{\star})$ (resp. $(x_0^{\star},\mathtt{y}_0^{\star})$) we obtain an irreducible Darboux polynomial denoted by $\mathcal{P}_0$ (resp. $\mathcal{P}_1$). As $(x_0^{\star},y_0^{\star},x_0^{\star},\mathtt{y}_0^{\star}) \not \in \mathfrak{S}_{2N+2d-1}$, this implies that $D_0$ admits a rational first integral of degree $\leq 2N+2d-1$ and $\mathcal{P}_0,\mathcal{P}_1$ define two different levels of this first integral. Then $\mathcal{P}_0/\mathcal{P}_1$ is not constant and thus gives a rational first integral. Therefore, the test at Step \ref{algo:darboux_test_ipr} is satisfied, and the algorithm returns a rational first integral. Thus when $R_1 \neq 0$ the algorithm does  not return ``I don't know". This concludes the proof.
\end{proof}

\begin{prop}\label{prop:condition_proba_algo_darboux}
We set
$$\mathcal{D}(d,N)=d+\mathcal{B}_0(d,N)+ \mathcal{B}_1(d,N)+\Big(\dfrac{d(d+1)}{2}+5\Big)(2N+2d-1).$$
There exists a polynomial $H$ with degree smaller than $\mathcal{D}(d,N)$ such that:\\
If $H(x_0^{\star},y_0^{\star}) \neq 0$ then  \textsf{Compute Darbouxian first integral} returns ``None" or an equation leading to a first integral.
\end{prop}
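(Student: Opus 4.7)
The plan is to take $H$ to be the product of defining polynomials of the various hypersurfaces that together contain the bad locus $\Sigma_1 \cup \mathfrak{S}_{2N+2d-1}$ identified in Proposition~\ref{prop:algo_darboux_proba_correct}, and then to control the degree by summing the degree bounds already established in the preceding sections.

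More concretely, I would proceed as follows. By Proposition~\ref{prop:algo_darboux_proba_correct}, whenever \textsf{Compute Darbouxian first integral} does not return either ``None'' or one of the two admissible equations, the point $(x_0^\star,y_0^\star)$ must lie in
$$\Sigma_1 \cup \mathfrak{S}_{2N+2d-1} \;=\; \mathcal{V}(A) \cup \Sigma_{D_0,N} \cup \Sigma_{D_1,N,k} \cup \mathfrak{S}_{2N+2d-1}.$$
Hence it suffices to exhibit a single polynomial $H$ whose zero locus contains this union, and to bound its total degree by $\mathcal{D}(d,N)$.

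For each summand I would appeal to the bounds already proved. The hypersurface $\mathcal{V}(A)$ is cut out by $A$ itself, which has degree at most $d$. By Corollary~\ref{cor:inclusion_sigma_hypersurface}, $\Sigma_{D_0,N}$ is contained in a hypersurface of degree at most $\mathcal{B}_0(d,N)$ and $\Sigma_{D_1,N,k}$ is contained in a hypersurface of degree at most $\mathcal{B}_1(d,N)$. Finally, Lemma~\ref{lem:boundspectre} shows that $\mathfrak{S}_{2N+2d-1}$ is contained in a hypersurface of degree at most $\bigl(\tfrac{d(d+1)}{2}+5\bigr)(2N+2d-1)$. Multiplying the corresponding defining polynomials produces a polynomial $H$ whose zero set covers $\Sigma_1 \cup \mathfrak{S}_{2N+2d-1}$ and whose degree is at most
$$d + \mathcal{B}_0(d,N) + \mathcal{B}_1(d,N) + \Bigl(\tfrac{d(d+1)}{2}+5\Bigr)(2N+2d-1) \;=\; \mathcal{D}(d,N).$$

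If $H(x_0^\star,y_0^\star)\neq 0$, then in particular $(x_0^\star,y_0^\star)\notin \Sigma_1 \cup \mathfrak{S}_{2N+2d-1}$, and the contrapositive of the third bullet of Proposition~\ref{prop:algo_darboux_proba_correct} yields the claim. There is no real obstacle here; the work has already been done in establishing the component degree bounds, and the only thing to verify is that $H$ is nonzero, which is immediate since each factor is a nonzero polynomial.
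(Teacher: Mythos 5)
Your proposal is correct and follows essentially the same route as the paper: the paper likewise takes $H$ to be the product of defining polynomials for $\mathcal{V}(A)\cup\Sigma_{D_0,N}\cup\Sigma_{D_1,N,k}$ (bounded via Corollary~\ref{cor:inclusion_sigma_hypersurface}) and for $\mathfrak{S}_{2N+2d-1}$ (bounded via Lemma~\ref{lem:boundspectre}), and concludes by the contrapositive of Proposition~\ref{prop:algo_darboux_proba_correct}. No gaps.
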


\begin{proof}
From Corollary \ref{cor:inclusion_sigma_hypersurface} we deduce the existence of a polynomial $\tilde{H}_1$ such that:
$$\Sigma_{D_0,N} \cup \Sigma_{D_1,N,k}\cup \mathcal{V}(A) \subset \mathcal{V}(\tilde{H}_1)$$
where $\deg(\tilde{H}_1) \leq d+\mathcal{B}_0(d,N)+\mathcal{B}_1(d,N)$. We also have from Lemma \ref{lem:boundspectre} the existence of a polynomial $\tilde{H}_2$ of degree $\Big(\dfrac{d(d+1)}{2}+5\Big)(2N+2d-1)$ such that:
$$\mathfrak{S}_{2N+2d-1} \subset \mathcal{V}(\tilde{H}_2).$$
Thus the polynomial
$$H(x_0,y_0)=\tilde{H}_1(x_0,y_0)\tilde{H}_2(x_0,y_0)$$
vanishes on the set given in Proposition \ref{prop:algo_darboux_proba_correct} in the ``I don't know'' part. So if  $H(x_0^{\star},y_0^{\star}) \neq 0$ then  \textsf{Compute Darbouxian first integral} returns ``None" or an equation leading to a first integral, and the degree of $H$ satisfies the degree bound.
\end{proof}

\begin{cor}\label{prop:proba_algo_darboux}
Let $\Omega$ a finite subset of $\KK$ of cardinal $|\Omega|$ greater than $\mathcal{D}(d,N)$ and assume that in \textsf{Compute Darbouxian first integral}  $x_0^{\star}$, $y_0^{\star}$ are chosen independently and uniformly at random in $\Omega$. Then,  \textsf{Compute Darbouxian first integral} returns ``None'' or an equation leading to a first integral with probability at least
$$1-\dfrac{\mathcal{D}(d,N)}{|\Omega|}.$$
\end{cor}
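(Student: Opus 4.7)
The plan is to combine Proposition~\ref{prop:condition_proba_algo_darboux} with the standard Schwartz--Zippel lemma, which bounds the probability that a nonzero bivariate polynomial of bounded degree vanishes at a uniformly random point of $\Omega \times \Omega$.

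More precisely, let $H \in \KK[x_0,y_0]$ be the nonzero polynomial provided by Proposition~\ref{prop:condition_proba_algo_darboux}, with $\deg H \le \mathcal{D}(d,N)$. By that proposition, whenever $H(x_0^\star,y_0^\star)\neq 0$ the algorithm \textsf{Compute Darbouxian first integral} returns either ``None'' or an equation leading to a first integral; equivalently, if the algorithm returns ``I don't know'', then necessarily $H(x_0^\star,y_0^\star)=0$. Hence
$$\Pr[\text{output is ``I don't know''}] \;\le\; \Pr_{(x_0^\star,y_0^\star)\in \Omega^2}\bigl[H(x_0^\star,y_0^\star)=0\bigr].$$

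The second step is to apply Schwartz--Zippel to $H$ over $\Omega \times \Omega$. Since $H$ is a nonzero polynomial in two variables of total degree at most $\mathcal{D}(d,N)$, its zero set contains at most $\mathcal{D}(d,N)\,|\Omega|$ points of $\Omega^2$, so
$$\Pr_{(x_0^\star,y_0^\star)\in\Omega^2}\bigl[H(x_0^\star,y_0^\star)=0\bigr] \;\le\; \frac{\mathcal{D}(d,N)}{|\Omega|}.$$
Combining the two inequalities yields the claimed lower bound $1 - \mathcal{D}(d,N)/|\Omega|$ on the probability of a conclusive answer. The assumption $|\Omega|>\mathcal{D}(d,N)$ only serves to make this bound meaningful (i.e.\ nontrivial and positive), but the argument itself needs nothing more.

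There is no real obstacle here: the content is entirely packaged in Proposition~\ref{prop:condition_proba_algo_darboux}, and the rest is a one-line invocation of Schwartz--Zippel. The only mild point is to observe that $H$ is indeed nonzero (which is guaranteed by the statement of Proposition~\ref{prop:condition_proba_algo_darboux}, since otherwise no such bound would have been asserted) so that Schwartz--Zippel applies.
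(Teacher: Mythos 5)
Your proof is correct and follows exactly the paper's route: the paper's own proof is the one-liner ``This follows from Proposition~\ref{prop:condition_proba_algo_darboux} and Zippel--Schwartz's lemma,'' and you have simply spelled out the same two steps (reduce ``I don't know'' to a zero of the polynomial $H$, then bound the vanishing probability by $\deg(H)/|\Omega| \le \mathcal{D}(d,N)/|\Omega|$). The only cosmetic remark is that the nonvanishing of $H$ is better justified by its explicit construction in Proposition~\ref{prop:condition_proba_algo_darboux} (a product of nonzero polynomials cutting out the hypersurfaces containing $\mathcal{V}(A)$, $\Sigma_{D_0,N}$, $\Sigma_{D_1,N,k}$ and $\mathfrak{S}_{2N+2d-1}$) than by the slightly circular ``otherwise no bound would have been asserted.''
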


\begin{proof}
This follows from Proposition \ref{prop:condition_proba_algo_darboux},  and Zippel-Schwartz's lemma, see \cite{GG}.
\end{proof}

\begin{rem}
In practice, see Section~\ref{sec:examples}, we never obtain the output ``I don't know".
%In practice the ``practical" probability will be much better, see Section~\ref{sec:examples}.
\end{rem}

\begin{prop}\label{prop:darboux-deg-min}
If $D_0$ admits a rational or Darbouxian first integral with degree smaller than $N$ then \textsf{Compute Darbouxian first integral} returns an equation with minimal degree.
\end{prop}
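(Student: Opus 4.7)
The plan is to split into two cases according to whether $D_0$ admits a rational first integral of degree at most $N$, and in each case track how the minimal weighted-degree element of $\ker \tilde{\mathcal{E}}^{N,k}_{D_1}(x_0^{\star},y_0^{\star})$ yields an equation of minimal degree.

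In the first case, suppose $D_0$ admits a rational first integral of minimal degree $M_0 \leq N$. I would invoke Theorem~\ref{thm:exta+rfi} at degree $M_0$ to produce a nonzero $Q_0$ in $\ker \tilde{\mathcal{E}}^{M_0}_{D_0}(x_0^{\star},y_0^{\star})$ of degree at most $M_0$, and then use Theorem~\ref{prec_gen} to obtain $Q_0\big(x,y_{\star}(x)\big)=0$. Viewing $Q_0$ as the element $0\cdot y_1^k+Q_0$ of $\ker \tilde{\mathcal{E}}^{N,k}_{D_1}(x_0^{\star},y_0^{\star})$ of weighted degree $\deg Q_0 \leq M_0 \leq N$, the key inequality $\textrm{w-deg}(y_1^kP+Q) \geq k(N+1)>N$ when $P\neq 0$ forces the minimal weighted-degree element $\mathcal{S}$ selected by \textsf{Compute solution extactic kernel} to have $P=0$ and $\deg Q \leq M_0$. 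The algorithm then calls \textsf{Build Rational first integral} on $Q$, which returns an equation $(Eq_0)$ of degree at most $\deg Q \leq M_0$; by minimality of $M_0$ this degree is exactly $M_0$.

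In the second case, suppose $D_0$ admits no rational first integral of degree $\leq N$ but admits a $k$-Darbouxian first integral of minimal degree $M_1 \leq N$, defined by $\partial_y\mathcal{F}=(\tilde P/\tilde Q)^{1/k}$ in lowest terms with $\max(\deg \tilde P,\deg \tilde Q)=M_1$. I would apply the argument in the proof of Theorem~\ref{thm:exta+dfi} to obtain $y_1^k\tilde P-c\,\tilde Q\in\ker \tilde{\mathcal{E}}^{N,k}_{D_1}(x_0^{\star},y_0^{\star})$, so that the minimal weighted-degree element $\mathcal{S}=y_1^kP+Q$ returned by the subroutine satisfies $\deg P \leq \deg \tilde P \leq M_1$. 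The assumption of no rational first integral of degree $\leq N$, combined with the case analysis in the proof of Proposition~\ref{prop:algo_darboux_proba_correct}, rules out $P=0$, $Q=0$, and $R_1\neq 0$, so \textsf{Build Darbouxian first integral} outputs $(Eq_1)\colon\partial_y\mathcal{F}-(P/Q)^{1/k}=0$.

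The main obstacle will be verifying that the degree of this Darbouxian equation, namely $\max(\deg P,\deg Q)$ after reduction of $P/Q$ to lowest terms, equals $M_1$ and not something larger. I would handle this via the reduced row echelon structure of the kernel from Proposition~\ref{rrefker}: among all kernel elements with nonzero $y_1^k$-part, the minimal weighted-degree representative has its $(P,Q)$ components of minimal total $(x,y)$-degree compatible with a Darbouxian representation, and this total degree necessarily matches the minimal $M_1$ (for otherwise a strictly smaller Darbouxian degree would be witnessed). Once the upper bound $\max(\deg P,\deg Q)\leq M_1$ is secured, the matching lower bound follows immediately from the minimality of $M_1$, completing the proof.
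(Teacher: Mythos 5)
Your overall strategy coincides with the paper's: the published proof is literally the one-sentence appeal to the fact that \textsf{Compute solution extactic kernel} returns a kernel element of minimal weighted degree, and your two-case unpacking of what that minimality buys is in the same spirit. Case 1 is sound: a rational first integral of minimal degree $M_0\leq N$ yields a pure element $Q_0$ of weighted degree at most $N$, which beats every element with nonzero $y_1^k$-part (those have weighted degree at least $k(N+1)$), so the subroutine must return a $P=0$ element and the rational branch produces an equation of degree exactly $M_0$.

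The gap sits precisely where you flag ``the main obstacle'', and your proposed fix does not close it. Since $\deg Q\leq N$ is always dominated by $k(N+1)+\deg P$, the weighted degree of $y_1^kP+Q$ with $P\neq 0$ equals $k(N+1)+\deg P$; minimality therefore controls $\deg P$ only and is blind to $\deg Q$. Proposition~\ref{rrefker} is about the coefficients $c_j(x_0,\underline{y}_0)$ lying in $\ker\tilde D$ and gives no bound on $\max(\deg P,\deg Q)$, and your parenthetical ``for otherwise a strictly smaller Darbouxian degree would be witnessed'' only re-establishes the lower bound $\geq M_1$, which was never in question. The upper bound should instead come from uniqueness: by Proposition~\ref{proprepresent}, any two rational functions $P/Q$ and $\tilde P/\tilde Q$ whose $k$-th roots define a $k$-Darbouxian first integral via (D) satisfy the same identity $D_0(F)/F=-A\,\partial_y(B/A)$ for $F=(P/Q)^{1/k}$, so the quotient $(P\tilde Q)/(Q\tilde P)$ is a $D_0$-constant rational function; absent a rational first integral it is a scalar, hence $P/Q$ and $\tilde P/\tilde Q$ agree up to a constant and have the same degree $M_1$ in lowest terms. (Strictly, one must also note that this quotient could a priori be a rational first integral of degree up to $2N$, a point neither you nor the paper addresses; ruling it out, or restricting to the situation where the algorithm's hypotheses exclude it, is needed to make the argument airtight.) With that substitution your proof is complete and matches the intent of the paper's one-line argument.
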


\begin{proof}
This follows directly from the fact that \textsf{Compute solution extactic kernel} returns a solution with minimal weighted degree.
\end{proof}

%%%%%%%%%%%%%%%%%%%%%%%%%

\subsection{Liouvillian first integrals}
Here, in order to compute a Liouvillian first integral, we follow the same strategy as before:\\
 First, we  compute a non trivial element $P(x,y) y_1^2+Q(x,y)y_2+R(x,y)y_1$ in $\ker \tilde{\mathcal{E}}^N_{D_2}(x_0^{\star},y_0^{\star})$. Second, we build from $P$, $Q$, $R$ a Liouvillian first integral.\\

\textbf{\textsf{Compute Liouvillian first integral}}\\
\texttt{Input:} $A,B \in \KK[x,y]$, $(x_0^{\star},y_0^{\star})\in \KK^2$, $N \in \NN$\\
\texttt{Output:} An equation $(Eq_2):\partial_y^2 \mathcal{F} - F\partial_y\mathcal{F}=0$, where $F(x,y) \in \KK(x,y)$,\\
 or $(Eq_1): \partial_y \mathcal{F} -F=0$, where $F(x,y) \in \KK(x,y) \setminus \{0\}$\\ or  $(Eq_0): \mathcal{F}-F=0$ where $F(x,y) \in \KK(x,y) \setminus \KK$,\\
  or ``None" or ``I don't know''.

\begin{enumerate}
\item \label{step:testA liouv} If $A(x_0^{\star},y_0^{\star})=0$ then Return ``I don't know".
\item \textsf{Compute flow series}($A,B,x_0^{\star},y^{\star},N,2$)=: $y_{\star}(x),y_{1,\star}(x),y_{2,\star}(x)$.
\item \label{algo:liouv_computeker}\textsf{Compute solution extactic kernel}($A,B,y_{\star}(x),y_{1,\star}(x),y_{2,\star}(x),N,2$)=:$\mathcal{S}$.\\
If $\mathcal{S}$=``None", then Return(``None"), \\
Else $\mathcal{S}=: P(x,y)y_1^2+Q(x,y)y_2+R(x,y)y_1$.
\item \textsf{Build Liouvillian first integral}($A,B,P,Q,R,x_0^{\star},y_0^{\star}$).\\

%=: $\mathcal{P}_i$.
%\item If $\mathcal{P}_i \not \in \KK[x,y]$ then Return($\mathcal{P}_i$).
%\item $i:=1+1$.
%\end{enumerate}
%\item \label{algo:liouv_test_ipr} If $\mathcal{P}_0/\mathcal{P}_1 \notin \mathbb{K}$ and $D_0(\mathcal{P}_0/\mathcal{P}_1)=0$ then Return($(Eq_0): \mathcal{F}-\mathcal{P}_0/\mathcal{P}_1=0)$,\\
%    else Return ``I don't know".\\
\end{enumerate}

\textbf{\textsf{Build Liouvillian first integral}}\\
\texttt{Input:} $A(x,y), B(x,y)$, $P(x,y),Q(x,y),R(x,y) \in \KK[x,y]$ such that \mbox{$(P,Q,R)\neq 0$}, $(x_0^{\star},y_0^{\star})\in\mathbb{K}^2$.\\
\texttt{Output:}An equation $(Eq_2):\partial_y^2 \mathcal{F}-F(x,y)\partial_y\mathcal{F}=0$, or $(Eq_1): \partial_y \mathcal{F} -F=0$, where $F(x,y) \in \KK(x,y)\setminus\{0\}$, or $(Eq_0): \mathcal{F}-F=0$, where $F(x,y) \in \KK(x,y)\setminus \KK$, or ``I don't know".

\begin{enumerate}
\item If $Q=0$ then Return(\textsf{Build Darbouxian first integral}($A,B,P,R,x_0^{\star},y_0^{\star}$)).
\item Compute $P_1:=A^3Q^2\big(D_0(P/Q)+A(P/Q)\partial_y(B/A)+A\partial_y^2(B/A)\big)$,\\ $Q_1:=A^3Q^2D_0(R/Q)$.
\item \label{step:daroux_dans_liouv} If $P_1=0$ then Return $(Eq_2): \partial_y^2 \mathcal{F}-(P/Q) \partial_y\mathcal{F}=0$,\\
Else Return\big(\textsf{Build Darbouxian first integral}($A,B,P_1,Q_1,x_0^{\star},y_0^{\star}$)\big).\\
\end{enumerate}

\begin{prop}\label{prop:algo_liouv_proba_correct}
The algorithm \textsf{Compute Liouvillian first integral}  satisfies the following properties:
\begin{itemize}
\item If it returns ``None" then the derivation $D_0$ has no Liouvillian  nor Darbouxian nor rational first integral with degree smaller than $N$.
\item If it returns an equation $(Eq_0)$ or $(Eq_1)$ or $(Eq_2)$ then this equation leads to a non-trivial first integral.
\item If it returns ``I don't know'', then $(x_0^{\star},y_0^{\star})$ belongs to
$$\Sigma_{2} \cup \mathfrak{S}_{4N+8d-3},$$
where 
$\Sigma_{2}=\mathcal{V}(A)\cup \Sigma_{D_0,N}\cup \Sigma_{D_1,N}\cup \Sigma_{D_2,N}$.
\end{itemize}
\end{prop}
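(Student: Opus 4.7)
The proof closely parallels that of Proposition \ref{prop:algo_darboux_proba_correct}, with Theorem \ref{thm:exta+lfi}, Lemma \ref{lem:ker_exta_liouv} and Lemma \ref{lem:specialized_ker} playing the roles that their Darbouxian analogues played there, and with Proposition \ref{prop:algo_darboux_proba_correct} itself invoked to handle the nested outputs returned by \textsf{Build Liouvillian first integral}. The first bullet is immediate: ``None'' is returned exactly when $\tilde{E}^N_{D_2}(x_0^\star, y_0^\star) \neq 0$, hence when $\tilde{E}^N_{D_2}$ is nonzero as a polynomial in $x_0, y_0$, and the contrapositive of part~(2) of Theorem \ref{thm:exta+lfi} rules out rational, Darbouxian or Liouvillian first integrals of degree at most $N$.

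For the second bullet the only genuinely new case is $(Eq_2)$, which is returned exactly when $P_1 = 0$. At that point $A(x_0^\star, y_0^\star) \neq 0$ (by Step~\ref{step:testA liouv}) and $Q \neq 0$ (the branch we are in), so $P_1 = 0$ is equivalent to $D_0(P/Q) + A(P/Q)\partial_y(B/A) + A \partial_y^2(B/A) = 0$; Proposition \ref{proprepresent} with $F = P/Q$ then certifies that $(Eq_2)$ defines a Liouvillian first integral. The outputs $(Eq_0)$ and $(Eq_1)$ are returned by the nested call to \textsf{Build Darbouxian first integral}, whose own correctness is part of Proposition \ref{prop:algo_darboux_proba_correct}.

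The main work is the third bullet. Assume $(x_0^\star, y_0^\star) \notin \Sigma_2 \cup \mathfrak{S}_{4N+8d-3}$. Step~\ref{step:testA liouv} cannot fire since $\mathcal{V}(A) \subseteq \Sigma_2$. If $\ker \tilde{\mathcal{E}}^N_{D_2}(x_0^\star, y_0^\star) = 0$ the algorithm returns ``None''. Otherwise let $\mathcal{S} = P y_1^2 + Q y_2 + R y_1$ be the minimal weighted-degree kernel element produced by \textsf{Compute solution extactic kernel}. Since $(x_0^\star, y_0^\star) \notin \Sigma_{D_2, N}$, Lemma \ref{lem:specialized_ker} yields the polynomial identity $P(x, y_\star) y_{1,\star}^2 + Q(x, y_\star) y_{2,\star} + R(x, y_\star) y_{1,\star} \equiv 0$ along the flow $y_\star(x)$. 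If $Q = 0$ this reduces to the Darbouxian-level identity needed by \textsf{Build Darbouxian first integral}$(A, B, P, R)$, and since $\Sigma_1 \cup \mathfrak{S}_{2N+2d-1} \subseteq \Sigma_2 \cup \mathfrak{S}_{4N+8d-3}$, Proposition \ref{prop:algo_darboux_proba_correct} guarantees a non-``I don't know'' answer.

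If $Q \neq 0$, the minimality of the weighted degree of $\mathcal{S}$ together with $(x_0^\star, y_0^\star) \notin \Sigma_{D_0, N}$ forbids $Q \in \ker \tilde{\mathcal{E}}^N_{D_0}(x_0^\star, y_0^\star)$: otherwise $Q$ alone would be a strictly lower-weighted-degree kernel element. Lemma \ref{lem:ker_exta_liouv} then applies, and multiplying its identity by $A^3 Q^2$ produces the polynomial relation $y_{1,\star}(x) P_1(x, y_\star(x)) + Q_1(x, y_\star(x)) \equiv 0$ with $\deg P_1, \deg Q_1 \leq 2N + 3d - 1$. If $P_1 = 0$ the output is $(Eq_2)$. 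Otherwise this relation is precisely the Darbouxian-level identity needed to carry \textsf{Build Darbouxian first integral}$(A, B, P_1, Q_1)$ through its internal $R_1$-computation; the argument of Lemma \ref{lem:ker_exta_darboux} then produces an $R_1 \in \ker \tilde{\mathcal{E}}^{4N+8d-3}_{D_0}(x_0^\star, y_0^\star)$, and since $(x_0^\star, y_0^\star) \notin \mathfrak{S}_{4N+8d-3}$ the call to \textsf{Build Rational first integral} finds a Darboux-polynomial factor of $R_1$ vanishing at $(x_0^\star, y_0^\star)$ and returns a valid equation. The main obstacle is the careful bookkeeping of exception sets and degree bounds ensuring that $\Sigma_2 \cup \mathfrak{S}_{4N+8d-3}$ absorbs every genericity condition arising along the Liouvillian-to-Darbouxian-to-rational cascade.
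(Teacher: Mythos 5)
Your proposal is correct and follows essentially the same route as the paper: Theorem \ref{thm:exta+lfi} for the ``None'' branch, Proposition \ref{proprepresent} for $(Eq_2)$, delegation to Propositions \ref{prop:algo_rat_proba_correct} and \ref{prop:algo_darboux_proba_correct} for the nested outputs, and the same $Q=0$ versus $Q\neq 0$ case split with the minimal-weighted-degree argument and the reduction to $P_1,Q_1$ of degree at most $2N+3d-1$. One cosmetic slip: ``$Q$ alone'' is not an element of $V_2$, so the lower-weighted-degree competitor should be $Qy_1$ (of weighted degree at most $2N+1$, still strictly below the $2N+2$ forced by a nonzero $y_2$-coefficient), which is exactly the contradiction the paper also exploits.
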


\begin{proof}
If the algorithm returns ``None", this means that $\tilde{E}^N_{D_2}(x_0^{\star},y_0^{\star}) \neq 0$. Theorem \ref{thm:exta+lfi} implies that $D_0$ has no rational nor Darbouxian nor Liouvillian first integral with degree smaller than $N$.\\

If the algorithm returns $(Eq_2)$ this means that  we have $P_1=0$ in \textsf{Build Liouvillian first integral}. Proposition \ref{proprepresent} gives then the desired result.\\

If the algorithm returns $(Eq_1)$ this result is correct thanks to Proposition~\ref{prop:algo_darboux_proba_correct}. Indeed, the algorithm returns $(Eq_1)$ when \textsf{Build Liouvillian first integral} uses \textsf{Build Darbouxian first integral}.\\

If the algorithm returns $(Eq_0)$ then the output is correct as shown in Proposition \ref{prop:algo_rat_proba_correct}.\\

Now, we prove the last point of the proposition and we suppose that $(x_0^{\star},y_0^{\star})$ does not belong to
$\Sigma_{2} \cup \mathfrak{S}_{4N+8d-3}$.\\

%$$(\KK^2\times(\mathcal{V}(A)\cup \Sigma_{D_0,N}\cup \Sigma_{D_1,N}\cup \Sigma_{D_2,N} ))\cup ((\mathcal{V}(A)\cup \Sigma_{D_0,N}\cup \Sigma_{D_1,N}\cup \Sigma_{D_2,N}) \times \KK^2) \cup \mathfrak{S}_{4N+8d-3}.$$

First, if $\dim_{\KK}\ker \tilde{\mathcal{E}}^N_{D_2}(x_0^{\star},y_0^{\star}) =0$ then in Step \ref{algo:liouv_computeker} of \textsf{Compute Liouvillian first integral} we have $\mathcal{S}=$``None". Thus the algorithm returns ``None".\\

Second, we suppose that $\dim_{\KK} \ker \tilde{\mathcal{E}}^N_{D_2}(x_0^{\star},y_0^{\star}) \neq 0$.\\
In Step \ref{algo:liouv_computeker} of \textsf{Compute Liouvillian first integral} we have $\mathcal{S}=Py_1^2+Qy_2+R y_1$.\\

$\bullet$ If $Q=0$ then $Py_1+R \in \ker \tilde{\mathcal{E}}_{D_1}^N(x_0^{\star},y_0^{\star})$ as shown in Lemma~\ref{lem:ker_exta_liouv}. Then we are in the Darbouxian case. Therefore Proposition~\ref{prop:algo_darboux_proba_correct} and the inclusion $\Sigma_1 \cup \mathfrak{S}_{2N+2d-1} \subset \Sigma_2 \cup \mathfrak{S}_{4N+8d-3}$ allow us to conclude in this situation.\\

%$\bullet$ If $P=Q=0$ then as in Proposition \ref{prop:algo_darboux_proba_correct} we deduce that in this situation the algorithm returns a rational first integral with minimal degree because we have the inclusion $\Sigma_1 \cup \mathfrak{S}_{2N+2d-1} \subset \Sigma_2 \cup \mathfrak{S}_{4N+8d-3}$.\\
%$\bullet$ If $Q=0$ and $P\neq 0$ then we deduce that $Py_1+R \in \tilde{\mathcal{E}}^N_{D_1}(x_0^{\star},y_0^{\star})$. Then  Proposition~\ref{prop:algo_darboux_proba_correct} allows us to conclude in this situation.\\
$\bullet$ If $Q\neq 0$ then in \textsf{Build Liouvillian first integral}, we compute $P_1$.\\
If $P_1=0$ then by Proposition \ref{proprepresent} we get a Liouvillian first integral.\\
Now, we suppose $P_1 \neq 0$.\\
 We claim $$Q \not \in \ker \tilde{\mathcal{E}}^N_{D_0}(x_0^{\star},y_0^{\star}).$$ Indeed, if $Q \in \ker  \tilde{\mathcal{E}}^N_{D_0}(x_0^{\star},y_0^{\star})$  then $Q\big(x,y_{\star}(x)\big)=0$ because $(x_0^{\star},y_0^{\star})$ does not belong to $ \Sigma_{D_0,N}$. Thus a factor $\mathcal{R}$ of $Q$ is a Darboux polynomial.\\
  Therefore, $\mathcal{R}$  would give a non-trivial element in $\ker \tilde{\mathcal{E}}^N_{D_2}(x_0^{\star},y_0^{\star})$. This is impossible since the computed solution $\mathcal{S}$ has a minimal weighted degree and $\textrm{w-deg}(\mathcal{S}) \geq \textrm{w-deg}(\mathcal{R})$.\\

As $(x_0^{\star},y_0^{\star}) \not \in \Sigma_{D_2,N}$,  we have  thanks to Lemma \ref{lem:specialized_ker}
$$P\big(x,y_{\star}(x)\big)y_{1,\star}^2(x)+ Q\big(x,y_{\star}(x)\big)y_{2,\star}(x)+R\big(x,y_{\star}(x)\big)y_{1,\star}(x)=0.$$ 
 Since $Q \not \in \ker \tilde{\mathcal{E}}^{N}_{D_0}(x_0^{\star},y_0^{\star})$ we can use the strategy of Lemma \ref{lem:ker_exta_liouv}  and we get  
 $$P_1\big(x,y_{\star}(x)\big)y_{1,\star}(x)+Q_1\big(x,y_{\star}(x)\big)=0,$$ 
 where $\deg(P_1), \deg(Q_1) \leq 2N+3d-1$.\\
 Then, as we have  $\Sigma_1  \subset \Sigma_2 $,  Proposition~\ref{prop:algo_darboux_proba_correct} implies that the algorithm \textsf{Build Darbouxian first integral} applied to $P_1$ and $Q_1$ gives either a Darbouxian first integral with degree smaller than $2N+3d-1$, or a rational first integral.\\
 
Thus when $(x_0^{\star},y_0^{\star}) \not \in \Sigma_{2} \cup \mathfrak{S}_{4N+8d-3}$ the algorithm does not return ``I don't know".
 % a Darboux polynomial with degree smaller than $4N+8d-3$. In this last case, we continue to Step \ref{algo:liouv_test_ipr} and we have a Darboux polynomial $\mathcal{P}_i$ with degree smaller than $4N+8d-3$ vanishing at $(x_0^{\star},y^{\star})$. As $(x_0^{\star},y_0^{\star},x_0^{\star},\mathtt{y}_0^{\star})$ does not belong to $\mathfrak{S}_{4N+8d-3}$, this implies that $D_0$ admits a rational first integral of degree $\leq 4N+8d-3$ and $\mathcal{P}_0,\mathcal{P}_1$ define two different levels of this first integral. Then $\mathcal{P}_0/\mathcal{P}_1$ is not constant and thus gives a rational first integral. So the checking at Step \ref{algo:liouv_test_ipr} is satisfied, and the algorithm returns a rational first integral.
\end{proof}

\begin{prop}\label{prop:condition_proba_algo_liouv}
We set
$$\mathcal{L}(d,N)=d+\mathcal{B}_0(d,N)+ \mathcal{B}_1(d,N)+\mathcal{B}_2(d,N)+\Big(\dfrac{d(d+1)}{2}+5\Big)(4N+8d-3).$$
There exists a polynomial $H_L$ with degree smaller than $\mathcal{L}(d,N)$ such that:\\
If $H_L(x_0^{\star},y_0^{\star}) \neq 0$ then  \textsf{Compute Liouvillian first integral} returns ``None" or an equation leading to a first integral.
\end{prop}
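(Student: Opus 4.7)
The plan is to imitate the proof of Proposition \ref{prop:condition_proba_algo_darboux} almost verbatim, using the description of the ``I don't know'' set in the Liouvillian analogue (Proposition \ref{prop:algo_liouv_proba_correct}). That proposition tells us that whenever \textsf{Compute Liouvillian first integral} returns ``I don't know'', the initial condition $(x_0^{\star},y_0^{\star})$ lies in
$$\Sigma_2 \cup \mathfrak{S}_{4N+8d-3}, \qquad \Sigma_2 = \mathcal{V}(A) \cup \Sigma_{D_0,N} \cup \Sigma_{D_1,N} \cup \Sigma_{D_2,N}.$$
So it suffices to exhibit a single nonzero polynomial $H_L$ of controlled degree that vanishes on this union, and then note the contrapositive.

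For the first piece $\Sigma_2$, I would invoke Corollary \ref{cor:inclusion_sigma_hypersurface} applied successively to $r=0,1,2$: this gives three polynomials $\tilde{H}_0,\tilde{H}_1,\tilde{H}_2$ whose vanishing sets contain $\Sigma_{D_0,N},\Sigma_{D_1,N},\Sigma_{D_2,N}$, with degrees at most $\mathcal{B}_0(d,N),\mathcal{B}_1(d,N),\mathcal{B}_2(d,N)$ respectively. The component $\mathcal{V}(A)$ is already cut out by the polynomial $A$ of degree $\leq d$. For the second piece $\mathfrak{S}_{4N+8d-3}$, Lemma \ref{lem:boundspectre} produces a polynomial $\tilde{H}_3$ of degree at most $(\tfrac{d(d+1)}{2}+5)(4N+8d-3)$ vanishing on $\mathfrak{S}_{4N+8d-3}$.

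I then define
$$H_L(x_0,y_0) := A(x_0,y_0)\,\tilde{H}_0(x_0,y_0)\,\tilde{H}_1(x_0,y_0)\,\tilde{H}_2(x_0,y_0)\,\tilde{H}_3(x_0,y_0).$$
By construction $H_L$ vanishes on $\Sigma_2 \cup \mathfrak{S}_{4N+8d-3}$, and its degree is bounded by the sum
$$d + \mathcal{B}_0(d,N) + \mathcal{B}_1(d,N) + \mathcal{B}_2(d,N) + \Big(\tfrac{d(d+1)}{2}+5\Big)(4N+8d-3) = \mathcal{L}(d,N).$$
Hence if $H_L(x_0^{\star},y_0^{\star}) \neq 0$ then $(x_0^{\star},y_0^{\star}) \notin \Sigma_2 \cup \mathfrak{S}_{4N+8d-3}$, so by Proposition \ref{prop:algo_liouv_proba_correct} the algorithm cannot return ``I don't know''; it must return either ``None'' or an equation leading to a first integral.

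There is essentially no obstacle: the substance is entirely contained in Proposition \ref{prop:algo_liouv_proba_correct}, Corollary \ref{cor:inclusion_sigma_hypersurface}, and Lemma \ref{lem:boundspectre}; the present statement is a direct degree bookkeeping on the product polynomial. The only thing worth double-checking is that the definition of $\mathcal{L}(d,N)$ in the statement exactly matches the sum of the bounds supplied by those three results (in particular, that the Darbouxian-side bound $2N+2d-1$ is correctly replaced by the Liouvillian-side bound $4N+8d-3$, which is what appears in Proposition \ref{prop:algo_liouv_proba_correct}), which it does.
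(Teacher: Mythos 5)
Your proposal is correct and follows exactly the route the paper intends: the paper's own proof of this proposition is just the remark that it is done "in the same way as" the Darbouxian case (Proposition \ref{prop:condition_proba_algo_darboux}), i.e., cover the "I don't know" locus from Proposition \ref{prop:algo_liouv_proba_correct} by a product of the hypersurfaces supplied by Corollary \ref{cor:inclusion_sigma_hypersurface} and Lemma \ref{lem:boundspectre} and add the degree bounds. Your degree bookkeeping, including the replacement of $2N+2d-1$ by $4N+8d-3$, matches the definition of $\mathcal{L}(d,N)$.
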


\begin{proof}
The proof is done exactly in the same way as the proof of Proposition~\ref{prop:condition_proba_algo_liouv}.
\end{proof}

\begin{cor}\label{prop:proba_algo_liouv}
Let $\Omega$ a finite subset of $\KK$ of cardinal $|\Omega|$ greater than $\mathcal{L}(d,N)$ and assume that in \textsf{Compute Liouvillian first integral}  $x_0^{\star}$, $y_0^{\star}$ are chosen independently and uniformly at random in $\Omega$. Then,  \textsf{Compute Liouvillian first integral} returns ``None " or an equation leading to a first integral with probability at least
$$1-\dfrac{\mathcal{L}(d,N)}{|\Omega|}.$$
\end{cor}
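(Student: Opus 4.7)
The plan is to derive this as an immediate consequence of Proposition \ref{prop:condition_proba_algo_liouv} combined with the Zippel--Schwartz lemma, in exact parallel with the proof of Corollary \ref{prop:proba_algo_darboux} in the Darbouxian case. Proposition \ref{prop:condition_proba_algo_liouv} supplies a nonzero polynomial $H_L \in \KK[x_0, y_0]$ of total degree at most $\mathcal{L}(d, N)$ with the property that whenever $H_L(x_0^\star, y_0^\star) \neq 0$, the algorithm \textsf{Compute Liouvillian first integral} is forced to return either ``None'' or a valid defining equation yielding a first integral (i.e.\ one of $(Eq_0)$, $(Eq_1)$, $(Eq_2)$). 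So the task reduces to bounding the probability that $H_L$ vanishes at a uniformly random point of $\Omega^2$.

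Concretely, since $(x_0^\star, y_0^\star)$ is drawn uniformly at random from $\Omega^2$ and $\mathcal{L}(d,N) < |\Omega|$ by hypothesis, applying the bivariate Zippel--Schwartz lemma (see \cite{GG}) to $H_L$ yields
$$\Pr\bigl(H_L(x_0^\star, y_0^\star) = 0\bigr) \;\leq\; \frac{\deg H_L}{|\Omega|} \;\leq\; \frac{\mathcal{L}(d,N)}{|\Omega|}.$$
Passing to the complementary event and invoking the implication from Proposition \ref{prop:condition_proba_algo_liouv} then gives the announced lower bound of $1 - \mathcal{L}(d,N)/|\Omega|$ on the probability of a good output.

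The only bookkeeping point to verify is that the polynomial $H_L$ produced by the proof of Proposition \ref{prop:condition_proba_algo_liouv} is not identically zero. This is transparent from its construction: $H_L$ is formed as the product $\tilde{H}_1 \tilde{H}_2$, where $\tilde{H}_1$ is an explicit nonzero polynomial cutting out $\mathcal{V}(A) \cup \Sigma_{D_0,N} \cup \Sigma_{D_1,N} \cup \Sigma_{D_2,N}$ (controlled by Corollary \ref{cor:inclusion_sigma_hypersurface} and Lemma \ref{lem:boundBi}) and $\tilde{H}_2$ is the nonzero polynomial furnished by Lemma \ref{lem:boundspectre}. Consequently no substantive obstacle arises; the proof is essentially a one-line application of the two named results.
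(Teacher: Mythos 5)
Your proposal is correct and follows exactly the paper's route: the paper proves the Darbouxian analogue by citing Proposition \ref{prop:condition_proba_algo_darboux} together with the Zippel--Schwartz lemma, and the Liouvillian corollary is intended as the identical one-line consequence of Proposition \ref{prop:condition_proba_algo_liouv}. Your additional remark that $H_L=\tilde{H}_1\tilde{H}_2$ is not identically zero is a sensible bookkeeping check but does not change the argument.
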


\begin{prop}\label{prop:liouv-deg-min}
If $D_0$ admits a rational or Darbouxian or Liouvillian first integral with degree smaller than $N$ then \textsf{Compute Liouvillian first integral} returns an equation with minimal degree.
\end{prop}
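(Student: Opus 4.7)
The plan is to mirror the one-line argument of Proposition \ref{prop:darboux-deg-min}: push the minimal-weighted-degree guarantee of \textsf{Compute solution extactic kernel} through the weighted-degree stratification of $V_2$ and through the branching of \textsf{Build Liouvillian first integral}. The decisive structural observation is that for any non-zero $Py_1^2 + Qy_2 + Ry_1 \in V_2$, the weighted degree equals $\max(\deg P + 2N+2,\,\deg Q + 2N+2,\,\deg R + N+1)$ (restricted to non-vanishing summands). Hence elements with $P=Q=0$ (pure $Ry_1$) have weight at most $2N+1$, while any element with $P\neq 0$ or $Q\neq 0$ has weight at least $2N+2$. This stratification corresponds exactly to the hierarchy Rational $<$ Darbouxian $<$ Liouvillian coming from Proposition \ref{propinv}.

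Next I would argue that if $D_0$ admits a first integral of minimal degree $M\leq N$ in the simplest available class, then Proposition \ref{propinv} yields an explicit kernel element of $(x,y)$-degree at most $M$ in the corresponding stratum: in the rational case one takes $[P_0 Q_0(x_0^\star,y_0^\star)-Q_0 P_0(x_0^\star,y_0^\star)]\,y_1$; in the Darbouxian case (with $\partial_y\mathcal{F}=P_0/Q_0$) one multiplies the generator of $\ker\tilde{\mathcal{E}}^{M,1}_{D_1}$ by $y_1$; in the Liouvillian case one uses $y_1^2 P_0 Q_0(x_0^\star,y_0^\star) + y_2 Q_0^2(x_0^\star,y_0^\star) - y_1 P_0(x_0^\star,y_0^\star) Q_0$. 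These elements do lie in $\ker \tilde{\mathcal{E}}^N_{D_2}(x_0^\star,y_0^\star)$ by Theorem \ref{prec_gen} applied to the lower-order invariants. Consequently the element $\mathcal{S}$ returned by \textsf{Compute solution extactic kernel} lies in the lightest non-empty stratum, i.e.\ the stratum of the simplest class of first integral realized by $D_0$, and within that stratum its $(x,y)$-degree is at most $M$.

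Finally I would trace $\mathcal{S}$ through \textsf{Build Liouvillian first integral}. When $\mathcal{S}$ is pure $Ry_1$, the algorithm cascades to \textsf{Build Rational first integral} and returns an equation (Rat) of degree $M$ by the same Darboux–Jouanolou/spectrum analysis used in Proposition \ref{prop:algo_rat_proba_correct}. When $Q=0$ but $P\neq 0$, it cascades to \textsf{Build Darbouxian first integral} and Proposition \ref{prop:darboux-deg-min} directly yields an equation of minimal degree $M$. When $Q\neq 0$ and the test $P_1=0$ succeeds, the equation $(Eq_2)$ with $F=P/Q$ is returned, again of degree $\leq M$ and hence $=M$ by minimality.

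The main obstacle is the remaining branch, where $Q\neq 0$ but $P_1\neq 0$: here the algorithm recurses into \textsf{Build Darbouxian first integral} on intermediate polynomials $P_1,Q_1$ whose degree can reach $2N+3d-1$, so one must confirm that the degree of the eventually returned equation is still governed by $M$ rather than by these intermediate degrees. As in the proof of Proposition \ref{prop:algo_liouv_proba_correct}, the final equation is built from an irreducible Darboux polynomial $L_i$ vanishing at $(x_0^\star,y_0^\star)$; minimality of $M$ together with the spectrum bound from \cite{ChezeDarbouxJouanolou} used in Lemma \ref{lem:boundspectre} pins $\deg L_i$ to $M$, closing the argument.
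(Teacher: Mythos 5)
Your proof follows the same route as the paper's, whose entire argument for this proposition is the single sentence ``as in the Darbouxian case this is a direct consequence of the minimality of the weighted degree of a solution in \textsf{Compute solution extactic kernel}''; you are unfolding what that minimality buys. Much of the unfolding is sound: the weighted-degree formula on $V_2$ is correct, the witnesses of weight $M+N+1$ (rational) and $M+2N+2$ (Darbouxian or Liouvillian) do exist (modulo a slip in your Liouvillian witness, whose $y_2$-coefficient should be $Q_0(x,y)\,Q_0(x_0^\star,y_0^\star)$ rather than a constant), and in the branches where the output equation is built directly from the returned $P/Q$ its degree is $\max(\deg P,\deg Q)$, which minimality then forces to equal $M$. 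One overstatement: the weighted degree does \emph{not} stratify ``exactly'' along Rational $<$ Darbouxian $<$ Liouvillian, because $y_1^2$ and $y_2$ carry the same weight $2N+2$; only the stratum $P=Q=0$ is separated. That is harmless for the degree claim, but the correspondence you invoke is weaker than you state.

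The genuine gap is in the two places where the algorithm falls back to \textsf{Build Rational first integral} (or to \textsf{Build Darbouxian first integral} on the inflated polynomials $P_1,Q_1$ of degree up to $2N+3d-1$): both when $\mathcal{S}$ is a pure $Ry_1$ element and in the $Q\neq 0$, $P_1\neq 0$ branch. There the Darboux--Jouanolou/spectrum argument pins the degree of the returned irreducible Darboux polynomial $L_i$ to $p$, the degree of the \emph{indecomposable rational first integral} --- not to $M$, the minimal degree of a defining equation across the three classes. These can differ badly: the $(\lambda,\mu)$ family in Section \ref{sec:examples} has a rational first integral of degree far larger than the degree of its Darbouxian or Liouvillian defining equations, and the authors report that the Liouvillian algorithm then returns that larger-degree rational equation. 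So your step ``minimality of $M$ \dots pins $\deg L_i$ to $M$'' is not justified, and your unargued assertion that the pure-$Ry_1$ branch returns an equation of degree $M$ fails for the same reason. To be fair, the paper's one-sentence proof does not address these branches either; the cases it genuinely covers are exactly those handled by your first three paragraphs.
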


\begin{proof}
As in the Darbouxian case this is a direct consequence of the minimality of the weighted degree of a solution in \textsf{Compute solution extactic kernel}.
\end{proof}

%%%%%%%%%%%%%%%%%%%%%%%%%%%%%%%%%%%%%%%%%%%%%%%%%%%
\subsection{Riccati first integrals}
In order to compute a Riccati first integral, we follow the same strategy as before:\\
First, we compute a non trivial element $4P(x,y)y_1^4+Q(x,y)(3y_2^2-2y_3y_1)+R(x,y)y_1^2$ in $\tilde{\mathcal{E}}_{D_3}^N(x_0^{\star},y_0^{\star})$. Second, we build from $P$, $Q$, $R$ a Riccati first integral.\\

\textbf{\textsf{Compute Riccati first integral}}\\
\texttt{Input:} $A,B \in \KK[x,y]$, $(x_0^{\star},y_0^{\star})\in \KK^2$, $N \in \NN$\\
\texttt{Output:} An equation $(Eq_3):\partial_y^2 \mathcal{F} - F\mathcal{F}$, where $F(x,y) \in \KK(x,y)$,\\
 or $(Eq_1): \partial_y \mathcal{F} -\sqrt{F}=0$,  where $F(x,y) \in \KK(x,y) \setminus\{0\}$,\\
  or  $(Eq_0): \mathcal{F}-F=0$, where $F(x,y) \in \KK(x,y) \setminus \KK$,\\
   or ``None" or ``I don't know''.
\begin{enumerate}
\item \label{step:testA ric} If $A(x_0^{\star},y_0^{\star})=0$ then Return ``I don't know".
\item \textsf{Compute flow series}($A,B,x_0^{\star},y_0^{\star},N,3$)=: $y_{\star}(x),y_{1,\star}(x),y_{2,\star}(x),y_{3,\star}(x)$.
\item \label{algo:ric_computeker}\textsf{Compute solution extactic kernel}($A,B,y_{\star}(x),y_{1,\star}(x),y_{2,\star}(x),y_{3,\star}(x),N,3$)=:$\mathcal{S}$.\\
If $\mathcal{S}$=``None", then Return(``None"), \\
else $\mathcal{S}=: 4P(x,y)y_1^4+Q(x,y)(3y_2^2-2y_3y_1)+R(x,y)y_1^2$.
\item \textsf{Build Riccati first integral}($A,B,P,Q,R,x_0^{\star},y_0^{\star}$).\\
\end{enumerate}

\textbf{\textsf{Build Riccati first integral}}\\
\texttt{Input:} $A(x,y), B(x,y)$, $P(x,y),Q(x,y),R(x,y) \in \KK[x,y]$ such that \mbox{$(P,Q,R)\neq 0$}, $(x_0^{\star},y_0^{\star})\in\mathbb{K}^2$.\\
\texttt{Output:}An equation $(Eq_3):\partial_y^2 \mathcal{F}-F(x,y)\mathcal{F}=0$, where $F(x,y) \in \KK(x,y)$,\\
 or \mbox{$(Eq_1): \partial_y \mathcal{F} -\sqrt{F(x,y)}=0$}, where $F(x,y) \in \KK(x,y) \setminus \{0\}$, \\
 or $(Eq_0): \mathcal{F}-F=0$, where $F(x,y) \in \KK(x,y) \setminus \KK$, or ``I don't know''.
\begin{enumerate}
\item If $Q=0$ then Return(\textsf{Build Darbouxian first integral}($A,B,4P,R,x_0^{\star},y_0^{\star},2$)).
\item Compute $P_1:=A^4Q^2\big(4D_0(P/Q)+8A(P/Q)\partial_y(B/A)-2A\partial_y^3(B/A)\big)$,\\ $Q_1:=A^4Q^2D_0(R/Q)$.
\item \label{step:daroux_dans_ric} If $P_1=0$ then Return $(Eq_3): \partial_y^2 \mathcal{F}-(P/Q) \mathcal{F}=0$\\
Else Return\big(\textsf{Build Darbouxian first integral}($A,B,P_1,Q_1,x_0^{\star},y_0^{\star},2$)\big).\\
\end{enumerate}
\begin{prop}\label{prop:algo_ric_proba_correct}
The algorithm \textsf{Compute Riccati first integral}  satisfies the following properties:
\begin{itemize}
\item If it returns ``None" then the derivation $D_0$ has no Riccati nor $2$-Darbouxian nor rational first integral with degree smaller than $N$.
\item If it returns an equation $(Eq_0)$ or $(Eq_1)$ or $(Eq_3)$ then this equation leads to a non-trivial first integral.
\item If it returns ``I don't know'', then $(x_0^{\star},y_0^{\star})$ belongs to
$$\Sigma_3  \cup \mathfrak{S}_{4N+10d-3},$$
where 
$\Sigma_3=\mathcal{V}(A)\cup \Sigma_{D_0,N}\cup \Sigma_{D_1,N,2}\cup \Sigma_{D_3,N} $. 
\end{itemize}
\end{prop}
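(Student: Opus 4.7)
The strategy is to mirror exactly the proofs of Proposition \ref{prop:algo_darboux_proba_correct} and Proposition \ref{prop:algo_liouv_proba_correct}, replacing the Darbouxian / Liouvillian extactic ingredients by their Riccati counterparts (Definition \ref{def_exta_Ric}, Lemma \ref{lem:eval_exta_ric}, Lemma \ref{lem:ker_exta_ric}, and Theorem \ref{thm:exta+cfi}). The three bullet points are handled essentially independently.

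For the first bullet, a ``None'' output means that in Step \ref{algo:ric_computeker} the kernel $\ker \tilde{\mathcal{E}}^N_{D_3}(x_0^\star,y_0^\star)$ is trivial, hence $\tilde{E}^N_{D_3}(x_0,y_0)\neq 0$, and part (2) of Theorem \ref{thm:exta+cfi} rules out rational, $2$-Darbouxian and Riccati first integrals of degree $\leq N$. For the second bullet, an output of type $(Eq_3)$ arises only when $P_1=0$ in \textsf{Build Riccati first integral}; by Proposition \ref{proprepresent} this is exactly the Riccati integrability condition, so $(Eq_3)$ gives a genuine Riccati first integral. Outputs of type $(Eq_0)$ or $(Eq_1)$ are produced by an internal call to \textsf{Build Darbouxian first integral} (applied either to $(4P,R)$ when $Q=0$, or to $(P_1,Q_1)$ otherwise), and Proposition \ref{prop:algo_darboux_proba_correct} guarantees that these outputs are correct first integrals.

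For the third (main) bullet, assume $(x_0^\star,y_0^\star) \notin \Sigma_3 \cup \mathfrak{S}_{4N+10d-3}$, and show no ``I don't know'' is returned. The hypothesis $A(x_0^\star,y_0^\star)\neq 0$ is granted because $\mathcal{V}(A)\subset \Sigma_3$, so Step \ref{step:testA ric} is passed. If $\ker \tilde{\mathcal{E}}^N_{D_3}(x_0^\star,y_0^\star)=0$ the algorithm returns ``None''. Otherwise write $\mathcal{S}=4Py_1^4+Q(3y_2^2-2y_3y_1)+Ry_1^2$. In the branch $Q=0$, Lemma \ref{lem:ker_exta_ric} gives $4Py_1^2+R \in \ker \tilde{\mathcal{E}}^{N,2}_{D_1}(x_0^\star,y_0^\star)$; since $\Sigma_1\cup\mathfrak{S}_{2N+2d-1}\subset \Sigma_3\cup \mathfrak{S}_{4N+10d-3}$, we can invoke Proposition \ref{prop:algo_darboux_proba_correct} to conclude that \textsf{Build Darbouxian first integral} does not return ``I don't know''.

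The main obstacle is the branch $Q\neq 0$, $P_1\neq 0$. Here I would first show $Q \notin \ker \tilde{\mathcal{E}}^N_{D_0}(x_0^\star,y_0^\star)$ by the minimal-weighted-degree argument used in the Liouvillian case: if it were, then as $(x_0^\star,y_0^\star)\notin \Sigma_{D_0,N}$ Lemma \ref{lem:specialized_ker} would force $Q(x,y_\star(x))=0$, producing an irreducible factor of $Q$ of strictly smaller weighted degree still in the kernel, contradicting the minimality of $\mathcal{S}$. Then, since $(x_0^\star,y_0^\star)\notin \Sigma_{D_3,N}$, Lemma \ref{lem:specialized_ker} yields
\[
4P(x,y_\star)y_{1,\star}^4+Q(x,y_\star)\bigl(3y_{2,\star}^2-2y_{3,\star}y_{1,\star}\bigr)+R(x,y_\star)y_{1,\star}^2=0,
\]
and the computation of Lemma \ref{lem:ker_exta_ric} turns this into
\[
y_{1,\star}(x)^2 P_1(x,y_\star(x))/(A^4Q^2)+Q_1(x,y_\star(x))/(A^4Q^2)=0,
\]
so that $y_1^2 P_1+Q_1$ lies in $\ker \tilde{\mathcal{E}}^{2N+4d-1,2}_{D_1}(x_0^\star,y_0^\star)$ with $\deg P_1,\deg Q_1 \leq 2N+4d-1$. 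Using $\Sigma_1\cup\mathfrak{S}_{4N+8d-3}\subset \Sigma_3\cup\mathfrak{S}_{4N+10d-3}$, Proposition \ref{prop:algo_darboux_proba_correct} applied to $(P_1,Q_1)$ then yields either a $2$-Darbouxian first integral of degree $\leq 2N+4d-1$ or a rational one of degree $\leq 4N+10d-3$, and in either case no ``I don't know'' is returned. This completes all three cases.
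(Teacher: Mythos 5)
Your proposal is correct and follows exactly the route the paper intends: the paper's own proof of this proposition is the single sentence ``similar to the one given for Proposition~\ref{prop:algo_liouv_proba_correct}'', and you have faithfully expanded that template with the Riccati ingredients (Lemma~\ref{lem:eval_exta_ric}, Lemma~\ref{lem:ker_exta_ric}, Theorem~\ref{thm:exta+cfi}, Proposition~\ref{proprepresent}) in the right places. The only quibble is the degree bound $\deg P_1,\deg Q_1\leq 2N+4d-1$, which matches the $A^3Q^2$ normalization used in the proof of Theorem~\ref{thm:exta+cfi} rather than the $A^4Q^2$ appearing in the pseudo-code of \textsf{Build Riccati first integral} — an inconsistency already present in the paper that does not affect the argument.
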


\begin{proof}
The proof of this proposition is similar to the one given for  Proposition~\ref{prop:algo_liouv_proba_correct}.
\end{proof}

As before we deduce the following results:
\begin{prop}\label{prop:condition_proba_algo_ric}
We set
$$\mathcal{R}(d,N)=d+\mathcal{B}_0(d,N)+\mathcal{B}_1(d,N)+ \mathcal{B}_3(d,N)+\Big(\dfrac{d(d+1)}{2}+5\Big)(4N+10d-3)$$
There exists a polynomial $H_R$ with degree smaller than $\mathcal{R}(d,N)$ such that:\\
If $H_R(x_0^{\star},y_0^{\star}) \neq 0$ then  \textsf{Compute Riccati first integral} returns ``None" or an equation leading to a first integral.
\end{prop}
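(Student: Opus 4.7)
The plan is to copy verbatim the structure of the proofs of Propositions~\ref{prop:condition_proba_algo_darboux} and \ref{prop:condition_proba_algo_liouv}. By Proposition~\ref{prop:algo_ric_proba_correct}, the only situation where the algorithm returns ``I don't know" is when the random initial condition $(x_0^{\star},y_0^{\star})$ lies in
$$\Sigma_3 \cup \mathfrak{S}_{4N+10d-3}, \qquad \Sigma_3 = \mathcal{V}(A)\cup \Sigma_{D_0,N}\cup \Sigma_{D_1,N,2}\cup \Sigma_{D_3,N}.$$
So it suffices to exhibit $H_R$ as a product of polynomials whose zero loci contain each of these five pieces, with degrees summing to at most $\mathcal{R}(d,N)$.

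First I would invoke Corollary~\ref{cor:inclusion_sigma_hypersurface} three times, to produce polynomials cutting out hypersurfaces that contain $\Sigma_{D_0,N}$, $\Sigma_{D_1,N,2}$ and $\Sigma_{D_3,N}$, of respective degrees at most $\mathcal{B}_0(d,N)$, $\mathcal{B}_1(d,N)$ and $\mathcal{B}_3(d,N)$. The piece $\mathcal{V}(A)$ is itself a hypersurface of degree at most $d$. Then I would apply Lemma~\ref{lem:boundspectre} with $N$ replaced by $4N+10d-3$, giving a polynomial of degree at most $\bigl(\tfrac{d(d+1)}{2}+5\bigr)(4N+10d-3)$ whose zero locus contains $\mathfrak{S}_{4N+10d-3}$.

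Taking the product $H_R$ of these five polynomials, its zero locus contains $\Sigma_3\cup\mathfrak{S}_{4N+10d-3}$, and its degree is bounded by
$$d+\mathcal{B}_0(d,N)+\mathcal{B}_1(d,N)+\mathcal{B}_3(d,N)+\Bigl(\tfrac{d(d+1)}{2}+5\Bigr)(4N+10d-3) = \mathcal{R}(d,N),$$
as required. Since the pieces of the bad set and the bounds on their defining polynomials have already been established in the preceding lemmas, corollaries and propositions, there is no genuine obstacle here: the proof is a one-line assembly, entirely analogous to the Darbouxian and Liouvillian cases, with the only changes being the substitution of $\mathcal{B}_3$ for $\mathcal{B}_2$ and of $4N+10d-3$ for $4N+8d-3$ in the spectrum bound.
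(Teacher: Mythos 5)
Your proposal is correct and is exactly the paper's argument: the paper's own proof of this proposition is a one-line reference to the Liouvillian/Darbouxian case, whose pattern you reproduce faithfully — covering $\Sigma_3$ via Corollary~\ref{cor:inclusion_sigma_hypersurface} and $\mathcal{V}(A)$, covering $\mathfrak{S}_{4N+10d-3}$ via Lemma~\ref{lem:boundspectre}, and taking the product. Nothing is missing.
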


\begin{proof}
The proof is done exactly in the same way as the proof of Proposition~\ref{prop:condition_proba_algo_liouv}.
\end{proof}

\begin{cor}\label{prop:proba_algo_ric}
Let $\Omega$ a finite subset of $\KK$ of cardinal $|\Omega|$ greater than $\mathcal{R}(d,N)$ and assume that in \textsf{Compute Riccati first integral}  $x_0^{\star}$, $y_0^{\star}$ are chosen independently and uniformly at random in $\Omega$. Then,  \textsf{Compute Riccati first integral} returns ``None" or an equation leading to a first integral with probability at least
$$1-\dfrac{\mathcal{R}(d,N)}{|\Omega|}.$$
\end{cor}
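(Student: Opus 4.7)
The plan is to mimic the proofs of Corollary \ref{prop:proba_algo_darboux} and Corollary \ref{prop:proba_algo_liouv}, which are both one-liner consequences of their preceding Proposition plus the Zippel--Schwartz lemma. Concretely, Proposition \ref{prop:condition_proba_algo_ric} already gives a polynomial $H_R(x_0,y_0)$ of total degree at most $\mathcal{R}(d,N)$ with the property that \textsf{Compute Riccati first integral} only fails (i.e. returns ``I don't know'') when $H_R(x_0^\star,y_0^\star)=0$. The proof then amounts to bounding the probability of this zero-set event when $(x_0^\star,y_0^\star)$ is drawn uniformly from $\Omega^2$.

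First I would recall that $H_R$ is not the zero polynomial, so Zippel--Schwartz applies. I would then state:
\[
\Pr\bigl[H_R(x_0^\star,y_0^\star)=0\bigr] \;\leq\; \frac{\deg H_R}{|\Omega|} \;\leq\; \frac{\mathcal{R}(d,N)}{|\Omega|},
\]
using independence and the uniform distribution on $\Omega$, together with the assumption $|\Omega|>\mathcal{R}(d,N)$ so that this bound is nontrivial. Taking the complementary event gives the desired lower bound $1-\mathcal{R}(d,N)/|\Omega|$ on the probability that $H_R(x_0^\star,y_0^\star)\neq 0$, hence on the probability that the algorithm outputs ``None'' or a valid first-integral equation.

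There is no real obstacle here since the hard work has already been done in Proposition \ref{prop:algo_ric_proba_correct} (correctness of outputs) and Proposition \ref{prop:condition_proba_algo_ric} (existence and degree bound for $H_R$, obtained by combining Corollary \ref{cor:inclusion_sigma_hypersurface} applied to $\Sigma_{D_0,N}$, $\Sigma_{D_1,N,2}$, $\Sigma_{D_3,N}$, the trivial factor $A$, and Lemma \ref{lem:boundspectre} applied to $\mathfrak{S}_{4N+10d-3}$). The only ingredient to cite is the Zippel--Schwartz lemma as in \cite{GG}, exactly as done in the Darbouxian and Liouvillian corollaries.
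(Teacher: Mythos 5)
Your proof is correct and matches the paper's approach exactly: the paper proves this corollary (as in the Darbouxian case, Corollary~\ref{prop:proba_algo_darboux}) as an immediate consequence of Proposition~\ref{prop:condition_proba_algo_ric} and the Zippel--Schwartz lemma. Your additional remark that $H_R$ is nonzero (being a product of nonzero polynomials by construction) is a worthwhile explicit check that the paper leaves implicit.
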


\begin{prop}\label{prop:ric-deg-min}
If $D_0$ admits a rational or $2$-Darbouxian or Riccati first integral with degree smaller than $N$ then \textsf{Compute Riccati first integral} returns an equation with minimal degree.
\end{prop}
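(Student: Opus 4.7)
The plan is to invoke the minimality of the weighted degree returned by \textsf{Compute solution extactic kernel}, exactly as for Propositions \ref{prop:darboux-deg-min} and \ref{prop:liouv-deg-min}. First I would record the effect of the weighted degree
$$\textrm{w-deg}(P(x,y,y_1,y_2,y_3))=\deg\big(P(x,y,y_1^{N+1},y_2^{2N+2},y_3^{3N+3})\big)$$
on the three summands of $V_3$: an element of the form $Ry_1^2$ has weighted degree $\deg R+2(N+1)$, while any element having a non-zero coefficient on $y_1^4$ or on $3y_2^2-2y_3y_1$ has weighted degree at least $4(N+1)$. Since first integrals produced by the algorithm have $(x,y)$-degree at most $N<2(N+1)$, the three classes are strictly separated by the weighted degree in the expected hierarchy Rat $<$ $2$-Darb $<$ Ric.

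Next I would convert each type of first integral of degree $M<N$ into a kernel element of comparable weighted degree. Concretely, a rational first integral of degree $M_R$ gives, via Theorem \ref{thm:exta+rfi}, an element of $\ker\tilde{\mathcal{E}}^{M_R}_{D_0}(x_0^\star,y_0^\star)$, which, multiplied by $y_1^2$, lifts to an element of $\ker\tilde{\mathcal{E}}^N_{D_3}(x_0^\star,y_0^\star)$ of weighted degree $M_R+2(N+1)$; a $2$-Darbouxian first integral of degree $M_D$ gives, via Theorem \ref{thm:exta+dfi} and Proposition \ref{propinv}, an element $4Py_1^4+Ry_1^2$ with $Q=0$ of weighted degree $M_D+4(N+1)$; a Riccati first integral of degree $M_{\mathrm{Ric}}$ gives, via Proposition \ref{propinv}, an element $4Py_1^4+Q(3y_2^2-2y_3y_1)+Ry_1^2$ of weighted degree $M_{\mathrm{Ric}}+4(N+1)$.

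Assuming towards contradiction that the algorithm returns an equation of degree $M'>M$, I would trace through the three branches of \textsf{Build Riccati first integral}. In the branch $Q=0$, the call to \textsf{Build Darbouxian first integral} produces a minimal-degree equation by Proposition \ref{prop:darboux-deg-min}, so $M'\leq \textrm{w-deg}(\mathcal S)-2(N+1)$ (rational case) or $M'\leq\textrm{w-deg}(\mathcal S)-4(N+1)$ ($2$-Darbouxian case). In the branch $Q\neq 0, P_1=0$, the returned Riccati equation has degree equal to $\max(\deg P,\deg R)\leq \textrm{w-deg}(\mathcal S)-4(N+1)$. In the branch $Q\neq 0, P_1\neq 0$, the auxiliary polynomials $P_1,Q_1$ encode either the Riccati first integral itself or a rational first integral of $D_0$ of lower degree, and Proposition \ref{prop:darboux-deg-min} again guarantees minimality of the inner call. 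In every case, the existence of a first integral of degree $M<M'$ would produce a competing kernel element of strictly smaller weighted degree than $\mathcal S$, contradicting minimality.

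The main obstacle is the bookkeeping in the last branch: one must verify that the Darbouxian-like equation extracted from $(P_1,Q_1)$ (whose $(x,y)$-degrees are inflated up to $2N+4d-1$) still reflects faithfully the original minimum degree among the three classes, so that the recursive minimality guarantee of Proposition \ref{prop:darboux-deg-min} propagates without loss to the Riccati output. Once this degree bookkeeping is settled, the argument is a direct transposition of the Liouvillian case.
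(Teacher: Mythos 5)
Your proposal is correct and follows exactly the paper's (one-line) argument: the paper simply observes that minimality of the returned equation's degree is a direct consequence of the minimality of the weighted degree of the solution produced by \textsf{Compute solution extactic kernel}, which is the backbone of your plan. One small inaccuracy in your elaboration: the weighted degree does not strictly separate the $2$-Darbouxian and Riccati classes (both $Py_1^4$ and $Q(3y_2^2-2y_3y_1)$ contribute $4(N+1)+\deg$), only the rational class sits strictly below; this does not affect the conclusion, since within that common stratum the weighted degree still minimizes the coefficient degree.
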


\begin{proof}
As in the Darbouxian case this is a direct consequence of the minimality of the weighted degree of a solution in \textsf{Compute solution extactic kernel}.
\end{proof}

\subsection{Deterministic algorithms}
In this section we show how to get a deterministic algorithm from our probabilistic ones. We give explicitly the deterministic algorithm for the Riccati case below. The Darbouxian and Liouvillian can be obtained in the same way.\\

\textbf{\textsf{Deterministic computation Riccati first integral}}\\
\texttt{Input:} $A,B \in \KK[x,y]$, such that $A(x,y) \neq 0$, $N \in \NN$\\
\texttt{Output:} An equation $(Eq_3):\partial_y^2 \mathcal{F} - F\mathcal{F}$, where $F(x,y) \in \KK(x,y)$,\\
 or $(Eq_1): \partial_y \mathcal{F} -\sqrt{F}=0$,  where $F(x,y) \in \KK(x,y) \setminus\{0\}$,\\
  or  $(Eq_0): \mathcal{F}-F=0$, where $F(x,y) \in \KK(x,y) \setminus \KK$,\\
   or ``None" or ``I don't know''.
\begin{enumerate}
\item Set  $c:=0$, $x_0^{\star}:=-1$.
\item While $c \leq \mathcal{R}(d,N)+1$ do
\begin{enumerate}
\item $x_0^{\star}:=x_0^{\star}+1$, $\Omega:=\emptyset$.
\item While $A(x_0^{\star},y)=0$ do $x_0^{\star}:=x_0^{\star}+1$.
\item While $|\Omega| \leq \mathcal{R}(d,N)+1$ do
\begin{enumerate}
\item \label{step:algo_det_ric} Choose a random element $y_0^{\star} \in \KK\setminus \Omega$ such that  $A(x_0^{\star},y_0^{\star})\neq 0$.
\item $\mathcal{E}:=$\textsf{Compute Riccati first integral}$(A,B,(x_0^{\star},y_0^{\star}),N)$.
\item If $\mathcal{E}=$``None", then Return ``None".
\item If $\mathcal{E}=$``I don't know" then $\Omega:=\Omega \cup \{y_0^{\star}\}$, Else Return $\mathcal{E}$.
\end{enumerate}
\item $c:=c+1$.
\end{enumerate}
\item Return ``None".

\end{enumerate}

\begin{prop}
The algorithm \textsf{Deterministic computation Riccati first integral} is correct.
\end{prop}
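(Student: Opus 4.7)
The plan is to decompose correctness into three sub-claims: (a) every output of ``None'' or of a differential equation returned from inside the double loop is semantically correct; (b) the terminal ``Return None'' at the very end of the algorithm is never actually reached; (c) the algorithm terminates. Claim (a) is inherited directly from Proposition~\ref{prop:algo_ric_proba_correct}, since those outputs of \textsf{Compute Riccati first integral} are semantically valid regardless of the choice of initial condition $(x_0^{\star}, y_0^{\star})$. Claim (c) is routine bookkeeping: the outer counter $c$ is capped at $\mathcal{R}(d,N)+1$, the inner loop exits once $|\Omega|$ exceeds $\mathcal{R}(d,N)+1$, and the sub-loop ``while $A(x_0^{\star},y)=0$'' terminates because $A\neq 0$ has only finitely many $x$-sections that vanish identically in $y$.

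The core of the argument is claim (b), which I would establish by contradiction via a two-variable Schwartz--Zippel argument. Suppose the final ``None'' is returned. Let $H_R\in\KK[x_0,y_0]$ be the nonzero polynomial of total degree at most $\mathcal{R}(d,N)$ supplied by Proposition~\ref{prop:condition_proba_algo_ric}: whenever $H_R(x_0^{\star},y_0^{\star})\neq 0$, the probabilistic subroutine returns ``None'' or an equation, never ``I don't know''. Our contradiction hypothesis forces every sampled pair $(x_0^{\star},y_0^{\star})$ to annihilate $H_R$. The outer loop supplies $\mathcal{R}(d,N)+2$ distinct $x$-values (since $x_0^{\star}$ is incremented by at least $1$ at each pass), and for each such $x_0^{\star}$ the inner loop supplies $\mathcal{R}(d,N)+2$ distinct $y$-values (possible because $\KK$ is infinite of characteristic zero and we only further exclude the finitely many roots of $A(x_0^{\star},y)$ in $y$). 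Fixing an outer value $x_0^{\star}$, the univariate polynomial $H_R(x_0^{\star},y)\in\KK[y]$ has degree $\leq \mathcal{R}(d,N)$ yet vanishes at $\mathcal{R}(d,N)+2$ distinct points, hence is identically zero. Writing $H_R(x,y)=\sum_j c_j(x)y^j$ with $c_j\in\KK[x]$ of degree $\leq\mathcal{R}(d,N)$, each coefficient $c_j$ vanishes at all $\mathcal{R}(d,N)+2$ outer values of $x_0^{\star}$, forcing $c_j\equiv 0$ and thus $H_R\equiv 0$, contradicting its nonvanishingness.

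The only real obstacle is making the counting precise: one must verify that the loop bounds do yield enough distinct samples to invoke the two-variable interpolation argument, and confirm that at each step the constraint ``$y_0^{\star}\in\KK\setminus\Omega$ with $A(x_0^{\star},y_0^{\star})\neq 0$'' is satisfiable. No new mathematical idea is needed beyond Propositions~\ref{prop:algo_ric_proba_correct} and~\ref{prop:condition_proba_algo_ric}: the deterministic algorithm is essentially a derandomization by exhaustive sampling on a grid large enough to meet the Schwartz--Zippel bound in both variables simultaneously.
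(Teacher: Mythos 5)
Your proof is correct and takes essentially the same route as the paper's: correctness of any definitive output returned from inside the loops is inherited from Proposition~\ref{prop:algo_ric_proba_correct}, and the terminal ``None'' is shown unreachable because the nonzero polynomial $H_R$ of degree at most $\mathcal{R}(d,N)$ from Proposition~\ref{prop:condition_proba_algo_ric} cannot vanish at all of the sampled points. You merely make explicit (via the row-by-row univariate interpolation count) what the paper states tersely as ``we necessarily avoid situations where $H_R(x_0^{\star},y_0^{\star})$ is equal to zero.''
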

\begin{proof}
The deterministic algorithm repeats the probabilistic algorithm. If the probabilistic  algorithm returns an equation or ``None" then this output is correct thanks to Proposition \ref{prop:algo_ric_proba_correct}.\\
Now, we want to get $x_0^{\star},y_0^{\star}$ such that $H_R(x_0^{\star},y_0^{\star}) \neq 0$.\\
 As we use the probabilistic algorithm with at most $\mathcal{R}(d,N)+1$  different values for $x_0^{\star}$ and $\mathcal{R}(d,N)+1$  different values for $y_0^{\star}$ we necessarily avoid situations where $H_R(x_0^{\star},y_0^{\star})$ is equal to zero. Then Proposition \ref{prop:condition_proba_algo_ric} implies that the probabilistic algorithm returns an output different from ``I don't know" and we get the desired output.
 \end{proof}

%%%%%%%%%%%%%%%%%%%%%%%%%%%%%%%%%%%%%%%%%%%%%%%%%%%%%%%%%%
\section{Complexity results}\label{sec:complexity}
In this section we study the arithmetic complexity of our algorithms.
We focus on the dependency on the degree bound $N$ and we recall that we assume that $N\geq d$, where $d=\max(\deg(A),\deg(B))$ denotes the degree of the polynomial vector field. This hypothesis is natural because if a derivation has a polynomial first integral of degree $N$, then necessarily $N-1 \geq d$. More precisely, we suppose that $d$ is fixed and $N$ tends to infinity.\\

All the complexity estimates are given in terms of arithmetic operations in $\KK$. 
We use the notation $f \in \bigOsoft(g)$, roughly speaking this means that we neglect the  logarithmic factors in the expression of the complexity. For a precise definition, see \cite[Definition 25.8]{GG}. \\
We suppose that the Fast Fourier Transform can be used so that two univariate polynomials with coefficients in $\KK$ and degree bounded by $r$ can be multiplied in  $\bigOsoft(r)$, see \cite[Corollary 8.19]{GG}.\\
 We further assume that  two matrices of size $n$ with entries in $\KK$ can be multiplied using $\bigO(n^\omega)$,  where $2 \leq \omega \leq 3$ is the matrix multiplication exponent, see \cite[Chapter~12]{GG}. We  also  recall  that  a  basis  of  solutions  of  a  linear  system
composed  of $m$ equations  and $n \leq m$ over $\KK$ can  be  computed  using $\bigO(mn^{\omega -1})$ operations in $\KK$, see \cite[Chapter 2]{BiniPan}.\\

The algorithm \textsf{Compute flow series} is a direct application of the algorithm given in \cite{BCLOSSS}. In our situation, the number of arithmetic operations needed to perform this subroutine is in $\bigOsoft(L\sigma+\sigma)$. Here $L$ is the number of arithmetic operations needed to evaluate the rational functions defining the system $(S'_r)$. As $d$ is assumed to be fixed, we have $L \in \bigO(1)$. Furthermore,  $\sigma$ is the precision on the power series, then $\sigma \in \bigO(N^2)$. \\
It thus follows that the computation modulo  $(x-x_0^{\star})^{\sigma}$ of $y_{\star}(x)$, $y_{1,\star}(x)$, $y_{2,\star}(x)$, $y_{3,\star}(x)$ can done with at most $\bigO(N^2)$ arithmetic operations.\\

In  \textsf{Compute solution extactic kernel} we need to find a non trivial  element in $\ker \tilde{\mathcal{E}}_{r,D_r}^N(x_0^{\star},y_0^{\star})$. This can be done with an Hermite-Pad\'e approximation. We recall this setting:\\
 We have $m$ polynomials $f_i(x) \in \KK[x]$, a precision $\sigma$, a shift $s=(s_1,\ldots, s_m)$ and we want to compute $m$ polynomials $p_i(x) \in \KK[x]$ such that 
$$\sum_{i=1}^m p_i.f_i= 0 \mod x^{\sigma}.$$
The set of all solutions $(p_1,\ldots,p_m)$ is a $\KK[x]$-module. A $s$-minimal approximate basis is a basis of this module and furthermore an element of this basis has minimal $s$-degree among all solutions of the problem. We recall that the $s$-degree of $(p_1,\ldots,p_m)$ is $\max_i \deg(p_i +s_i)$.\\
We can compute such a basis with $\bigOsoft\big(m^{\omega-1}(\sigma +\xi)\big)$ arithmetic operations in $\KK$, where $\xi= \sum_i (s_i-\min(s))$, see \cite{BL}, \cite[Theorem 5.3]{ZhouLabahn} and \cite{Villard}.\\
In our situation we have $r \in [[0;3]]$, $m=(r+1)(N+1)$, $\sigma=(r+1)\frac{(N+1)(N+2)}{2}$.\\

When $r=0$ we set:
$$(f_1,\ldots,f_m)=\big(1,y_{\star}(x),y_{\star}^2(x),\ldots,y_{\star}^N(x)\big),$$
$$s=(0,1,2,\ldots,N).$$

When $r=1$ we set:
\begin{eqnarray*}
(f_1,\ldots,f_m)&=&\big(1,y_{\star}(x),y_{\star}^2(x),\ldots,y_{\star}^N(x),\\
&&y_{1,\star}(x),y_{1,\star}(x)y_{\star}(x),y_{1,\star}(x)y_{\star}^2(x), \ldots, y_{1,\star}(x)y_{\star}^N(x)\big),
\end{eqnarray*}
$$s=(0,1,2,\ldots,N,N+1,\ldots,2N+1).$$

When $r=2$ we set
\begin{eqnarray*}
(f_1,\ldots,f_m)&=&\big(y_{1,\star}(x),y_{1,\star}(x)y_{\star}(x),y_{1,\star}(x)y_{\star}^2(x), \ldots,y_{1,\star}(x)y_{\star}^N(x),\\
&& y_{1,\star}^2(x),y_{1,\star}^2(x)y_{\star}(x),y_{1,\star}^2(x)y_{\star}^2(x), \ldots,y_{1,\star}^2(x)y_{\star}^N(x),\\
&& y_{2,\star}(x),y_{2,\star}(x)y_{\star}(x),y_{2,\star}(x)y_{\star}^2(x), \ldots,y_{2,\star}(x)y_{\star}^N(x)\big),
\end{eqnarray*}
 $$s=(0,1,2,\ldots,N,N+1,\ldots,2N+1,2N+2,\ldots,3N+2).$$
 
When $r=3$ we set 
\begin{eqnarray*}
(f_1,\ldots,f_m)&=&\big(y_{1,\star}^4(x),y_{1,\star}^4(x)y_{\star}(x),y_{1,\star}^4(x)y_{\star}^2(x), \ldots,y_{1,\star}^4(x)y_{\star}^N(x),\\
&& \Psi(x),\Psi(x)y_{\star}(x),\Psi(x)y_{\star}^2(x), \ldots,\Psi(x)y_{\star}^N(x),\\
&& y_{1,\star}^{2}(x),y_{1,\star}^{2}(x)y_{\star}(x),y_{1,\star}^{2}(x)y_{\star}^2(x), \ldots,y_{1,\star}^{2}(x)y_{\star}^N(x)\big)
\end{eqnarray*}
where $\Psi(x)=3y_{2,\star}^2(x)-2y_{3,\star}(x)y_{1,\star}(x)$, and
 $$s=(0,1,2,\ldots,N,N+1,\ldots,2N+1,2N+2,\ldots,3N+2).$$

We remark that from a solution $(p_1,\ldots,p_m)$ we get:
\begin{itemize}
\item when $r=1$, a polynomial 
$$Q(x,y)+P(x,y)y_1=\sum_{i=0}^N p_i(x)y^i+\sum_{i=0}^{N}p_{N+1+i}(x)y^iy_1,$$
\item when $r=2$, a polynomial 
\begin{eqnarray*}
R(x,y)y_1+P(x,y)y_1^2+Q(x,y)y_{2}&=&\sum_{i=0}^N p_i(x)y^iy_1
+\sum_{i=0}^{N}p_{N+1+i}(x)y^iy_1^2\\
&&+\sum_{i=0}^Np_{2N+2+i}(x)y^iy_{2},
\end{eqnarray*}
\item when $r=3$,  a polynomial
\begin{eqnarray*}
P(x,y)y_1^4+Q(x,y)\Psi+R(x,y)y_1^2&=& \sum_{i=0}^N p_i(x)y^iy_1^4+\sum_{i=0}^{N}p_{N+1+i}(x)y^i\Psi\\
&&+\sum_{i=0}^Np_{2N+2+i}(x)y^iy_1^{2},
\end{eqnarray*}
where $\Psi=3y_2^2-2y_3y_1$.
\end{itemize}

Therefore a solution with a minimal $s$-degree corresponds to a polynomial in $\ker \tilde{\mathcal{E}}_{r,D_r}^N(x_0^{\star},y_0^{\star}) $ with minimal weighted degree.
Thus the subroutine  \textsf{Compute solution extactic kernel} can be done with at most $\bigOsoft(N^{\omega-1}N^2)=\bigOsoft(N^{\omega+1})$ arithmetic operations in $\KK$.\\

The algorithm \textsf{Build Rational first integral} computes  a gcd of bivariate polynomials with degree in $\bigO(N)$. This subroutine can be done with at most $\bigOsoft(N^2)$ arithmetic operations in $\KK$, see \cite{GG}. Furthermore, we need to factorize a bivariate polynomial with degree at most $N$, this can be done in a probabilistic (respectively deterministic) way with $\bigOsoft(N^3)$  (respectively $\bigOsoft(N^{\omega+1})$) arithmetic operations plus the factorization of an univariate polynomial in $\KK[T]$ with degree $N$, see \cite{BoLeSaScWi04,Lec05}. \\

At last, in \textsf{Build Rational first integral} we solve the linear system 
$$D(Q)=\Omega Q, \textrm{ where }\deg(Q) \leq N.$$
This step can done with $\bigO(N^{\omega+1})$ arithmetic operations. The approach is the following. We set
$$A(x,y)=\sum_{j=0}^d a_j(y/x)x^j,\; B(x,y)= \sum_{j=0}^d b_j(y/x)x^j,$$
$$\Omega(x,y)=\sum_{j=0}^{d-1} \omega_{j+1}(y/x)x^j,\; \omega_0=0,\; Q(x,y)=\sum_{j=0}^N q_{N-j}(y/x)x^j,$$
and we also set $q_j=0$, for all $j \not \in \{0,\dots,N\}$.\\
 We want to solve the equation 
 \begin{equation}\label{eq:linsys}
 A\partial_x Q+B\partial_y Q-\Omega Q=0.
 \end{equation}\\
  Substituting in Equation \ref{eq:linsys} the above expressions with $y=zx$, we obtain for the coefficient of  $x^{N+d-1-j}$
$$\sum_{i=0}^d \big(b_{d-i}(z)-za_{d-i}(z)\big) q'_{j-i}(z) + \big((N-j+i)a_{d-i}(z)-\omega_{d-i}(z)\big)q_{j-i}(z)$$
Then Equation \ref{eq:linsys} gives
\begin{eqnarray}\label{eqpol}
&&\big(za_d(z)-b_d(z)\big) q'_j(z)+ \big(\omega_d(z)-(N-j)a_d(z)\big)q_j(z) \\
&=&\sum_{i=1}^d \big(b_{d-i}(z)-za_{d-i}(z)\big) q'_{j-i}(z) +\big((N-j+i)a_{d-i}(z)-\omega_{d-i}(z)\big)q_{j-i}(z) \nonumber
\end{eqnarray}
So the equation $A\partial_x Q+B\partial_y Q-\Omega Q=0$ is equivalent to the system of equations \eqref{eqpol} for $j=0,\dots, N+d$. This system is triangular as the righthandside of \eqref{eqpol} only involves $q_l$ with $l<j$. Let us now remark that there can be at most one $j=j_0$ such that
$$za_d(z)=b_d(z),\;\; \omega_d(z)=(N-j)a_d(z)$$
as we cannot have $a_d=b_d=\omega_d=0$. Thus equation \eqref{eqpol} seen as a differential equation in $q_j$  always admits an affine space of polynomial solutions of dimension $\leq 1$ for $j\neq j_0$.

We now solve the system of equations \eqref{eqpol} by induction on $j$. For $j=0$, equation \eqref{eqpol} is a linear differential equation of order $1$, and thus admits a vector space of polynomial solutions of dimension $\leq 1$ if $j_0\neq 0$ or dimension $N+1$ for $j=j_0$. This vector space can be found by solving a linear system of $N+1$ unknowns and $N+d$ equations, which costs $\bigO(N^\omega)$.

Let us now assume $(q_{N-j},\dots,q_N)$ belongs to a known vector space $\mathcal{V}_j$. We look at Equation \eqref{eqpol} for $j+1$. This is a linear differential equation in $q_{N-j-1}\in\mathbb{K}[z]_{\leq N}$ and $(q_{N-j},\dots,q_N)\in \mathcal{V}_j$. As it is of order $1$ in $q_{N-j-1}$, the dimension of the space of solutions $\mathcal{V}_{j+1}$ can grow at most by $1$ if $j\neq j_0$ or $N+1$ if $j=j_0$. Such linear system with $N+1+\dim \mathcal{V}_j$ unknowns and $N+d$ equations can be solved in $\bigO\big((N+d)(N+1+\dim \mathcal{V}_j)^{\omega-1}\big)$ operations.

Now, as $\dim \mathcal{V}_j$ grows at most by one except at most for one $j=j_0$ where it grows at most by $N+1$, we always have $\dim \mathcal{V}_j \leq 2N+d$. Thus each step of the resolution of \eqref{eqpol} costs at most $\bigO\big((N+d)(N+1+2N+d)^{\omega-1}\big)=\bigO(N^\omega)$. As there are $N+d$ steps, the overall cost is $\bigO(N^{\omega+1})$.\\

%our algorithms we test if $D_0(\mathcal{P}_0/\mathcal{P}_1)=0$. This step corresponds to the multiplication of bivariate polynomials with degree in $\bigO(N)$. Therefore this step can be done with at most $\bigOsoft(N^2)$ arithmetic operations in $\KK$.\\

In conclusion our probabilistic algorithms use at most $\bigOsoft(N^{\omega+1})$ arithmetic operations in $\KK$ plus the factorization of a univariate polynomial with degree at most $N$. This is  the complexity given in Theorem \ref{thmmain1}.\\

As $\mathcal{D}(d,N), \mathcal{L}(d,N)$ and $\mathcal{R}(d,N)$ are in $\bigO(N^4)$, the deterministic algorithm uses at most \mbox{$\bigOsoft(N^{\omega+9})$} arithmetic operations in $\KK$ and $\bigO(N^8)$ factorizations of  univariate polynomials in $\KK[T]$ with degree at most $N$.

%%%%%%%%%%%%%%%%%%%%%%%%%%%%%%%%%%%%%%%%%%%%%%%%%%%%%%%%%%%%%
%%%%%%%%%%%%%%%%%%%%%%%%%%%%%%%%%%%%%%%%%%%%%%%%%%%%%%%%%%%%%
\section{Examples}\label{sec:examples}
The algorithms developed in the previous sections have been implemented in Maple. This implementation is available with some examples at:\\
 \texttt{http://combot.perso.math.cnrs.fr/software.html},\\
 \texttt{https://www.math.univ-toulouse.fr/$\sim$cheze/Programme.html}.\\
The computations for the following examples have been done on a Macbook pro 2013, intel core i7 2.8 Ghz.\\
For practical reasons, the implemented version of our algorithms do not use the Hermite-Padé algorithm to  find a solution of the extactic kernel. We just solve a linear system. Furthermore, the solutions $y_{\star}(x)$,\ldots, $y_{3,\star}(x)$ are computed from $y_{\star}(x)$ and then integrated. For example, we compute $y_{1,\star}(x)$ with the formula: 
$$ y_{1,\star}(x)=exp \Big(\int \dfrac{B}{A}(x,y_{\star}(x)) dx \Big).$$ 

\subsection{The Darbouxian case}
 Let us consider the system
$$\dot{x}=x^2+2xy+y^2-4x+4y-2 ,\;\; \dot{y}=x^2+2xy+y^2+4x-4y-2.$$
The algorithm \textsf{Compute Darbouxian first integral} returns in $0.2s$, when $N=3$:
$$\frac{\partial \mathcal{F}}{\partial y}+\frac{14(x^2+2xy+y^2-4x+4y-2)}{11(x-y)(x^2+2xy+y^2-2)}=0,$$
 which after integration leads to the Darbouxian first integral
$$\mathcal{F}(x,y)=\sqrt{2}\ln\left(x+y-\sqrt{2}\right)-\sqrt{2}\ln\left( x+y+\sqrt{2}\right)+\ln(x-y).$$
Now, we set
$$z=\frac{x+y+\sqrt{2}}{x+y-\sqrt{2}},\quad w=x-y,$$
and we have for the level set $\mathcal{F}(x,y)=c$
$$w=e^c z^{\sqrt{2}}.$$
This curve is not algebraic for almost all $c$, and thus the system does not admit a rational first integral.\\

 The initial point used in the execution of the algorithm above was $(1,8)$.  To get ``I don't know", we need for example to use a bad point, i.e. a point vanishing a Darboux polynomial. From this point, we will obtain a Darboux polynomial, and thus the algorithm will try to deduce from this polynomial a rational first integral. This will not work as the vector field has no rational first integral. We can choose for example $(1,1)$ or $(1,\sqrt{2}-1)$. Such  initial points were never encountered when using (small) random initial points. In particular, the probabilistic algorithm is the only algorithm necessary to use in practice, and we never have to rerun it with several initial points.\\
 
If we use the algorithm \textsf{Compute Darbouxian first integral} with $N=2$ then the output is ``None". This is correct and means that there exists no Darbouxian first integral with degree smaller than 2.\\

Now, we slightly modify  the previous example
$$\dot{x}=2\lambda^2x-2\lambda^2y+\lambda^2-x^2-2xy-y^2, \;\; \dot{y}=2\lambda^2y-2\lambda^2x+\lambda^2-x^2-2xy-y^2.$$
The algorithm \textsf{Compute Darbouxian first integral} returns with $\lambda=100$ and $N=3$ 
$$ \frac{\partial \mathcal{F}}{\partial y}-\frac{312(x^2+2xy+y^2-20000x+20000y-10000)}{469(x-y)(y+100+x)(y-100+x)}=0$$
in $0.2s$ which after integration leads to the Darbouxian first integral
$$\mathcal{F}(x,y)=100\ln\left(x+y-100\right)-100\ln\left( x+y+100\right)+\ln(x-y).$$
Now the exponential of $\mathcal{F}$ gives a rational first integral
$$\left(\frac{x+y-100}{x+y+100}\right)^{100}(x-y),$$
which is of degree $101$.\\

We remark that if we want to compute a rational first integral we can use \textsf{Compute Darbouxian first integral}. In this case the bound $N$ is a bound on the degree of the product of the irreducible Darboux polynomials  used to write the rational first integral (3 in the previous example) and not a bound on the degree of the first integral (101 in the previous example). The difference between these bounds is important when the rational first integral has one or several factors with large multiplicities.

\subsection{Comparison with the Avelar-Duarte-da Mota's algorithm}

We compare our algorithm with the algorithm proposed by J. Avellar, L.G.S. Duarte, L.A.C.P. da Mota, denoted in the following by: ADM algorithm, see  \cite{PSsolve}.\\
First, we consider  a vector field of the form $(-\partial_y G, \partial_x G)$ with
$$G=x+\sum\limits_{i=1}^m i\ln(x+y-i),$$
after multiplying by a common denominator.\\
 We find the Darbouxian first integral in the following times.
\begin{center}\begin{tabular}{|c|c|c|c|c|c|c|c|c|c|}\hline
$m$                                        & $1$   & $2$   &  $3$   & $4$  &  $5$  & $6$   \\\hline
\textsf{Compute Darbouxian first integral} & 0.016 & 0.031 & 0.063 & 0.234 & 0.951 & 5.1 \\\hline
ADM algorithm                              & 0.094 & 0.047 & 0.078 & 0.109 & 0.109 & 0.187 \\\hline
\end{tabular}\end{center}
Unexpectedly, the ADM algorithm fails with $m\geq 7$. We see that the ADM algorithm computation times are much better than ours. This is because all Darboux polynomials are of degree $1$, and the ADM algorithm computes them first. After there is a  combinatorial step, with an exponential complexity, but here it is  negligible at those low $m$.\\

 Secondly, we study a growing degree Darboux polynomial case
$$G=x+\ln(x+y^m-1).$$
We find the Darbouxian first integral in the following times.
\begin{center}\begin{tabular}{|c|c|c|c|c|c|c|c|c|c|}\hline
$m$        & $1$   & $2$   &  $3$   & $4$ & $5$ &$6$ \\\hline
\textsf{Compute Darbouxian first integral} & 0.015 & 0.015 & 0.31 & 0.125 & 0.359 & 0.889 \\\hline
ADM algorithm  & 0.094 & 0.078 & 1.123 & $>10^3$  & $>10^3$ & $>10^3$  \\\hline
\end{tabular}\end{center}
The timings of our algorithm have the same order of magnitude, but the ADM algorithm becomes almost unusable. This is because the computation of Darboux polynomials is very expansive even for low degrees. In other words, as soon as the Darboux polynomials are not linear, the ADM algorithm is not usable. Our algorithm never computes Darboux polynomials, and thus avoids this problem.

\subsection{The Liouvillian case}
Consider the system
$$\dot{x}=2x^2-2y^2-1,\;\dot{y}=2x^2-2y^2-3.$$
The algorithm \textsf{Compute Liouvillian first integral} returns in $0.3s$ when $N=3$ 
$$\frac{\partial^2 \mathcal{F}}{\partial y^2}-\frac{2(x+y)(2x^2-4xy+2y^2-1)}{2x^2-2y^2-1} \frac{\partial \mathcal{F}}{\partial y}=0.$$
After integration, this gives the first integral
$$\mathcal{F}(x,y)=\sqrt{\pi}\hbox{erf}(x-y)+(x+y)e^{-(x-y)^2}.$$
Now, we set $z=x-y,w=x+y$, and the equation $\mathcal{F}(x,y)=c$ gives
$$w=(c-\sqrt{\pi}\hbox{erf}(z))e^{z^2}$$
This function is never algebraic for $c\in\mathbb{C}$ as $\hbox{erf}$ is not even elementary. Thus the system does not admit a rational first integral.\\
The function $\mathcal{F}$ is holomorphic and all solution curves (outside the straight line at infinity) are of the form $\mathcal{F}(x,y)=c\in\mathbb{C}$. As none of these curves are algebraic, the system does not admit any Darboux polynomial. As the poles of a Darbouxian first integral are Darboux polynomials, a Darbouxian first integral should be a polynomial, which is again not possible as there are no rational first integrals. Thus the system admits a Liouvillian first integral but no first integral of lower class.\\

Now, we consider the example 185 of Kamke which is an Abel equation with a Liouvillian first integral
$$\dot{x}=-x^7,\; \dot{y}=y^2(5x^3+2x^2y+2y).$$
\textsf{Compute Liouvillian first integral} gives in $1.95$s with $N=7$:
$$\frac{\partial^2 \mathcal{F}}{\partial y^2}+ \frac{x^6+7x^3y+6x^2y^2+6y^2}{2y(x^6+2x^3y+x^2y^2+y^2)} \frac{\partial\mathcal{F}}{\partial y}=0.$$
This system admits a lower class first integral, a $4$-Darbouxian first integral of degree $32$, which can be recovered by integration of this equation, giving
$$\tilde{\mathcal{F}}(x,y)=\int \frac{y^{3/2}\sqrt{x}(5x^3+2x^2y+2y)}{(x^6+2x^3y+x^2y^2+y^2)^{5/4}}dx+ \frac{x^{15/2}}{(x^6+2x^3y+x^2y^2+y^2)^{5/4}\sqrt{y}}dy.$$
This integral can also be searched directly as a $4$-Darbouxian first integral with $N=32$, and is obtained in $22166s$.

\subsection{The Riccati case}
The example 43 of Kamke is an Abel equation. When we set $a=3$ and $b=17$ in this equation, we get
$$\dot{x}=1,\; \dot{y}=-(9x^2+36x+17)y^3-3xy^2.$$
This equation admits a Riccati first integral.\\
\textsf{Compute Riccati first integral} gives in $10.9s$ with $N=9$:
$$\frac{\partial^2 \mathcal{F}}{\partial y^2}-\frac{3P}{4(9x^2y+36xy+17y-6)^2y^3}\mathcal{F}=0,$$
with
\begin{eqnarray*}
P&=&81x^4y^3+648x^3y^3-18x^3y^2+1602x^2y^3-180x^2y^2+1224xy^3\\
&&+3x^2y-466xy^2+289y^3+24xy-204y^2+36y-2.
\end{eqnarray*}

This equation together with the equation of the first integral defines a PDE system with a two dimensional space of solutions. The first integral in Kamke's book is written using Bessel functions. Thus the solutions of this PDE system can be expressed in terms of Bessel functions here, but this is not an easy task.\\

In general, Abel equations are of the form
$$\frac{\partial y}{\partial x} =f_3(x) y^3+f_2(x) y^2+f_1(x)y+f_0(x).$$
These can be seen as a generalization of the Riccati equation. However, in contrary to the Riccati equation, they are not all solvable in algebraic-differential terms. Still many integrable families are known, Abel integrability is typically searched by looking into a known table list up to some transformations. Our algorithm  can  detect any integrable cases, even belonging to an unknown new integrable family.\\

\subsection{The generic case} In a generic situation a vector field has no symbolic first integral. Let us now consider a random quadratic vector field
$$\dot{x}=2x^2+xy-2y^2-1,\;\; \dot{y}=2x^2-2y^2+y-3.$$
We do not find any Liouvillian nor Riccati first integrals (and thus neither Darbouxian, $2$-Darbouxian or rational first integral) up to degree $9$, with the following timings.\\

\begin{center}\begin{tabular}{|c|c|c|c|c|c|c|c|c|c|}\hline
$N$        & $1$   & $2$   &  $3$   & $4$  & $5$ &   $6$ &   $7$ &   $8$  & $9$\\\hline
Liouvillian& 0.015 & 0.047 & 0.620 & 0.219 & 0.639 & 1.576 & 4.072 & 8.736 & 17.44 \\\hline
Riccati    & 0.015 & 0.047 & 0.094 & 0.280 & 0.843 & 2.262 & 4.898 & 11.62 & 23.32 \\\hline
\end{tabular}\end{center}

\subsection{Rational first integral with degree bigger than $N$}
Let us consider the following example
$$\dot{x}=\lambda x^3-\lambda xy^2-2\mu y^2-\lambda x,\;\; \dot{y}= \lambda x^2y-\lambda y^3-2\mu xy-\lambda y$$
with $\lambda,\mu\in\mathbb{Z}$. This vector field always admits the first integral
$$I(x,y)=\lambda\ln\left(\frac{x}{y}-\frac{\sqrt{x^2-y^2}}{y}\right)+\mu\ln\left(\frac{x^2-y^2+1}{x^2-y^2-1}-2\frac{\sqrt{x^2-y^2}}{x^2-y^2-1}\right)$$
which is a $2$-Darbouxian first integral, which is of degree $8$.
Indeed, we have: $\partial_y I - F=0$, where $F^2=P/Q$ and 
\begin{eqnarray*}
P&=&{\lambda}^{2}{x}^{6}-2\,{\lambda}^{2}{x}^{4}{y}^{2}+{\lambda}^
{2}{x}^{2}{y}^{4}-4\,\lambda\,\mu\,{x}^{3}{y}^{2}+4\,\lambda\,\mu\,x{y
}^{4}-2\,{\lambda}^{2}{x}^{4}+2\,{\lambda}^{2}{x}^{2}{y}^{2}\\
&&+4\,{\mu}^
{2}{y}^{4}+4\,\lambda\,\mu\,x{y}^{2}+{\lambda}^{2}{x}^{2}\\
Q&=&{x}^{6}{y}^
{2}-3\,{x}^{4}{y}^{4}+3\,{x}^{2}{y}^{6}-{y}^{8}-2\,{x}^{4}{y}^{2}+4\,{
x}^{2}{y}^{4}-2\,{y}^{6}+{x}^{2}{y}^{2}-{y}^{4}.
\end{eqnarray*}

 As $\lambda/\mu \in\mathbb{Q}$, we can however build from this a rational first integral (with degree depending on $\lambda/\mu$). This is also a particular case of a Liouvillian first integral, which is then of degree $8$. This kind of example is build by searching radical extension of $\bar{\mathbb{K}}(x,y)$ with groups of unit of rank $\geq 2$. The first integral is then a linear combination of logs of these units.\\

For this example, we here display the timings in seconds of the algorithms Rational, Darbouxian, Liouvillian and Ricatti first integrals with initial point $(2,5)$. The degree columns are the minimum $N$ for which the output is not ``None".\\

\begin{tabular}{|c|c|c|c|c|c|c|c|c|}\hline
$(\lambda,\mu)$ & Rat deg & time &  D deg & time & L deg & time & Ric deg & time\\\hline
(1,0)& 1 & 0.031 & 1 & 0.016 & 1 & 0.016 &  1&0.016\\\hline
(0,1)& 2 &0.016  & 1 & 0.015 & 1 & 0.016 & 2 &0.031\\\hline
(1,1)& 3 &0.031  & 2 & 0.031 & 2 & 0.078 & 3 & 0.328\\\hline
(2,1)& 4 &0.031  &3  &0.047  & 2 & 0.062 & 3 & 0.281\\\hline
(1,2)& 5 & 0.031 & 4 & 0.110 & 4 & 0.343 & 5 & 1.934\\\hline
(3,1)& 5 & 0.047 & 4 &0.234  & 3 & 0.203 & 5 & 2.168\\\hline
(1,3)& 7 & 0.203 & 5 & 0.483 & 5 & 1.935 & 7 & 16.27 \\\hline
(4,1)& 6 & 0.109 & 5 &0.437  & 4 & 0.733 &5  & 2.652\\\hline
(3,2)& 7 & 0.219 & 5 & 0.359 & 5 & 1.825 & 6 & 6.012 \\\hline
(2,3)& 8 & 0.359 & 6 & 1.030 & 5 & 1.981 & 6 & 12.43\\\hline
(1,4)& 9 & 1.482 & 6 & 2.559 & 6 & 11.62 & 8 & $74.1^*$ \\\hline
(5,1)& 7 & 0.453 & 5 & 0.873 & 5 & 3.853 & 7 & 26.59\\\hline
(1,5)&11 & 3.807 & 7 & 4.290 & 7 & 20.87 & 8 & $54.2^*$ \\\hline
(6,1)& 8 & 1.264 & 6 & 2.980 & 6 & 15.14 & 7 & 40.62\\\hline
(5,2)& 9 & 0.889 & 6 & 1.326 & 6 & 6.381 & 7 & 20.37\\\hline
(4,3)&10 & 3.354 & 7 & 6.443 & 6 & 15.50 & 7 & 39.57\\\hline
(3,4)&11 & 4.945 & 7 & 5.694 & 6 & 12.76 & 8 & $73^*$\\\hline
(2,5)&12 & 8.392 & 8 & 9.267 & 7 & 30.60 & 8 & $83^*$\\\hline
(1,6)&13 & 10.99 & 8 & 10.59 & 8 & 55.08 & 8 & $64.1^*$\\\hline
\end{tabular}\\

%Remark that in several cases, the degree $N$ necessary to obtain an output from the algorithm is smaller than the minimal degree of the rational first integral.
% In particular, 
In many cases (all except those with $\star$), Darbouxian, Liouvillian and Ricatti algorithms have returned the rational first integral even if it is of degree larger than $N$. For example, with $(\lambda,\mu)=(3,4)$ the Liouvillian algorithm with $N=6$ returns a rational first integral of degree $11$. Remark that the degree cannot be higher than $8$ for Liouvillian or Ricatti first integrals, because the $2$-Darbouxian first integral is always present and its degree is not growing.\\

%\appendix
%\section*{Appendix}

\textbf{Acknowledgements}
Guillaume Chèze acknowledges the support and hospitality of the Laboratorio Fibonacci (Pisa) where a part of this work has been developed during November 2016 when he was visiting Thierry Combot.
%\textbf{Acknowledgements}

% BibTeX users please use one of
\bibliographystyle{plain}      % basic style, author-year citations
\bibliography{ref-int-prem}   % name your BibTeX data base

%% Non-BibTeX users please use
%\begin{thebibliography}{}
%%
%% and use \bibitem to create references. Consult the Instructions
%% for authors for reference list style.
%%
%\bibitem{RefJ}
%% Format for Journal Reference
%Author, Article title, Journal, Volume, page numbers (year)
%% Format for books
%\bibitem{RefB}
%Author, Book title, page numbers. Publisher, place (year)
%% etc
%\end{thebibliography}

\end{document}